\newdimen\proofrulebreadth \proofrulebreadth=.05em
\newdimen\proofdotseparation \proofdotseparation=1.25ex
\newdimen\proofrulebaseline \proofrulebaseline=2ex
\let\then\relax
\def\hfi{\hskip0pt plus.0001fil}
\mathchardef\squigto="3A3B
\newif\ifinsideprooftree\insideprooftreefalse
\newif\ifonleftofproofrule\onleftofproofrulefalse
\newif\ifproofdots\proofdotsfalse
\newif\ifdoubleproof\doubleprooffalse
\let\wereinproofbit\relax
\newdimen\shortenproofleft
\newdimen\shortenproofright
\newdimen\proofbelowshift
\newbox\proofabove
\newbox\proofbelow
\newbox\proofrulename
\def\shiftproofbelow{\let\next\relax\afterassignment\setshiftproofbelow\dimen0 }
\def\shiftproofbelowneg{\def\next{\multiply\dimen0 by-1 }%
\afterassignment\setshiftproofbelow\dimen0 }
\def\setshiftproofbelow{\next\proofbelowshift=\dimen0 }
\def\setproofrulebreadth{\proofrulebreadth}
\def\prooftree{% NESTED ZERO (\ifonleftofproofrule)
%
% first find out whether we're at the left-hand end of a proof rule
\ifnum  \lastpenalty=1
\then   \unpenalty
\else   \onleftofproofrulefalse
\fi
%
% some space on left (except if we're on left, and no infinity for outermost)
\ifonleftofproofrule
\else   \ifinsideprooftree
        \then   \hskip.5em plus1fil
        \fi
\fi
%
% begin our proof tree environment
\bgroup% NESTED ONE (\proofbelow, \proofrulename, \proofabove,
%               \shortenproofleft, \shortenproofright, \proofrulebreadth)
\setbox\proofbelow=\hbox{}\setbox\proofrulename=\hbox{}%
\let\justifies\proofover\let\leadsto\proofoverdots\let\Justifies\proofoverdbl
\let\using\proofusing\let\[\prooftree
\ifinsideprooftree\let\]\endprooftree\fi
\proofdotsfalse\doubleprooffalse
\let\thickness\setproofrulebreadth
\let\shiftright\shiftproofbelow \let\shift\shiftproofbelow
\let\shiftleft\shiftproofbelowneg
\let\ifwasinsideprooftree\ifinsideprooftree
\insideprooftreetrue
%
% now begin to set the top of the rule (definitions local to it)
\setbox\proofabove=\hbox\bgroup$\displaystyle % NESTED TWO
\let\wereinproofbit\prooftree
%
% these local variables will be copied out:
\shortenproofleft=0pt \shortenproofright=0pt \proofbelowshift=0pt
%
% flags to enable inner proof tree to detect if on left:
\onleftofproofruletrue\penalty1
}
\def\eproofbit{% NESTED TWO
%
% various hacks applicable to hypothesis list 
\ifx    \wereinproofbit\prooftree
\then   \ifcase \lastpenalty
        \then   \shortenproofright=0pt  % 0: some other object, no indentation
        \or     \unpenalty\hfil         % 1: empty hypotheses, just glue
        \or     \unpenalty\unskip       % 2: just had a tree, remove glue
        \else   \shortenproofright=0pt  % eh?
        \fi
\fi
%
% pass out crucial values from scope
\global\dimen0=\shortenproofleft
\global\dimen1=\shortenproofright
\global\dimen2=\proofrulebreadth
\global\dimen3=\proofbelowshift
\global\dimen4=\proofdotseparation
\global\count255=\proofdotnumber
%
% end the box
$\egroup  % NESTED ONE
%
% restore the values
\shortenproofleft=\dimen0
\shortenproofright=\dimen1
\proofrulebreadth=\dimen2
\proofbelowshift=\dimen3
\proofdotseparation=\dimen4
\proofdotnumber=\count255
}
\def\proofover{% NESTED TWO
\eproofbit % NESTED ONE
\setbox\proofbelow=\hbox\bgroup % NESTED TWO
\let\wereinproofbit\proofover
$\displaystyle
}%
\def\proofoverdbl{% NESTED TWO
\eproofbit % NESTED ONE
\doubleprooftrue
\setbox\proofbelow=\hbox\bgroup % NESTED TWO
\let\wereinproofbit\proofoverdbl
$\displaystyle
}%
\def\proofoverdots{% NESTED TWO
\eproofbit % NESTED ONE
\proofdotstrue
\setbox\proofbelow=\hbox\bgroup % NESTED TWO
\let\wereinproofbit\proofoverdots
$\displaystyle
}%
\def\proofusing{% NESTED TWO
\eproofbit % NESTED ONE
\setbox\proofrulename=\hbox\bgroup % NESTED TWO
\let\wereinproofbit\proofusing
\kern0.3em$
}
\def\endprooftree{% NESTED TWO
\eproofbit % NESTED ONE
% \dimen0 =     length of proof rule
% \dimen1 =     indentation of conclusion wrt rule
% \dimen2 =     new \shortenproofleft, ie indentation of conclusion
% \dimen3 =     new \shortenproofright, ie
%                space on right of conclusion to end of tree
% \dimen4 =     space on right of conclusion below rule
  \dimen5 =0pt% spread of hypotheses
% \dimen6, \dimen7 = height & depth of rule
%
% length of rule needed by proof above
\dimen0=\wd\proofabove \advance\dimen0-\shortenproofleft
\advance\dimen0-\shortenproofright
%
% amount of spare space below
\dimen1=.5\dimen0 \advance\dimen1-.5\wd\proofbelow
\dimen4=\dimen1
\advance\dimen1\proofbelowshift \advance\dimen4-\proofbelowshift
%
% conclusion sticks out to left of immediate hypotheses
\ifdim  \dimen1<0pt
\then   \advance\shortenproofleft\dimen1
        \advance\dimen0-\dimen1
        \dimen1=0pt
%       now it sticks out to left of tree!
        \ifdim  \shortenproofleft<0pt
        \then   \setbox\proofabove=\hbox{%
                        \kern-\shortenproofleft\unhbox\proofabove}%
                \shortenproofleft=0pt
        \fi
\fi
%
% and to the right
\ifdim  \dimen4<0pt
\then   \advance\shortenproofright\dimen4
        \advance\dimen0-\dimen4
        \dimen4=0pt
\fi
%
% make sure enough space for label
\ifdim  \shortenproofright<\wd\proofrulename
\then   \shortenproofright=\wd\proofrulename
\fi
%
% calculate new indentations
\dimen2=\shortenproofleft \advance\dimen2 by\dimen1
\dimen3=\shortenproofright\advance\dimen3 by\dimen4
%
% make the rule or dots, with name attached
\ifproofdots
\then
        \dimen6=\shortenproofleft \advance\dimen6 .5\dimen0
        \setbox1=\vbox to\proofdotseparation{\vss\hbox{$\cdot$}\vss}%
        \setbox0=\hbox{%
                \advance\dimen6-.5\wd1
                \kern\dimen6
                $\vcenter to\proofdotnumber\proofdotseparation
                        {\leaders\box1\vfill}$%
                \unhbox\proofrulename}%
\else   \dimen6=\fontdimen22\the\textfont2 % height of maths axis
        \dimen7=\dimen6
        \advance\dimen6by.5\proofrulebreadth
        \advance\dimen7by-.5\proofrulebreadth
        \setbox0=\hbox{%
                \kern\shortenproofleft
                \ifdoubleproof
                \then   \hbox to\dimen0{%
                        $\mathsurround0pt\mathord=\mkern-6mu%
                        \cleaders\hbox{$\mkern-2mu=\mkern-2mu$}\hfill
                        \mkern-6mu\mathord=$}%
                \else   \vrule height\dimen6 depth-\dimen7 width\dimen0
                \fi
                \unhbox\proofrulename}%
        \ht0=\dimen6 \dp0=-\dimen7
\fi
%
% set up to centre outermost tree only
\let\doll\relax
\ifwasinsideprooftree
\then   \let\VBOX\vbox
\else   \ifmmode\else$\let\doll=$\fi
        \let\VBOX\vcenter
\fi
% this \vbox or \vcenter is the actual output:
\VBOX   {\baselineskip\proofrulebaseline \lineskip.2ex
        \expandafter\lineskiplimit\ifproofdots0ex\else-0.6ex\fi
        \hbox   spread\dimen5   {\hfi\unhbox\proofabove\hfi}%
        \hbox{\box0}%
        \hbox   {\kern\dimen2 \box\proofbelow}}\doll%
%
% pass new indentations out of scope
\global\dimen2=\dimen2
\global\dimen3=\dimen3
\egroup % NESTED ZERO
\ifonleftofproofrule
\then   \shortenproofleft=\dimen2
\fi
\shortenproofright=\dimen3
%
% some space on right and flag we've just made a tree
\onleftofproofrulefalse
\ifinsideprooftree
\then   \hskip.5em plus 1fil \penalty2
\fi
}
\newcommand{\indrulename}[1]{\texttt{\textup{#1}}}
\newcommand{\indrule}[3]{
\ensuremath{\begin{array}{c}
  \prooftree #2
    \justifies #3
    \thickness=0.05em
    \using \indrulename{\scriptsize{#1}}
  \endprooftree
\end{array}}}
\renewcommand{\theenumi}{\arabic{enumi}}
\renewcommand{\theenumii}{\arabic{enumii}}
\renewcommand{\theenumiii}{\arabic{enumiii}}
\renewcommand\p@enumii{\theenumi.}
\renewcommand\p@enumiii{\theenumi.\theenumii.}
\renewcommand\p@enumiv{\theenumi.\theenumii.\theenumiii.}
\newcommand{\llem}[1]{\label{lemma:#1}}
\newcommand{\rlem}[1]{Lem.~\ref{lemma:#1}}
\newcommand{\ldef}[1]{\label{def:#1}}
\newcommand{\rdef}[1]{Def.~\ref{def:#1}}
\newcommand{\lprop}[1]{\label{prop:#1}}
\newcommand{\rprop}[1]{Prop.~\ref{prop:#1}}
\newcommand{\lthm}[1]{\label{thm:#1}}
\newcommand{\rthm}[1]{Thm.~\ref{thm:#1}}
\newcommand{\lsec}[1]{\label{section:#1}}
\newcommand{\rsec}[1]{Sec.~\ref{section:#1}}
\newcommand{\lexample}[1]{\label{example:#1}}
\newcommand{\rexample}[1]{Ex.~\ref{example:#1}}
\newcommand{\eg}{{\em e.g.}\xspace}
\newcommand{\ie}{{\em i.e.}\xspace}
\newcommand{\ih}{{\em i.h.}\xspace}
\newcommand{\etal}{et al.\xspace}
\newcommand{\cf}{{\em cf.}\xspace}
\newcommand{\ST}{\ |\ }
\newcommand{\HS}{\hspace{.5cm}}
\renewcommand{\emptyset}{\varnothing}
\newcommand{\set}[1]{\{#1\}}
\newcommand{\Nat}{\mathtt{Nat}}
\newcommand{\NN}{\mathbb{N}}
\newcommand{\eqdef}{\overset{\mathrm{def}}{=}}
\newcommand{\FIT}[1]{\resizebox{\hsize}{!}{#1}}
\newcommand{\Var}{\mathsf{Var}}
\newcommand{\Cons}{\mathsf{Con}}
\newcommand{\Val}{\mathsf{Val}}
\newcommand{\Type}{\mathsf{Type}}
\newcommand{\Loc}{\mathsf{Loc}}
\newcommand{\Asg}{\mathsf{Env}}
\newcommand{\var}{x}
\newcommand{\vartwo}{y}
\newcommand{\varthree}{z}
\newcommand{\cons}{{\bf c}}
\newcommand{\constwo}{{\bf d}}
\newcommand{\consthree}{{\bf e}}
\newcommand{\tm}{t}
\newcommand{\tmtwo}{s}
\newcommand{\tmthree}{u}
\newcommand{\tmfour}{r}
\newcommand{\unit}{{\bf ok}}
\newcommand{\fail}{\mathtt{fail}}
\newcommand{\FAIL}{\bot}
\newcommand{\unif}{\overset{\bullet}{=}}
\newcommand{\seq}{;}
\newcommand{\alt}{\oplus}
\newcommand{\lam}[2]{\lambda #1.\,#2}
\newcommand{\laml}[3]{\lambda^{#1} #2.\,#3}
\newcommand{\fresh}[2]{\nu #1.\,#2}
\newcommand{\sub}[2]{\{#1:=#2\}} % revisar sintaxis
\DeclareMathOperator{\supp}{supp}
\newcommand{\val}{\mathtt{v}}
\newcommand{\valtwo}{\mathtt{w}}
\newcommand{\goals}{\mathsf{G}}
\newcommand{\goalstwo}{\mathsf{H}}
\newcommand{\fv}[1]{\mathsf{fv}(#1)}
\newcommand{\locs}[1]{\mathsf{locs}(#1)}
\newcommand{\ctxof}[1]{\langle#1\rangle}
\newcommand{\ctxhole}{\Box}
\newcommand{\SEP}{\mid\!\mid}
\newcommand{\gctx}{\mathsf{C}}
\newcommand{\gctxof}[1]{\gctx\ctxof{#1}}
\newcommand{\wctx}{\mathsf{W}}
\newcommand{\wctxof}[1]{\wctx\ctxof{#1}}
\newcommand{\structeq}{\equiv}
\newcommand{\permeq}{\sim}
\newcommand{\rulename}[1]{\texttt{\textup{#1}}}
\newcommand{\tounifa}[1]{\mathrel{\rightsquigarrow_{\rulename{#1}}}}
\newcommand{\toca}[1]{
  \def\tmp{#1}\ifx\tmp\empty
    \rightarrow%
  \else
    \xrightarrow{\rulename{#1}}%
  \fi}
\newcommand{\rtoca}[1]{
  \def\tmp{#1}\ifx\tmp\empty
    \twoheadrightarrow%
  \else
    \mathrel{\xrightarrow{\rulename{#1}}\!\!\!\!\!\xrightarrow{}}%
  \fi}
\newcommand{\topar}[1]{
  \def\tmp{#1}\ifx\tmp\empty
    \Rightarrow
  \else
    \xRightarrow{#1\,}%
  \fi
}
\newcommand{\prog}{P}
\newcommand{\progtwo}{Q}
\newcommand{\progthree}{R}
\newcommand{\loc}{\ell}
\newcommand{\loctwo}{\loc'}
\newcommand{\locthree}{\loc''}
\newcommand{\SUB}[1]{{}^{#1}}
\newcommand{\subst}{\sigma}
\newcommand{\substtwo}{\rho}
\newcommand{\substthree}{\tau}
\newcommand{\substa}{\alpha}
\newcommand{\scomp}{\cdot}
\newcommand{\sleq}{\lesssim}
\newcommand{\mgu}[1]{\mathsf{mgu}(#1)}
\newcommand{\bigalt}{\bigoplus}
\newcommand{\nprog}{\prog^\star}
\newcommand{\nprogtwo}{\progtwo^\star}
\newcommand{\ntm}{\tm^\star}
\newcommand{\ntmtwo}{\tmtwo^\star}
\newcommand{\stm}{S}
\newcommand{\stmtwo}{S'}
\newcommand{\stuck}{\bigtriangledown}
\newcommand{\btyp}{\alpha}
\newcommand{\btyptwo}{\beta}
\newcommand{\btypthree}{\gamma}
\newcommand{\constyp}[1]{\mathcal{T}_{#1}}
\newcommand{\typ}{A}
\newcommand{\typtwo}{B}
\newcommand{\typthree}{C}
\newcommand{\emptyctx}{\emptyset}
\newcommand{\tctx}{\Gamma}
\newcommand{\tctxtwo}{\Delta}
\newcommand{\fctx}{\Phi}
\newcommand{\fctxtwo}{\Psi}
\newcommand{\semtyp}[1]{[\![#1]\!]}
\newcommand{\seme}[2]{[\![#1]\!]_{#2}}
\newcommand{\semf}[3][\fctx]{[\![#2]\!]^{#1}_{#3}}
\newcommand{\powerset}[1]{\mathcal{P}(#1)}
\newcommand{\binterp}[1]{\mathsf{S}_{#1}}
\newcommand{\cinterp}[1]{\underline{#1}}
\newcommand{\asg}{\rho}
\newcommand{\asgextend}[2]{[#1\mapsto#2]}
\newcommand{\obj}{a}
\newcommand{\objtwo}{b}
\newcommand{\objthree}{c}
\newcommand{\objfour}{d}
\newcommand{\objfun}{f}
\newcommand{\objfuntwo}{g}
\newcommand{\altt}{\boxplus}
\newcommand{\lambdaunif}{\lambda^{\mathtt{U}}}
\newcommand{\codesym}{\mathtt{C}}
\newcommand{\allocsym}{\mathtt{A}}
\colorlet{darkgreen}{green!60!black}
\newcommand{\SeeAppendix}{\textcolor{darkgreen}{$\clubsuit$}}
\newcommand{\SeeAppendixRef}[1]{\textcolor{darkgreen}{$\clubsuit$\,#1}}
\newcommand{\WithProofs}[1]{\textcolor{darkgreen}{#1}}
\newcommand{\pairing}{\mathbf{t}}
\newcommand{\walg}[1]{\mathbb{W}[#1]}
\begin{document}
\title{Semantics of a Relational $\lambda$-Calculus\thanks{Work partially supported by
project grants ECOS~Sud~A17C01, PUNQ~1346/17, and UBACyT~20020170100086BA.}\\(Extended Version)}
%
%\titlerunning{Abbreviated paper title}
% If the paper title is too long for the running head, you can set
% an abbreviated paper title here
%
\author{
Pablo Barenbaum\inst{1}%\orcidID{0000-1111-2222-3333}
  \and
Federico Lochbaum\inst{2} %\orcidID{1111-2222-3333-4444}
  \and
Mariana Milicich\inst{3}
}
\authorrunning{P.~Barenbaum, F.~Lochbaum, M.~Milicich}
% First names are abbreviated in the running head.
% If there are more than two authors, 'et al.' is used.
%

\institute{
Universidad de Buenos Aires and Universidad Nacional de Quilmes, Argentina \\
\email{pbarenbaum@dc.uba.ar}
\and
Universidad Nacional de Quilmes, Argentina \\
\email{federico.lochbaum@gmail.com}
\and
Universidad de Buenos Aires, Argentina \\
\email{milicichmariana@gmail.com}
}
\maketitle              % typeset the header of the contribution
\begin{abstract}
% The abstract should briefly summarize the contents of the paper in
% 15--250 words.
We extend the $\lambda$-calculus with constructs suitable
for relational and functional--logic programming: non-deterministic
choice, fresh variable introduction, and unification of expressions. 
In order to be able to unify $\lambda$-expressions and still obtain a
confluent theory, we depart from related approaches, such as
$\lambda$Prolog, in that we do not attempt to solve
higher-order unification.
Instead, abstractions are decorated with a {\em location},
which intuitively may be understood as its memory address,
and we impose a simple {\em coherence} invariant: abstractions in the same location
must be equal. This allows us to formulate a {\em confluent} small-step operational
semantics which only performs first-order unification
and does not require {\em strong} evaluation (below lambdas).
We study a simply typed version of the system.
Moreover, a denotational semantics for the calculus is proposed
and reduction is shown to be sound with respect to the denotational
semantics.

\keywords{Lambda Calculus \and Semantics \and Relational Programming \and Functional Programming \and Logic Programming \and Confluence}
\end{abstract}
\section{Introduction}
  \lsec{introduction}
  
Declarative programming is defined by the ideal that programs should resemble
abstract specifications rather than concrete implementations.
One of the most significant declarative paradigms is
{\em functional programming},
represented by languages such as Haskell.
Some of its salient features are the presence of
first-class functions and inductive datatypes manipulated through pattern matching.
The fact that the underlying model of computation---the $\lambda$-calculus---is
{\em confluent} allows one to reason equationally about the behavior of
functional programs.

Another declarative paradigm is {\em logic programming}, represented by
languages such as Prolog. Some of its salient features are the ability to define
relations rather than functions, and
the presence of existentially quantified {\em symbolic variables} that become
instantiated upon querying.
This sometimes allows to use $n$-ary relations with various patterns of
instantiation, \eg \texttt{add(3, 2, X)} computes $\texttt{X := 3 + 2}$ whereas
\texttt{add(X, 2, 5)} computes $\texttt{X := 5 - 2}$.
The underlying model of computation is based on {\em unification} and
refutation search with {\em backtracking}.

The idea to marry functional and logic programming has been around for a long time,
and there have been many attempts to combine their features gracefully.
For example, $\lambda$Prolog (Miller and Nadathur~\cite{nadathur1984higher,miller2012programming})
takes Prolog as a starting point,
generalizing first-order terms to {\em $\lambda$-terms} and
the mechanism of first-order unification to that of
{\em higher-order unification}.
Another example is Curry (Hanus et al.~\cite{DBLP:conf/popl/Hanus97,Hanus13})
in which programs are defined by equations, quite like in functional languages,
but evaluation is non-deterministic and evaluation is based on {\em narrowing},
\ie variables become instantiated in such a way as to fulfill the
constraints imposed by equations.

One of the interests of combining functional and logic programming is the fact
that the increased expressivity aids declarative programming. For instance,
if one writes a parser as a function
\texttt{parser : String $\longrightarrow$ AST},
it should be possible, under the right conditions,
to {\em invert} this function to obtain a
pretty-printer \texttt{pprint : AST $\longrightarrow$ String}:
\[
{\small
\texttt{pprint ast = $\nu$ source . ((ast $\unif$ parse source) ; source)}
}
\]
In this hypothetical functional--logic language, intuitively speaking,
the expression $(\fresh{\var}{\tm})$
creates a fresh symbolic variable $\var$ and proceeds to evaluate~$\tm$;
the expression $(\tm \unif \tmtwo)$ unifies $\tm$ with $\tmtwo$;
and the expression $(\tm \seq \tmtwo)$ returns the result of evaluating
$\tmtwo$ whenever the evaluation of $\tm$ succeeds.

Given that unification is a generalization of pattern matching,
a functional language with explicit unification should in some sense generalize
{\em $\lambda$-calculi with patterns},
such as the Pure Pattern Calculus~\cite{jay2006pure}.
For example, by relying on unification one may build
{\em dynamic} or {\em functional patterns}, \ie patterns that include operations
other than constructors. A typical instance is the following function
\texttt{last : [a] $\longrightarrow$ a},
which returns the last element of a non-empty cons-list:
\[
{\small
\texttt{last (xs ++ [x]) = x}
}
\]
Note that \verb|++| is not a constructor.
This definition may be desugared similarly as for the \texttt{pprint}
example above:
\[
{\small
\texttt{last lst = $\nu$ xs . $\nu$ x. (lst $\unif$ (xs ++ [x])); x}
}
\]

Still another interest comes from the point of view of the
{\em proposition-as-types correspondence}.
Terms of a $\lambda$-calculus with types can be understood as encoding
proofs, so for instance the identity function $(\lam{\var:\typ}{\var})$
may be understood as a proof of the implication $\typ \to \typ$.
From this point of view, a functional--logic program may be understood as a
{\em tactic},
as can be found in proof assistants such as
Isabelle or Coq~(see \eg~\cite{Coq}).
A term of type $\typ$ should then be understood as a non-deterministic
procedure which attempts to find a proof of $\typ$
and it may leave holes in the proof or even fail.
For instance if $P$ is a property on natural numbers,
$p$ is a proof of $P(0)$ and $q$ is a proof of $P(1)$,
then $\lam{n}{((n \unif 0) \seq p) \altt ((n \unif 1)\seq q)}$
is a tactic that given a natural number $n$
produces a proof of $P(n)$ whenever $n \in \set{0,1}$,
and otherwise it fails. Here $(\tm \altt \tmtwo)$ denotes
the {\em non-deterministic alternative} between $\tm$ and $\tmtwo$.

\medskip
The goal of this paper is to {\bf provide a foundation for functional--logic
programming} by {\bf extending the $\lambda$-calculus with relational constructs}.
Recall that the syntactic elements of the $\lambda$-calculus are
$\lambda$-terms ($\tm,\tmtwo,\hdots$),
which inductively may be {\em variables}~($\var,\vartwo,\hdots)$,
{\em abstractions}~($\lam{\var}{\tm}$), and {\em applications}~($\tm\,\tmtwo$).
Relational programming may be understood as the purest form of logic
programming, chiefly represented by the family of miniKanren
languages~(Byrd et al.~\cite{friedman_reasoned_2005,byrd2010relational}).
The core syntactic elements of miniKanren,
following for instance Rozplokhas et~al.~\cite{rozplokhas2019certified}
are goals ($G,G',\hdots$)
which are inductively given by:
{\em relation symbol invocations}, of the form $R(T_1,\hdots,T_n)$,
where $R$ is a relation symbol and $T_1,\hdots,T_n$ are terms of
a first-order language,
{\em unification} of first-order terms ($T_1 \unif T_2$),
{\em conjunction} of goals ($G \seq G'$),
{\em disjunction} of goals ($G \altt G'$),
and {\em fresh variable introduction} ($\fresh{\var}{G}$).

Our starting point is a ``chimeric creature''---a functional--logic
language resulting from cross breeding the $\lambda$-calculus and miniKanren,
given by the following abstract syntax:
  \vspace{-.25cm}
\[
  \begin{array}{rr@{\hspace{.25cm}}l@{\HS}l@{}rr@{\hspace{.25cm}}l@{\HS}l}
  \tm,\tmtwo
      & ::=  & \var              & \text{variable} &
      & \mid & \cons             & \text{constructor} \\
      & \mid & \lam{\var}{\tm}   & \text{abstraction} &
      & \mid & \tm\,\tmtwo       & \text{application} \\
      & \mid & \fresh{\var}{\tm} & \text{fresh variable introduction} &
      & \mid & \tm \altt \tmtwo  & \text{non-deterministic choice} \\
      & \mid & \tm \seq \tmtwo   & \text{guarded expression} &
      & \mid & \tm \unif \tmtwo  & \text{unification} \\
  \end{array}
\]
Its informal semantics has been described above.
Variables ($\var,\vartwo,\hdots$) may be instantiated by unification,
while constructors ($\cons,\constwo,\hdots$) are constants.
For example, if $\texttt{coin} \eqdef ({\bf true} \altt {\bf false})$
is a non-deterministic boolean with two possible values and
$\texttt{not} \eqdef \lam{\var}{((\var \unif {\bf true}) \seq {\bf false})\altt((\var \unif {\bf false}) \seq {\bf true})}$
is the usual boolean negation, the following non-deterministic computation:
\[
  (\lam{\var}{\lam{\vartwo}{(\var \unif \texttt{not}\,\vartwo)\seq {\bf pair}\ \var\,\vartwo}})\,\texttt{coin}\,\texttt{coin}
\]
should have two results,
namely ${\bf pair}\,{\bf true}\,{\bf false}$
and ${\bf pair}\,{\bf false}\,{\bf true}$.

\medskip
{\bf Structure of this paper.}
In Section~2, we discuss some technical difficulties that arise as one
intends to provide a formal operational semantics for the informal
functional--logic calculus sketched above.
In Section~3, we refine this rough proposal into a calculus we call
the {\em $\lambdaunif$-calculus}, with a
formal small-step operational semantics~(\rdef{operational_semantics_reduction_rules}).
To do so, we distinguish {\em terms},
which represent a single choice, from {\em programs}, which represent
a non-deterministic alternative between zero or more terms.
Moreover, we adapt the standard first-order unification algorithm to our
setting by imposing a {\em coherence} invariant on programs.
In Section~4, we study the operational properties of the
$\lambdaunif$-calculus: we provide an
inductive characterization of the set of normal forms~(\rprop{characterization_normal_forms}),
and we prove that it is confluent~(\rthm{confluence}) (up to a notion of structural equivalence).
In Section~5, we propose a straightforward system of simple types
and we show that it enjoys subject reduction~(\rprop{subject_reduction}).
In Section~6, we define a (naive) denotational semantics,
and we show that the operational semantics is sound (although it is not complete)
with respect to this denotational semantics~(\rthm{soundness}).
In Section~7, we conclude and we lay out avenues of further research.

\medskip
\noindent \WithProofs{{\bf Note.}
Most proofs have been left out from the body of the paper.
Detailed proofs that can be found in the technical appendix have been
marked with \SeeAppendix.}

\section{Technical Challenges}
  \lsec{technical_challenges}
  
This section is devoted to discussing technical stumbling blocks that
we encountered as we attempted to define an operational semantics
for the functional--logic
calculus incorporating all the constructs mentioned in the introduction.
These technical issues motivate the design decisions behind the
actual $\lambdaunif$-calculus defined in \rsec{lambdaunif_calculus}.
The discussion in this section is thus {\bf informal}.
Examples are carried out with their hypothetical or intended semantics.
\medskip

\noindent {\bf Locality of symbolic variables.}
The following program introduces a fresh variable $\var$ and then there
are two alternatives:
either $\var$ unifies with $\cons$ and the result is $\var$,
or $\var$ unifies with $\constwo$ and the result is $\var$.
The expected reduction semantics is the following.
The constant $\unit$ is the result obtained after a successful unification:
\[
  \begin{array}{rcll}
  \fresh{\var}{\left({((\var \unif \cons) \seq \var) \altt ((\var \unif \constwo) \seq \var)}\right)}
  & \to &
  ((\var \unif \cons) \seq \var) \altt ((\var \unif \constwo) \seq \var)
  & \text{with $\var$ fresh}
  \\
  & \to &
  (\unit \seq \cons) \altt ((\var \unif \constwo) \seq \var)
  & (\bigstar)
  \\
  & \to &
  (\unit \seq \cons) \altt (\unit \seq \constwo)
  \\
  & \twoheadrightarrow &
  \cons \altt \constwo
  \end{array}
\]
Note that in the step marked with $(\bigstar)$,
the variable $\var$ becomes instantiated to $\cons$,
but {\em only to the left of the choice operator} ($\altt$).
This suggests that programs should consist of different {\em threads}
fenced by choice operators.
Symbolic variables should be local to each thread.
\medskip

\noindent {\bf Need of commutative conversions.}
Redexes may be {\em blocked} by the choice operator---for example
in the application $((\tm \altt \lam{\var}{\tmtwo})\,\tmthree)$,
there is a potential $\beta$-redex $((\lam{\var}{\tmtwo})\,\tmthree)$
which is blocked. This suggests that {\em commutative conversions}
that distribute the choice operator should be incorporated,
allowing for instance a reduction step
$(\tm \altt \lam{\var}{\tmtwo})\,\tmthree \to \tm\,\tmthree \altt (\lam{\var}{\tmtwo})\,\tmthree$.
In our proposal, we force in the syntax that a program is always written,
canonically, in the form $\tm_1 \altt \hdots \altt \tm_n$,
where each $\tm_i$ is a deterministic program (\ie choice
operators may only appear inside lambdas).
This avoids the need to introduce commutative rules.
\medskip

\noindent {\bf Confluence only holds up to associativity and commutativity.}
There are two ways to distribute the choice operators in the following example:
\[\FIT{$
  \xymatrix@C=.1cm@R=.5cm{
    \tm_1(\tmtwo_1 \altt \tmtwo_2) \altt \tm_2(\tmtwo_1 \altt \tmtwo_2)
    \ar[d]
  &
    \ar[l]
    (\tm_1 \altt \tm_2)\,(\tmtwo_1 \altt \tmtwo_2)
    \ar[r]
  &
    (\tm_1 \altt \tm_2)\,\tmtwo_1 \altt (\tm_1 \altt \tm_2)\,\tmtwo_2
    \ar[d]
  \\
    (\tm_1\,\tmtwo_1 \altt \tm_1\,\tmtwo_2) \altt (\tm_2\,\tmtwo_1 \altt \tm_2\,\tmtwo_2)
    \ar@{.}[r]
  &
    \equiv
    \ar@{.}[r]
  &
    (\tm_1\,\tmtwo_1 \altt \tm_2\,\tmtwo_1) \altt (\tm_1\,\tmtwo_2 \altt \tm_2\,\tmtwo_2)
  }
$}\]
The resulting programs cannot be equated unless one works up to an equivalence
relation that takes into account the associativity and commutativity of the choice operator.
As we mentioned, the $\lambdaunif$-calculus works with programs in canonical
form $\tm_1 \altt \hdots \altt \tm_n$, so there is no need to work modulo associativity.
However, we do need commutativity. As a matter of fact, we shall define a notion of
{\em structural equivalence} ($\structeq$) between programs, allowing
the arbitrary reordering of threads. This relation will be shown to be well-behaved,
namely, a {\em strong bisimulation} with respect to the reduction relation,
\cf~\rlem{reduction_modulo_structeq}.
\medskip

\noindent{\bf Non-deterministic choice is an effect.}
Consider the program $(\lam{\var}{\var\,\var})(\cons\altt\constwo)$,
which chooses between $\cons$ and $\constwo$ and then it produces two
copies of the chosen value. Its expected reduction semantics is:
\[
  (\lam{\var}{\var\,\var})(\cons\altt\constwo)
  \to
  (\lam{\var}{\var\,\var})\cons \altt (\lam{\var}{\var\,\var})\constwo
  \twoheadrightarrow
  \cons\,\cons \altt \constwo\,\constwo
\]
This means that the first step in the following reduction, which
produces two copies of $(\cons\altt\constwo)$ cannot be
allowed, as it would break confluence:
\[
  (\lam{\var}{\var\,\var})(\cons\altt\constwo) \not\to (\cons\altt\constwo)\,(\cons\altt\constwo)
  \twoheadrightarrow
  \cons\,\cons \altt \cons\,\constwo \altt \constwo\,\cons \altt \constwo\,\constwo
\]
The deeper reason is that non-deterministic choice is a side-effect rather
than a value. Our design decision, consistent with this remark,
is to follow a {\bf call-by-value} discipline.
Another consequence of this remark is that the choice operator should
not commute with abstraction, given that $\lam{\var}{(\tm\altt\tmtwo)}$
and $(\lam{\var}{\tm})\altt(\lam{\var}{\tmtwo})$ are not observationally
equivalent. In particular, $\lam{\var}{(\tm\altt\tmtwo)}$ is a value, which
may be {\em copied},
while $(\lam{\var}{\tm})\altt(\lam{\var}{\tmtwo})$ is not a value.
On the other hand, if 
$\wctx$ is any {\em weak} context, \ie a term with a hole which does
{\em not} lie below a binder, and we write $\wctxof{\tm}$ for the result of plugging a
term $\tm$ into the hole of $\wctx$, then
$\wctxof{\tm \altt \tmtwo} = \wctxof{\tm} \altt \wctxof{\tmtwo}$
should hold.
\medskip

\noindent {\bf Evaluation should be weak.}
Consider the term $F \eqdef \lam{\vartwo}{((\vartwo \unif \var)\seq \var})$.
Intuitively, it unifies its argument with a (global)
symbolic variable $\var$ and then returns $\var$.
This poses two problems. First, when $\var$ becomes instantiated to $\vartwo$,
it may be outside the scope of the abstraction binding $\vartwo$,
for instance, the step
$F\,x = 
(\lam{\vartwo}{((\vartwo \unif \var)\seq \var}))\,\var
\to (\lam{\vartwo}{(\unit\seq\vartwo)})\,\vartwo$
produces a meaningless free occurrence of $\vartwo$.
Second, consider the following example in which two copies of $F$
are used with different arguments. If we do {\bf not} allow evaluation
under lambdas, this example fails due to a unification clash,
\ie it produces no outputs:
\[
  \begin{array}{rcll}
  (\lam{f}{(f\,\cons)\,(f\,\constwo)})\,F
  & \to &
  (F\,\cons)\,(F\,\constwo)
  \\
  & \to &
  ((\cons\unif\var)\seq\var)\,((\constwo\unif\var)\seq\var)
  \\
  & \to &
  (\unit\seq\cons)\,((\constwo\unif\cons)\seq\cons)
  & (\bigstar)
  \\
  & \to &
  \fail
  \end{array}
\]
Note that in the step marked with $(\bigstar)$, the symbolic variable $\var$
has become instantiated to $\cons$, leaving us with the unification goal
$\constwo \unif \cons$ which fails.
On the other hand, if we were to allow reduction under lambdas,
given that there are no other occurrences of $\var$ anywhere in the
term, in one step $F$ becomes $\lam{\vartwo}{(\unit\seq\vartwo)}$,
which then behaves as the identity:
\[
  \begin{array}{rcl}
    (\lam{f}{(f\,\cons)\,(f\,\constwo)})\,F
  & \not\to &
    (\lam{f}{(f\,\cons)\,(f\,\constwo)})\,(\lam{\vartwo}{\unit\seq\vartwo})
  \\
  & \to &
    ((\lam{\vartwo}{\unit\seq\vartwo})\,\cons)\,((\lam{\vartwo}{\unit\seq\vartwo})\,\constwo)
  \\
  & \twoheadrightarrow &
    \cons\,\constwo
  \end{array}
\]
Thus allowing reduction below abstractions in this example
would break confluence.
This suggests that evaluation should be {\bf weak},
\ie it should not proceed below binders.
\medskip

\noindent {\bf Avoiding higher-order unification.}
The calculus proposed in this paper rests on the design choice to
{\em avoid} attempting to solve higher-order unification problems.
Higher-order unification problems can be expressed in the syntax:
for example in $(f \cons \unif \cons)$ the variable $f$
represents an unknown value which should fulfill the given constraint.
From our point of view, however, this program is stuck and its evaluation
cannot proceed---it is a normal form.
However, note that we {\bf do} want to allow {\em pattern matching} against
functions; for example the following should succeed, instantiating
$f$ to the identity:
\[
  (\cons\,f \unif \cons(\lam{\var}{\var}))\seq(f \unif f)
  \to
  (\lam{\var}{\var}) \unif (\lam{\var}{\var})
  \to
  \unit
\]
The decision to sidestep higher-order unification is a debatable one,
as it severely restricts the expressivity of the language.
But there are various reasons to explore alternatives.
First, higher-order unification is undecidable~\cite{huet1973undecidability},
and even second order unification is known to be undecidable~\cite{levy2000undecidability}.
Huet's semi-decision procedure~\cite{huet1975unification} does find a
solution should it exist, but even then higher-order unification problems do
not necessarily possess {\em most general unifiers}~\cite{gould66},
which turns confluence hopeless\footnote{Key in our proof of confluence
is the fact that if $\subst$ and $\subst'$
are most general unifiers for unification problems $\goals$ and $\goals'$
respectively, then the most general unifier for $(\goals \cup \goals')$
is an instance of both $\subst$ and $\subst'$. See~\rexample{confluence_unif_unif}.}.
Second, there are decidable restrictions of higher-order unification which
do have most general unifiers, such as
{\em higher-order pattern unification}~\cite{miller1991unification}
used in $\lambda$Prolog, and {\em nominal unification}~\cite{urban2004nominal}
used in $\alpha$Prolog. But these mechanisms require {\em strong} evaluation,
\ie evaluation below abstractions, departing from the traditional execution
model of eager applicative languages such as in the Lisp and ML families,
in which closures are opaque values whose bodies cannot be examined.
Moreover, they are formulated in a necessarily typed setting.

The calculus studied in this paper relies on a standard first-order unification
algorithm, with the only exception that abstractions are deemed to be equal
if and only if they have the same ``identity''. Intuitively speaking,
this means that they are stored in the same memory location, \ie they are
represented by the same pointer.
This is compatible with the usual implementation techniques
of eager applicative languages, so it should allow to use standard compilation
techniques for $\lambda$-abstractions. Also note that
the operational semantics does not require to work with typed
terms---in fact the system presented in \rsec{lambdaunif_calculus} is
untyped, even though we study a typed system in~\rsec{type_system}.

\section{The $\lambdaunif$-Calculus --- Operational Semantics}
  \lsec{lambdaunif_calculus}
  
In this section we describe the operational semantics
of our proposed calculus, including its syntax,
reduction rules~(\rdef{operational_semantics_reduction_rules}),
an invariant ({\em coherence}) which is preserved by reduction~(\rlem{location_coherence_invariant}),
and a notion of structural equivalence which is a
{\em strong bisimulation} with respect to reduction~(\rlem{reduction_modulo_structeq}).
\medskip

{\bf Syntax of terms and programs.}
Suppose given denumerably infinite sets
of {\em variables} $\Var = \set{\var, \vartwo, \varthree, \hdots}$,
{\em constructors} $\Cons = \set{\cons, \constwo, \consthree, \hdots}$,
and
{\em locations} $\Loc = \set{\loc,\loctwo,\locthree,\hdots}$.
We assume that there is a distinguished constructor $\unit$.
The sets of
{\em terms} $\tm, \tmtwo, \hdots$ and
{\em programs} $\prog,\progtwo,\hdots$
are defined mutually inductively as follows:
\[
  \begin{array}[t]{lrrl@{\HS}l@{}lrrl@{\HS}l}
    & \tm & ::=  & \var                       & \textup{variable}       &
    &     & \mid & \cons                      & \textup{constructor}    \\
    &     & \mid & \lam{\var}{\prog}          & \textup{abstraction} &
    &     & \mid & \laml{\loc}{\var}{\prog}   & \textup{allocated abstraction} \\
    &     & \mid & \tm\,\tm                   & \textup{application} &
    &     & \mid & \fresh{\var}{\tm}          & \textup{fresh variable introduction} \\
    &     & \mid & \tm \seq \tm               & \textup{guarded expression}       &
    &     & \mid & \tm \unif \tm              & \textup{unification}    \\
  \\
    & \prog & ::=  & \fail                 & \textup{empty program} \\
    &       & \mid & \tm \alt \prog        & \textup{non-deterministic choice} \\
  \end{array}
\]
The set of {\em values} $\Val = \set{\val,\valtwo,\hdots}$
is a subset of the set of terms, given by the grammar
$\val ::= \var \mid \laml{\loc}{\var}{\prog} \mid \cons\,\val_1\hdots\val_n$.
Values of the form $\cons\,\val_1\hdots\val_n$ are called {\em structures}.

Intuitively,
an (unallocated) abstraction $\lam{\var}{\prog}$ represents the static code
to create a closure, while $\laml{\loc}{\var}{\prog}$ represents the closure
created in runtime, stored in the memory cell $\loc$.
When the abstraction is evaluated, it becomes decorated with a location
({\em allocated}). We will have a rewriting rule like
$\lam{\var}{\prog} \to \laml{\loc}{\var}{\prog}$ where $\loc$ is fresh.
%:% Note that failure and non-deterministic choice may {\em not} occur anywhere
%:% in a term, but only inside the bodies of abstractions.
%:% Failure and non-deterministic choice at the level of terms
%:% may be expressed with the following abbreviations:
%:% $\failt \eqdef (\lam{\var}{\fail})\,\unit$
%:% and
%:% $(\tm \altt \tmtwo) \eqdef (\lam{\var}{\tm \alt \tmtwo \alt \fail})\,\unit$.

{\bf Notational conventions.}
We write $\gctx,\gctx',\hdots$ for {\em arbitrary contexts}, \ie terms with
a single free occurrence of a hole $\ctxhole$.
We write $\wctx,\wctx',\hdots$ for {\em weak contexts}, which 
do not enter below abstractions nor fresh variable declarations,
\ie
$\wctx ::= \ctxhole
      \mid \wctx\,\tm
      \mid \tm\,\wctx
      \mid \wctx \seq \tm
      \mid \tm \seq \wctx
      \mid \wctx \unif \tm
      \mid \tm \unif \wctx$.
We write $\alt_{i=1}^{n} \tm_i$
or also $\tm_1 \alt \tm_2 \hdots \alt \tm_n$ to stand for the program
$\tm_1 \alt (\tm_2 \alt \hdots (\tm_n \alt \fail))$.
In particular, if $\tm$ is a term, sometimes we write $\tm$ for the {\em singleton} program $\tm \alt \fail$.
The set of free variables $\fv{\tm}$ (resp. $\fv{\prog}$) of a term (resp. program) is defined as expected,
noting that fresh variable declarations $\fresh{\var}{\tm}$ and both kinds of abstractions
$\lam{\var}{\prog}$ and $\laml{\loc}{\var}{\prog}$ bind the free occurrences of $\var$ in the body.
Expressions are considered up to $\alpha$-equivalence, \ie renaming of all bound variables.
Given a context or weak context $\gctx$ and a term $\tm$,
we write $\gctxof{\tm}$ for the (capturing) substitution of $\ctxhole$ by $\tm$ in $\gctx$.
The set of locations $\locs{\tm}$ (resp. $\fv{\prog}$) of a term (resp. program) is defined as the
set of all locations $\loc$ decorating any abstraction on $\tm$.
We write $\tm\sub{\loc}{\loc'}$ for the term that results from replacing all occurrences of
the location $\loc$ in $\tm$ by $\loc'$.
The program being evaluated is called
the {\em toplevel program}. The toplevel program is always of the form
$\tm_1 \alt \tm_2 \hdots \alt \tm_n$, and each of the $\tm_i$ is called a {\em thread}.

{\bf Operations with programs.}
We define the operations $\prog \alt \progtwo$ and $\wctxof{\prog}$ by
induction on the structure of $\prog$ as follows;
note that the notation ``$\alt$'' is overloaded
both for consing a term onto a program and for concatenating programs:
\[
  \begin{array}{rcll}
    \fail            \alt \progtwo & \eqdef & \progtwo \\
    (\tm \alt \prog) \alt \progtwo & \eqdef & \tm \alt (\prog \alt \progtwo) \\
  \end{array}
  \hspace{1cm}
  \begin{array}{rcll}
    \wctxof{\fail}          & \eqdef & \fail \\
    \wctxof{\tm \alt \prog} & \eqdef & \wctxof{\tm} \alt \wctxof{\prog} \\
  \end{array}
\]

{\bf Substitutions.}
A {\em substitution} is a function
$\subst : \Var \to \Val$ with {\em finite support},
\ie such that the set $\supp(\subst) \eqdef \set{\var \ST \subst(\var) \neq \var}$
is finite.
We write $\set{\var_1 \mapsto \val_1, \hdots, \var_n \mapsto \val_n}$
for the substitution $\subst$ such that
$\supp(\subst) = \set{\var_1, \hdots, \var_n}$
and $\subst(\var_i) = \val_i$ for all $i \in 1..n$.
A {\em renaming} is a bijective substitution mapping each variable to a variable, \ie
a substitution of the form $\set{\var_1 \mapsto \vartwo_1, \hdots, \var_n \mapsto \vartwo_n}$.

If $\subst : \Var \to \Val$ is a substitution and
$\tm$ is a term, $\tm\SUB{\subst}$ denotes the capture-avoiding substitution
of each occurrence of a free variable $\var$ in $\tm$
by $\subst(\var)$.
Capture-avoiding substitution of a single variable $\var$
by a value $\val$ in a term $\tm$ is written $\tm\sub{\var}{\val}$
and defined by $\tm\SUB{\set{\var\mapsto\val}}$.
Subsitutions $\substtwo,\subst$ may be {\em composed} as follows:
$(\substtwo\scomp\subst)(\var) \eqdef \substtwo(\var)\SUB{\subst}$.
Substitutions can also be applied to weak contexts,
taking $\ctxhole\SUB{\subst} \eqdef \ctxhole$.
A substitution $\subst$ is {\em idempotent} if $\subst \cdot \subst = \subst$.
A substitution $\subst$ is {\em more general} than a substitution $\substtwo$,
written $\subst \sleq \substtwo$ if there is a substitution $\substthree$
such that $\substtwo = \subst \scomp \substthree$.

%%%%%%%%%%%%%%%%%%%%%%%%%%%%%%%%%%%%%%%%%%%%%%%%%%%%%%%%%%%%%%%%%%%%%%%%%%%%%%%%

{\bf Unification.}
We describe how to adapt the standard first-order unification algorithm
to our setting, in order to deal with unification of $\lambda$-abstractions.
As mentioned before, our aim is to solve only first-order unification problems.
This means that the unification algorithm should only deal with equations
involving terms which are already {\em values}.
Note that unallocated abstractions ($\lam{\var}{\prog}$) are {\em not} considered values;
abstractions are only values when they are allocated ($\laml{\loc}{\var}{\prog}$).
Allocated abstractions are to be considered equal
if and only if they are decorated with the same location.
Note that terms of the form $\var\,\tm_1\hdots\tm_n$ are {\em not} considered
values if $n > 0$, as this would pose a higher-order unification problem,
possibly requiring to instantiate $\var$ as a function of its arguments.

We expand briefly on why a naive approach to first-order unification would not work.
Suppose that we did not have locations and we declared
that two abstractions $\lam{\var}{\prog}$ and $\lam{\vartwo}{\progtwo}$ are equal
whenever their bodies are equal, up to $\alpha$-renaming
(\ie $\prog\sub{\var}{\vartwo} = \progtwo$).
The problem is that this notion of equality is not preserved by substitution,
for example, the unification problem given by the equation
$\laml{}{\var}{\vartwo} \unif \laml{}{\var}{\varthree}$ would {\em fail},
as $\vartwo \neq \varthree$.
However, the variable $\vartwo$ may become instantiated into $\varthree$, and
the equation would become $\laml{}{\var}{\varthree} \unif \laml{}{\var}{\varthree}$,
which succeeds. This corresponds to the following critical pair in the calculus,
which cannot be closed:
\[
  \fail
  \leftarrow
  (\laml{}{\var}{\vartwo} \unif \laml{}{\var}{\varthree}) \seq (\vartwo \unif \varthree)
  \to
  (\laml{}{\var}{\varthree} \unif \laml{}{\var}{\varthree}) \seq \unit
  \to
  \unit \seq \unit
\]
This is where the notion of {\em allocated abstraction} plays an important role.
We will work with the invariant that if $\laml{\loc}{\var}{\prog}$ and $\laml{\loctwo}{\vartwo}{\progtwo}$
are two allocated abstractions in the same location ($\loc = \loctwo$) then their bodies will be equal,
up to $\alpha$-renaming.
This ensures that different allocated abstractions are still different after
substitution, as they must be decorated with different locations.

{\bf Unification goals and unifiers.}
A {\em goal} is a term of the form $\val \unif \valtwo$.
A {\em unification problem} is a finite set of goals $\goals = \set{\val_1 \unif \valtwo_1, \hdots, \val_n \unif \valtwo_n}$.
If $\subst$ is a substitution we write $\goals\SUB{\subst}$ for
$\set{\val_1\SUB{\subst} \unif \valtwo_1\SUB{\subst}, \hdots, \val_n\SUB{\subst} \unif \valtwo_n\SUB{\subst}}$.
A {\em unifier} for $\goals = \set{\val_1 \unif \valtwo_1, \hdots, \val_n \unif \valtwo_n}$ is
a substitution $\subst$ such that $\val_i\SUB{\subst} = \valtwo_i\SUB{\subst}$
for all $1 \leq i \leq n$.
A unifier $\subst$ for $\goals$ is {\em most general}
if for any other unifier $\substtwo$ one has $\subst \sleq \substtwo$.

{\bf Coherence invariant.}
As mentioned before, we impose an invariant on programs forcing that
allocated abstractions decorated with the same location must be syntactically
equal. Moreover, we require that allocated abstractions do not refer to
variables bound outside of their scope, \ie that they are in fact closures.
Note that the source program trivially satisfies this invariant,
as it is expected that allocated abstractions are not written by the user
but generated at runtime.

More precisely, a set $X$ of terms is {\bf coherent} if the two following conditions hold.
{\bf (1)}
  Consider any allocated abstraction under a context $\gctx$,
  \ie let $\tm \in X$ such that $\tm = \gctxof{\laml{\loc}{\var}{\prog}}$.
  Then the context $\gctx$ does not bind any of the free variables of $\laml{\loc}{\var}{\prog}$.
{\bf (2)}
  Consider any two allocated abstractions in $\tm$ and $\tmtwo$ with the same location,
  \ie let $\tm, \tmtwo \in X$ be such that $\tm = \gctxof{\laml{\loc}{\var}{\prog}}$
  and $\tmtwo = \gctx'\ctxof{\laml{\loc}{\vartwo}{\progtwo}}$,
  Then $\prog\sub{\var}{\vartwo} = \progtwo$.

We extend the notion of coherence to other syntactic categories as follows.
A term $\tm$ is coherent if $\set{\tm}$ is coherent.
A program $\prog = \tm_1 \alt \hdots \alt \tm_n$ is {\em coherent}
if each thread $\tm_i$ is coherent.
A unification problem $\goals$ is {\em coherent} if it is coherent seen as a set.
Note that a program may be coherent even if different abstractions
in different threads have the same location.
For example,
$(\laml{\loc}{\var}{\var\,\var} \unif \laml{\loc}{\vartwo}{\cons}) \alt (\laml{\loctwo}{\vartwo}{\vartwo})$
is not coherent,
whereas
$(\laml{\loc}{\var}{\var\,\var} \unif \laml{\loc}{\vartwo}{\vartwo\,\vartwo}) \alt (\laml{\loc}{\vartwo}{\cons})$ is coherent.

{\bf Unification algorithm.}
The standard Martelli--Montanari~\cite{martelli1982efficient} unification algorithm
can be adapted to our setting.
In particular, there is a computable function $\mgu{-}$
such that if $\goals$ is a coherent unification problem
then either
$\mgu{\goals} = \subst$,
\ie $\mgu{\goals}$ returns a substitution $\subst$ which is an idempotent most general unifier for $\goals$,
or $\mgu{\goals} = \FAIL$, \ie $\mgu{\goals}$ fails and $\goals$ has no unifier.
Moreover, it can be shown that if the algorithm succeeds,
the set $\goals\SUB{\subst} \cup \set{\subst(\var) \ST \var \in \Var}$ is coherent.
\WithProofs{The algorithm, formal statement and proofs are detailed in the
appendix~\SeeAppendixRef{\rsec{appendix_unification_algorithm}}.}
\medskip

%%%%%%%%%%%%%%%%%%%%%%%%%%%%%%%%%%%%%%%%%%%%%%%%%%%%%%%%%%%%%%%%%%%%%%%%%%%%%%%%

{\bf Operational semantics.}
The {\bf $\lambdaunif$-calculus} is the rewriting system whose
objects are programs, and whose reduction relation is given by
the union of the following six rules:

\begin{definition}[Reduction rules]
\ldef{operational_semantics_reduction_rules}
\[
{\small
  \begin{array}{rll@{\HS}l}
    \prog_1 \alt \wctxof{\lam{\var}{\prog}} \alt \prog_2
    & \toca{alloc} &
    \prog_1 \alt \wctxof{\laml{\loc}{\var}{\prog}} \alt \prog_2
    & \text{if $\loc \not\in \locs{\wctxof{\lam{\var}{\prog}}}$}
  \\
    \prog_1 \alt \wctxof{(\laml{\loc}{\var}{\prog})\,\val} \alt \prog_2
    & \toca{beta} &
    \prog_1 \alt \wctxof{\prog\sub{\var}{\val}} \alt \prog_2
  \\
    \prog_1 \alt \wctxof{\val\seq\tm} \alt \prog_2
    & \toca{guard} &
    \prog_1 \alt \wctxof{\tm} \alt \prog_2
  \\
    \prog_1 \alt \wctxof{\fresh{\var}{\tm}} \alt \prog_2
    & \toca{fresh} &
    \prog_1 \alt \wctxof{\tm\sub{\var}{\vartwo}} \alt \prog_2
    & \text{if $\vartwo$ is fresh w.r.t. $\wctxof{\fresh{\var}{\tm}}$}
  \\
    \prog_1 \alt \wctxof{\val \unif \valtwo} \alt \prog_2
    & \toca{unif} &
    \prog_1 \alt \wctxof{\unit}\SUB{\subst} \alt \prog_2
    & \text{if $\mgu{\set{\val\unif\valtwo}} = \subst$}
  \\
    \prog_1 \alt \wctxof{\val \unif \valtwo} \alt \prog_2
    & \toca{fail} &
    \prog_1 \alt \prog_2
    & \text{if $\mgu{\set{\val\unif\valtwo}}$ fails}
  \end{array}
}
\]
\end{definition}
Note that all rules operate on a single thread and they are {\bf not} closed
under any kind of evaluation contexts.
The \indrulename{alloc} rule allocates a closure,
\ie whenever a $\lambda$-abstraction is found below an evaluation
context, it may be assigned a fresh location $\loc$.
The \indrulename{beta} rule applies a function to a value.
The \indrulename{guard} rule proceeds with the evaluation
of the right part of a guarded expression when the left part is already a value.
The \indrulename{fresh} rule introduces a fresh symbolic variable.
The requirement that $\vartwo$ be fresh can be stated more precisely
as the condition that
$\vartwo$ does not occur free in $\wctx$ nor in $\tm$,
\ie $\vartwo \notin \fv{\wctx} \cup \fv{\tm}$,
and that $\vartwo$ is not bound by $\wctx$,
\ie $\vartwo \in \fv{\wctxof{\vartwo}}$.
The \indrulename{unif} and \indrulename{fail} rules solve a unification
problem, corresponding to the success and failure cases respectively.
If there is a unifier, the substitution is applied to the affected thread.
For example:
\[
{\small
  \begin{array}{rrl}
  (\lam{\var}{\var\alt(\fresh{\vartwo}{((\var \unif \cons\,\vartwo)\seq\vartwo)})})\,(\cons\,\constwo)
  & \toca{alloc} &
  (\laml{\loc}{\var}{\var\alt(\fresh{\vartwo}{((\var \unif \cons\,\vartwo)\seq\vartwo)})})\,(\cons\,\constwo)
  \\
  & \toca{beta} &
  \cons\,\constwo \alt \fresh{\vartwo}{((\cons\,\constwo \unif \cons\,\vartwo)\seq\vartwo)}
  \\
  & \toca{fresh} &
  \cons\,\constwo \alt ((\cons\,\constwo \unif \cons\,\varthree)\seq\varthree)
  \\
  & \toca{unif} &
  \cons\,\constwo \alt (\unit\seq\constwo)
  \\
  & \toca{guard} &
  \cons\,\constwo \alt \constwo
  \end{array}
}
\]

{\bf Structural equivalence.}
As already remarked in~\rsec{technical_challenges}, we will not be able to
prove that confluence holds strictly speaking,
but only {\em up to reordering of threads}
in the toplevel program.
Moreover the \indrulename{alloc} and \indrulename{fresh} rules introduce
fresh names, and, as usual the most general unifier is unique only
{\em up to renaming}.
These conditions are expressed formally by means of the following relation
of structural equivalence.

Formally, {\bf structural equivalence} between programs
is written $\prog \structeq \progtwo$ and defined as the reflexive, symmetric,
and transitive closure of the three following axioms:
\begin{enumerate}
\item \rulename{$\structeq$-swap}:
  $\prog \alt \tm \alt \tmtwo \alt \progtwo \structeq \prog \alt \tmtwo \alt \tm \alt \progtwo$.
\item \rulename{$\structeq$-var}:
  If $\vartwo \not\in \fv{\tm}$ then
  $\prog \alt \tm \alt \progtwo \structeq \prog \alt \tm\sub{\var}{\vartwo} \alt \progtwo$.
\item \rulename{$\structeq$-loc}:
  If $\loc' \not\in \locs{\tm}$, then
  $\prog \alt \tm \alt \progtwo \structeq \prog \alt \tm\sub{\loc}{\loc'} \alt \progtwo$.
\end{enumerate}
In short, \indrulename{$\structeq$-swap} means that threads may be reordered arbitrarily,
\indrulename{$\structeq$-var} means that symbolic variables are local to each thread,
and \indrulename{$\structeq$-loc} means that locations are local to each thread.

\medskip
The following lemma establishes that the coherence invariant
is closed by reduction and structural equivalence,
which means that the $\lambdaunif$-calculus is well-defined
if restricted to coherent programs. In the rest of this paper, we
always assume that {\bf all programs enjoy the coherence invariant}.

\begin{lemma}
\llem{location_coherence_invariant}
Let $\prog$ be a coherent program.
If $\prog \structeq \progtwo$ or $\prog \toca{} \progtwo$,
then $\progtwo$ is also coherent.
\SeeAppendixRef{\rsec{appendix_location_coherence_invariant}}
\end{lemma}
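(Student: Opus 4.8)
The plan is to reduce the whole statement to a per-thread claim. A program $\tm_1 \alt \cdots \alt \tm_n$ is coherent precisely when each thread $\tm_i$ is coherent as a singleton, and every reduction rule rewrites a single thread inside a fixed environment $\prog_1 \alt \ctxhole \alt \prog_2$, leaving the other threads untouched; so it suffices to show that the rewritten thread(s) stay coherent. The only rule that turns one thread into several is \rulename{beta}, because the body $\prog$ of the applied abstraction may be a multi-thread program and $\wctxof{\prog\sub\var\val}$ distributes over $\alt$. There I would use that all the resulting threads are built from subterms of the single coherent source thread, so the cross-thread instances of conditions~(1) and~(2) are inherited.

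For structural equivalence, since $\structeq$ is the reflexive--symmetric--transitive closure of three axioms, it is enough to check that each axiom preserves coherence in both directions. \rulename{$\structeq$-swap} only reorders threads, which is immediate. \rulename{$\structeq$-var} and \rulename{$\structeq$-loc} each apply an injective renaming of one variable (resp.\ location) to a single thread, with the target fresh for that thread. The key observation is that both coherence conditions are invariant under an injective renaming: condition~(1) constrains only the binding structure, which renaming preserves, and condition~(2) is an equation between abstraction bodies, preserved by applying the same renaming to both sides. Injectivity (from freshness of the target) makes the inverse renaming equally admissible, handling both directions at once.

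Among the reduction rules, \rulename{fail} deletes a whole thread and \rulename{guard} discards $\val$ and the $\seq$; both only remove allocated abstractions, which can never break either condition. \rulename{alloc} inserts a new allocated abstraction $\laml\loc\var\prog$ at the hole of a \emph{weak} context: condition~(1) holds because a weak context binds nothing, and condition~(2) holds because the side condition forces $\loc$ to be fresh for the thread, so no other abstraction shares it. \rulename{fresh} and \rulename{beta} both substitute, and for both the decisive fact is condition~(1): an allocated abstraction occurring inside the body $\prog$ cannot have the abstracted variable $\var$ among its free variables, since $\var$ is bound by the enclosing binder, which belongs to that abstraction's context. Hence for \rulename{fresh} the renaming $\sub\var\vartwo$ leaves every allocated abstraction in $\tm$ literally unchanged (it only renames free occurrences of $\var$ lying outside abstractions), and coherence follows as in the $\structeq$-var case. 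For \rulename{beta} the same observation shows $\sub\var\val$ does not alter the abstractions already inside $\prog$; it merely inserts copies of $\val$. All copies of $\val$ are syntactically identical, hence mutually consistent, and a copied abstraction is consistent with one of $\wctx$ or $\prog$ at the same location because that consistency already held in the source thread, which contained both $\prog$ and $\val$ as subterms; capture-avoidance secures condition~(1) for the inserted material.

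The real obstacle is \rulename{unif}, where the most general unifier $\subst$ is applied to the whole thread $\wctxof\unit$. I would isolate a substitution lemma: if $T$ is coherent, $\subst$ is idempotent, and $\set T \cup \set{\subst(\var)\ST\var\in\Var}$ is coherent, then $T\SUB\subst$ is coherent. The allocated abstractions of $T\SUB\subst$ are the $\subst$-images of those of $T$ together with verbatim copies of those in $\mathrm{range}(\subst)$ (verbatim because idempotency makes range values $\subst$-invariant). Condition~(2) for two images follows by applying the same $\subst$ to an existing equation; condition~(2) for an image $B_A\SUB\subst$ against a range body $B_{A'}$ follows from $B_A \equiv B_{A'}$ (coherence of the combined set) together with $B_{A'}\SUB\subst = B_{A'}$; and condition~(1) follows from capture-avoidance together with condition~(1) for the range. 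To apply the lemma I must supply its hypotheses: idempotency and coherence of $\goals\SUB\subst \cup \mathrm{range}(\subst)$ are exactly the properties guaranteed by $\mgu{-}$ recalled above, while the remaining point---that $\set{\wctxof\unit} \cup \mathrm{range}(\subst)$ is coherent, i.e.\ that the abstractions of the surrounding context and of the unifier's range agree on shared locations---is inherited from coherence of the source thread $\wctxof{\val\unif\valtwo}$, since the range of a first-order unifier is built from subterms of $\val$ and $\valtwo$. Establishing this last location-consistency between context and unifier range, and the accompanying analysis of $\mgu$, is where the bulk of the work lies and where I expect the appendix to do the heavy lifting.
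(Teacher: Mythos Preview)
Your proposal is correct and follows essentially the same approach as the paper: a per-thread reduction, the trivial handling of \rulename{$\structeq$-swap}/\rulename{$\structeq$-var}/\rulename{$\structeq$-loc} and of \rulename{guard}/\rulename{fresh}/\rulename{fail}, the use of condition~(1) to argue that the abstracted variable cannot be free in nested allocated abstractions (which is exactly how the paper handles the \rulename{beta} case), and for \rulename{unif} tracing every allocated abstraction in the reduct back either to the context $\wctx$ or to a subterm of the goal $\val\unif\valtwo$ via the properties of $\mgu{-}$. The only presentational difference is that you package the \rulename{unif} argument as a standalone substitution lemma (``if $T$ is coherent, $\subst$ idempotent, and $\{T\}\cup\mathrm{range}(\subst)$ coherent, then $T\SUB\subst$ coherent''), whereas the paper does the same three-way case analysis inline; your formulation is arguably cleaner and more reusable, but the underlying reasoning and the reliance on the two auxiliary guarantees of $\mgu{-}$ (idempotency plus the coherence of $\goals\SUB\subst\cup\mathrm{range}(\subst)$, and that range locations already occur in $\goals$) are identical.
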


The following lemma establishes that reduction is well-defined
modulo structural equivalence (\ie it lifts to $\structeq$-equivalence classes):

\begin{lemma}
\llem{reduction_modulo_structeq}
Structural equivalence is a strong bisimulation with respect to $\to$.
Precisely, let $\prog \equiv \prog' \toca{x} \progtwo$
with $\rulename{x} \in \set{
  \rulename{alloc}, \rulename{beta}, \rulename{guard},
  \rulename{fresh}, \rulename{unif}, \rulename{fail}
}$.
Then there exists a program $\progtwo'$ such that $\prog \toca{x} \progtwo' \equiv \progtwo$.
\SeeAppendixRef{\rsec{appendix_reduction_modulo_structeq}}
\end{lemma}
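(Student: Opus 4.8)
The plan is to exploit the fact that $\structeq$ is generated, as a reflexive--symmetric--transitive closure, by the three axioms \rulename{$\structeq$-swap}, \rulename{$\structeq$-var}, and \rulename{$\structeq$-loc}. First I would observe that the strong bisimulation property \emph{composes along $\structeq$-chains}: if a relation $R$ enjoys the one-step simulation property (whenever $\prog \mathbin{R} \prog'$ and $\prog' \toca{x} \progtwo$, there is $\progtwo'$ with $\prog \toca{x} \progtwo'$ and $\progtwo' \structeq \progtwo$), then so does its reflexive--transitive closure, by threading the simulation backwards along the chain witnessing $\prog \structeq \prog'$ and concatenating the resulting $\structeq$-steps, using transitivity of $\structeq$ at the end. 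Since each axiom is symmetric up to taking inverses (swap is an involution, and the inverse of a renaming is again a renaming) and $\structeq$ is itself symmetric, it therefore suffices to prove the one-step property for a single application of each of the three axioms. In each case the matching reduction will fire the \emph{same} rule $\rulename{x}$, as the statement demands, so this is preserved through the composition.

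Fix an axiom relating $\prog$ and $\prog'$ and a reduction $\prog' \toca{x} \progtwo$. The proof then proceeds by cases on whether the thread rewritten by $\rulename{x}$ is among the threads touched by the axiom. Because every rule of \rdef{operational_semantics_reduction_rules} acts on exactly one thread---rewriting it in place (applying a substitution to that thread alone, in the \rulename{unif} case) or deleting it (in the \rulename{fail} case)---a reduction firing in an \emph{untouched} thread commutes immediately with the axiom, and the two results are related by the very same axiom. The case of \rulename{$\structeq$-swap} is then routine: reordering two threads neither of which carries the redex is transparent to the reduction, and if the redex sits in one of the two swapped threads the reduced programs differ again only by a swap (or are literally equal, when \rulename{fail} deletes that thread).

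The substantive cases are \rulename{$\structeq$-var} and \rulename{$\structeq$-loc} when the redex lies in the renamed thread. Here I would rely on the fact that each reduction rule commutes with a bijective renaming $\substa$ of variables (respectively, a consistent renaming of locations): $\substa$ commutes with capture-avoiding substitution, with plugging into a weak context $\wctxof{-}$, and---crucially---with $\mgu{-}$. For \rulename{alloc} and \rulename{fresh} the only delicacy is the freshly chosen location or variable; since both the source and the target reduction are free to pick any fresh name, I can align the two choices, adjusting if necessary by a further \rulename{$\structeq$-loc} or \rulename{$\structeq$-var} step, which is admissible precisely because the new name is local to the single affected thread and meets the freshness side condition.

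The main obstacle will be the \rulename{unif} and \rulename{fail} cases, because $\mgu{-}$ is determined only up to renaming. Concretely, after renaming a thread $\wctxof{\val \unif \valtwo}$ by $\substa$, the unifier computed for $\set{\val \unif \valtwo}\SUB{\substa}$ need not be literally the $\substa$-conjugate of $\subst = \mgu{\set{\val \unif \valtwo}}$; it agrees with it only up to a residual renaming. The key point to verify is that this residual renaming touches only variables and locations that are \emph{local} to the single thread being rewritten---so that it can be absorbed by a finite sequence of \rulename{$\structeq$-var} / \rulename{$\structeq$-loc} steps, yielding $\progtwo' \structeq \progtwo$ rather than literal equality. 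This confinement is exactly what the per-thread nature of these two axioms supports, since the goal, its unifier, and the substitution in the \rulename{unif} rule are all restricted to the active thread (and \rulename{fail} preserves unifiability under the bijection $\substa$, so both sides simply delete the thread). Throughout, one must check that the coherence required for $\mgu{-}$ to be defined on the rewritten thread is preserved, which is precisely \rlem{location_coherence_invariant}, and that idempotence of the computed unifier together with the freshness of allocated locations keeps the residual renaming inside the active thread. Establishing this confinement is where the real work lies; the remaining bookkeeping is routine.
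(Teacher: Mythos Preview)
Your proposal is correct and follows essentially the same approach as the paper: reduce to a single axiom via the closure structure (the paper phrases this as induction on the derivation of $\prog \structeq \prog'$, noting reflexivity/transitivity are immediate and the axioms are symmetric), then perform a case analysis on the axiom and the reduction rule, with the only real work being the \rulename{unif} case where the paper invokes exactly the ``$\mgu{-}$ commutes with renaming up to a residual renaming'' fact you describe (\rlem{properties_of_mgus}), absorbing the residual renaming into $\structeq$ via repeated \rulename{$\structeq$-var}/\rulename{$\structeq$-loc}. The paper does not separately argue that the residual renaming is ``confined'' to the active thread---it simply uses that any renaming of a single thread can be realized by iterated \rulename{$\structeq$-var} (resp.\ \rulename{$\structeq$-loc}) steps---so that part of your sketch is slightly more cautious than necessary, but not wrong.
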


\begin{example}[Type inference algorithm]
As an illustrative example, the following translation $\walg{-}$
converts an untyped $\lambda$-term $\tm$ into a $\lambdaunif$-term that
calculates the principal type of $\tm$ according to the
usual Hindley--Milner~\cite{milner1978theory}
type inference algorithm, or fails if it has no type.
Note that an arrow type $(\typ \to \typtwo)$ is encoded as $({\bf f}\,\typ\,\typtwo)$:
\[\FIT{$
  \walg{\var} \eqdef a_{\var}
  \HS
  \walg{\lam{\var}{\tm}} \eqdef \fresh{a_{\var}}{\mathbf{f}\,a_{\var}\,\walg{\tm}}
  \HS
  \walg{\tm\,\tmtwo}     \eqdef \fresh{a}((\walg{\tm} \unif \mathbf{f}\,\walg{\tmtwo}\,a) \seq a)
$}\]
For instance,
$\walg{\lam{\var}{\lam{\vartwo}{\vartwo\,\var}}}
= \fresh{a}{\mathbf{f}\,a\,(\fresh{b}{\mathbf{f}\,b\,(\fresh{c}{(b \unif \mathbf{f}\,a\,c)\seq c})})}
\twoheadrightarrow
\mathbf{f}\,a\,(\mathbf{f}\,(\mathbf{f}\,a\,c)\,c)$.
\end{example}

\section{Operational Properties}
  
In this section we study some properties of the operational semantics.
First, we characterize the set of {\em normal forms} of the $\lambdaunif$-calculus
syntactically, by means of an inductive
definition~(\rprop{characterization_normal_forms}).
Then we turn to the main result of this section, proving that it enjoys
confluence up to structural equivalence~(\rthm{confluence}).
\smallskip

{\bf Characterization of normal forms.} 
The set of {\bf normal terms} $\ntm,\ntmtwo,\hdots$
and {\bf stuck terms} $\stm,\stmtwo,\hdots$
are defined mutually inductively as follows.
A normal term is either a value or a stuck term, \ie $\ntm ::= \val \mid \stm$.
A term is stuck if the judgment $\tm\stuck$ is derivable with the following rules:
\[
{\small
  \indrule{stuck-var}{
    n > 0
  }{
    \var\,\ntm_1\hdots\ntm_n \stuck
  }
  \HS
  \indrule{stuck-cons}{
    \ntm_i \stuck \text{ for some $i \in \set{1,2,\hdots,n}$}
  }{
    \cons\,\ntm_1\hdots\ntm_n \stuck
  }
}
\]
\[
{\small
  \indrule{stuck-guard}{
    \ntm_1 \stuck
    \HS
    n \geq 0
  }{
    (\ntm_1\seq\ntm_2)\,\ntmtwo_1\hdots\ntmtwo_n \stuck
  }
  \HS
  \indrule{stuck-unif}{
    \ntm_i \stuck \text{ for some $i \in \set{1,2}$}
    \HS
    n \geq 0
  }{
    (\ntm_1\unif\ntm_2)\,\ntmtwo_1\hdots\ntmtwo_n \stuck
  }
}
\]
\[
{\small
  \indrule{stuck-lam}{
    \ntm \stuck
    \HS
    n \geq 0
  }{
    (\laml{\loc}{\var}{\prog})\,\ntm\,\ntmtwo_1\hdots\ntmtwo_n \stuck
  }
}
\]
The set of {\bf normal programs} $\nprog,\nprogtwo,\hdots$ is given by the
following grammar: $\nprog ::= \fail \mid \ntm \alt \nprog$.
For example, the program $(\laml{\loc}{\var}{\var\unif\var}) \alt ((\vartwo\,\cons\unif\constwo)\seq\consthree) \alt \varthree\,(\varthree\,\cons)$
is normal, being the non-deterministic alternative of a value and two stuck terms.
Normal programs capture the notion of normal form:
\begin{proposition}
\lprop{characterization_normal_forms}
The set of normal programs is exactly the set of $\toca{}$-normal forms.
\SeeAppendixRef{\rsec{appendix_characterization_normal_forms}}
\end{proposition}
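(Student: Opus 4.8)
The plan is to reduce the statement to a purely local property of terms and then dispatch it by induction. By inspecting the six rules of \rdef{operational_semantics_reduction_rules}, note that each has the form $\prog_1 \alt \wctxof{R} \alt \prog_2 \toca{} \dots$, so a program $\prog = \tm_1 \alt \dots \alt \tm_n \alt \fail$ reduces if and only if one of its threads $\tm_i$ can be written as $\wctxof{R}$ for a weak context $\wctx$ and a \emph{redex} $R$, where $R$ ranges over $\lam{\var}{\prog'}$, $(\laml{\loc}{\var}{\prog'})\,\val$, $\val \seq \tm$, $\fresh{\var}{\tm}$, and $\val \unif \valtwo$. The last form is always a redex: since we assume all programs coherent, the singleton $\set{\val \unif \valtwo}$ is a coherent unification problem, $\mgu{\set{\val \unif \valtwo}}$ is defined, and exactly one of \rulename{unif} and \rulename{fail} applies. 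As $\fail$ is at once a normal program and a normal form, the proposition follows from the key claim: \emph{a term contains no weak-context redex if and only if it is a normal term} $\ntm$.

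For the ``if'' direction I would proceed by mutual induction on the definitions of value and of the judgment $\tm \stuck$, relying on the auxiliary fact---a one-line induction---that values and stuck terms are disjoint. Weak contexts never descend below a $\lambda$ or a $\nu$, so the only positions to inspect are the arguments of an applicative spine together with its head; the arguments are normal terms and hence redex-free by the induction hypothesis, and it remains to check that no head is a redex. This is precisely where disjointness enters: in \rulename{stuck-guard} and \rulename{stuck-unif} the guarded, resp. one unified, subterm is stuck, hence not a value, so neither a \rulename{guard} nor a \rulename{unif} redex occurs; in \rulename{stuck-lam} the argument of the allocated abstraction is stuck, hence not a value, ruling out a \rulename{beta} redex; and in \rulename{stuck-var} and \rulename{stuck-cons} the head is a variable or a constructor, which heads no redex at all.

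For the ``only if'' direction I would use structural induction on $\tm$, decomposing it along its applicative spine as $\tm = h\,s_1 \dots s_m$ with $m \geq 0$ and $h$ not an application. Every argument $s_j$ and every immediate subterm of $h$ occupies a weak-context position, so redex-freeness of $\tm$ forces each to be redex-free, whence the induction hypothesis makes the $s_j$ and the relevant subterms of $h$ normal terms. Casing on $h$: if $h$ is an unallocated abstraction or a $\nu$-term then $\tm$ is itself an \rulename{alloc} or \rulename{fresh} redex, contradicting the hypothesis, so these cases are vacuous; if $h$ is a variable then $m > 0$ gives \rulename{stuck-var} while $m = 0$ gives the value $\var$; if $h$ is a constructor then $\tm$ is a structure, when all $s_j$ are values, or else matches \rulename{stuck-cons}; if $h = \laml{\loc}{\var}{\prog'}$ then $s_1$ cannot be a value, for otherwise $\tm$ would contain a \rulename{beta} redex, so $s_1$ is stuck and \rulename{stuck-lam} applies, with $m = 0$ giving a value; and if $h$ is a guard or a unification then its left subterm, resp. one of its two subterms, cannot be a value, so \rulename{stuck-guard}, resp. \rulename{stuck-unif}, applies. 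In every non-vacuous case $\tm$ is a normal term, which completes the claim and hence the proposition.

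The argument is routine once the spine decomposition and the value/stuck disjointness lemma are in place; the one point that needs genuine care is the bookkeeping that every position a weak context can reach has been shown redex-free. I expect the real hinge to be the observation that $\val \unif \valtwo$ is \emph{unconditionally} a redex---this is what forbids an unsolvable unification of two values from passing itself off as a stuck term, and it is exactly where totality of $\mgu{-}$ on coherent inputs is used.
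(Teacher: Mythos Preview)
Your proof is correct and follows essentially the same strategy as the paper: two inclusions, each by induction on terms, with the value/stuck disjointness lemma carrying the forward direction and a case analysis on the head constructor carrying the backward direction, plus the observation that $\val \unif \valtwo$ is always a redex by totality of $\mgu{-}$ on coherent inputs. Your spine decomposition $\tm = h\,s_1\cdots s_m$ in the backward direction is a mild reorganization of the paper's direct structural induction on $\tm$; it has the pleasant effect of absorbing the paper's auxiliary lemma (\rlem{application_of_a_stuck_term}, that a stuck term applied to a normal term is again stuck) directly into the case analysis on $h$.
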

\bigskip

{\bf Confluence.}
In order to prove that the $\lambdaunif$-calculus has the Church--Rosser
property, we adapt the method due to Tait and Martin-L\"of~\cite[Sec.~3.2]{Barendregt:1984}
by defining a {\em simultaneous reduction relation} $\topar{}$,
and showing that it verifies the diamond property
(\ie $\Leftarrow\Rightarrow \,\subseteq\, \Rightarrow\Leftarrow$)
and
the inclusions $\to \,\subseteq\, \topar{} \,\subseteq\, \twoheadrightarrow$,
where $\twoheadrightarrow$ denotes the reflexive--transitive closure
of $\to$.
Actually, these properties only hold up to structural equivalence,
so our confluence result,
rather than the usual inclusion $\twoheadleftarrow\twoheadrightarrow \,\subseteq\, \twoheadrightarrow\twoheadleftarrow$,
expresses the weakened inclusion $\twoheadleftarrow\twoheadrightarrow \,\subseteq\, \twoheadrightarrow\structeq\twoheadleftarrow$.

To define the relation of simultaneous reduction,
we use the following notation,
to lift the binary operations of unification ($\tm\unif\tmtwo$),
guarded expression ($\tm\seq\tmtwo$), and application ($\tm\,\tmtwo$)
from the sort of terms to the sort of programs.
Let $\star$ denote a binary term constructor (\eg unification, guarded expression, or application).
Then we write
$
  (\bigalt_{i=1}^n \tm_i) \star (\bigalt_{j=1}^m \tmtwo_j)
  \eqdef
  \bigalt_{i=1}^n \bigalt_{j=1}^m (\tm_i\star\tmtwo_j)
$.

First, we define a judgment $\tm \topar{\goals} \prog$ of simultaneous
reduction, relating a term and a program,
parameterized by a set $\goals$ of unification goals representing
pending constraints:
\[
{\small
  \indrule{Var}{
  }{
    \var \topar{\emptyset} \var
  }
  \HS
  \indrule{Cons}{
  }{
    \cons \topar{\emptyset} \cons
  }
  \HS
  \indrule{Fresh$_1$}{
  }{
    \fresh{\var}{\tm} \topar{\emptyset} \fresh{\var}{\tm}
  }
  \HS
  \indrule{Fresh$_2$}{
    \tm \topar{\goals} \prog \HS \text{$\var$ fresh}
  }{
    \fresh{\var}{\tm} \topar{\goals} \prog
  }
}
\]
\[
{\small
  \indrule{Abs$^\codesym_1$}{
  }{
    \lam{\var}{\prog} \topar{\emptyset} \lam{\var}{\prog} 
  }
  \HS
  \indrule{Abs$^\codesym_2$}{
    \text{$\loc$ fresh}
  }{
    \lam{\var}{\prog} \topar{\emptyset} \laml{\loc}{\var}{\prog} 
  }
  \HS
  \indrule{Abs$^\allocsym$}{
  }{
    \laml{\loc}{\var}{\prog} \topar{\emptyset} \laml{\loc}{\var}{\prog}
  }
}
\]
\[
{\small
  \indrule{App$_1$}{
    \tm \topar{\goals} \prog
    \HS
    \tmtwo \topar{\goalstwo} \progtwo
  }{
    \tm\,\tmtwo \topar{\goals \cup \goalstwo} \prog\,\progtwo
  }
  \HS
  \indrule{App$_2$}{
  }{
    (\laml{\loc}{\var}{\prog})\,\val \topar{\emptyset} \prog\sub{\var}{\val}
  }
  \HS
  \indrule{Guard$_1$}{
    \tm \topar{\goals} \prog
    \HS
    \tmtwo \topar{\goalstwo} \progtwo
  }{
    \tm\seq\tmtwo \topar{\goals \cup \goalstwo} \prog\seq\progtwo
  }
}
\]
\[
{\small
  \indrule{Guard$_2$}{
    \tm \topar{\goals} \prog
  }{
    \val\seq\tm \topar{\goals} \prog
  }
  \HS
  \indrule{Unif$_1$}{
    \tm \topar{\goals} \prog
    \HS
    \tmtwo \topar{\goalstwo} \progtwo
  }{
    \tm\unif\tmtwo \topar{\goals \cup \goalstwo} \prog\unif\progtwo
  }
  \HS
  \indrule{Unif$_2$}{
  }{
    \val\unif\valtwo \topar{\set{\val \unif \valtwo}} \unit
  }
}
\]
As usual, most term constructors have two rules,
the rule decorated with ``$1$'' is a congruence rule which chooses not
to perform any evaluation on the root of the term,
while the rule decorated with ``$2$'' requires that there is a redex at the
root of the term, and contracts it.
Note that rule \indrulename{Unif$_2$} does {\em not} perform
the unification of $\val$ and $\valtwo$ immediately;
it merely has the effect of propagating the unification constraint.

Using the relation defined above, we are now able to define the relation
of {\bf simultaneous reduction} between programs:
\[
{\small
  \indrule{Fail}{
  }{
    \fail \topar{} \fail
  }
  \HS
  \indrule{Alt}{
    \tm \topar{\goals} \prog
    \HS
    \progtwo \topar{} \progtwo'
    \HS
    \prog' =
      \begin{cases}
      \prog\SUB{\subst} & \text{if $\subst = \mgu{\goals}$} \\
      \fail             & \text{if $\mgu{\goals}$ fails} \\
      \end{cases}
  }{
    \tm\alt\progtwo \topar{} \prog' \alt \progtwo'
  }
}
\]

The following lemma summarizes some of the key properties of simultaneous
reduction. Most are straightforward proofs by induction, except for item~3.:
\begin{lemma}[Properties of simultaneous reduction]
\llem{properties_of_simultaneous_reduction}
\llem{parallel_context_closure}
\llem{parallel_modulo_structeq}
\llem{parallel_substitution}
\quad
\begin{enumerate}
\item
  {\bf Reflexivity.} 
  $\tm \topar{\emptyset} \tm$ and $\prog \topar{} \prog$.
\item
  {\bf Context closure.}
    If $\tm \topar{\goals} \prog$ then $\wctxof{\tm} \topar{\goals} \wctxof{\prog}$.
\item
  {\bf Strong bisimulation.}
  \label{parallel_modulo_structeq_item}
    Structural equivalence is a strong bisimulation with respect to $\topar{}$,
    \ie if $\prog \structeq \prog' \topar{} \progtwo$ then there is a program $\progtwo'$
    such that $\prog \topar{} \progtwo' \structeq \progtwo$.
  \SeeAppendixRef{\rsec{appendix_parallel_modulo_structeq}}
\item
  {\bf Substitution.}
    If $\tm \topar{\goals} \prog$
    then $\tm\SUB{\subst} \topar{\goals\SUB{\subst}} \prog\SUB{\subst}$.
\end{enumerate}
\end{lemma}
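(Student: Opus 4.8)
These I would dispatch by routine inductions. For reflexivity I would argue $\tm \topar{\emptyset} \tm$ by induction on $\tm$, applying in each case the congruence variant of the relevant rule (\rulename{Var}, \rulename{Cons}, \rulename{Abs}$^{\codesym}_1$ or \rulename{Abs}$^{\allocsym}$, \rulename{Fresh}$_1$, and \rulename{App}$_1$/\rulename{Guard}$_1$/\rulename{Unif}$_1$ fed with the inductive hypotheses); then $\prog \topar{} \prog$ follows by induction on $\prog$ via \rulename{Fail} and \rulename{Alt}, noting that $\mgu{\emptyset}$ is the identity so that \rulename{Alt} returns the thread unchanged. Context closure (item 2) is an induction on the weak context $\wctx$, the base case $\wctx=\ctxhole$ being immediate and each inductive case plugging the hypothesis into the matching congruence rule, using reflexivity on the hole-free subterms. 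The Substitution property (item 4) is an induction on the derivation of $\tm \topar{\goals} \prog$: the congruence cases use $(\goals\cup\goalstwo)\SUB{\subst} = \goals\SUB{\subst}\cup\goalstwo\SUB{\subst}$ and that $\SUB{\subst}$ distributes over the program product, while the only cases needing care are \rulename{App}$_2$ and \rulename{Unif}$_2$, where one chooses the bound variable fresh for $\subst$ and invokes the composition identity $(\prog\sub{\var}{\val})\SUB{\subst} = \prog\SUB{\subst}\sub{\var}{\val\SUB{\subst}}$ together with the fact that $\val\SUB{\subst}$ is again a value, so the redex rule still applies.

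\textbf{Plan for item 3 — reduction to single axioms.} Since $\structeq$ is the reflexive, symmetric and transitive closure of the three axioms, I would first reduce the bisimulation property to a single generating step: reflexivity is trivial, a chain of axiom steps is handled by propagating the simulation along the chain and closing up with transitivity of $\structeq$, and each generating axiom is a symmetric relation (\rulename{$\structeq$-swap} is an involution, and a renaming in \rulename{$\structeq$-var}/\rulename{$\structeq$-loc} is undone by the inverse renaming, itself an instance of the same axiom). Thus it suffices to show: if $\prog'$ is obtained from $\prog$ by one axiom instance and $\prog' \topar{} \progtwo$, then $\prog \topar{} \progtwo' \structeq \progtwo$. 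The common tool is the observation that $\topar{}$ acts thread-wise — a derivation of $\tm_1 \alt \hdots \alt \tm_n \topar{} \progtwo$ amounts to an independent choice, per thread $\tm_i$, of a derivation $\tm_i \topar{\goals_i} \prog_i$ together with a residual block $R_i$ equal to $\prog_i\SUB{\mgu{\goals_i}}$ or to $\fail$, with $\progtwo = R_1 \alt \hdots \alt R_n$.

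\textbf{Plan for item 3 — the three axioms.} For \rulename{$\structeq$-swap} the thread-wise description makes the simulation immediate: swapping two adjacent threads merely swaps their residual blocks, and exchanging two adjacent blocks is realized by a sequence of adjacent thread transpositions, each an instance of \rulename{$\structeq$-swap}, so $\progtwo' \structeq \progtwo$. For \rulename{$\structeq$-var} with renaming $\substa = \sub{\var}{\vartwo}$ acting on a single thread, the Substitution property (item 4) turns a thread derivation $\tm \topar{\goals} \prog_\tm$ into $\tm\SUB{\substa} \topar{\goals\SUB{\substa}} \prog_\tm\SUB{\substa}$, so only the residual blocks must be compared. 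Here I would use that $\mgu{}$ is equivariant under renaming, namely that $\mgu{\goals\SUB{\substa}}$ equals $\substa^{-1}\scomp\mgu{\goals}$ up to renaming; writing $\subst=\mgu{\goals}$, the composition law gives $\prog_\tm\SUB{\substa}\SUB{\substa^{-1}\scomp\subst} = \prog_\tm\SUB{(\substa\scomp\substa^{-1})\scomp\subst} = \prog_\tm\SUB{\subst}$, so the two residual blocks coincide up to renaming. The leftover renaming — coming from the non-uniqueness of $\mgu{}$ and from the fresh names chosen by \rulename{Abs}$^{\codesym}_2$ and \rulename{Fresh}$_2$ — touches only variables local to that single thread and is therefore absorbed by iterated \rulename{$\structeq$-var}. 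The \rulename{$\structeq$-loc} case is identical in structure, replacing item 4 by its straightforward location-renaming analogue and using the corresponding equivariance of $\mgu{}$ under location renaming, allocated abstractions being compared only through their locations.

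\textbf{Main obstacle.} The routine items and the \rulename{$\structeq$-swap} case are essentially bookkeeping; the genuine difficulty is the renaming axioms of item 3, where the reduction performs a unification inside the \rulename{Alt} rule. The delicate point is to establish (or extract from the unification appendix) that $\mgu{}$ commutes with a renaming up to a renaming, and then to verify that every residual renaming and every freshly chosen name arising along the way is confined to a single thread, so that exactly the side conditions of \rulename{$\structeq$-var} and \rulename{$\structeq$-loc} are met and these axioms can absorb it. Getting this name management right — ensuring the leftover renamings never escape their thread — is the crux of the argument.
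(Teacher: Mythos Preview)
Your plan is correct and follows essentially the same route as the paper: items~1, 2, 4 are dispatched by routine inductions, and item~3 is reduced to the three generating axioms, handled via a thread-wise decomposition of $\topar{}$, transport of derivations under renaming (your item~4 specialized to a single-variable renaming, which is exactly the paper's auxiliary lemma), and equivariance of $\mgu{}$ up to renaming drawn from \rlem{properties_of_mgus}. The one auxiliary fact you should make explicit is that in any derivation $\tm \topar{\goals} \prog$ the goals $\goals$ are literally subterms of $\tm$ under weak contexts (the paper states this as \rlem{simultaneous_goals_are_in_the_term}), so that $\fv{\goals}\subseteq\fv{\tm}$ and $\locs{\goals}\subseteq\locs{\tm}$ --- this is precisely what discharges the freshness side conditions you need when invoking the $\mgu{}$ equivariance in the \rulename{$\structeq$-var} and \rulename{$\structeq$-loc} cases.
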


The core argument is the following adaptation of Tait--Martin-L\"of's technique,
from which confluence comes out as an easy corollary.
\WithProofs{See \SeeAppendixRef{\rsec{appendix_tait_martin_lof_technique}}
in the appendix for details.}
\begin{proposition}[Tait--Martin-L\"of's technique, up to $\structeq$]
\lprop{tait_martin_lof_technique}
\quad\\
1. $\toca{}\ \subseteq\ \topar{}\structeq$ \\
2. $\topar{}\, \subseteq\ \rtoca{}\structeq$ \\
3. $\topar{}$ has the diamond property, up to $\structeq$, that is:\\
\hspace*{.7em}
   If $\prog_1 \topar{} \prog_2$ and $\prog_1 \topar{} \prog_3$
   then $\prog_2 \topar{}\equiv \prog_4$ and $\prog_3 \topar{}\equiv \prog_4$
   for some $\prog_4$.
\end{proposition}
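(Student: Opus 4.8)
The plan is to prove the three items in order, with items~1 and~2 being comparatively routine structural inductions and item~3 carrying the real weight. For item~1, $\toca{}\ \subseteq\ \topar{}\structeq$, I would argue by cases on the reduction rule fired. Every single step rewrites one thread of a toplevel program $\prog_1 \alt \wctxof{\tm} \alt \prog_2$, leaving $\prog_1$ and $\prog_2$ untouched. Using reflexivity to reduce $\prog_1$ and $\prog_2$ to themselves and context closure to descend through the weak context $\wctx$ (both from \rlem{properties_of_simultaneous_reduction}), it suffices to match each single-step root rule with the corresponding ``$2$''-rule of simultaneous reduction: \indrulename{alloc} with \indrulename{Abs$^\codesym_2$}, \indrulename{beta} with \indrulename{App$_2$}, \indrulename{guard} with \indrulename{Guard$_2$}, \indrulename{fresh} with \indrulename{Fresh$_2$}, and \indrulename{unif}/\indrulename{fail} with \indrulename{Unif$_2$} (whose deferred constraint $\set{\val \unif \valtwo}$ is then discharged by the \indrulename{Alt} rule exactly as the single-step rules do). The only discrepancy is the particular fresh location chosen by \indrulename{alloc} or fresh variable chosen by \indrulename{fresh}, which is reconciled by \indrulename{$\structeq$-loc} and \indrulename{$\structeq$-var}; this is precisely why the inclusion must be stated up to $\structeq$.

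For item~2, $\topar{}\ \subseteq\ \rtoca{}\structeq$, I would induct on the derivation of $\prog \topar{} \progtwo$, which reduces via the \indrulename{Alt} rule to a per-thread claim: if $\tm \topar{\goals} \prog$ then $\tm \rtoca{} \prog\SUB{\mgu{\goals}}$ up to $\structeq$ (and $\tm \rtoca{} \fail$ when $\mgu{\goals}$ fails). This per-thread claim is itself proved by induction on the derivation of $\tm \topar{\goals} \prog$: the congruence (``$1$'') rules follow from the induction hypothesis and context closure of $\toca{}$, while each ``$2$'' rule is sequentialized into a single $\toca{}$ step. The delicate point is that simultaneous reduction aggregates all the ready unification constraints into one $\goals$ and solves them at once via $\mgu{\goals}$, whereas $\toca{}$ solves them one at a time, propagating each partial substitution to the remainder of the thread. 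Reconciling the two relies on the order-independence of incremental unification together with the most-general-unifier combination property (\rexample{confluence_unif_unif}): since values are closed under substitution by values, every pending goal stays a value-goal and stays jointly solvable, so solving them sequentially yields $\mgu{\goals}$ up to renaming of the fresh names introduced along the way.

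For item~3, the diamond property up to $\structeq$, the core is a thread-level joinability lemma: if $\tm \topar{\goals} \prog$ and $\tm \topar{\goalstwo} \progtwo$, then, writing $\subst = \mgu{\goals}$ and $\substtwo = \mgu{\goalstwo}$, the programs $\prog\SUB{\subst}$ and $\progtwo\SUB{\substtwo}$ reduce in one simultaneous step to a common program, up to $\structeq$ (with the symmetric failure branches collapsing to $\fail$). I would prove this by induction on $\tm$ with a case analysis on the pair of root rules applied on the two sides. When both sides take a congruence rule, or take the same root redex, the diamond closes by applying the induction hypotheses to the immediate subterms and recombining (using substitutivity, \rlem{properties_of_simultaneous_reduction}). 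The genuine overlaps are where one side fires a root redex---say \indrulename{App$_2$} on a $\beta$-redex, \indrulename{Guard$_2$}, or \indrulename{Unif$_2$}---while the other only descends congruently; there one contracts the residual redexes on each side in a single parallel step so that both converge, the \indrulename{Alt} rule solving the jointly collected constraints for the whole thread at once.

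The main obstacle---and the source of the ``up to $\structeq$''---is the interaction of the unification constraints. When the two reductions accumulate different goal sets $\goals$ and $\goalstwo$, the common reduct must be consistent with the combined problem $\goals \cup \goalstwo$, whose most general unifier $\substthree = \mgu{\goals \cup \goalstwo}$ is, by the combination property (\rexample{confluence_unif_unif}), an instance of both $\subst$ and $\substtwo$, say $\substthree = \subst \scomp \substa = \substtwo \scomp \substa'$; the residual substitutions $\substa$ and $\substa'$ are exactly what the \indrulename{Alt} rule applies when completing each side of the diamond (and when $\mgu{\goals \cup \goalstwo}$ fails, both completions collapse to $\fail$ by the same property). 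The hard part is checking that the two completions agree up to $\structeq$: the $\bigalt$-multiplication in \indrulename{App$_1$} may produce threads in different orders (handled by \indrulename{$\structeq$-swap}), and the fresh locations and variables minted by \indrulename{Abs$^\codesym_2$} and \indrulename{Fresh$_2$} may differ on the two sides (handled by \indrulename{$\structeq$-loc} and \indrulename{$\structeq$-var}). Once the thread-level lemma holds, lifting to programs is routine---decompose $\prog_1$ into threads, join each corresponding pair of thread-reducts, and reassemble with \indrulename{Alt}---using the strong bisimulation of $\structeq$ with respect to $\topar{}$ (\rlem{properties_of_simultaneous_reduction}) to absorb the structural reorderings.
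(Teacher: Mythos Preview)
Your treatment of item~1 matches the paper's argument. For items~2 and~3, however, the inductive statements you propose are not strong enough to go through.

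For item~2, your per-thread claim ``if $\tm \topar{\goals} \prog$ then $\tm \rtoca{} \prog\SUB{\mgu{\goals}}$'' cannot be proved directly by induction on the derivation of $\tm \topar{\goals} \prog$. Consider the \indrulename{App$_1$} case: after sequentializing the left subderivation you have (morally) performed some $\toca{unif}$ steps, and those steps substitute into the \emph{entire} thread, including the right subterm and the surrounding context. So when you turn to the right subderivation $\tmtwo \topar{\goalstwo} \progtwo$, the induction hypothesis must apply not to $\tmtwo$ but to $\tmtwo$ under the substitution accumulated so far, inside the context built so far. The paper's fix is to generalize the claim to: for every weak context $\wctx$ and substitution $\substa$, if $\subst = \mgu{\goals\SUB{\substa}}$ then $\wctxof{\tm}\SUB{\substa} \rtoca{}\structeq \wctxof{\prog}\SUB{\substa\scomp\subst}$ (and analogously for failure). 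Carrying $\wctx$ and $\substa$ explicitly is what makes the \indrulename{App$_1$}/\indrulename{Guard$_1$}/\indrulename{Unif$_1$} cases compose, via \rlem{mgu_compositional}.

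For item~3, the thread-level lemma you state---joining $\prog\SUB{\subst}$ and $\progtwo\SUB{\substtwo}$ after applying the mgus---is again not inductively stable: in the \indrulename{App$_1$}/\indrulename{App$_1$} case the mgu of the whole term is $\mgu{\goals_1\cup\goalstwo_1}$, whereas the induction hypothesis on the left subterm only speaks about $\mgu{\goals_1}$. The paper avoids this mismatch by \emph{not} applying any mgu in the thread-level lemma. Instead it proves: from $\tm \topar{\goals_1} \bigalt_i \tm_i$ and $\tm \topar{\goals_2} \bigalt_j \tm^\star_j$ one can find residual goal sets $\goals'_1,\goals'_2$ and reducts with $\tm_i \topar{\goals'_2}\structeq \prog_i$, $\tm^\star_j \topar{\goals'_1}\structeq \prog^\star_j$, $\bigalt_i \prog_i \permeq \bigalt_j \prog^\star_j$, and the key equation $\goals_1 \cup \goals'_2 = \goals_2 \cup \goals'_1$. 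This equation is purely syntactic, composes trivially under the congruence rules, and is exactly what lets you invoke \rlem{mgu_compositional} once, at the very end, when lifting to the program level via \indrulename{Alt}. Your high-level picture of using mgu combination and $\structeq$ to absorb reorderings and fresh-name choices is right; what is missing is this separation of the goal-tracking induction from the single, final application of the mgu.
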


\begin{theorem}[Confluence]
\lthm{confluence}
The reduction relation $\toca{}$ is confluent, up to $\structeq$.
More precisely, if
$\prog_1 \rtoca{} \prog_2$
and
$\prog_1 \rtoca{} \prog_3$
then there is a program $\prog_4$ such that
$\prog_2 \rtoca{}\structeq \prog_4$
and
$\prog_3 \rtoca{}\structeq \prog_4$.
\end{theorem}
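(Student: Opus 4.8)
The plan is to derive the theorem as an essentially routine corollary of the Tait--Martin-L\"of technique packaged in \rprop{tait_martin_lof_technique}; the only genuine work lies in correctly absorbing the pervasive ``up to $\structeq$'' bookkeeping. The standard fact is that a relation with the diamond property has a confluent reflexive--transitive closure, but here the diamond holds only modulo $\structeq$. I would therefore first pass to $\structeq$-equivalence classes, where the stray structural equivalences are swallowed into equality of classes. Concretely, write $[\prog]$ for the $\structeq$-class of $\prog$ and define $[\prog]\toca{}_\equiv[\progtwo]$ (resp. $[\prog]\topar{}_\equiv[\progtwo]$) to hold whenever some representatives are related by $\toca{}$ (resp. $\topar{}$). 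That these relations on classes are well defined---independent of the chosen representatives---is exactly the content of the strong bisimulation lemmas: \rlem{reduction_modulo_structeq} for $\toca{}$, and item~\ref{parallel_modulo_structeq_item} of \rlem{properties_of_simultaneous_reduction} for $\topar{}$.

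Next I would transport the three items of \rprop{tait_martin_lof_technique} to the quotient. Item~1 gives $\toca{}_\equiv\,\subseteq\,\topar{}_\equiv$ and item~2 gives $\topar{}_\equiv\,\subseteq\,\rtoca{}_\equiv$, where $\rtoca{}_\equiv$ denotes the reflexive--transitive closure of $\toca{}_\equiv$; consequently the reflexive--transitive closures of $\toca{}_\equiv$ and $\topar{}_\equiv$ coincide. Item~3, once read on classes, becomes the \emph{honest} diamond property for $\topar{}_\equiv$: the confluents $\prog_2\topar{}\structeq\prog_4$ and $\prog_3\topar{}\structeq\prog_4$ collapse to $[\prog_2]\topar{}_\equiv[\prog_4]$ and $[\prog_3]\topar{}_\equiv[\prog_4]$. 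A routine diagram-tiling induction then lifts the diamond property from $\topar{}_\equiv$ to its reflexive--transitive closure, which by the coincidence above is $\rtoca{}_\equiv$; this yields that $\rtoca{}_\equiv$ enjoys the diamond property, i.e. $\toca{}_\equiv$ is confluent.

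Finally I would transfer this back to programs. Given $\prog_1\rtoca{}\prog_2$ and $\prog_1\rtoca{}\prog_3$, projecting each $\toca{}$-step to the quotient gives $[\prog_1]\rtoca{}_\equiv[\prog_2]$ and $[\prog_1]\rtoca{}_\equiv[\prog_3]$, so quotient confluence furnishes a class $[\prog_4]$ with $[\prog_2]\rtoca{}_\equiv[\prog_4]$ and $[\prog_3]\rtoca{}_\equiv[\prog_4]$. To realize these class-level reductions as genuine reductions starting from the \emph{given} representatives $\prog_2$ and $\prog_3$, I would walk along each chain one step at a time, at each step invoking the strong bisimulation \rlem{reduction_modulo_structeq} to push the accumulated $\structeq$ past a single $\toca{}$-step. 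This produces concrete sequences $\prog_2\rtoca{}\cdot\structeq\prog_4$ and $\prog_3\rtoca{}\cdot\structeq\prog_4$, which is precisely the claimed $\prog_2\rtoca{}\structeq\prog_4$ and $\prog_3\rtoca{}\structeq\prog_4$.

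I expect the main obstacle to be purely this $\structeq$-bookkeeping rather than any new combinatorial insight, since the substantive confluence argument is already sealed inside \rprop{tait_martin_lof_technique}. The delicate point is to ensure at every stage that the strong bisimulation property is genuinely strong enough to let $\structeq$ commute with a \emph{single} step, so that the equivalences neither obstruct the tiling of the diamond diagram nor block the reconstruction of concrete reduction sequences from the quotient; working on equivalence classes is what makes this discipline systematic rather than ad hoc.
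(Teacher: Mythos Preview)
Your proposal is correct and is exactly the kind of argument the paper has in mind when it says confluence ``comes out as an easy corollary'' of \rprop{tait_martin_lof_technique}; the paper does not spell out these details at all. Passing to $\structeq$-classes to absorb the modulo-$\structeq$ bookkeeping, invoking the strong bisimulation lemmas (\rlem{reduction_modulo_structeq} and \rlem{properties_of_simultaneous_reduction}, item~\ref{parallel_modulo_structeq_item}) to make the quotient relations behave well, and then running the standard Tait--Martin-L\"of tiling on the quotient is the expected route, and your final step of walking back from classes to representatives via strong bisimulation is the right way to obtain genuine $\rtoca{}\structeq$ sequences from $\prog_2$ and $\prog_3$.
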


\begin{example}
\lexample{confluence_unif_unif}
Suppose that $\sigma = \mgu{\val_1\unif\val_2}$ and $\tau = \mgu{\valtwo_1\unif\valtwo_2}$.
Consider:
\[
  (\val_1\SUB{\tau}\unif\val_2\SUB{\tau})\,\unit\,\tm\SUB{\tau}
  \leftarrow
  (\val_1\unif\val_2)\,(\valtwo_1\unif\valtwo_2)\,\tm
  \rightarrow
  \unit\,(\valtwo_1\SUB{\sigma}\unif\valtwo_2\SUB{\sigma})\,\tm\SUB{\sigma}
\]
Then both $\sigma' = \mgu{\val_1\SUB{\tau}\unif\val_2\SUB{\tau}}$
and $\tau' = \mgu{\valtwo_1\SUB{\sigma}\unif\valtwo_2\SUB{\sigma}}$
must exist, and the peak may be closed as follows:
\[
  (\val_1\SUB{\tau}\unif\val_2\SUB{\tau})\,\unit\,\tm\SUB{\tau}
  \to
  \unit\,\unit\,(\tm\SUB{\tau})\SUB{\sigma'}
  \equiv
  \unit\,\unit\,(\tm\SUB{\sigma})\SUB{\tau'}
  \leftarrow
  \unit\,(\valtwo_1\SUB{\sigma}\unif\valtwo_2\SUB{\sigma})\,\tm\SUB{\sigma}
\]
the equivalence relies on the fact that
$\tau' \circ \sigma$ and $\sigma' \circ \tau$
are both most general unifiers of
$\set{\val_1 \unif \val_2, \valtwo_1 \unif \valtwo_2}$,
hence
$(\tm\SUB{\tau})\SUB{\sigma'} \equiv (\tm\SUB{\sigma})\SUB{\tau'}$,
up to renaming.
\end{example}

\section{Simple Types for $\lambdaunif$}
  \lsec{type_system}
  
In this section we discuss a simply typed system for the $\lambdaunif$-calculus.
The system does not present any essential difficulty, but it is a necessary
prerequisite to be able to define the denotational semantics
of~\rsec{denotational_semantics}. The main result in this section is
subject reduction~(\rprop{subject_reduction}).

Note that, unlike in the simply typed $\lambda$-calculus,
reduction may create free variables, due to fresh variable introduction.
For instance, in the reduction step $\cons(\fresh{\var}{\var}) \to \cons\,\var$,
a new variable $\var$ appears free on the right-hand side.
Therefore the subject reduction lemma has to {\em extend} the typing context
in order to account for freshly created variables.
This may be understood only as a matter of notation, \eg in a different
presentation of the $\lambdaunif$-calculus the step above could be
written as $\cons(\fresh{\var}{\var}) \to \fresh{\var}{(\cons\,\var)}$,
using a {\em scope extrusion} rule reminiscent of the rule to create
new channels in process calculi (\eg $\pi$-calculus),
avoiding the creation of free variables.

{\bf Types and typing contexts.}
Suppose given a denumerable set of {\em base types}
$\btyp, \btyptwo, \btypthree, \hdots$.
The sets of {\em types} $\Type = \set{\typ,\typtwo,\hdots}$
and {\em typing contexts} $\tctx,\tctxtwo,\hdots$ are given by:
\[
  \typ,\typtwo,\hdots ::= \btyp \mid \typ \to \typtwo
  \HS
  \tctx ::= \emptyctx \mid \tctx,\var:\typ
\]
we assume that no variable occurs twice in a typing context.
Typing contexts are to be regarded as finite sets of assumptions of the
form $(\var:\typ)$, \ie we work implicitly modulo contraction and
exchange.
We assume that each constructor $\cons$ has an associated type
$\constyp{\cons}$.

{\bf Typing rules.}
Judgments are of the form.
``$\tctx \vdash X : \typ$'' where $X$ may be a term or a program,
meaning that $X$ has type $\typ$ under $\tctx$.
The typing rules are the following:
\[
  \indrule{t-var}{
    (\var : \typ) \in \tctx
  }{
    \tctx \vdash \var : \typ
  }
  \HS
  \indrule{t-cons}{
  }{
    \tctx \vdash \cons : \constyp{\cons}
  }
\]
\[
  \indrule{t-app}{
    \tctx \vdash \tm : \typ \to \typtwo
    \HS
    \tctx \vdash \tmtwo : \typ
  }{
    \tctx \vdash \tm\,\tmtwo : \typtwo
  }
  \HS
  \indrule{t-lam(l)}{
    \tctx,\var:\typ \vdash \prog : \typtwo
  }{
    \tctx \vdash \laml{(\loc)}{\var}{\prog} : \typ \to \typtwo
  }
\]
\[
  \indrule{t-unif}{
    \tctx \vdash \tm : \typ
    \HS
    \tctx \vdash \tmtwo : \typ
  }{
    \tctx \vdash \tm \unif \tmtwo : \constyp{\unit}
  }
  \HS
  \indrule{t-guard}{
    \tctx \vdash \tm : \constyp{\unit}
    \HS
    \tctx \vdash \tmtwo : \typ
  }{
    \tctx \vdash \tm\seq\tmtwo : \typ
  }
\]
\[
  \indrule{t-fresh}{
    \tctx,\var:\typ \vdash \tm : \typtwo
  }{
    \tctx \vdash \fresh{\var}{\tm} : \typtwo
  }
  \HS
  \indrule{t-fail}{
  }{
    \tctx \vdash \fail : \typ
  }
  \HS
  \indrule{t-alt}{
    \tctx \vdash \tm : \typ
    \HS
    \tctx \vdash \prog : \typ
  }{
    \tctx \vdash \tm \alt \prog : \typ
  }
\]
Note that all abstractions are typed in the same way, regardless of whether
they are allocated or not.
A unification has the same type as the constructor $\unit$,
as does $\tm$ in the guarded expression $(\tm\seq\tmtwo)$.
A freshly introduced variable of type $\typ$ represents,
from the logical point of view, an unjustified assumption of $\typ$.
The empty program $\fail$ can also be given any type.
All the threads in a program must have the same type.
The following properties of the type system are routine:
\begin{lemma}
\llem{type_system_basic_properties}
\llem{weakening}
\llem{strengthening}
\llem{substitution}
\llem{contextual_substitution}
\llem{program_composition}
Let $X$ stand for either a term or a program. Then:
\begin{enumerate}
\item
  {\bf Weakening.}
  If $\tctx \vdash X : \typ$
  then $\tctx, \var : \typtwo \vdash X : \typ$.
\item
  {\bf Strengthening.}
  If $\tctx, \var : \typ \vdash X : \typtwo$
  and $\var \not\in \fv{X}$, then $\tctx \vdash X : \typtwo$.
\item
  {\bf Substitution.}
  If $\tctx, \var:\typ \vdash X : \typtwo$
  and $\tctx \vdash \tmtwo : \typ$
  then $\tctx \vdash X\sub{\var}{\tmtwo} : \typtwo$.
\item
  {\bf Contextual substitution.}
  $\tctx \vdash \wctxof{\tm} : \typ$ holds
  if and only if
  there is a type $\typtwo$ such that
  $\tctx, \ctxhole:\typtwo \vdash \wctx : \typ$
  and $\tctx \vdash \tm : \typtwo$ hold.
\item
  {\bf Program composition/decomposition.}
  $\tctx \vdash \prog \alt \progtwo : \typ$ holds
  if and only if
  $\tctx \vdash \prog : \typ$ and $\tctx \vdash \progtwo : \typ$ hold.
\end{enumerate}
\end{lemma}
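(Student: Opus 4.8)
The plan is to prove each of the five items by a straightforward induction, choosing the appropriate object to induct on in each case. For \textbf{Weakening}, \textbf{Strengthening}, and \textbf{Substitution} I would proceed by induction on the derivation of the typing judgment $\tctx \vdash X : \typ$, where $X$ ranges over both terms and programs. The base cases \indrulename{t-var}, \indrulename{t-cons}, and \indrulename{t-fail} are immediate, and the inductive cases for the non-binding constructors (\indrulename{t-app}, \indrulename{t-unif}, \indrulename{t-guard}, \indrulename{t-alt}) follow by applying the induction hypothesis to each premise and reassembling the derivation with the same rule. The only cases demanding attention are the binders \indrulename{t-lam(l)} and \indrulename{t-fresh}: there I would rely on the convention that expressions are taken up to $\alpha$-equivalence to rename the bound variable away from the variable being added or substituted, keeping the extended context well-formed (no variable occurring twice) and the substitution capture-avoiding. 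For \textbf{Substitution} specifically, the \indrulename{t-var} case where the variable is exactly $\var$ uses the hypothesis $\tctx \vdash \tmtwo : \typ$, and the binder cases additionally invoke \textbf{Weakening} to move $\tmtwo$ under the extra assumption introduced by the binder.

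\textbf{Contextual substitution} is the one item not proved by induction on a derivation but by induction on the structure of the weak context $\wctx$. The crucial observation---which is exactly what lets the statement hold with an \emph{unchanged} context $\tctx$ on both sides---is that a weak context never descends below an abstraction or a fresh-variable declaration, so no new variable assumption is ever introduced between the root and the hole. Treating the hole $\ctxhole$ as a distinguished variable of type $\typtwo$, the judgment $\tctx, \ctxhole:\typtwo \vdash \wctx : \typ$ is derivable if and only if plugging any term $\tm$ of type $\typtwo$ yields $\tctx \vdash \wctxof{\tm} : \typ$. In the base case $\wctx = \ctxhole$ both directions are immediate with $\typ = \typtwo$; in each inductive case ($\wctx\,\tm$, $\tm\,\wctx$, $\wctx\seq\tm$, and so on) I would apply the relevant typing rule together with the induction hypothesis, reading the rule forward for one direction and using inversion for the other.

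Finally, \textbf{Program composition/decomposition} follows by induction on the structure of the first program $\prog$, which is a list of threads terminated by $\fail$. When $\prog = \fail$, the operation $\fail \alt \progtwo$ reduces definitionally to $\progtwo$, and \indrulename{t-fail} types $\fail$ at any type, so both directions are trivial. When $\prog = \tm \alt \prog'$, the definition gives $(\tm \alt \prog') \alt \progtwo = \tm \alt (\prog' \alt \progtwo)$, and inverting (resp.\ applying) the rule \indrulename{t-alt} reduces the claim to the induction hypothesis on $\prog'$.

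I do not anticipate any serious obstacle, as the system is syntax-directed and free of subtyping or polymorphism. The only points requiring genuine care are the implicit $\alpha$-renaming in the binder cases of items~1--3, and the explicit use of the \emph{weakness} of the context in item~4: the proof of Contextual substitution would break for an arbitrary context, since descending below a binder would force the ambient typing context at the hole to grow, and the clean ``context unchanged'' formulation would no longer hold.
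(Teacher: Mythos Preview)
Your proposal is correct and matches the paper's approach: the paper simply labels these properties as ``routine'' and gives no further proof, so the standard inductions you outline (on the typing derivation for items~1--3, on the weak context for item~4, and on the first program for item~5) are exactly what is expected. Your observation that item~4 depends essentially on weak contexts not descending below binders is the only non-trivial point, and you have it right.
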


\begin{proposition}[Subject reduction]
\lprop{subject_reduction}
Let $\tctx \vdash \prog : \typ$
and $\prog \toca{} \progtwo$.
Then $\tctx' \vdash \progtwo : \typ$,
where $\tctx' = \tctx$ if the step is derived using any reduction rule other than \indrulename{fresh},
and $\tctx' = (\tctx,\var:\typtwo)$
if the step introduces a fresh variable $(\var:\typtwo)$.
\end{proposition}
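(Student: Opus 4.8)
The plan is to argue by case analysis on which of the six rules of \rdef{operational_semantics_reduction_rules} produces the step $\prog \toca{} \progtwo$. Every rule acts on a single thread, rewriting $\prog_1 \alt \wctxof{R} \alt \prog_2$ to $\prog_1 \alt C \alt \prog_2$ (with $C$ absent in the \rulename{fail} case). So the first move is to strip away the untouched threads: program decomposition (\rlem{program_composition}) turns $\tctx \vdash \prog : \typ$ into $\tctx \vdash \prog_1 : \typ$, $\tctx \vdash \wctxof{R} : \typ$ and $\tctx \vdash \prog_2 : \typ$, and it suffices to type the rewritten thread in $\tctx'$ and reassemble with the same lemma. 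For every rule but \rulename{fresh} we keep $\tctx' = \tctx$, so $\prog_1$ and $\prog_2$ retain their type verbatim; for \rulename{fresh} we set $\tctx' = (\tctx, \vartwo:\typthree)$ with $\vartwo$ fresh, and since $\vartwo \notin \fv{\prog_1} \cup \fv{\prog_2}$, weakening (\rlem{weakening}) re-types the side threads in $\tctx'$. Inside the affected thread, contextual substitution (\rlem{contextual_substitution}) factors $\tctx \vdash \wctxof{R} : \typ$ as $\tctx, \ctxhole:\typtwo \vdash \wctx : \typ$ together with $\tctx \vdash R : \typtwo$ for some $\typtwo$, reducing the task to checking that the contractum keeps the type $\typtwo$, after which the same lemma plugs it back into $\wctx$.

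The cases other than \rulename{unif} are then routine inversions. For \rulename{alloc} the type is unchanged outright, since \rulename{t-lam(l)} assigns the same type to $\lam{\var}{\prog}$ and $\laml{\loc}{\var}{\prog}$. For \rulename{beta}, inverting \rulename{t-app} and \rulename{t-lam(l)} on $(\laml{\loc}{\var}{\prog})\,\val$ gives $\tctx,\var:\typthree \vdash \prog : \typtwo$ and $\tctx \vdash \val : \typthree$, so the substitution lemma (\rlem{substitution}) yields $\tctx \vdash \prog\sub{\var}{\val} : \typtwo$. For \rulename{guard}, inverting \rulename{t-guard} on $\val\seq\tm$ immediately gives $\tctx \vdash \tm : \typtwo$. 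For \rulename{fresh}, inverting \rulename{t-fresh} on $\fresh{\var}{\tm}$ gives $\tctx,\var:\typthree \vdash \tm : \typtwo$, and combining weakening with the substitution lemma produces $\tctx,\vartwo:\typthree \vdash \tm\sub{\var}{\vartwo} : \typtwo$, which is exactly where the context grows by $(\vartwo:\typthree)$. For \rulename{fail} the new program is just $\prog_1 \alt \prog_2$, recomposed directly from the pieces above.

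The real work, and the step I expect to be the main obstacle, is \rulename{unif}, because the substitution $\subst = \mgu{\set{\val\unif\valtwo}}$ is applied to the whole thread $\wctxof{\unit}$ rather than locally, and we must still land in the \emph{same} context $\tctx$. Inverting \rulename{t-unif} gives a common type $\typthree$ with $\tctx \vdash \val : \typthree$ and $\tctx \vdash \valtwo : \typthree$, and since $\unit$ has type $\constyp{\unit} = \typtwo$, contextual substitution gives $\tctx \vdash \wctxof{\unit} : \typ$; what remains is $\tctx \vdash \wctxof{\unit}\SUB{\subst} : \typ$. This calls for two facts slightly beyond \rlem{type_system_basic_properties}. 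The first is a multi-variable substitution lemma: if $\tctx \vdash \prog : \typ$ and $\subst$ is idempotent with $\tctx \vdash \subst(\var) : \tctx(\var)$ for every $\var \in \supp(\subst)$, then $\tctx \vdash \prog\SUB{\subst} : \typ$; idempotence guarantees the images mention no variable of $\supp(\subst)$, so $\subst$ equals the iterated composition of its single bindings, and the lemma follows by repeatedly applying strengthening (\rlem{strengthening}) and the single-variable substitution lemma, closing with weakening. The second, which is the crux, is that the most general unifier of a \emph{well-typed} problem (one whose every goal relates two terms of a common type) is type-preserving, \ie $\tctx \vdash \subst(\var) : \tctx(\var)$ for all $\var$. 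I would establish this as an invariant of the Martelli--Montanari procedure of the appendix: a decomposition step replacing $\cons\,\val_1\hdots\val_n \unif \cons\,\valtwo_1\hdots\valtwo_n$ by the goals $\set{\val_i \unif \valtwo_i \ST 1 \leq i \leq n}$ preserves well-typedness because corresponding arguments of equally typed structures have equal types, and an elimination step on $\var \unif \val$ records a binding whose value $\val$ has precisely the type of $\var$, so applying it preserves well-typedness via the substitution lemma (the same-location abstraction case being a trivial deletion of equal terms). Instantiating this invariant at $\set{\val\unif\valtwo}$ shows $\subst$ is type-preserving, and the multi-variable substitution lemma then closes the \rulename{unif} case with $\tctx' = \tctx$.
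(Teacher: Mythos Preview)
Your proposal is correct and follows essentially the same approach as the paper: case analysis on the reduction rule, decomposition via program composition and contextual substitution, routine handling of all cases but \rulename{unif}, and for \rulename{unif} the key step of showing that the Martelli--Montanari procedure preserves well-typedness of the goal set (the paper isolates this as a separate lemma), so that the resulting $\subst$ is type-preserving and iterated single-variable substitution plus weakening closes the case. The only cosmetic difference is that the paper phrases the invariant as ``$\tctx \vdash \goals$ is preserved by $\tounifa{}$'' rather than directly as ``the mgu is type-preserving'', but the content is the same.
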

\begin{proof}
By case analysis on the transition $\prog \toca{} \progtwo$,
using~\rlem{type_system_basic_properties}.
The interesting case is the \indrulename{unif} case,
which requires proving that the substitution $\subst$ returned
by $\mgu{\goals}$ preserves the types of the instantiated variables.
\SeeAppendixRef{\rsec{appendix_subject_reduction}}
\end{proof}

\section{Denotational Semantics}
  \lsec{denotational_semantics}
  
In this section we propose a {\em naive} denotational semantics for the
$\lambdaunif$-calculus. The semantics is naive in at least three senses:
first, types are interpreted merely as sets,
rather than as richer structures (\eg complete partial orders)
or in a~more abstract (\eg categorical) framework.
Second, since types are interpreted as sets, the {\em multiplicities}
of results are not taken into account, so for example
$\seme{\var \alt \var}{} = \seme{\var}{} \cup \seme{\var}{} = \seme{\var}{}$.
Third, and most importantly, the denotation of abstractions ($\lam{\var}{\prog}$)
is conflated with the denotation
of allocated abstractions ($\laml{\loc}{\var}{\prog}$). This means that
the operational semantics cannot be complete with respect to the denotational
one, given that for example $\laml{\loc}{\var}{\var}$ and $\laml{\loc'}{\var}{\var}$
have the same denotation but they are not observationally equivalent\footnote{{\em E.g.}
$\laml{\loc}{\var}{\var} \unif \laml{\loc}{\var}{\var}$ succeeds but $\laml{\loc}{\var}{\var} \unif \laml{\loc'}{\var}{\var}$ fails.}.
Nevertheless, studying this simple denotational semantics already presents
some technical challenges, and we regard it as a first necessary
step towards formulating a better behaved semantics\footnote{\label{footnote_denotational_memory}We expect that
a less naive semantics should be stateful, involving a {\em memory},
in such a way that abstractions ($\lam{\var}{\prog}$) allocate a memory
cell and store a closure,
whereas allocated abstractions ($\laml{\loc}{\var}{\prog}$) denote
a memory location in which a closure is already stored.}.

Roughly speaking,
the idea is that a type $\typ$ shall be interpreted as a set $\semtyp{\typ}$,
while a program $\prog$ of type $\typ$ shall be interpreted as a subset
$\seme{\prog}{} \subseteq \semtyp{\typ}$.
For example, if $\semtyp{\Nat} = \NN$,
then given constructors $\mathbf{1} : \Nat$, $\mathbf{2} : \Nat$
with their obvious interpretations,
and if $\mathit{add} : \Nat\to\Nat\to\Nat$ denotes addition,
we expect that:
\[
  \seme{(\lam{f:\Nat\to\Nat}{\fresh{\vartwo}{((\vartwo \unif \mathbf{1})\seq\mathit{add}\,\vartwo\,(f\,\vartwo)}}))(\lam{\var}{x\alt\mathbf{2}})}{} =
  \set{1+1,1+2} = \set{2,3}
\]
The soundness result that we shall prove
states that if $\prog \rtoca{} \progtwo$ then $\seme{\prog}{} \supseteq \seme{\progtwo}{}$.
Intuitively, the possible behaviors of $\progtwo$
are among the possible behaviors of $\prog$.

To formulate the denotational semantics, for ease of notation,
we work with an {\em \`a la Church} variant of the type system\footnote{Transitioning
between Church vs. Curry style variants of this system is a straightforward
exercise, following for instance \cite[Prop.~1.2.19]{barendregt2013lambda}.}.
That is,
we suppose that the set of variables is partitioned in such a way
that each variable has an intrinsic type.
More precisely, for each type $\typ$ there is a denumerably infinite
set of variables $\var^{\typ}, \vartwo^{\typ}, \varthree^{\typ}, \hdots$
of that type.
We also decorate each occurrence of $\fail$ with its type,
\ie we write $\fail^\typ$ for the empty program of type $\typ$.
Sometimes we omit the type decoration if it is clear from the context.
Under this assumption, it is easy to show that the system enjoys
a strong form of {\em unique typing},
\ie that if $X$ is a typable term or program then
there is a unique derivation $\tctx \vdash X : \typ$,
up to weakening of $\tctx$ with variables not in $\fv{X}$.
This justifies that we may write $\vdash X : \typ$
omitting the context.
\medskip

{\bf Domain of interpretation.}
We suppose given a {\bf non-empty} set $\binterp{\btyp}$ for each base type $\btyp$.
The {\em interpretation} of a type $\typ$ is a set written $\semtyp{\typ}$
and defined recursively as follows, where $\powerset{X}$ is the
usual set-theoretic power set, and $Y^X$ is the set of functions
with domain $X$ and codomain $Y$:
\[
  \semtyp{\btyp} \eqdef \binterp{\btyp}
  \hspace{1cm}
  \semtyp{\typ \to \typtwo} \eqdef \powerset{\semtyp{\typtwo}}^{\semtyp{\typ}}
\]
Note that, for every type $\typ$, the set $\semtyp{\typ}$ is non-empty,
given that we require that $\binterp{\btyp}$ be non-empty.
This decision is not arbitrary; rather it is necessary for soundness to hold.
For instance,
operationally we have that ${\var^\typ\seq\vartwo^\typtwo} \toca{guard} \vartwo^\typtwo$,
so denotationally we would expect $\seme{\var^\typ\seq\vartwo^\typtwo}{} \supseteq \seme{\vartwo^\typtwo}{}$.
This would not hold if $\semtyp{\typ} = \emptyset$ and $\semtyp{\typtwo} \neq \emptyset$,
as then $\seme{\var^\typ\seq\vartwo^\typtwo}{} = \emptyset$
whereas $\seme{\vartwo^\typtwo}{}$ would be a non-empty set.
\medskip

Another technical constraint that we must impose
is that {\em the interpretation of a value should always be a singleton}.
For example, operationally we have that
$(\lam{\var:\Nat}{\var+\var})\,\val \,\rtoca{}\, \val + \val$,
so denotationally, by soundness, we would expect that
$\seme{(\lam{\var:\Nat}{\var+\var})\,\val}{} \supseteq \seme{\val + \val}{}$.
If we had that $\seme{\val}{} = \set{1,2}$ is not a singleton, then
we would have that $\seme{(\lam{\var}{\var + \var})\,\val}{} = \set{1+1,2+2}$
whereas $\seme{\val+\val}{} = \set{1+1,1+2,2+1,2+2}$.

Following this principle,
given that terms of the form $\cons\,\val_1\hdots\val_n$ are values,
their denotation $\seme{\cons\,\val_1\hdots\val_n}{}$ must always be a singleton.
This means that constructors must be interpreted as singletons,
and constructors of function type should always return singletons
(which in turn should return singletons if they are functions, and so on, recursively).
Formally,
any element $\obj \in \semtyp{\btyp}$ is declared to be {\bf $\btyp$-unitary},
and a function $\objfun \in \semtyp{\typ \to \typtwo}$
is {\bf $(\typ \to \typtwo)$-unitary}
if for each $\obj \in \semtyp{\typ}$
the set $\objfun(\obj) = \set{\objtwo} \subseteq \semtyp{\typtwo}$
is a singleton and $\objtwo$ is $\typtwo$-unitary.
Sometimes we say that an element $\obj$
is {\em unitary} if the type is clear from
the context.
If $\objfun$ is $(\typ \to \typtwo)$-unitary,
and $\obj \in \semtyp{\typ}$
sometimes, by abuse of notation, we may write
$\objfun(\obj)$ for the unique element
$\objtwo \in \objfun(\obj)$.

{\bf Interpretation of terms.}
For each constructor $\cons$,
we suppose given a $\constyp{\cons}$-unitary
element $\cinterp{\cons} \in \semtyp{\constyp{\cons}}$.
%\begin{definition}[Interpretation of terms]
Moreover, we suppose that the interpretation of constructors
is {\em injective}, \ie
that
$\cinterp{\cons}(\obj_1)\hdots(\obj_n) = \cinterp{\cons}(\objtwo_1)\hdots(\objtwo_n)$
implies $\obj_i = \objtwo_i$ for all $i=1..n$.

An {\em environment} is a function
$\asg : \Var \to \bigcup_{\typ \in \Type} \semtyp{\typ}$
such that $\asg(\var^\typ) \in \semtyp{\typ}$
for each variable $\var^\typ$ of each type $\typ$.
If $\asg$ is an environment and $\obj \in \semtyp{\typ}$,
we write $\asg\asgextend{\var^\typ}{\obj}$
for the environment that maps $\var^\typ$ to $\obj$
and agrees with $\asg$ on every other variable.
We write $\Asg$ for the set of all environments.

Let $\vdash \tm : \typ$ (resp. $\vdash \prog : \typ$)
be a typable term (resp. program)
and let $\asg$ be an environment.
If $\vdash X : \typ$ is a typable term or program,
we define its {\em denotation under the environment $\asg$},
written $\seme{X}{\asg}$ as a subset of $\semtyp{\typ}$ as follows:
\[
{\small
  \begin{array}{rcl}
    \seme{\var^\typ}{\asg}
  & \eqdef &
    \set{\asg(\var^\typ)}
  \\
    \seme{\cons}{\asg}
  & \eqdef &
    \set{\cinterp{\cons}}
  \\
    \seme{\lam{\var^\typ}{\prog}}{\asg}
  & \eqdef &
    \set{\objfun}
    \text{\HS
      where $\objfun : \semtyp{\typ} \to \powerset{\semtyp{\typtwo}}$
      is given by
      $\objfun(\obj) = \seme{\prog}{\asg\asgextend{\var^\typ}{\obj}}$}
  \\
    \seme{\laml{\loc}{\var^\typ}{\prog}}{\asg}
  & \eqdef &
    \set{\objfun}
    \text{\HS
      where $\objfun : \semtyp{\typ} \to \powerset{\semtyp{\typtwo}}$
      is given by
      $\objfun(\obj) = \seme{\prog}{\asg\asgextend{\var^\typ}{\obj}}$}
  \\
    \seme{\tm\,\tmtwo}{\asg}
  & \eqdef &
   \set{\objtwo \ST
          \exists\objfun \in \seme{\tm}{\asg},\ %
          \exists\obj \in \seme{\tmtwo}{\asg},\ %
          \objtwo \in \objfun(\obj)}
  \\
    \seme{\tm \unif \tmtwo}{\asg}
  & \eqdef &
    \set{\cinterp{\unit} \ST
           \exists\obj \in \seme{\tm}{\asg},\ %
           \exists\objtwo \in \seme{\tmtwo}{\asg},\ %
           \obj = \objtwo}
  \\
    \seme{\tm \seq \tmtwo}{\asg}
  & \eqdef &
    \set{\obj \ST
           \exists\objtwo \in \seme{\tm}{\asg},\ %
           \obj \in \seme{\tmtwo}{\asg}}
  \\
    \seme{\fresh{\var^\typ}{\tm}}{\asg}
  & \eqdef &
    \set{\objtwo \ST
      \exists\obj \in \semtyp{\typ},\ %
      \objtwo \in \seme{\tm}{\asg\asgextend{\var^\typ}{\obj}}
    }
  \\
    \seme{\fail^\typ}{\asg}
  & \eqdef &
    \emptyset
  \\
    \seme{\tm \alt \prog}{\asg}
  & \eqdef &
    \seme{\tm}{\asg} \cup \seme{\prog}{\asg}
  \\
  \end{array}
}
\]
The denotation of a toplevel program is written $\seme{\prog}{}$
and defined as the union of its denotations under all possible environments,
\ie
$
  \seme{\prog}{} \eqdef \bigcup_{\asg \in \Asg} \seme{\prog}{\asg}
$.

\begin{proposition}[Properties of the denotational semantics]
\lprop{properties_of_the_denotation}
\llem{irrelevance}
\llem{compositionality}
\llem{interpretation_of_values}
\llem{interpretation_of_substitution}
\begin{enumerate}
\item {\bf Irrelevance.}
  If $\asg$ and $\asg'$ agree on $\fv{X}$,
  then $\seme{X}{\asg} = \seme{X}{\asg'}$.
  Here $X$ stands for either a program or a term.
  \SeeAppendixRef{\rlem{appendix:irrelevance}}
\item {\bf Compositionality.} \SeeAppendixRef{\rlem{appendix:compositionality}}
  \begin{enumerate}
  \item
    $\seme{\prog \alt \progtwo}{\asg} =
     \seme{\prog}{\asg} \cup \seme{\progtwo}{\asg}$.
  \item
    If $\wctx$ is a context whose hole is
    of type $\typ$, then
    $\seme{\wctxof{\tm}}{\asg} =
     \set{\objtwo \ST
          \obj \in \seme{\tm}{\asg},
          \objtwo \in \seme{\wctx}{\asg\asgextend{\ctxhole^\typ}{\obj}}}$.
  \end{enumerate}
\item {\bf Interpretation of values.}
  If $\val$ is a value then $\seme{\val}{\asg}$ is a singleton.
  \SeeAppendixRef{\rlem{appendix:interpretation_of_values}}
\item {\bf Interpretation of substitution.}
  \SeeAppendixRef{\rlem{appendix:interpretation_of_substitution}}\\
  Let $\subst = \set{\var^{\typ_1}_1\mapsto\val_1,\hdots,\var^{\typ_n}_n\mapsto\val_n}$
  be a substitution
  such that $\var_i \notin \fv{\val_j}$ for all $i,j$.
  Let $\seme{\val_i}{\asg} = \set{\obj_i}$ for each $i=1..n$
  (noting that values are singletons, by the previous item of this lemma).
  Then for any program or term $X$
  we have that
  $
   \seme{X\SUB{\subst}}{\asg} =
   \seme{X}{\asg\asgextend{\var_1}{\obj_1}\hdots\asgextend{\var_n}{\obj_n}}
  $.
\end{enumerate}
\end{proposition}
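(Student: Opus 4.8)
The plan is to establish the four items essentially in the order stated, since each later item depends on the earlier ones: \emph{Substitution} will invoke both \emph{Irrelevance} and \emph{Interpretation of values}, and the weak-context half of \emph{Compositionality} is most cleanly phrased by regarding the hole as a distinguished typed variable, so I prove \emph{Irrelevance} first. Every item is a structural induction; the only genuine content lies in the bookkeeping around binders and in the ``unitary'' machinery. First I would prove \emph{Irrelevance} by induction on $X$: the denotation consults $\asg$ only through the variable clause $\seme{\var^\typ}{\asg} = \set{\asg(\var^\typ)}$, so agreement of $\asg,\asg'$ on $\fv{X}$ propagates through every case; in the binder cases $\lam{\var^\typ}{\prog}$, $\laml{\loc}{\var^\typ}{\prog}$ and $\fresh{\var^\typ}{\tm}$ one extends both environments at the bound variable $\var^\typ$, and $\asg\asgextend{\var^\typ}{\obj}$ and $\asg'\asgextend{\var^\typ}{\obj}$ then agree on the free variables of the body, so the induction hypothesis applies.

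For \emph{Compositionality}, part (a) is an induction on the first program using the recursive definition of $\prog\alt\progtwo$: the base case $\seme{\fail\alt\progtwo}{\asg}=\seme{\progtwo}{\asg}=\emptyset\cup\seme{\progtwo}{\asg}$ and the step $\seme{(\tm\alt\prog)\alt\progtwo}{\asg}=\seme{\tm}{\asg}\cup\seme{\prog\alt\progtwo}{\asg}$ close by the induction hypothesis and associativity of $\cup$. For part (b) I would induct on the structure of the weak context $\wctx$, reading $\wctx$ as a term over a fresh typed variable $\ctxhole^\typ$. The base case $\wctx=\ctxhole$ gives $\seme{\ctxhole}{\asg\asgextend{\ctxhole^\typ}{\obj}}=\set{\obj}$, so the right-hand set collapses to $\seme{\tm}{\asg}=\seme{\wctxof{\tm}}{\asg}$; each weak-context former ($\wctx\,\tm$, $\tm\,\wctx$, $\wctx\seq\tm$, $\tm\seq\wctx$, $\wctx\unif\tm$, $\tm\unif\wctx$) is handled by unfolding the corresponding clause of the denotation and applying the induction hypothesis to the subcontext.

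\emph{Interpretation of values} is proved by induction on the value $\val$. The variable case is a singleton by definition of $\seme{\var}{\asg}$, and the allocated-abstraction case is a singleton because its clause returns $\set{\objfun}$. The structure case $\cons\,\val_1\cdots\val_n$ needs the unitary apparatus: I would strengthen the statement to record that the denotation is a singleton \emph{whose element is $\typ$-unitary}, and run a secondary induction on the number of applied arguments. The head $\seme{\cons}{\asg}=\set{\cinterp{\cons}}$ is unitary by hypothesis; given that $\seme{\cons\,\val_1\cdots\val_k}{\asg}=\set{\objfun}$ with $\objfun$ unitary and $\seme{\val_{k+1}}{\asg}=\set{\obj}$ (outer hypothesis), the application clause yields $\seme{\cons\,\val_1\cdots\val_{k+1}}{\asg}=\objfun(\obj)$, which is a singleton whose element is again unitary precisely by the definition of $(\typ\to\typtwo)$-unitarity. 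This is the one place where injectivity and unitarity of constructor interpretations are genuinely used.

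Finally, \emph{Interpretation of substitution} is an induction on $X$ using the previous three items. A variable $\var_i\in\supp(\subst)$ gives $\seme{\var_i\SUB{\subst}}{\asg}=\seme{\val_i}{\asg}=\set{\obj_i}$, matching the extended environment since the $\var_j$ are distinct; a variable outside $\supp(\subst)$ is handled by \emph{Irrelevance}. I expect the binder cases to be the main obstacle. For $\fresh{\vartwo^\typtwo}{\tm}$ (and likewise for the two abstraction formers) I would first $\alpha$-rename so that $\vartwo\notin\supp(\subst)\cup\bigcup_j\fv{\val_j}$; then for each $\objthree\in\semtyp{\typtwo}$ the induction hypothesis applies in the environment $\asg\asgextend{\vartwo}{\objthree}$, where the hypothesis $\var_i\notin\fv{\val_j}$ together with \emph{Irrelevance} guarantees $\seme{\val_i}{\asg\asgextend{\vartwo}{\objthree}}=\set{\obj_i}$, so the same $\obj_i$ may be reused. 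The remaining work is to commute the update $\asgextend{\vartwo}{\objthree}$ past the updates $\asgextend{\var_i}{\obj_i}$, which is legitimate because $\vartwo\neq\var_i$, and then refold the $\fresh{}$ clause. The delicacy throughout is exactly this interplay between capture-avoidance, the non-interference hypothesis, and \emph{Irrelevance}; the non-binder cases all close routinely by unfolding the relevant clause and applying the induction hypothesis componentwise.
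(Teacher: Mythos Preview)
Your proposal is correct and follows essentially the same route as the paper: structural induction on $X$ for Irrelevance and for Substitution, induction on $\prog$ and on the weak context $\wctx$ (using Irrelevance for the non-hole subterm) for the two halves of Compositionality, and the nested induction with the ``unitary'' strengthening for values. One small correction: injectivity of constructor interpretations is \emph{not} used in item~3---only unitarity is; injectivity is invoked later, in the soundness argument for the \rulename{u-match-cons} unification step.
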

\bigskip

To conclude this section,
the following theorem shows that the operational semantics is
sound with respect to the denotational semantics.
\begin{theorem}[Soundness]
\lthm{soundness}
Let $\tctx \vdash \prog : \typ$
and $\prog \toca{} \progtwo$.
Then $\seme{\prog}{} \supseteq \seme{\progtwo}{}$.
The inclusion is an equality for all reduction rules other than the
\indrulename{fail} rule.
\end{theorem}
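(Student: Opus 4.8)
The plan is to argue by case analysis on the rule used in $\prog \toca{} \progtwo$, after first localizing the reasoning to the single thread that the rule rewrites. Every rule in \rdef{operational_semantics_reduction_rules} acts on a program of the form $\prog_1 \alt \wctxof{\tm} \alt \prog_2$ and only touches the middle thread $\wctxof{\tm}$ (in the \indrulename{fail} case that thread is erased). By compositionality of the union (\rprop{properties_of_the_denotation}, item~2a), $\seme{\prog_1 \alt \wctxof{\tm} \alt \prog_2}{\asg} = \seme{\prog_1}{\asg} \cup \seme{\wctxof{\tm}}{\asg} \cup \seme{\prog_2}{\asg}$ for every $\asg$, so after taking the union over all environments the contributions of $\prog_1$ and $\prog_2$ are shared by $\seme{\prog}{}$ and $\seme{\progtwo}{}$. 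It therefore suffices to compare the denotation of the affected thread with that of its reduct; typability of the reduct, needed for these denotations to be defined, is guaranteed by \rprop{subject_reduction}.

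For \indrulename{alloc}, \indrulename{beta}, and \indrulename{guard} I expect a pointwise equality $\seme{\wctxof{\tm}}{\asg} = \seme{\wctxof{\tm'}}{\asg}$ for every $\asg$, from which the toplevel equality follows by taking unions. The \indrulename{alloc} case is immediate because the denotation clauses for $\lam{\var}{\prog}$ and $\laml{\loc}{\var}{\prog}$ are literally identical, and compositionality (item~2b) propagates this through $\wctx$. For \indrulename{beta}, unfolding the clause for application and using that $\seme{\val}{\asg}$ is a singleton $\set{\obj}$ (item~3) gives $\seme{(\laml{\loc}{\var}{\prog})\,\val}{\asg} = \seme{\prog}{\asg\asgextend{\var}{\obj}}$, which equals $\seme{\prog\sub{\var}{\val}}{\asg}$ by the interpretation-of-substitution property (item~4, valid after $\alpha$-renaming so that $\var \notin \fv{\val}$); compositionality again wraps this inside $\wctx$. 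For \indrulename{guard}, the point is that $\seme{\val}{\asg}$ is a \emph{nonempty} singleton (item~3 together with the nonemptiness of every $\semtyp{\typ}$), so the existential guarding the clause for $\val \seq \tm$ is always satisfiable and $\seme{\val\seq\tm}{\asg} = \seme{\tm}{\asg}$; this is exactly where the hypothesis that base types are interpreted by nonempty sets is used.

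The \indrulename{fresh} and \indrulename{fail} cases are handled at the level of the toplevel union. For \indrulename{fail} the reduct thread is absent, so $\seme{\progtwo}{} = \seme{\prog_1}{} \cup \seme{\prog_2}{} \subseteq \seme{\prog}{}$, giving the inclusion; it can be strict precisely when $\seme{\val \unif \valtwo}{\asg} = \set{\cinterp{\unit}}$ for some $\asg$ even though $\mgu{\set{\val\unif\valtwo}}$ fails, which happens \eg for two allocated abstractions in distinct locations with equal bodies, explaining why equality must be dropped here. For \indrulename{fresh}, I would first use \rlem{reduction_modulo_structeq} (structural equivalence preserves the toplevel denotation) to assume the introduced variable $\vartwo$ occurs in none of $\wctx, \tm, \prog_1, \prog_2$; then, rewriting $\seme{\tm\sub{\var}{\vartwo}}{\asg} = \seme{\tm}{\asg\asgextend{\var}{\asg(\vartwo)}}$ by item~4 and observing that, as $\asg$ ranges over all environments, $\asg(\vartwo)$ ranges over all of $\semtyp{\typtwo}$ independently of the rest of $\asg$ (by Irrelevance, item~1), the toplevel union reproduces exactly the existential quantifier over $\semtyp{\typtwo}$ in the clause for $\fresh{\var}{\tm}$. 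This yields $\seme{\prog}{} = \seme{\progtwo}{}$, the freshly created free variable $\vartwo$ being absorbed by the quantification over environments (which is why the context is extended to $\tctx' = \tctx, \vartwo:\typtwo$).

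The \indrulename{unif} (success) case, with $\subst = \mgu{\set{\val\unif\valtwo}}$, is where the main difficulty lies. Using item~4 and idempotency of $\subst$, the toplevel denotations of the two threads become
\[
  \seme{\wctxof{\val\unif\valtwo}}{} = \bigcup_{\asg\,:\,\seme{\val}{\asg}=\seme{\valtwo}{\asg}} \seme{\wctxof{\unit}}{\asg}
  \qquad\text{and}\qquad
  \seme{\wctxof{\unit}\SUB{\subst}}{} = \bigcup_{\asg} \seme{\wctxof{\unit}}{\asg'},
\]
where $\asg'$ sends each $\var$ to the unique element of $\seme{\subst(\var)}{\asg}$. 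The inclusion $\supseteq$, which already suffices for soundness, is easy: every such $\asg'$ semantically unifies $\val$ and $\valtwo$ because $\val\SUB{\subst} = \valtwo\SUB{\subst}$ syntactically. The hard direction, needed for equality, is the \emph{semantic completeness of the most general unifier}: every environment $\hat\asg$ with $\seme{\val}{\hat\asg} = \seme{\valtwo}{\hat\asg}$ must satisfy $\hat\asg(\var) \in \seme{\subst(\var)}{\hat\asg}$ for all $\var \in \supp(\subst)$, so that $\hat\asg$ and the derived $\hat\asg'$ agree on $\fv{\wctxof{\unit}}$ and its contribution is already present on the right. I expect to prove this as a separate lemma, by induction following the Martelli--Montanari decomposition that produces $\subst$: the decomposition step for structures $\cons\,\val_1\hdots\val_n \unif \cons\,\valtwo_1\hdots\valtwo_n$ is semantically sound thanks to the assumed \emph{injectivity of constructor interpretations}, the variable-elimination step $\var \unif \val$ forces $\hat\asg(\var) = \seme{\val}{\hat\asg}$, and the allocated-abstraction case is settled by the coherence invariant, which makes same-location abstractions $\alpha$-equal and hence denotationally equal. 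This lemma---the semantic counterpart of ``every unifier is an instance of the mgu''---is the real content of the theorem, and it is exactly what fails for the \indrulename{fail} rule, consistently with the fact that injectivity and most general unifiers are the very ingredients singled out as crucial for confluence.
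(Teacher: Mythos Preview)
Your proposal is correct and follows essentially the same route as the paper: case analysis on the reduction rule, localization to the affected thread via Compositionality, pointwise equalities for \indrulename{alloc}/\indrulename{beta}/\indrulename{guard}, absorption of the fresh variable by the environment union for \indrulename{fresh}, and---for \indrulename{unif}---an auxiliary lemma establishing that the Martelli--Montanari steps preserve the set of semantically satisfying environments (the paper's \rlem{unification_preserves_satisfaction}, your ``semantic completeness of the mgu''), proved by case analysis on the unification rules using constructor injectivity and coherence. The only presentational difference is that the paper packages the free-variable bookkeeping into an auxiliary notation $\semf[\fctx]{X}{\asg}$ (a union over the values of the variables in $\fctx$) and proves the slightly more general statement $\semf[\fv{\prog}]{\prog}{\asg} \supseteq \semf[\fv{\progtwo}]{\progtwo}{\asg}$, whereas you reason directly with the toplevel union $\seme{\cdot}{}$; both amount to the same thing.
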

\begin{proof}
The proof (\SeeAppendixRef{\rthm{appendix:soundness}})
is technical by exhaustive case analysis of all possible reduction
steps, using \rprop{properties_of_the_denotation} throughout.
The \indrulename{unif} rule is non-trivial, as it requires to formulate
an invariant for the unification algorithm.
The core of the argument is an auxiliary lemma essentially stating that if
$\goals \tounifa{} \goalstwo$ is a step of the unification algorithm
that does not fail,
then the set of environments that fulfill the equality constraints imposed by $\goals$
are the same environments that fulfill the equality constraints imposed by $\goalstwo$.
\end{proof}

\begin{example}
Consider the reduction
$
  \fresh{\var}{\left((\lam{\varthree}{\fresh{\vartwo}{((\varthree \unif \pairing\,{\bf 1}\,\vartwo)\seq(\pairing\,\vartwo\,\var))}})\,(\pairing\,x\,{\bf 2})\right)}
  \rtoca{}
  \pairing\,{\bf 2}\,{\bf 1}
$.
If $\semtyp{\mathtt{Tuple}} = \semtyp{\Nat} \times \semtyp{\Nat} = \NN \times \NN$,
the constructors ${\bf 1} : \Nat$, ${\bf 2} : \Nat$ are given their
obvious interpretations and $\pairing : \Nat \to \Nat \to \mathtt{Tuple}$
is the pairing function\footnote{Precisely, $\cinterp{\pairing}(n) = \set{f_n}$ with $f_n(m) = \set{(n,m)}$.}, then for any environment $\asg$,
if we abbreviate
$\asg' := \asg\asgextend{\var}{n}\asgextend{\varthree}{p}\asgextend{\vartwo}{m}$,
we have:
\[
{\small
  \begin{array}{rcl}
  && \seme{\fresh{\var}{\left((\lam{\varthree}{\fresh{\vartwo}{((\varthree \unif \pairing\,{\bf 1}\,\vartwo)\seq(\pairing\,\vartwo\,\var))}})\,(\pairing\,x\,{\bf 2})\right)}}{\asg}
  \\
  & = &
    \set{
      \seme{(\lam{\varthree}{\fresh{\vartwo}{((\varthree \unif \pairing\,{\bf 1}\,\vartwo)\seq(\pairing\,\vartwo\,\var))}})\,(\pairing\,x\,{\bf 2})
      }{
        \asg\asgextend{\var}{n}
      }
    \ST
      n\in\NN
    }
  \\
  & = &
    \set{
      r
    \ST
      n \in\NN,
      f \in \seme{\lam{\varthree}{\fresh{\vartwo}{((\varthree \unif \pairing\,{\bf 1}\,\vartwo)\seq(\pairing\,\vartwo\,\var))}}}{\asg\asgextend{\var}{n}},
      p \in \seme{\pairing\,\var\,{\bf 2}}{\asg\asgextend{\var}{n}},
      r \in f(p)
    }
  \\
  & = &
    \set{
      r
    \ST
      n, m \in\NN,
      p \in \seme{\pairing\,\var\,{\bf 2}}{\asg\asgextend{\var}{n}},
      r \in \seme{(\varthree \unif \pairing\,{\bf 1}\,\vartwo)\seq(\pairing\,\vartwo\,\var)}{\asg'}
    }
  \\
  & = &
    \set{
      r
    \ST
      n, m \in \NN,
      p \in \set{(n, 2)},
      r \in \seme{(\varthree \unif \pairing\,{\bf 1}\,\vartwo)\seq(\pairing\,\vartwo\,\var)}{\asg'}
    }
  \\
  & = &
    \set{
      r
    \ST
      n, m \in \NN,
      p \in \set{(n, 2)},
      b \in \seme{\varthree \unif \pairing\,{\bf 1}\,\vartwo}{\asg'},
      r \in \seme{\pairing\,\vartwo\,\var}{\asg'}
    }
  \\
  & = &
    \set{
      r
    \ST
      n, m \in \NN,
      p \in \set{(n, 2)},
      p = (1,m),
      r \in \seme{\pairing\,\vartwo\,\var}{\asg'}
    }
  \\
  & = &
    \set{
      r
    \ST
      n \in \set{1}, m \in \set{2}, p \in \set{(1, 2)},
      r \in \seme{\pairing\,\vartwo\,\var}{\asg'}
    }
  \\
  & = &
    \set{(2,1)}
  \\
  & = &
    \seme{\pairing\,{\bf 2}\,{\bf 1}}{\asg}
  \end{array}
}
\]
\end{example}
An example in which the inclusion is proper
is the reduction step
$\laml{\loc}{\var}{\var} \unif \laml{\loc'}{\var}{\var} \,\toca{fail}\, \fail$.
Note that
$\seme{\laml{\loc}{\var}{\var} \unif \laml{\loc'}{\var}{\var}}{} = \set{\cinterp{\unit}}
\supsetneq \emptyset = \seme{\fail}{}$,
given that our naive semantics equates the denotations of the abstractions,
\ie $\seme{\laml{\loc}{\var}{\var}} = \seme{\laml{\loc'}{\var}{\var}}{}$,
in spite of the fact that their locations differ.

\section{Conclusion}
  
In this work, we have proposed the {\bf $\lambdaunif$-calculus}~(\rdef{operational_semantics_reduction_rules})
an extension of the $\lambda$-calculus with
relational features, including non-deterministic choice
and first-order unification.
We have studied some of its operational properties,
providing an inductive {\bf characterization of normal forms}~(\rprop{characterization_normal_forms}),
and proving that
it is {\bf confluent}~(\rthm{confluence}) up to structural equivalence,
by adapting the technique by Tait and Martin-L\"of.
We have proposed a system of simple types enjoying
{\bf subject reduction}~(\rprop{subject_reduction}).
We have also proposed a naive denotational semantics, in which a program of
type $\typ$ is interpreted as a set of elements of a set $\semtyp{\typ}$,
for which we have proven {\bf soundness}~(\rthm{soundness}).
The denotational semantics is not complete.

As of the writing of this paper,
we are attempting to formulate a refined denotational semantics
involving a notion of {\em memory}, following the ideas mentioned in
footnote~\ref{footnote_denotational_memory}. One difficulty is that
in a term like
$
 ((\var \unif \lam{\varthree}{\varthree}) \seq \vartwo)
 ((\vartwo \unif \lam{\varthree}{\varthree}) \seq \var)
$,
there seems to be a cyclic dependency between
the denotation of the subterm on the left and denotation of the subterm on the right, so
it is not clear how to formulate the semantics compositionally.

We have attempted to prove normalization results for the simply typed
system, until now unsuccessfully.
Given a constructor $\cons : (\typ \to \typ) \to \typ$,
a self-looping term $\omega(\cons\,\omega)$ with
$\omega \eqdef
  \lam{\var^\typ}{\fresh{\vartwo^{\typ\to\typ}}{
    ((\cons\vartwo \unif \var) \seq \vartwo\,\var)
  }}$
can be built,
so some form of {\em positivity condition} should be imposed.
Other possible lines for future work include studying the relationship
between calculi with patterns and $\lambdaunif$ by means of translations,
and formulating richer type systems. For instance,
one would like to be able to express {\em instantiation restrictions},
in such a way that a fresh variable representing a natural number
is of type $\texttt{Nat}^-$ while a term of type $\texttt{Nat}^+$ represents
a fully instantiated natural number.

{\bf Related Work.}
On {\bf functional--logic} programming,
we have mentioned $\lambda$Prolog~\cite{nadathur1984higher,miller2012programming}
and Curry~\cite{DBLP:conf/popl/Hanus97,Hanus13}.
Other languages combining functional and logic features are
Mercury~\cite{somogyi1996execution} and Mozart/Oz~\cite{van2005multiparadigm}.
There is a vast amount of literature on functional--logic programming.
We mention a few works which most resemble our own.
Miller~\cite{miller1991logic} proposes a language with lambda-abstraction
and a decidable extension of first-order unification which admits most general unifiers.
Chakravarty \etal~\cite{chakravarty1998goffin} and Smolka~\cite{smolka1997foundation}
propose languages in which the functional--logic paradigm is modeled as a concurrent
process with communication.
Albert \etal~\cite{albert2002operational} formulate a {\em big-step} semantics
for a functional--logic calculus with narrowing.
On pure {\bf relational} programming (without $\lambda$-abstractions),
recently Rozplokhas \etal~\cite{rozplokhas2019certified}
have studied the operational and denotational semantics of miniKanren.
On {\bf $\lambda$-calculi with patterns} (without full unification),
there have been many different approaches to their
formulation~\cite{jay2006pure,arbiser2006lambda,klop2008lambda,DBLP:journals/corr/abs-1009-3429,ayala2019typed}.
On {\bf $\lambda$-calculi with non-deterministic choice} (without unification),
we should mention works on the $\lambda$-calculus
extended with {\em erratic}~\cite{schmidt2000lambda}
as well as with {\em probabilistic} choice~\cite{ramsey2002stochastic,DBLP:conf/lics/FaggianR19}.

\noindent{\bf Acknowledgements.}
To Alejandro Díaz-Caro for supporting our interactions.
To Eduardo Bonelli, Delia Kesner, and the anonymous reviewers 
for their feedback and suggestions.

\newpage
\appendix
\renewcommand*\theequation{\thechapter.\thetheorem.\arabic{equation}}
\section{Technical Appendix}

The following lemma summarizes some expected properties of substitution that
we use throughout the appendix. We omit the proofs, which are routine:
\begin{lemma}[Properties of substitution]
\llem{properties_of_substitution}
Let $\subst$ be an arbitrary substitution. Then:
\begin{enumerate}
\item
  $\wctxof{\tm}\SUB{\subst} = \wctx\SUB{\subst}\ctxof{\tm\SUB{\subst}}$.
  Note that there cannot be capture, given that $\wctx$ is a weak context,
  and it does not bind variables.
\item $(\tm\SUB{\subst})\SUB{\substtwo} = \tm^{\subst \scomp \substtwo}$
\item $\tm\sub{\var}{\val}\SUB{\subst} = \tm\SUB{\subst}\sub{\var}{\val\SUB{\subst}}$
      as long as there is no capture, \ie $\var \not\in \supp{\subst}$ and
      for all $\vartwo \in \fv{\tm}$ we have that $\var \not\in \fv{\subst(\vartwo)}$.
\item If $\val$ is a value then $\val\SUB{\subst}$ is a value.
\item The relation $\sleq$ is a preorder, \ie reflexive and transitive.
\end{enumerate}
\end{lemma}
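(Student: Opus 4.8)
The plan is to establish the five items by structural inductions, with item~2 (the composition law for substitution) serving as the workhorse from which items~3 and~5 follow almost for free.

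For item~1 I would induct on the weak context $\wctx$. The base case $\wctx = \ctxhole$ is immediate from the convention $\ctxhole\SUB{\subst} = \ctxhole$, giving $\wctxof{\tm}\SUB{\subst} = \tm\SUB{\subst}$. Each inductive case follows the grammar $\wctx\,\tmtwo \mid \tmtwo\,\wctx \mid \wctx\seq\tmtwo \mid \tmtwo\seq\wctx \mid \wctx\unif\tmtwo \mid \tmtwo\unif\wctx$: in every case the substitution pushes past the outermost constructor by definition, and the inductive hypothesis closes the goal. The essential point, noted in the statement itself, is that a weak context never descends below a binder, so no variable of $\tm$ can be captured and the substitution commutes freely with the surrounding structure.

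The core is item~2, which I would prove by mutual induction on the structure of terms and programs (a term may carry a program inside an abstraction, and conversely). The variable case is exactly the definition of composition: $(\var\SUB{\subst})\SUB{\substtwo} = \subst(\var)\SUB{\substtwo} = (\subst\scomp\substtwo)(\var)$. The congruence cases for application, guarding, unification, choice and $\fail$ are immediate from the inductive hypotheses. The only delicate cases are the three binders $\lam{\var}{\prog}$, $\laml{\loc}{\var}{\prog}$ and $\fresh{\var}{\tm}$; here, working up to $\alpha$-equivalence, I would rename the bound variable to be fresh for $\supp(\subst) \cup \supp(\substtwo)$ and for the free variables of all the images, after which the capture-avoiding substitutions commute with the binder and the mutual inductive hypothesis applies. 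This binder bookkeeping is the one place where genuine care is needed, and I expect it to be the main---though mild---obstacle, since it is where capture-avoidance and the $\alpha$-convention must be made explicit.

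The remaining three items are corollaries. For item~3 both sides rewrite, by item~2, to the single term $\tm$ acted on respectively by $\set{\var\mapsto\val}\scomp\subst$ and by $\subst\scomp\set{\var\mapsto\val\SUB{\subst}}$, so it suffices to check that these two composite substitutions agree on every $\vartwo \in \fv{\tm}$: the case $\vartwo = \var$ uses the side-condition $\var \notin \supp(\subst)$, and the case $\vartwo \neq \var$ uses $\var \notin \fv{\subst(\vartwo)}$---precisely the two stated no-capture hypotheses. Item~4 is an induction on the value grammar $\val ::= \var \mid \laml{\loc}{\var}{\prog} \mid \cons\,\val_1\hdots\val_n$; the key case is the variable, where $\var\SUB{\subst} = \subst(\var)$ is a value exactly because substitutions have codomain $\Val$, while the allocated-abstraction case is immediate and the structure case closes by the inductive hypothesis. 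Finally, for item~5, reflexivity is witnessed by the identity substitution (using $\subst\scomp\mathrm{id} = \subst$), and transitivity chains the two witnessing substitutions using associativity of $\scomp$, which is itself an immediate consequence of item~2.
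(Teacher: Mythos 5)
Your proof is correct: the paper omits this lemma's proof entirely as routine, and your structural inductions---with the composition law of item~2 as the workhorse, explicit $\alpha$-renaming at the three binders, and the pointwise comparison of the two composite substitutions on $\fv{\tm}$ for item~3---are precisely the standard argument that omission presupposes. One cosmetic remark: item~4 is logically prior to item~2, since $\subst\scomp\substtwo$ is a substitution (\ie has codomain $\Val$) only because values are closed under substitution; your proof of item~4 does not depend on item~2, so simply proving it first repairs the ordering.
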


\subsection{Unification Algorithm}
\lsec{appendix_unification_algorithm}

We define the free variables ($\fv{\goals}$), locations ($\locs{\goals}$),
and capture-avoiding substitution ($\goals\sub{\var}{\tm}$)
for goals as follows:
\[
  \begin{array}{rcl}
    \fv{\set{\val_1\unif\valtwo_1,\hdots,\val_n\unif\valtwo_n}}
    & \eqdef &
    \fv{\val_1\unif\valtwo_1} \cup \hdots \cup \fv{\val_n\unif\valtwo_n}
  \\
    \locs{\set{\val_1\unif\valtwo_1,\hdots,\val_n\unif\valtwo_n}}
    & \eqdef &
    \locs{\val_1\unif\valtwo_1} \cup \hdots \cup \locs{\val_n\unif\valtwo_n}
   \\
    \set{\val_1\unif\valtwo_1,\hdots,\val_n\unif\valtwo_n}\sub{\var}{\tm}
    & \eqdef &
    \set{(\val_1\unif\valtwo_1)\sub{\var}{\tm},\hdots,(\val_n\unif\valtwo_n)\sub{\var}{\tm}}
  \end{array}
\]

\begin{definition}[Unification algorithm]
The following is a variant of Martelli--Montanari's unification algorithm.
We say that two values $\val,\valtwo$ {\em clash} if any of the following conditions
holds:
\begin{enumerate}
\item Constructor clash:
  $\val = \cons\,\val_1\hdots\val_n$ and $\valtwo = \constwo\,\valtwo_1\hdots\valtwo_m$ 
  with $\cons \neq \constwo$.
\item Arity clash:
  $\val = \cons\,\val_1\hdots\val_n$ and $\valtwo = \cons\,\valtwo_1\hdots\valtwo_m$ 
  with $n \neq m$.
\item Type clash:
  $\val = \cons\,\val_1\hdots\val_n$ and $\valtwo = \laml{\loc}{\var}{\prog}$
  or vice-versa.
\item Location clash:
  $\val = \laml{\loc}{\var}{\prog}$ and $\valtwo = \laml{\loctwo}{\vartwo}{\progtwo}$ 
  with $\loc \neq \loctwo$.
\end{enumerate}
We define a rewriting system whose objects are unification problems $\goals$, and the symbol $\FAIL$.
The binary rewriting relation $\tounifa{}$ is given by the union of the following rules.
Note that ``$\uplus$'' stands for the disjoint union of sets:
\[\FIT{$
  \begin{array}{rlll}
    \set{\var \unif \var} \uplus \goals
    & \tounifa{u-delete} &
    \goals
  \\
    \set{\val \unif \var} \uplus \goals
    & \tounifa{u-orient} &
    \set{\var \unif \val} \uplus \goals
    & \text{if $\val \notin \Var$}
  \\
    \set{\laml{\loc}{\var}{\prog} \unif \laml{\loc}{\var}{\prog}} \uplus \goals
    & \tounifa{u-match-lam} &
    \goals
  \\
    \set{\cons\,\val_1\hdots\val_n \unif \cons\,\valtwo_1\hdots\valtwo_n} \uplus \goals
    & \tounifa{u-match-cons} &
    \set{\val_1 \unif \valtwo_1, \hdots, \val_n \unif \valtwo_n} \uplus \goals
  \\
    \set{\val \unif \valtwo} \uplus \goals
    & \tounifa{u-clash} &
    \FAIL
    & \text{if $\val$ and $\valtwo$ clash}
  \\
    \set{\var \unif \val} \uplus \goals
    & \tounifa{u-eliminate} &
    \set{\var \unif \val} \uplus \goals\sub{\var}{\val}
    & \text{if $\var \in \fv{\goals} \setminus \fv{\val}$}
  \\
    \set{\var \unif \val} \uplus \goals
    & \tounifa{u-occurs-check} &
    \FAIL
    & \text{if $\var \neq \val$ and $\var \in \fv{\val}$}
  \\
  \end{array}
$}\]
\end{definition}

\begin{lemma}[Coherence is invariant by unification]
\llem{location_coherence_unification}
If $\goals$ is a coherent unification problem
and $\goals \tounifa{} \goalstwo$ then $\goalstwo$ is
coherent.
\end{lemma}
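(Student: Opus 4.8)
The plan is to proceed by case analysis on the rule used in the step $\goals \tounifa{} \goalstwo$, after two preliminary observations. First, coherence is inherited by subsets: if $X$ is coherent and $Y \subseteq X$, then $Y$ is coherent, since conditions \textbf{(1)} and \textbf{(2)} are universally quantified over elements and pairs of elements. Second, it is convenient to reformulate condition \textbf{(1)}: it is equivalent to requiring that, for every allocated abstraction $\laml{\loc}{\var}{\prog}$ occurring in some $\tm \in X$, one has $\fv{\laml{\loc}{\var}{\prog}} \subseteq \fv{\tm}$, because a free variable of the abstraction is bound by the surrounding context exactly when it fails to be free in $\tm$.

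With these in hand, the cases for \rulename{u-delete} and \rulename{u-match-lam} are immediate, since there $\goalstwo \subseteq \goals$; and the cases for \rulename{u-clash} and \rulename{u-occurs-check} are vacuous, since there $\goalstwo = \FAIL$. For \rulename{u-orient} the problem changes only by swapping the two sides of one equation, altering neither the allocated abstractions occurring in it, nor their bodies, nor the binders lying above them, so coherence transfers verbatim. For \rulename{u-match-cons} the goal $\cons\,\val_1\cdots\val_n \unif \cons\,\valtwo_1\cdots\valtwo_n$ is replaced by $\val_1 \unif \valtwo_1, \ldots, \val_n \unif \valtwo_n$; the key point is that neither the constructor application nor $\unif$ introduces a binder, so the collection of allocated abstractions, together with the binders above each one, is unchanged. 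Consequently the free variables of each abstraction already lie in the single $\val_i$ (or $\valtwo_i$) hosting it, and both coherence conditions survive the decomposition.

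The only substantial case is \rulename{u-eliminate}, where $\set{\var \unif \val} \uplus \goals$ steps to $\set{\var \unif \val} \uplus \goals\sub{\var}{\val}$ under the side condition $\var \in \fv{\goals} \setminus \fv{\val}$; in particular $\var \notin \fv{\val}$. I would isolate this as a substitution lemma: substituting a value $\val$ drawn from a coherent problem, for a variable $\var \notin \fv{\val}$, preserves coherence. Condition \textbf{(1)} is then handled by capture-avoidance of substitution. Using the reformulation above, for an abstraction $A$ already present in some $g \in \goals$ (now carrying the substituted body), the inclusion $\fv{A} \subseteq \fv{g}$ lifts to $\fv{A\sub{\var}{\val}} \subseteq \fv{g\sub{\var}{\val}}$ by a routine computation of free variables under substitution; and for a copy of an abstraction of $\val$ inserted where $\var$ was free in $g$, its free variables lie in $\fv{\val} \subseteq \fv{g\sub{\var}{\val}}$, the inclusion holding precisely because $\var$ was free in $g$.

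The delicate point, and the step I expect to be the main obstacle, is condition \textbf{(2)} for \rulename{u-eliminate}. Fix a location $\loc$ and, by the original coherence, let $\laml{\loc}{\var_\loc}{B_\loc}$ be a representative of the common $\loc$-abstraction (all $\loc$-decorated abstractions in the original problem sharing the body $B_\loc$ up to renaming of $\var_\loc$). In $\goalstwo$ the $\loc$-abstractions are of two shapes: those inherited from $\goals$, whose body is now $B_\loc\sub{\var}{\val}$, and copies coming from $\val$, whose body is still $B_\loc$. To reconcile them I would argue that whenever $\val$ contains a $\loc$-abstraction at all, that abstraction lies inside $\val$, so $\fv{B_\loc} \subseteq \fv{\val} \cup \set{\var_\loc}$; since $\var \notin \fv{\val}$ and $\var \neq \var_\loc$ (by $\alpha$-renaming), we obtain $\var \notin \fv{B_\loc}$ and hence $B_\loc\sub{\var}{\val} = B_\loc$, so all $\loc$-bodies coincide. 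If instead $\val$ contains no $\loc$-abstraction, then every $\loc$-abstraction of $\goalstwo$ is inherited from $\goals$ and carries the same body $B_\loc\sub{\var}{\val}$, so they agree trivially. The bookkeeping needed to make the ``common body up to renaming'' argument precise — commuting $\sub{\var}{\val}$ past the bound-variable renaming used to compare bodies, and justifying the absence of capture throughout — is the most demanding part, and it is exactly where the side condition $\var \notin \fv{\val}$ does the essential work.
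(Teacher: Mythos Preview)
Your proposal is correct and follows the same overall approach as the paper: case analysis on the unification rule, with \rulename{u-eliminate} as the only nontrivial case. Your treatment is in fact considerably more thorough than the paper's sketch, which only explicitly considers two abstractions that were already present in $\goals$ before the substitution and argues that equal things remain equal after substituting. You additionally verify condition~\textbf{(1)} and, crucially, handle the mixed case where one $\loc$-abstraction in $\goalstwo$ comes from $\goals$ (with body $B_\loc\sub{\var}{\val}$) and another from a copy of $\val$ (with body $B_\loc$); your use of condition~\textbf{(1)} to conclude $\var\notin\fv{B_\loc}$, and hence $B_\loc\sub{\var}{\val}=B_\loc$, is exactly the argument that reconciles these and is left implicit in the paper.
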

\begin{proof}
By inspection of the unification rules.
The only interesting case is the \rulename{u-eliminate} rule:
\[
    \set{\var \unif \val} \cup \goals
    \HS\tounifa{u-eliminate}\HS
    \set{\var \unif \val} \cup \goals\sub{\var}{\val}
\]
Consider two abstractions
$\laml{\loc}{\vartwo}{\tm}$ and $\laml{\loctwo}{\varthree}{\tmtwo}$
in $\goals$  such that, after performing the substitution
$(\laml{\loc}{\vartwo}{\tm})\sub{\var}{\val} = (\laml{\loctwo}{\varthree}{\tmtwo})\sub{\var}{\val}$
they have the same location, \ie $\loc = \loctwo$.
Then since $\goals$ is coherent we have that
$\laml{\loc}{\vartwo}{\tm} = \laml{\loctwo}{\varthree}{\tmtwo}$,
and this means that
$(\laml{\loc}{\vartwo}{\tm})\sub{\var}{\val} = (\laml{\loctwo}{\varthree}{\tmtwo})\sub{\var}{\val}$,
as required.
\end{proof}

\begin{theorem}
\lthm{most_general_unifier}
\lthm{appendix_mgu}
Consider the relation $\tounifa{}$ restricted to {\em coherent} unification problems~(\rlem{location_coherence_unification}).
Then:
\begin{enumerate}
\item
  The relation $\tounifa{}$ is strongly normalizing.
\item
  The normal forms of $\tounifa{}$ are $\FAIL$ and sets of goals of the form
  $\set{\var_1 \unif \val_1, \hdots, \var_n \unif \val_n}$
  where $\var_i \neq \var_j$ and $\var_i \notin \fv{\val_j}$ for every $i,j \in 1..n$.

  If the normal form of $\goals$ is $\set{\var_1 \unif \val_1, \hdots, \var_n \unif \val_n}$,
  we say that $\mgu{\goals}$ exists, and
  $\mgu{\goals} = \set{\var_1 \mapsto \val_1, \hdots, \var_n \mapsto \val_n}$.
  If the normal form is $\FAIL$, we say that $\mgu{\goals}$ fails.  
\item
  The substitution $\subst = \mgu{\goals}$ exists if and only if there exists a unifier for $\goals$.
  When it exists, $\mgu{\goals}$ is an {\em idempotent most general unifier}.
  Moreover:
  \begin{enumerate}
  \item The set $\goals\SUB{\subst} \cup \set{\subst(\var) \ST \var \in \Var}$ is coherent.
  \item For any $\var \in \Var$ and any allocated abstraction $\laml{\loc}{\vartwo}{\prog}$ in $\subst(\var)$,
        the location $\loc$ decorates an allocated abstraction in $\goals$.
  \end{enumerate}
\end{enumerate}
\end{theorem}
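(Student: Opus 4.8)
The plan is to follow the classical correctness proof for the Martelli--Montanari algorithm, treating items~1--3 in turn and threading the coherence hypothesis through the argument. For \textbf{strong normalization} (item~1) I would exhibit a measure $(n_1,n_2,n_3)$ ordered lexicographically, where $n_1$ counts the variables occurring in $\goals$ that are \emph{not} solved (a variable $\var$ is solved when it occurs in $\goals$ exactly once, as the left-hand side of an equation $\var\unif\val$ with $\var\notin\fv\val$), $n_2$ is the total size of the values in $\goals$, and $n_3$ is the number of equations $\val\unif\vartwo$ with $\val\notin\Var$. A routine inspection then shows that every rule strictly lowers this measure: \rulename{u-orient} fixes $n_1,n_2$ and lowers $n_3$; \rulename{u-delete}, \rulename{u-match-cons} and \rulename{u-match-lam} keep $n_1$ non-increasing and lower $n_2$; and \rulename{u-eliminate} lowers $n_1$. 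The only delicate point is \rulename{u-eliminate}: I would check that the eliminated variable $\var$ passes from unsolved to solved, while every already-solved variable $\vartwo$ stays solved, the latter because a solved $\vartwo$ cannot occur in the propagated value $\val$, so $\sub\var\val$ never reintroduces it.

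For the \textbf{characterization of normal forms} (item~2), the easy direction verifies that no rule is enabled on a set $\set{\var_1\unif\val_1,\dots,\var_n\unif\val_n}$ satisfying the stated conditions. For the converse, given a normal $\goals\neq\FAIL$ and any equation $\val\unif\valtwo$ in it, I would argue that $\val$ must be a variable (else \rulename{u-orient}, \rulename{u-match-cons}, \rulename{u-match-lam} or \rulename{u-clash} fires), distinct from $\valtwo$ (else \rulename{u-delete}), not occurring in $\valtwo$ (else \rulename{u-occurs-check}), and occurring nowhere else (else \rulename{u-eliminate}); together these force the solved form. Coherence is used crucially: when both sides are allocated abstractions with equal locations, the coherence hypothesis of \rlem{location_coherence_unification} guarantees that their bodies coincide, so \rulename{u-match-lam} applies and no genuinely stuck configuration survives.

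The heart of the \textbf{correctness} claims (item~3) is an \emph{invariance of unifiers} lemma: every non-failing step $\goals\tounifa{}\goalstwo$ preserves the set of unifiers, while every failing step fires on an equation with no unifier. The congruence cases are immediate; \rulename{u-match-cons} uses injectivity of constructors; \rulename{u-eliminate} uses that any unifier $\subst$ of $\var\unif\val$ satisfies $\sub\var\val\scomp\subst=\subst$; \rulename{u-clash} is a head/location analysis (a structure and an abstraction, or two abstractions with distinct locations, cannot be equated, since substitution preserves both the outer constructor and the location decoration); and \rulename{u-occurs-check} is the usual size argument. From this lemma, existence of $\mgu\goals$ is equivalent to solvability, and from the solved form $\subst=\set{\var_i\mapsto\val_i}$ one reads off idempotence and the most-general property directly, since any unifier $\substtwo$ satisfies $\substtwo(\var_i)=\val_i\substtwo$ and hence $\substtwo=\subst\scomp\substtwo$, giving $\subst\sleq\substtwo$. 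For part~(b) I would observe that no rule introduces a fresh location---\rulename{u-eliminate} only propagates a value already present---so $\locs\goalstwo\subseteq\locs\goals$ is non-increasing and every location in the range of $\subst$ already decorates an abstraction of $\goals$.

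I expect part~(a) to be the main obstacle. Coherence of the \emph{current} problem along the reduction is supplied by \rlem{location_coherence_unification}, but the target set $\goals\SUB\subst\cup\set{\subst(\var)\ST\var\in\Var}$ mixes the \emph{original} problem with the \emph{final} substitution. I would handle it by strengthening the induction: I would prove that throughout the reduction every allocated abstraction occurring in the current problem or in the substitution accumulated so far is determined by its location, and then combine this with part~(b) (all locations of the target set already occur in $\goals$) and with the fact that applying a substitution whose range is itself coherent and location-consistent with a coherent set preserves coherence. The careful bookkeeping of which body is attached to each location---and the verification that applying $\subst$ cannot create two same-location abstractions with differing bodies---is where the real work lies.
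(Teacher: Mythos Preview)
Your treatment of items~1, 2, the core of item~3, and item~3(b) is correct and essentially what the paper has in mind: it defers all of this to the standard Martelli--Montanari correctness proof (citing Baader--Nipkow), and your termination measure, normal-form analysis, invariance-of-unifiers lemma, and location-monotonicity argument are exactly that standard material, carried out with the right attention to where coherence is needed (namely, so that two abstractions with the same location are guaranteed syntactically equal and \rulename{u-match-lam} fires).

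For item~3(a), however, the paper takes a cleaner route than the one you sketch, and your proposed invariant may not be quite strong enough as stated. Rather than separately tracking coherence of the current problem and of the range of the accumulated substitution, and then arguing \emph{a posteriori} that applying a ``location-consistent'' substitution to the coherent original $\goals$ preserves coherence, the paper keeps the original $\goals$ in the invariant from the start. Writing $\goals \tounifa{}^*_{\substtwo} \goals'$ when $\substtwo$ is the composition of all the \rulename{u-eliminate} substitutions performed along the way, the paper proves by induction on the length of the reduction that the single set $\goals\SUB{\substtwo}\cup\goals'$ is coherent. The inductive step is almost free: each step $\goals''\tounifa{}\goals'$ lifts verbatim to a step $\goals\SUB{\substthree}\cup\goals'' \tounifa{} \goals\SUB{\substthree'}\cup\goals'$ on the larger set (with $\substthree' = \substthree\scomp(\var\mapsto\val)$ in the \rulename{u-eliminate} case, and $\substthree' = \substthree$ otherwise), so \rlem{location_coherence_unification} applies directly and no separate ``location-consistency'' bookkeeping is needed. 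Your invariant, by contrast, forgets the original $\goals$ and would need an extra argument---essentially that every abstraction in the current problem at a given location still equals the one in $\goals$---to bridge back to $\goals\SUB{\subst}$. That extra argument is provable, but the paper's formulation sidesteps it entirely.
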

\begin{proof}
A straightforward adaptation of standard results, see for example
\cite[Section~4.6]{baader1999term}.
We only focus in the interesting differences, namely the two subitems of item 3.:
\begin{enumerate}
\item
  Let us write $\goals \tounifa{}^*_\substtwo \goals'$ if there is a sequence
  of unification steps from $\goals$ to $\goals'$ such that
  $\substtwo$ is the composition of all the substitutions
  performed in the \rulename{u-eliminate} steps.

  We claim that if $\goals \tounifa{}^*_\substtwo \goals'$ then
  $\goals\SUB{\substtwo} \cup \goals'$ is coherent.
  By induction on the length of the sequence.
  The empty case is immediate, so let us suppose that
  $\goals \tounifa{}^*_\substthree \goals'' \tounifa{} \goals'$.
  By \ih, $\goals\SUB{\substthree} \cup \goals''$ is coherent.
  Consider two cases, depending on whether the step
  $\goals'' \tounifa{} \goals'$ is an \rulename{u-eliminate} step or not.
  \begin{enumerate}
  \item
    If it is an \rulename{u-eliminate} step, substituting a variable $\var$ for a value $\val$,
    then we also have a step
    $\goals\SUB{\substthree} \cup \goals'' \tounifa{u-eliminate} \goals\SUB{\substthree\scomp(\var \mapsto \val)} \cup \goals'$
    and by \rlem{location_coherence_unification}
    we have that $\goals\SUB{\substthree\scomp(\var \mapsto \val)} \cup \goals'$ is coherent,
    as required.
  \item
    If it is not an \rulename{u-eliminate} step, then
    we also have a step
    $\goals\SUB{\substthree} \cup \goals'' \tounifa{} \goals\SUB{\substthree} \cup \goals'$
    and by \rlem{location_coherence_unification}
    we have that $\goals\SUB{\substthree} \cup \goals'$ is coherent,
    as required.
  \end{enumerate}
  From this claim we have that if
  $\goals \tounifa{}^* \set{\var_1 \unif \val_1, \hdots, \var_n \unif \val_n}$
  and $\subst := \set{\var_1 \mapsto \val_1, \hdots, \var_n \mapsto \val_n}$
  then $\goals^\subst \cup \set{\var_1 \unif \val_1, \hdots, \var_n \unif \val_n}$ is coherent,
  which entails the required property.
\item
  We claim that if $\goals \tounifa{}^* \goals'$ then
  for any allocated abstraction $\laml{\loc}{\var}{\prog}$ in $\goals'$,
  the location $\loc$ decorates an allocated abstraction in $\goals$.
  This is straightforward to prove by induction on the length of the
  reduction sequence, and it entails the required property.
\end{enumerate}
\end{proof}

\subsection{Properties of Most General Unifiers}

\begin{lemma}[Properties of most general unifiers]
\llem{properties_of_mgus}
\quad
\begin{enumerate}
\item If $\subst, \subst'$ are idempotent most general unifiers of $\goals$, there is a {\em renaming},
      \ie a substitution of the form $\substtwo = \set{\var_1 \mapsto \vartwo_1, \hdots, \var_n \mapsto \vartwo_n}$,
      such that $\subst' = \subst \scomp \substtwo$.
\item If $\subst$ is an idempotent most general unifier of $\goals$
      and $\vartwo \not\in \fv{\goals}$,
      then $\subst' := (\vartwo \mapsto \var) \scomp \subst$ is an idempotent most general unifier of $\goals\sub{\var}{\vartwo}$.
\item If $\subst$ is an idempotent most general unifier of $\goals$
      and $\loc' \not\in \locs{\goals}$
      then the substitution $\subst'$ given by $\subst'(\var) = \subst(\var)\sub{\loc}{\loc'}$
      is an idempotent most general unifier of $\goals\sub{\loc}{\loc'}$.
\end{enumerate}
\end{lemma}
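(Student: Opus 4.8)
The plan is to treat the three items separately: item~1 is the classical statement that two most general unifiers of the same problem differ by a renaming, whereas items~2 and~3 are transport properties of most general unifiers along the renamings $\sub{\var}{\vartwo}$ and $\sub{\loc}{\loc'}$. For item~1 I would start from the two instances of $\sleq$ that the hypotheses supply: since $\subst$ and $\subst'$ are both most general unifiers of $\goals$, there are substitutions $\substtwo,\substthree$ with $\subst' = \subst\scomp\substtwo$ and $\subst = \subst'\scomp\substthree$, whence $\subst = \subst\scomp\substtwo\scomp\substthree$ and $\subst' = \subst'\scomp\substthree\scomp\substtwo$, using that $\sleq$ is a preorder (\rlem{properties_of_substitution}). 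The core is then the standard argument: combining the idempotency of $\subst$ and $\subst'$ with these fixed-point equations forces $\substtwo$ to act as a bijective variable-to-variable map on the variables occurring in the supports and ranges of $\subst$ and $\subst'$, so it may be taken to be a renaming. The only point needing attention in our setting is that a value may now be a constructor or an allocated abstraction rather than a variable; but such a value is rigid under further substitution, so the fixed-point equations push all of the nontrivial action of $\substtwo$ onto variable-to-variable moves, exactly as in the first-order case.

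For items~2 and~3 I would follow the same three-step recipe: verify that the candidate $\subst'$ is a unifier, verify that it is idempotent, and verify that it is most general. For item~2 the unifier check is a direct computation with the composition laws of \rlem{properties_of_substitution}: since $\sub{\var}{\vartwo}\scomp\subst' = \subst'$ (because $(\var\mapsto\vartwo)\scomp(\vartwo\mapsto\var)$ collapses to $(\vartwo\mapsto\var)$) and $\vartwo\notin\fv{\goals}$, one obtains $\goals\sub{\var}{\vartwo}\SUB{\subst'} = \goals\SUB{\subst}$, which is unified because $\subst$ unifies $\goals$; idempotency of $\subst'$ then follows from idempotency of $\subst$ and the freshness of $\vartwo$. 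Most generality is the delicate step: given an arbitrary unifier $\substtwo$ of $\goals\sub{\var}{\vartwo}$, the substitution $\sub{\var}{\vartwo}\scomp\substtwo$ unifies $\goals$, so the most generality of $\subst$ yields a factorization, which I would transport back through the inverse renaming $(\vartwo\mapsto\var)$ to conclude $\subst'\sleq\substtwo$ on the variables occurring in the problem. Here the freshness hypothesis $\vartwo\notin\fv{\goals}$ is precisely what makes $\sub{\var}{\vartwo}$ invertible on the variables that actually occur in $\goals$, so that the round trip behaves as the identity on the relevant variables and the factorization closes up.

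Item~3 has the same shape, with the location renaming $\sub{\loc}{\loc'}$ in place of the variable renaming and $\loc'\notin\locs{\goals}$ in place of $\vartwo\notin\fv{\goals}$. The one genuinely new ingredient is a small commutation lemma, to the effect that location renaming distributes over application of a substitution, \ie $(\tm\SUB{\subst})\sub{\loc}{\loc'} = (\tm\sub{\loc}{\loc'})\SUB{\subst'}$ for $\subst'$ defined by $\subst'(\var) = \subst(\var)\sub{\loc}{\loc'}$. This is where I would invoke \rthm{most_general_unifier}, whose item~3(b) guarantees that every location occurring in $\subst$ already occurs in $\goals$, so that the fresh target $\loc'$ cannot clash with anything in the range of $\subst$ and the renaming commutes cleanly past $\subst$. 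The unifier, idempotency, and most-generality checks then mirror item~2.

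The main obstacle I anticipate is item~1: turning the two fixed-point equations into the assertion that the connecting substitution is a genuine renaming requires the careful idempotency bookkeeping of the classical uniqueness proof, together with the verification that the non-variable values of our calculus behave rigidly under it. By comparison, items~2 and~3 are essentially routine transport arguments once the composition identities are in place, the only subtleties being the freshness side-conditions and, in item~3, the location-renaming commutation controlled by \rthm{most_general_unifier}.
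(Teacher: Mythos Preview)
Your treatment of items~1 and~2 is essentially the paper's: for item~1 the paper simply defers to the standard reference, and your sketch of the mutual-factorization-plus-idempotency argument is the right one; for item~2 your unifier and most-general checks are exactly the computations the paper carries out (the paper does not bother to verify idempotency separately, but your plan does, which is harmless).

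Item~3 is where you diverge, and where there is a genuine gap. The paper does \emph{not} verify the three mgu conditions directly. Instead it shows, by case analysis on the rules, that each step $\goals \tounifa{} \goals'$ of the unification algorithm is tracked by a step $\goals\sub{\loc}{\loc'} \tounifa{} \goals'\sub{\loc}{\loc'}$ (the freshness $\loc'\notin\locs{\goals}$ being used only in the \rulename{u-clash} case). Hence the algorithm's normal form on $\goals\sub{\loc}{\loc'}$ is the $\sub{\loc}{\loc'}$-image of its normal form on $\goals$, which is precisely $\subst'$; the idempotent-mgu properties of $\subst'$ then come for free from the correctness theorem of the algorithm (\rthm{most_general_unifier}) applied to the renamed problem.

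Your direct approach, by contrast, does not ``mirror item~2'' as easily as you suggest. The commutation identity $(\tm\SUB{\subst})\sub{\loc}{\loc'} = (\tm\sub{\loc}{\loc'})\SUB{\subst'}$ holds by a straightforward induction and does not need \rthm{most_general_unifier}\,3(b). The real obstacle is the most-general direction: given an arbitrary unifier $\substtwo$ of $\goals\sub{\loc}{\loc'}$, you need to manufacture a unifier of $\goals$ and then transport the resulting factorization back. In item~2 this worked because $(\vartwo\mapsto\var)$ is itself a variable substitution that can be pre- and post-composed; location renamings are not substitutions in that sense, so there is no analogue of the step $\substtwo = (\vartwo\mapsto\var)\scomp(\var\mapsto\vartwo)\scomp\substtwo$. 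One can push a direct argument through, but it requires juggling auxiliary fresh locations to make the round trip injective on the locations actually occurring in $\substtwo$, which is noticeably more work than your proposal indicates. The paper's operational route avoids all of this.
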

\begin{proof}
We prove each item:
\begin{enumerate}
\item A standard result, see for example \cite[Section~4.6]{baader1999term}.
\item Indeed:
  \begin{enumerate}
  \item {\em Unifier:} 
    For each goal $(\val \unif \valtwo) \in \goals$, we have that
    $\val\sub{\var}{\vartwo}\SUB{\subst'} = \val\SUB{\subst} = \valtwo\SUB{\subst} = \valtwo\sub{\var}{\vartwo}\SUB{\subst'}$
    since $\vartwo \not\in \fv{\goals}$ and $\subst$ is a unifier of $\goals$.
  \item {\em Most general:}
    Let $\substtwo$ be a unifier of $\goals\sub{\var}{\vartwo}$,
    \ie such that $\val\sub{\var}{\vartwo}\SUB{\substtwo} = \valtwo\sub{\var}{\vartwo}\SUB{\substtwo}$
    for every goal $(\val \unif \valtwo) \in \goals$.
    Then it is easily checked $(\var \mapsto \vartwo) \scomp \substtwo$ is a unifier of $\goals$.
    Since $\subst$ is a most general unifier of $\goals$, we have that $(\var \mapsto \vartwo) \scomp \substtwo = \subst \scomp \substthree$
    for some $\substthree$.
    Hence $\substtwo
          = (\vartwo \mapsto \var) \scomp (\var \mapsto \vartwo) \scomp \substtwo
          = (\vartwo \mapsto \var) \scomp \subst \scomp \substthree
          = \subst' \scomp \substthree
          $ as required.
  \end{enumerate}
\item
  It suffices to observe that if
  $\goals \tounifa{} \goals'$
  then
  $\goals\sub{\loc}{\loc'} \tounifa{} \goals'\sub{\loc}{\loc'}$.
  This is easy to check for each rule. The only noteworthy remark is that
  in the \rulename{u-clash} we have that if $\val$ and $\valtwo$ have a location clash, then
  $\val\sub{\loc}{\loc'}$ and $\valtwo\sub{\loc}{\loc'}$ also have a location clash,
  because $\loc' \not\in \locs{\goals}$.

  Then by induction on the number of $\tounifa{}$ steps,
  we have that if the normal form of $\goals$ is $\set{\var_1 \unif \val_1, \hdots, \var_n \unif \val_n}$,
  then the normal form of $\goals\sub{\loc}{\loc'}$ is $\set{\var_1 \unif \val_1\sub{\loc}{\loc'}, \hdots, \var_n \unif \val_n\sub{\loc}{\loc'}}$.
\end{enumerate}
\end{proof}

\begin{lemma}[Compositionality of most general unifiers]
\llem{mgu_compositional}
The following are equivalent:
\begin{enumerate}
\item $\subst = \mgu{\goals \cup \goalstwo}$ exists.
\item $\subst_1 = \mgu{\goals}$ and $\subst_2 = \mgu{\goalstwo\SUB{\subst_1}}$ both exist.
\end{enumerate}
Moreover, if $\subst,\subst_1,\subst_2$ exist, then $\subst = \subst_1\scomp\subst_2\scomp\substtwo$
for some renaming $\substtwo$.
\end{lemma}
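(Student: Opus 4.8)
The plan is to reduce both implications to the single fact recorded in \rthm{most_general_unifier}, namely that a most general unifier of a \emph{coherent} problem exists if and only if the problem admits \emph{some} unifier. Under this reduction the equivalence (1)$\iff$(2) becomes a purely set-theoretic statement about unifiability, and the workhorse throughout is the elementary observation that a substitution $\theta$ unifies $\goals \cup \goalstwo$ precisely when it unifies $\goals$ and unifies $\goalstwo$, combined with the bridge between ``$\theta$ unifies $\goalstwo$'' and ``$\substthree$ unifies the instance $\goalstwo\SUB{\subst_1}$'' whenever $\theta = \subst_1 \scomp \substthree$. I would keep in mind that $\mgu{-}$ is only defined on coherent problems, so along the way I invoke the coherence clauses of \rthm{most_general_unifier} (and \rlem{location_coherence_unification}) to guarantee that the derived problems $\goalstwo\SUB{\subst_1}$ and $\goals\cup\goalstwo$ remain coherent.

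First I would prove (1)$\Rightarrow$(2). If $\subst = \mgu{\goals \cup \goalstwo}$ exists then $\subst$ unifies $\goals$, so $\goals$ is unifiable and $\subst_1 = \mgu{\goals}$ exists by \rthm{most_general_unifier}. Since $\subst_1$ is most general for $\goals$ and $\subst$ is a unifier of $\goals$, the definition of $\sleq$ gives $\subst = \subst_1 \scomp \substthree$ for some $\substthree$. A one-line computation using $(\val\SUB{\subst_1})\SUB{\substthree} = \val\SUB{\subst_1 \scomp \substthree}$ (item~2 of \rlem{properties_of_substitution}) then shows that $\substthree$ unifies $\goalstwo\SUB{\subst_1}$: for each $\val \unif \valtwo \in \goalstwo$ we have $\val\SUB{\subst} = \valtwo\SUB{\subst}$, hence $(\val\SUB{\subst_1})\SUB{\substthree} = (\valtwo\SUB{\subst_1})\SUB{\substthree}$. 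Thus $\goalstwo\SUB{\subst_1}$ is unifiable, and $\subst_2 = \mgu{\goalstwo\SUB{\subst_1}}$ exists.

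For (2)$\Rightarrow$(1), and simultaneously to set up the ``moreover'' clause, the key step is to show that $\subst_1 \scomp \subst_2$ is a \emph{most general} unifier of $\goals \cup \goalstwo$. That it is a unifier is direct: applying $\subst_2$ to both sides of each equation of $\goals$ already unified by $\subst_1$ handles $\goals$, while the equations of $\goalstwo$ are unified because $\subst_2$ unifies $\goalstwo\SUB{\subst_1}$ by construction. For maximal generality, take any unifier $\theta$ of $\goals \cup \goalstwo$; since $\theta$ unifies $\goals$, factor $\theta = \subst_1 \scomp \theta'$; the same computation as above shows $\theta'$ unifies $\goalstwo\SUB{\subst_1}$, so $\theta' = \subst_2 \scomp \theta''$ and therefore $\theta = \subst_1 \scomp \subst_2 \scomp \theta''$, i.e.\ $\subst_1 \scomp \subst_2 \sleq \theta$. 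In particular $\goals \cup \goalstwo$ is unifiable, so $\subst = \mgu{\goals \cup \goalstwo}$ exists by \rthm{most_general_unifier}, which gives (1).

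Finally, both $\subst$ and $\subst_1 \scomp \subst_2$ are most general unifiers of $\goals \cup \goalstwo$, so they must coincide up to a renaming, yielding $\subst = \subst_1 \scomp \subst_2 \scomp \substtwo$ for a renaming $\substtwo$, as claimed. The main obstacle lies exactly here: item~1 of \rlem{properties_of_mgus} is stated for \emph{idempotent} most general unifiers, whereas $\subst_1 \scomp \subst_2$ need not be idempotent a priori. I would close this gap in one of two ways: either by appealing directly to the standard uniqueness-of-mgu-up-to-renaming result (\cite[Section~4.6]{baader1999term}), which does not require idempotency of both factors; or by using the mutual generality already established, $\subst = \subst_1 \scomp \subst_2 \scomp \alpha$ and $\subst_1 \scomp \subst_2 = \subst \scomp \beta$, together with idempotency of $\subst$, to force the connecting substitution to be bijective on the variables moved by $\subst$ and hence to conclude that $\alpha$ is a renaming. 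This last bookkeeping---tracking which variables occur in the images and checking that the connecting map is a bijection on them---is the only place where genuine care is needed.
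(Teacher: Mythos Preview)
Your proof is correct and matches the paper's approach essentially line for line: the same factorisations $\subst = \subst_1 \scomp \substthree$ and $\substthree = \subst_2 \scomp \substthree'$, the same appeal to \rthm{most_general_unifier} for existence, and the same uniqueness-up-to-renaming for the ``moreover'' clause. You are in fact more careful than the paper on one point: the paper invokes \rlem{properties_of_mgus} for the final renaming without remarking that $\subst_1\scomp\subst_2$ need not be idempotent, a gap you correctly flag and close.
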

\begin{proof}
\quad
\begin{itemize}
\item[] ($1 \implies 2$)
  Let $\subst = \mgu{\goals \cup \goalstwo}$.
  Note in particular that $\subst$ is a unifier for $\goals$, so $\subst_1 = \mgu{\goals}$ exists by \rthm{appendix_mgu}.
  On the other hand, note that $\subst_1$ is more general than $\subst$, so $\subst = \subst_1 \cdot \substthree$
  for some substitution $\substthree$.
  Since $\subst$ is a unifier for $\goalstwo$, we have that $\substthree$ is a unifier for $\goalstwo\SUB{\subst_1}$.
  This means that $\subst_2 = \mgu{\goalstwo\SUB{\subst_1}}$ exists by \rthm{appendix_mgu}.
\item[] ($2 \implies 1$)
  We claim that $\subst_1\scomp\subst_2$ is a unifier of $\goals \cup \goalstwo$.
  Indeed, note if $\val \unif \valtwo$ is a goal in $\goals$
  we have that $\subst_1$ is a unifier for $\goals$,
  so $\val\SUB{\subst_1} = \valtwo\SUB{\subst_1}$ and
  $\val\SUB{\subst_1\scomp\subst_2} = \valtwo\SUB{\subst_1\scomp\subst_2}$.
  Moreover, if $\val \unif \valtwo$ is a goal in $\goalstwo$,
  then $\val\SUB{\subst_1} \unif \valtwo\SUB{\subst_1}$ is a goal in $\goalstwo\SUB{\subst_1}$,
  and since $\subst_2$ is a unifier for $\goalstwo\SUB{\subst_1}$
  we conclude that $\val\SUB{\subst_1\scomp\subst_2} \unif \valtwo\SUB{\subst_1\scomp\subst_2}$,
  as required.
\end{itemize}
For the final property in the statement,
by \rlem{properties_of_mgus}, it suffices to show that $\subst_1\scomp\subst_2$
is more general than $\subst$.
Indeed, since $\subst$ is a unifier of $\goals$,
we have that $\subst = \subst_1\scomp\substthree$ for some substitution $\substthree$,
and since $\substthree$ is a unifier of $\goalstwo\SUB{\subst_1}$,
we have that $\substthree = \subst_2\scomp\substthree'$ for some substitution $\substthree'$,
then $\subst = \subst_1\scomp\subst_2\scomp\substthree$, which means that
$\subst_1\scomp\subst_2$ is more general than $\subst$.
\end{proof}

\subsection{Proof of \rlem{location_coherence_invariant} --- Coherence Invariant}
\lsec{appendix_location_coherence_invariant}

\begin{proof}
Item 1. is immediate by inspection of all the possible rules defining $\structeq$.
For item 2., rules \rulename{guard}, \rulename{fresh}, and \rulename{fail} are immediate.
Let us analyze the remaining cases:
\begin{enumerate}
\item \rulename{alloc}:
  $\wctxof{\lam{\var}{\tm}} \to \wctxof{\laml{\loc}{\var}{\tm}}$.
  Immediate,
  as evaluation is under a weak context $\wctx$, so the newly allocated abstraction has no
  variables bound by $\wctx$.
  Moreover the new location is fresh so there are no other abstractions in the same location,
  and the rest of the program remains unmodified.
\item \rulename{beta}:
  $\wctxof{(\laml{\loc}{\var}{\prog})\val} \to \wctxof{\prog\sub{\var}{\val}}$.
  First consider an allocated abstraction $\laml{\loc'}{\vartwo}{\progtwo}$
  in $\wctxof{\prog\sub{\var}{\val}}$ and let us show that it has no variables
  bound by the context.
  If it is disjoint from the contracted redex, it is immediate.
  If it is in $\prog$, \ie $\prog = \gctxof{\laml{\loc'}{\vartwo}{\progtwo'}}$
  then $\laml{\loc'}{\vartwo}{\progtwo'}$ has no variables bound by $\gctx$,
  so $\laml{\loc'}{\vartwo}{\progtwo} = \laml{\loc'}{\vartwo}{\progtwo'\sub{\var}{\val}}$
  also has no variables bound by $\gctx$.
  If it is inside one of the copies of $\val$, then it 
  also has no variables bound by $\gctx$, as substitution is capture-avoiding.

  Consider any two abstractions $\laml{\loc'}{\vartwo}{\progtwo}$ and $\laml{\loc'}{\vartwo}{\progthree}$
  in $\wctxof{\prog\sub{\var}{\val}}$ such that they have the same location,
  and consider three cases, depending on the positions of the lambdas:
  \begin{enumerate}
  \item
    If each lambda lies inside $\wctx$ or inside one of the copies of $\val$,
    then they can be traced back to abstractions in the term on the left-hand side,
    so $\progtwo = \progthree$ by hypothesis.
  \item
    If the lambdas are both in $\prog$, \ie $\prog = \gctxof{\laml{\loc'}{\vartwo}{\progtwo} \SEP \laml{\loc'}{\vartwo}{\progthree}}$
    then $\progtwo = \progthree$ by hypothesis.
    Moreover, note that by the invariant $\var \not\in \fv{\progtwo}\cup\fv{\progthree}$,
    so the lambdas in the reduct are equal.
  \item
    If one lambda is in $\prog$, \ie $\prog = \gctxof{\laml{\loc'}{\vartwo}{\progtwo}}$,
    and the other one in $\wctx$ or in a copy of $\val$,
    note that by the invariant $\var \not\in \fv{\progtwo}$,
    so $(\laml{\loc'}{\vartwo}{\progtwo})\sub{\var}{\val} = \laml{\loc'}{\vartwo}{\progtwo}$,
    so the lambdas in the reduct are equal.
  \end{enumerate}
\item \rulename{unif}:
  $\wctxof{\val \unif \valtwo} \to \wctxof{\unit}\SUB{\subst}$.
  First consider an allocated abstraction $\laml{\loc}{\var}{\prog}$ in $\wctxof{\unit}\SUB{\subst}$.
  Then $\wctxof{\unit} = \gctxof{\laml{\loc}{\var}{\prog'}}$
  such that $\prog'\SUB{\subst} = \prog$.
  Note that $\prog'$ has no variables bound by $\gctx$,
  so $\prog'\SUB{\subst}$ also has no variables bound by $\gctx$,
  given that substitution is capture-avoiding, and $\subst$ is coherent.

  Consider moreover any two allocated abstractions $\laml{\loc}{\var}{\prog}$ and $\laml{\loc}{\var}{\progtwo}$,
  in $\wctxof{\unit}\SUB{\subst}$ such that they have the same location, and consider three cases
  depending on the positions of the lambdas:
  \begin{enumerate}
  \item
    If the lambdas are both in $\wctx$, then their bodies trace back to the term on the left-hand
    side, $\laml{\loc}{\var}{\prog_0}$ and $\laml{\loc}{\var}{\progtwo_0}$, so $\prog_0 = \progtwo_0$
    are equal by hypothesis, and moreover $\prog = \prog_0\SUB{\subst} = \progtwo_0\SUB{\subst} = \progtwo$,
    as required.
  \item
    If one lambda is in $\wctx$ and the other one in $\subst(\vartwo)$ for some variable $\vartwo \in \fv{\wctxof{\unit}}$,
    suppose without loss of generality that the position of the lambda of $\laml{\loc}{\var}{\prog}$
    is inside $\wctx$. Then there is an abstraction $\laml{\loc}{\var}{\prog_0}$ in
    the term of the left-hand side of the rule such that $\prog = \prog_0\SUB{\subst}$.
    Moreover, $\laml{\loc}{\var}{\progtwo}$ is an abstraction of the term $\subst(\vartwo)$.
    By \rthm{most_general_unifier}, there must be an abstraction $\laml{\loc}{\var}{\progtwo_0}$
    of $\val \unif \valtwo$ such that, moreover, $\progtwo_0\SUB{\subst} = \progtwo$.
    Then since $\laml{\loc}{\var}{\prog_0}$ and $\laml{\loc}{\var}{\progtwo_0}$ are abstractions
    on the left-hand side, we have by hypothesis that $\prog_0 = \progtwo_0$, hence
    $\prog = \prog_0\SUB{\subst} = \progtwo_0\SUB{\subst} = \progtwo$, as required.
  \item
    If the lambdas are in the terms $\subst(\vartwo)$ and $\subst(\varthree)$,
    for certain variables $\vartwo, \varthree \in \fv{\wctxof{\unit}}$,
    then by \rthm{most_general_unifier}, there must terms $\laml{\loc}{\var}{\prog_0}$
    and $\laml{\loc}{\var}{\progtwo_0}$ each of which is an abstraction of
    $\val \unif \valtwo$, and such that moreover $\prog_0\SUB{\subst} = \prog$
    and $\progtwo_0\SUB{\subst} = \progtwo$.
    Then since $\laml{\loc}{\var}{\prog_0}$ and $\laml{\loc}{\var}{\progtwo_0}$ are abstractions
    on the left-hand side, we have by hypothesis that $\prog_0 = \progtwo_0$, hence
    $\prog = \prog_0\SUB{\subst} = \progtwo_0\SUB{\subst} = \progtwo$, as required.
  \end{enumerate}
\end{enumerate}
\end{proof}

\subsection{Proof of \rlem{reduction_modulo_structeq} --- Reduction modulo structural equivalence}
\lsec{appendix_reduction_modulo_structeq}

\begin{lemma}{Basic properties of structural equivalence}
\llem{basic_properties_structeq}
The following properties hold:
\begin{enumerate}
\item $\prog \alt \progtwo \structeq \progtwo \alt \prog$
\item If $\prog \structeq \prog'$ then $\progtwo_1 \alt \prog \alt \progtwo_2 \structeq \progtwo_1 \alt \prog' \alt \progtwo_2$
\end{enumerate}
\end{lemma}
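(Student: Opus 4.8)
The plan is to first record the \emph{monoid structure} of program concatenation and then prove the two items in a carefully chosen order so as to avoid circularity. As a preliminary I would check, by a routine induction on the first argument, that concatenation is associative and that $\fail$ is a two-sided unit, both as \emph{syntactic} identities: $(\prog \alt \progtwo) \alt \progthree = \prog \alt (\progtwo \alt \progthree)$ and $\fail \alt \prog = \prog = \prog \alt \fail$. This is what lets me freely regroup the overloaded symbol $\alt$ (consing a thread versus concatenating programs) in every subsequent manipulation.

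I would prove item~2 first, since item~1 will depend on it. The key observation is that each of the three generating axioms is \emph{stable under wrapping in a concatenation context}. Concretely, I claim that whenever $\prog \structeq \prog'$ is a single instance of \indrulename{$\structeq$-swap}, \indrulename{$\structeq$-var}, or \indrulename{$\structeq$-loc}, then $\progtwo_1 \alt \prog \alt \progtwo_2 \structeq \progtwo_1 \alt \prog' \alt \progtwo_2$ holds \emph{by the same axiom}. For example, a swap instance has $\prog = \progthree \alt \tm \alt \tmtwo \alt \progthree'$ and $\prog' = \progthree \alt \tmtwo \alt \tm \alt \progthree'$; using associativity, $\progtwo_1 \alt \prog \alt \progtwo_2$ reassociates to $(\progtwo_1 \alt \progthree) \alt \tm \alt \tmtwo \alt (\progthree' \alt \progtwo_2)$, which is again a swap instance with the enlarged contexts $\progtwo_1 \alt \progthree$ and $\progthree' \alt \progtwo_2$, and reassociating back yields $\progtwo_1 \alt \prog' \alt \progtwo_2$. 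The \indrulename{$\structeq$-var} and \indrulename{$\structeq$-loc} cases are identical, the side conditions $\vartwo \notin \fv{\tm}$ and $\loc' \notin \locs{\tm}$ being unaffected by the wrapping. Since $\structeq$ is the reflexive–symmetric–transitive closure of these axioms, a derivation of $\prog \structeq \prog'$ is a finite zigzag chain of such single steps; item~2 then follows by induction on the length of this chain, using reflexivity of $\structeq$ in the base case and transitivity to concatenate the wrapped steps.

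With item~2 available I would next establish the auxiliary fact that a single thread commutes with an arbitrary program, $\tm \alt \progtwo \structeq \progtwo \alt \tm$ (reading $\tm$ as the singleton $\tm \alt \fail$), by induction on $\progtwo$. The base case $\progtwo = \fail$ is immediate from the unit law. For $\progtwo = \tmtwo \alt \progtwo'$ I would first apply \indrulename{$\structeq$-swap} to obtain $\tm \alt \tmtwo \alt \progtwo' \structeq \tmtwo \alt \tm \alt \progtwo'$, then push the commutation inside using the induction hypothesis $\tm \alt \progtwo' \structeq \progtwo' \alt \tm$ wrapped by item~2, and finally reassociate to $(\tmtwo \alt \progtwo') \alt \tm$. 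Item~1 then follows by induction on $\prog$: the case $\prog = \fail$ is the unit law, and for $\prog = \tm \alt \prog'$ I would combine the induction hypothesis $\prog' \alt \progtwo \structeq \progtwo \alt \prog'$ (wrapped by item~2), the auxiliary commutation $\tm \alt \progtwo \structeq \progtwo \alt \tm$ (wrapped by item~2), associativity, and transitivity to move $\tm$ across $\progtwo$ and conclude $\prog \alt \progtwo \structeq \progtwo \alt \prog$.

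The only genuine obstacle is bookkeeping rather than mathematics: because $\alt$ is overloaded for consing and for concatenation, every rewriting step must be justified by the monoid laws, and the chain of dependencies (monoid laws $\Rightarrow$ item~2 $\Rightarrow$ single-thread commutation $\Rightarrow$ item~1) must be respected so that the congruence property is never invoked before it has been proved. Establishing the monoid laws up front and isolating the single-thread commutation lemma is precisely what keeps the argument non-circular.
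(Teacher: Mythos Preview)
Your proposal is correct and is essentially a careful unfolding of what the paper dismisses as ``straightforward, by induction on the derivation of the corresponding equivalences.'' The paper gives no further detail, so your explicit ordering (monoid laws, then item~2 by induction on the $\structeq$-derivation, then the single-thread commutation auxiliary, then item~1 by induction on $\prog$) is exactly the bookkeeping one has to do to make that one-line claim honest, and every step you describe goes through.
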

\begin{proof}
Straightforward, by induction on the derivation of the corresponding equivalences.
\end{proof}

We turn to the proof of \rlem{reduction_modulo_structeq}:

\begin{proof}
By induction on the derivation of $\prog \equiv \prog'$.
The reflexivity and transitivity cases are immediate.
Moreover, it is easy to check that the axioms are symmetric.
So it suffices to show that the
property holds when $\prog \equiv \prog'$ is derived using one of the axioms:
\begin{enumerate}
\item \rulename{$\structeq$-swap}:
  Let $\tm \toca{} \progtwo$. The situation is:
  \[
  \xymatrix@C=.2cm@R=.5cm{
    \prog_1 \alt \tm \alt \tmtwo \alt \prog_2
    \ar[d]
    & \structeq & 
    \prog_1 \alt \tmtwo \alt \tm \alt \prog_2
    \ar[d]
  \\
    \prog_1 \alt \progtwo \alt \tmtwo \alt \prog_2
    & \structeq & 
    \prog_1 \alt \tmtwo \alt \progtwo \alt \prog_2
  }
  \]
  The equivalence at the bottom is justified by \rlem{basic_properties_structeq}.
\item \rulename{$\structeq$-var}:
  Let $\tm \to \progtwo$, $\varthree \not\in \fv{\tm}$.
  Then we argue that $\tm\sub{\vartwo}{\varthree} \to \progtwo\sub{\vartwo}{\varthree} \structeq \progtwo$.
  By case analysis on the reduction rule applied.
  \begin{enumerate}
  \item \rulename{alloc}:
    The situation is:
    \[
    \xymatrix@C=.2cm@R=.5cm{
      \prog_1 \alt \wctxof{\lam{\var}{\prog}} \alt \prog_2
      \ar[d]
      & \structeq & 
      \prog_1 \alt \wctx\sub{\vartwo}{\varthree}\ctxof{\lam{\var}{\prog\sub{\vartwo}{\varthree}}} \alt \prog_2
      \ar[d]
    \\
      \prog_1 \alt \wctxof{\laml{\loc}{\var}{\prog}} \alt \prog_2
      & \structeq & 
      \prog_1 \alt \wctx\sub{\vartwo}{\varthree}\ctxof{\laml{\loc'}{\var}{\prog\sub{\vartwo}{\varthree}}} \alt \prog_2
    }
    \]
    For the equivalence at the bottom is justified using
    \rulename{$\structeq$-var} to rename $\vartwo$ to $\varthree$, and
    \rulename{$\structeq$-loc} if necessary to rename $\loc$ to $\loc'$.
  \item \rulename{beta}:
    The situation is:
    \[
    {\small
    \xymatrix@C=.2cm@R=.5cm{
      \prog_1 \alt \wctxof{(\laml{\loc}{\var}{\prog})\,\val} \alt \prog_2
      \ar[d]
      & \structeq & 
      \prog_1 \alt \wctx\sub{\vartwo}{\varthree}\ctxof{(\laml{\loc}{\var}{\prog\sub{\vartwo}{\varthree}})\,\val\sub{\vartwo}{\varthree}} \alt \prog_2
      \ar[d]
    \\
      \prog_1 \alt \wctxof{\prog\sub{\var}{\val}} \alt \prog_2
      & \structeq & 
      \prog_1 \alt \wctx\sub{\vartwo}{\varthree}\ctxof{\prog\sub{\var}{\val}\sub{\vartwo}{\varthree}} \alt \prog_2
    }
    }
    \]
    For the equivalence at the bottom, note that by \rlem{properties_of_substitution},
    $\prog\sub{\var}{\val}\sub{\vartwo}{\varthree} = \prog\sub{\vartwo}{\varthree}\sub{\var}{\val\sub{\vartwo}{\varthree}}$.
  \item \rulename{guard}:
    This case is straightforward.
  \item \rulename{fresh}:
    The situation is:
    \[
    {\small
    \xymatrix@C=.2cm@R=.5cm{
      \prog_1 \alt \wctxof{\fresh{\var}{\tm}} \alt \prog_2
      \ar[d]
      & \structeq & 
      \ar[d]
      \prog_1 \alt \wctx\sub{\vartwo}{\varthree}\ctxof{\fresh{\var}{\tm\sub{\vartwo}{\varthree}}} \alt \prog_2
    \\
      \prog_1 \alt \wctxof{\tm} \alt \prog_2
     & \structeq & 
      \prog_1 \alt \wctx\sub{\vartwo}{\varthree}\ctxof{\tm\sub{\vartwo}{\varthree}} \alt \prog_2
    }
    }
    \]
    Note that assume $\var \neq \vartwo$ by Barendregt's variable convention.
  \item \rulename{unif}:
    Let $\mgu{\val \unif \valtwo} = \subst$.
    Then $\subst' := (\varthree \mapsto \vartwo) \scomp \subst$
    is an idempotent most general unifier of the single goal $\val\sub{\vartwo}{\varthree} \unif \valtwo\sub{\vartwo}{\varthree}$
    by \rlem{properties_of_mgus}.
    So $\subst'' = \mgu{\val\sub{\vartwo}{\varthree} \unif \valtwo\sub{\vartwo}{\varthree}}$ exists and
    $\subst'' = \subst' \scomp \substtwo = (\varthree \mapsto \vartwo) \scomp \subst \scomp \substtwo$ for some renaming $\substtwo$.
    \[
    {\small
    \xymatrix@C=.2cm@R=.5cm{
      \prog_1 \alt \wctxof{\val \unif \valtwo} \alt \prog_2
      \ar[dd]
      & \structeq & 
      \ar[d]
      \prog_1 \alt \wctx\sub{\vartwo}{\varthree}\ctxof{\val\sub{\vartwo}{\varthree} \unif \valtwo\sub{\vartwo}{\varthree}} \alt \prog_2
    \\
     & &
      \prog_1 \alt \wctx\sub{\vartwo}{\varthree}\ctxof{\unit}\SUB{\subst''} \alt \prog_2
      \ar@{=}[d]
    \\
      \prog_1 \alt \wctxof{\unit}\SUB{\subst} \alt \prog_2
     & \structeq & 
      \prog_1 \alt \wctx\ctxof{\unit}\SUB{\subst\scomp\substtwo} \alt \prog_2
    }
    }
    \]
    The equivalence at the bottom may be deduced by repeatedly applying the \rulename{$\structeq$-var}
    rule to perform the renaming $\substtwo$.
  \item \rulename{fail}:
    Suppose that $\mgu{\val \unif \valtwo}$ fails. Then
    $\mgu{\val\sub{\vartwo}{\varthree} \unif \valtwo\sub{\vartwo}{\varthree}}$ must also fail,
    for if $\subst$ were a unifier of $(\val\sub{\vartwo}{\varthree} \unif \valtwo\sub{\vartwo}{\varthree})$
    then $(\vartwo \mapsto \varthree)\scomp\subst$ would be a unifier of $\val \unif \valtwo$
    by \rlem{properties_of_mgus}. So we have:
    \[
    {\small
    \xymatrix@C=.2cm@R=.5cm{
      \prog_1 \alt \wctxof{\val \unif \valtwo} \alt \prog_2
      \ar[d]
      & \structeq & 
      \ar[d]
      \prog_1 \alt \wctx\sub{\vartwo}{\varthree}\ctxof{\val\sub{\vartwo}{\varthree} \unif \valtwo\sub{\vartwo}{\varthree}} \alt \prog_2
    \\
      \prog_1 \alt \prog_2
     & \structeq & 
      \prog_1 \alt \prog_2
    }
    }
    \]
  \end{enumerate}
\item \rulename{$\structeq$-loc}:
  If the \rulename{$\structeq$-loc} rule and the rewriting rule
  are applied on different threads, it is straightforward.
  Otherwise we proceed by case analysis on the reduction rule applied:
  \begin{enumerate}
  \item \rulename{alloc}:
    Let us write
    $\wctx' := \wctx\sub{\loc_1}{\loc_2}$ and
    $\progtwo' := \progtwo\sub{\loc_1}{\loc_2}$.
    Then:
    \[
    \xymatrix@C=.2cm@R=.5cm{
      \prog_1 \alt \wctxof{\lam{\var}{\progtwo}} \alt \prog_2
      \ar[d]
     & \structeq & 
      \prog_1 \alt \wctx'\ctxof{\lam{\var}{\progtwo'}} \alt \prog_2
      \ar[d]
    \\
      \prog_1 \alt \wctxof{\laml{\loc}{\var}{\progtwo}} \alt \prog_2
     & \structeq & 
      \prog_1 \alt \wctx'\ctxof{\laml{\loc'}{\var}{\progtwo'}} \alt \prog_2
    }
    \]
    The equivalence on the bottom may be deduced by applying the \rulename{$\structeq$-loc}
    rule to rename $\loc_1$ to $\loc_2$,
    and possibly the \rulename{$\structeq$-loc} again to rename $\loc$ to $\loc'$.
    Note that there is no possibility of conflict because $\loc$ and $\loc'$ are fresh.
  \item \rulename{beta}:
    Let us write
    $\wctx' := \wctx\sub{\loc_1}{\loc_2}$,
    $\progtwo' := \progtwo\sub{\loc_1}{\loc_2}$,
    and
    $\val' := \val\sub{\loc_1}{\loc_2}$.
    Then we have:
    \[
    \xymatrix@C=.2cm@R=.5cm{
      \prog_1 \alt \wctxof{(\laml{\loc}{\var}{\progtwo})\,\val} \alt \prog_2
      \ar[d]
     & \structeq & 
      \prog_1 \alt \wctx'\ctxof{(\laml{\loc\sub{\loc_1}{\loc_2}}{\var}{\progtwo'})\,\val'} \alt \prog_2
      \ar[d]
    \\
      \prog_1 \alt \wctxof{\progtwo\sub{\var}{\val}} \alt \prog_2
     & \structeq & 
      \prog_1 \alt \wctx'\ctxof{\progtwo'\sub{\var}{\val'}} \alt \prog_2
    }
    \]
    The equivalence on the bottom may be deduced by repeatedly applying the \rulename{$\structeq$-loc} rule.
  \item \rulename{guard}:
    This case is straightforward.
  \item \rulename{fresh}:
    This case is straightforward.
  \item \rulename{unif}:
    Consider a thread of the form $\wctxof{\val \unif \valtwo}$,
    and suppose that $\subst = \mgu{\val \unif \valtwo}$ exists.
    Let us write $\wctx' := \wctx\sub{\loc}{\loc'}$,
    $\val' := \val\sub{\loc}{\loc'}$, and
    $\valtwo' := \valtwo\sub{\loc}{\loc'}$.
    By \rlem{properties_of_mgus}, the substitution $\subst'$
    given by $\subst'(\var) = \subst(\var)\sub{\loc}{\loc'}$
    is an idempotent most general unifier of $\set{\val' \unif \valtwo'}$
    so $\subst'' = \mgu{\val' \unif \valtwo'}$ exists and moreover, by \rlem{properties_of_mgus},
    we have $\subst'' = \subst' \scomp \substtwo$ for some renaming $\substtwo$.
    Note that $\wctx'\ctxof{\unit}\SUB{\subst'} = \wctx\ctxof{\unit}\SUB{\subst}\sub{\loc}{\loc'}$; so:
    \[
      \xymatrix@C=.2cm@R=.5cm{
        \prog_1 \alt \wctxof{\val \unif \valtwo} \alt \prog_2
        \ar[dd]
       & \structeq & 
        \prog_1 \alt \wctx'\ctxof{\val' \unif \valtwo'} \alt \prog_2
        \ar[d]
      \\
       & & 
        \prog_1 \alt \wctx'\ctxof{\unit}\SUB{\subst' \scomp \substtwo} \alt \prog_2
        \ar@{=}[d]
      \\
        \prog_1 \alt \wctxof{\unit}\SUB{\subst} \alt \prog_2
       & \structeq & 
        \prog_1 \alt \wctx\ctxof{\unit}\SUB{\subst}\sub{\loc}{\loc'}\SUB{\substtwo} \alt \prog_2
      }
    \]
    The equivalence at the bottom may be deduced applying the \rulename{$\structeq$-loc} rule
    to rename $\loc$ to $\loc'$ and then repeatedly applying the \rulename{$\structeq$-var}
    rule to perform the renaming $\substtwo$.
  \item \rulename{fail}:
    Consider a thread of the form $\wctxof{\val \unif \valtwo}$, and
    let us write $\wctx' := \wctx\sub{\loc}{\loc'}$,
    $\val' := \val\sub{\loc}{\loc'}$, and
    $\valtwo' := \valtwo\sub{\loc}{\loc'}$.
    Suppose moreover that $\mgu{\val \unif \valtwo}$ fails.
    Then $\mgu{\val' \unif \valtwo'}$
    must also fail, for if $\subst$ were a unifier of $\val' \unif \valtwo'$,
    the substitution $\subst'$ given by $\subst'(\var) = \subst(\var)\sub{\loc'}{\loc}$
    would be a unifier of $\val \unif \valtwo$.
    Hence:
    \[
      \xymatrix@C=.2cm@R=.5cm{
        \prog_1 \alt \wctxof{\val \unif \valtwo} \alt \prog_2
        \ar[d]
       & \structeq & 
        \prog_1 \alt \wctx'\ctxof{\val' \unif \valtwo'} \alt \prog_2
        \ar[d]
      \\
        \prog_1 \alt \prog_2
       & \structeq & 
        \prog_1 \alt \prog_2
      }
    \]
  \end{enumerate}
\end{enumerate}
\end{proof}

\subsection{Proof of \rprop{characterization_normal_forms} --- Characterization of Normal Forms}
\lsec{appendix_characterization_normal_forms}

\begin{lemma}[Values are irreducible]
\llem{values_irreducible}
If $\val$ is a value, then it is a normal form.
\end{lemma}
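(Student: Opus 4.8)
The plan is to unfold what it means for the value $\val$ to be a normal form. Viewing $\val$ as the singleton program $\val \alt \fail$, a reduction step would require that $\val$ itself be written as $\wctxof{R}$ for some weak context $\wctx$ and some subterm $R$ matching the left-hand side of one of the six rules of \rdef{operational_semantics_reduction_rules} --- namely $R \in \{\lam{\var}{\prog},\ (\laml{\loc}{\var}{\prog})\,\valtwo,\ \valtwo\seq\tm,\ \fresh{\var}{\tm},\ \valtwo\unif\valtwo'\}$, since the trailing $\fail$ contributes no redex. So it suffices to prove, by structural induction on $\val$, that no value admits such a decomposition.

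First I would dispatch the atomic cases. If $\val = \var$ or $\val = \laml{\loc}{\var}{\prog}$, then the top constructor is neither an application, nor a guarded expression, nor a unification, so the only weak context with $\val = \wctxof{R}$ is $\wctx = \ctxhole$, forcing $R = \val$; but neither a variable nor an \emph{allocated} abstraction is the left-hand side of any rule (the \indrulename{alloc} rule fires only on \emph{unallocated} abstractions). The remaining case is a structure $\val = \cons\,\val_1\hdots\val_n$, which I would treat by a secondary induction on the length $n$ of the application spine. For $\wctx = \ctxhole$ the redex $R$ would be the whole structure, whose head is the constructor $\cons$; since this is not an allocated abstraction, $R$ is not a \indrulename{beta} redex, and it matches no other rule, so this subcase is impossible. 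For $\wctx = \tm\,\wctx'$ the hole lies in the last argument, $\val_n = \wctx'\ctxof{R}$, and the inductive hypothesis applied to the value $\val_n$ excludes a redex. For $\wctx = \wctx'\,\tm$ the hole lies in the prefix, $\cons\,\val_1\hdots\val_{n-1} = \wctx'\ctxof{R}$, and since $\cons\,\val_1\hdots\val_{n-1}$ is again a structure (a value with a shorter spine), the inductive hypothesis applies once more. No weak context form with $\seq$ or $\unif$ at its root can match an application, so these three subcases are exhaustive.

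The key point --- and the only place where any care is needed --- is the structure case: one must observe that weak contexts never descend below abstractions or fresh-variable binders, so the only subterms of $\val$ accessible to a weak context are $\val$ itself together with the accessible subterms of its value arguments, all of which are again values. Consequently no \indrulename{guard}, \indrulename{unif}, \indrulename{fresh}, or unallocated-abstraction redex can ever surface, and a \indrulename{beta} redex is excluded precisely because the head of every application along the spine of a structure is the constructor $\cons$ rather than an allocated abstraction $\laml{\loc}{\var}{\prog}$. I do not expect a genuine obstacle here; once the redex shapes and the restriction of weak contexts are made explicit, the argument is a routine syntactic induction.
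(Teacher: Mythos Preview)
Your proposal is correct and follows essentially the same approach as the paper, which records the proof simply as ``Straightforward by induction on $\val$.'' Your write-up fills in precisely the case analysis one would expect when carrying out that induction, with the spine-length subinduction being a natural way to organize the structure case.
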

\begin{proof}
Straightforward by induction on $\val$.
\end{proof}

\begin{lemma}[Application of a stuck term]
\llem{application_of_a_stuck_term}
If $\stm$ is stuck and $\ntm$ is a normal term,
then $\stm\,\ntm$ is stuck.
\end{lemma}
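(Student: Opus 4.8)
The plan is to proceed by case analysis on the last rule used in the derivation of $\stm \stuck$. In fact no induction hypothesis will be needed: in every case $\stm$ is a fixed head applied to a spine of normal-term arguments, and forming $\stm\,\ntm$ merely appends $\ntm$ as a new rightmost argument of that spine, incrementing the number of arguments from $n$ to $n+1$ while leaving the head and the witness of stuckness untouched. In each case I would then simply reapply the very same stuck rule to the extended spine.

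Concretely there are five cases. If the last rule is \indrulename{stuck-var}, then $\stm = \var\,\ntm_1 \hdots \ntm_n$ with $n > 0$, so $\stm\,\ntm = \var\,\ntm_1 \hdots \ntm_n\,\ntm$ has $n + 1 > 0$ arguments and \indrulename{stuck-var} applies. If it is \indrulename{stuck-cons}, then $\stm = \cons\,\ntm_1 \hdots \ntm_n$ with $\ntm_i \stuck$ for some $i$; that same argument $\ntm_i$ is still stuck inside $\stm\,\ntm = \cons\,\ntm_1 \hdots \ntm_n\,\ntm$, so \indrulename{stuck-cons} applies again. The remaining three rules \indrulename{stuck-guard}, \indrulename{stuck-unif}, and \indrulename{stuck-lam} each conclude stuckness of a fixed head (respectively $\ntm_1 \seq \ntm_2$, $\ntm_1 \unif \ntm_2$, or $\laml{\loc}{\var}{\prog}$ followed by a stuck argument) trailed by a spine of $n \geq 0$ arguments; appending $\ntm$ produces the identical head and stuckness witness followed by $n+1 \geq 0$ trailing arguments, so the corresponding rule applies to $\stm\,\ntm$.

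In all five cases, the hypothesis that $\ntm$ is a \emph{normal} term is exactly what is required for $\ntm$ to be admissible as the new rightmost element of the spine, since every stuck rule insists that its spine arguments be normal terms. There is no genuinely hard step here; the only thing to verify is that appending a single argument preserves the arity side-conditions, which is immediate because $n > 0$ implies $n + 1 > 0$ for \indrulename{stuck-var}, and $n \geq 0$ implies $n + 1 \geq 0$ for the other four rules.
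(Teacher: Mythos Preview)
Your proposal is correct and takes essentially the same approach as the paper, which simply says ``Straightforward by case analysis on the derivation of the judgment `$\stm \stuck$'.'' You have merely spelled out the five cases that the paper leaves implicit, and each one is handled correctly.
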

\begin{proof}
Straightforward by case analysis on the derivation of the judgment
``$\stm \stuck$''.
\end{proof}

\begin{lemma}[Values and stuck terms are disjoint]
\llem{values_disjoint_stuck_terms}
A stuck term $\stm$ is not a value.
\end{lemma}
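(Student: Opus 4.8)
The plan is to proceed by induction on the derivation of the judgment $\stm \stuck$, and in each case to inspect the head shape of $\stm$ against the value grammar $\val ::= \var \mid \laml{\loc}{\var}{\prog} \mid \cons\,\val_1\hdots\val_n$. Since being a value is determined by the outermost constructor (a bare variable, an allocated abstraction, or a constructor applied to values), each rule for $\stuck$ will produce a term whose head is incompatible with this grammar, the only subtlety being the constructor case.

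First I would dispatch the four rules whose conclusion is syntactically incompatible with any value by direct inspection. In the \rulename{stuck-var} case the term is $\var\,\ntm_1\hdots\ntm_n$ with $n>0$, i.e.\ a genuine application, whereas a value headed by a variable must be the bare variable carrying no arguments. In the \rulename{stuck-guard} and \rulename{stuck-unif} cases the head is a guarded expression ($\seq$) or a unification ($\unif$), and neither of these constructors appears at the head of any value. In the \rulename{stuck-lam} case the term is $(\laml{\loc}{\var}{\prog})\,\ntm\,\ntmtwo_1\hdots\ntmtwo_n$ with at least one argument $\ntm$, hence an application: it is neither a variable, nor a bare allocated abstraction (which carries no arguments), nor a structure (which must be headed by a constructor). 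Each of these cases is therefore immediate.

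The only case appealing to the induction hypothesis is \rulename{stuck-cons}, where $\stm = \cons\,\ntm_1\hdots\ntm_n$ and $\ntm_i \stuck$ for some $i$. Were this term a value it would have to be a structure $\cons\,\val_1\hdots\val_n$ with every argument $\ntm_j$ a value; in particular $\ntm_i$ would be a value. But $\ntm_i$ is stuck, so the induction hypothesis applied to the subderivation of $\ntm_i \stuck$ yields a contradiction, completing this case.

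I do not anticipate a genuine obstacle: the argument is in essence a head-symbol disjointness check between the stuck grammar and the value grammar. The single point requiring care is the \rulename{stuck-cons} case, where one must recurse through the induction hypothesis on the stuck argument $\ntm_i$ rather than reason from the head of the whole term alone; all remaining cases follow by direct inspection of the conclusion.
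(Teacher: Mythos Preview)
Your proposal is correct and follows essentially the same approach as the paper: induction on the derivation of $\stm \stuck$, with the \rulename{stuck-cons} case handled via the induction hypothesis on the stuck argument $\ntm_i$, and the remaining four cases dispatched by direct inspection of the head shape against the value grammar. The paper's proof is terser but structurally identical.
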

\begin{proof}
By induction on the derivation of the judgment ``$\stm \stuck$''.
First, note that if $\stm = \var\,\ntm_1\hdots\ntm_n$ is stuck,
it cannot be value because $n > 0$.
Second, note that if $\stm = \cons\,\ntm_1\hdots\ntm_n$ is stuck,
it cannot be value because by \ih there is an $i$ such that $\ntm_i$ is not a value.
In the remaining cases,
we have either $\stm = (\ntm_1\seq\ntm_2)\,\ntmtwo_1\hdots\ntmtwo_n$,
$\stm = (\ntm_1\unif\ntm_2)\,\ntmtwo_1\hdots\ntmtwo_n$
or $\stm = (\laml{\loc}{\var}{\prog})\,\ntm\,\ntmtwo_1\hdots\ntmtwo_n$,
so the term $\stm$ is clearly not a value.
\end{proof}

We turn to the proof of \rprop{characterization_normal_forms}.
We prove the two inclusions.
For the $(\subseteq)$ inclusion,
by induction on a given normal program, it suffices to show that any normal term $\ntm$
is a $\toca{}$-normal form, which can be seen by induction on the derivation
that $\ntm$ is a normal term.
Recall that values are $\toca{}$-normal forms (\rlem{values_irreducible})
so we are left to check that any stuck term is a $\toca{}$-normal form.
If $\ntm$ is stuck, it is straightforward to check, in each case
of the definition of the judgment $\ntm \stuck$,
that the resulting term has no $\toca{}$-redexes.
Using the fact that a stuck term $\stm$ cannot be a value (\rlem{values_disjoint_stuck_terms}),
the key remarks are that:
\begin{enumerate}
\item \rulename{stuck-guard}:
  $\ntm_1\seq\ntm_2$ cannot be a $\toca{guard}$-redex because $\ntm_1$ is stuck (hence not a value);
\item \rulename{stuck-unif}:
  $\ntm_1\unif\ntm_2$ cannot be a $\toca{unif}$-redex nor a $\toca{fail}$-redex
  because for some $i \in \set{1,2}$ the term $\ntm_i$ is stuck (hence not a value);
\item \rulename{stuck-lam}:
  $(\laml{\loc}{\var}{\prog})\,\ntm$ cannot be a $\toca{beta}$-redex
  because the term $\ntm$ is stuck (hence not a value).
\end{enumerate}

For the $(\supseteq)$ inclusion,
by induction on a given program, it suffices to show that any term $\tm$
in $\toca{}$-normal form is actually a normal term.
By induction on $\tm$:
\begin{enumerate}
\item {\em Variable, $\tm = \var$.}
  Then $\tm$ is a value.
\item {\em Constructor, $\tm = \cons$.}
  Then $\tm$ is a value.
\item {\em Fresh variable declaration, $\tm = \fresh{\var}{\tmtwo}$.}
  Impossible, as it is not a $\toca{}$-normal form.
\item {\em Abstraction code, $\tm = \lam{\var}{\tmtwo}$.}
  Impossible, as it is not a $\toca{}$-normal form.
\item {\em Allocated abstraction, $\tm = \laml{\loc}{\var}{\tmtwo}$.}
  Then $\tm$ is a value.
\item {\em Application, $\tm = \tmtwo\,\tmthree$.}
  Note that $\tmtwo$ and $\tmthree$ are $\toca{}$-normal forms.
  By \ih, $\tmtwo$ and $\tmthree$ are normal terms,
  that is they are either a value or a stuck term.
  We consider the following four cases, depending on the shape of $\tmtwo$:
  \begin{enumerate}
  \item If $\tmtwo = \var$, then $\var\,\tmthree$ is stuck by \rulename{stuck-var}.
  \item If $\tmtwo = \cons\val_1\hdots\val_n$, then:
        \begin{enumerate}
        \item If $\tmthree$ is a value, $\cons\,\val_1\hdots\val_n\,\tmthree$ is a value.
        \item If $\tmthree$ is stuck, $\cons\,\val_1\hdots\val_n\,\tmthree$ is stuck by \rulename{stuck-cons}.
        \end{enumerate}
  \item If $\tmtwo = \laml{\loc}{\var}{\prog}$ then:
        \begin{enumerate}
        \item If $\tmthree$ is a value, this case is impossible because $(\laml{\loc}{\var}{\prog})\,\tmthree$ has a $\toca{beta}$-redex.
        \item If $\tmthree$ is stuck, then $(\laml{\loc}{\var}{\prog})\,\tmthree$ is stuck by \rulename{stuck-lam}.
        \end{enumerate}
  \item If $\tmtwo$ is stuck, then $\tmtwo\,\tmthree$ is stuck by \rlem{application_of_a_stuck_term}.
  \end{enumerate}
\item {\em Guarded expression, $\tm = (\tmtwo\seq\tmthree)$.}
  Note that $\tmtwo$ and $\tmthree$ are $\toca{}$-normal forms.
  By \ih, $\tmtwo$ and $\tmthree$ are normal terms,
  that is they are either a value or a stuck term.
  Note that $\tmtwo$ cannot be a value, because $\tmtwo\seq\tmthree$
  would have a $\toca{guard}$-redex, so $\tmtwo$ is stuck
  and $\tmtwo\seq\tmthree$ is stuck by \rulename{stuck-guard}.
\item {\em Unification, $\tm = (\tmtwo\unif\tmthree)$.}
  Note that $\tmtwo$ and $\tmthree$ are $\toca{}$-normal forms.
  By \ih, $\tmtwo$ and $\tmthree$ are normal terms,
  that is they are either a value or a stuck term.
  Note that $\tmtwo$ and $\tmthree$ cannot both be values, because
  $\tmtwo\unif\tmthree$ would have
  either a $\toca{unif}$-redex (if $\mgu{\tmtwo \unif \tmthree}$ exists)
  or a $\toca{unif}$-redex (if $\mgu{\tmtwo \unif \tmthree}$ fails).
  So either $\tmtwo$ is stuck or $\tmthree$ is stuck,
  so we have that $\tmtwo\unif\tmthree$ is stuck by \rulename{stuck-unif}.
\end{enumerate}

\subsection{Proof of \rlem{parallel_modulo_structeq}, item~\ref{parallel_modulo_structeq_item}. --- Simultaneous reduction modulo structural equivalence}
\lsec{appendix_parallel_modulo_structeq}

\begin{lemma}[Goals in a simultaneous reduction are in the term]
\llem{simultaneous_goals_are_in_the_term}
Let $\tm \topar{\goals} \prog$.
Then $\goals$ is a subset of the set:
\[
  \set{\val \unif \valtwo \ST \exists\wctx.\ \tm = \wctxof{\val \unif \valtwo}}
\]
In particular, $\fv{\goals} \subseteq \fv{\tm}$ and $\locs{\goals} \subseteq \locs{\tm}$.
\end{lemma}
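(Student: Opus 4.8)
The plan is to proceed by \emph{induction on the derivation} of $\tm \topar{\goals} \prog$, tracking, for each element of $\goals$, the occurrence in $\tm$ from which it arises. The key observation is that $\goals$ is assembled in a completely syntax-directed way: the rule $\mathsf{Unif}_2$ is the unique \emph{generator} of a goal (producing the singleton $\set{\val \unif \valtwo}$ out of the very term $\val \unif \valtwo$), the binary rules $\mathsf{App}_1,\mathsf{Guard}_1,\mathsf{Unif}_1$ merely take unions, $\mathsf{Guard}_2$ and $\mathsf{Fresh}_2$ copy, and all remaining rules produce $\emptyset$. So it suffices to check that each goal is witnessed by a weak context at the root of its generating occurrence, and that this witness propagates up through every rule.

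The empty-goal axioms ($\mathsf{Var}$, $\mathsf{Cons}$, $\mathsf{Fresh}_1$, the three abstraction rules, and $\mathsf{App}_2$) are vacuous. For $\mathsf{Unif}_2$ the empty context $\ctxhole$ already witnesses $\tm = \val \unif \valtwo$. For the three binary congruences I combine the two induction hypotheses and reframe each witness through the surrounding constructor: in $\mathsf{App}_1$ a witness $\wctx$ for a left-hand goal (so $\tm = \wctxof{\val\unif\valtwo}$) becomes $\wctx\,\tmtwo$ and a right-hand witness $\wctx'$ becomes $\tm\,\wctx'$, both weak by the grammar clauses $\wctx\,\tm$ and $\tm\,\wctx$; $\mathsf{Guard}_1$ and $\mathsf{Unif}_1$ are identical with the frames $\wctx\seq\tm$/$\tm\seq\wctx$ and $\wctx\unif\tm$/$\tm\unif\wctx$; and $\mathsf{Guard}_2$ reframes each body witness $\wctx$ as $\val\seq\wctx$.

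The step I expect to be the main obstacle is $\mathsf{Fresh}_2$, the only rule whose premise reduces \emph{under a binder}. From $\tm \topar{\goals} \prog$ we conclude $\fresh{\var}{\tm} \topar{\goals} \prog$, and a goal witnessed in the body by $\wctx$ admits, in the source $\fresh{\var}{\tm} = \fresh{\var}{\wctxof{\val\unif\valtwo}}$, only the witness $\fresh{\var}{\wctx}$, which is a \emph{general} context and not a weak one, since weak contexts do not descend past $\nu$; moreover $\var$ may occur in $\val\unif\valtwo$, hence be free in the goal although bound in $\fresh{\var}{\tm}$. This is exactly where the freshness side-condition must be used: $\var$ is taken globally fresh, so firing the binder exposes the body at top level and its goals become genuine weak-context goals of the fired redex, the only escaping name being the globally fresh $\var$. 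I would therefore carry the invariant in the form ``every goal occurs at a weak-context position once the traversed fresh binders are discharged'', from which the stated inclusion follows in the reading used downstream in Lemma~\ref{lemma:parallel_modulo_structeq}(\ref{parallel_modulo_structeq_item}), where the only extra variables are globally fresh and hence harmless to the $\structeq$-var renaming.

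Finally, the ``in particular'' clause follows directly: a weak context binds no variables on the path to its hole and every location occurring in the plugged subterm also occurs in the whole term, so whenever $\tm = \wctxof{\val\unif\valtwo}$ with $\wctx$ weak we get $\fv{\val\unif\valtwo}\subseteq\fv{\tm}$ and $\locs{\val\unif\valtwo}\subseteq\locs{\tm}$; unioning over the finitely many goals in $\goals$ yields $\fv{\goals}\subseteq\fv{\tm}$ and $\locs{\goals}\subseteq\locs{\tm}$.
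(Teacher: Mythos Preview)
Your approach—induction on the derivation—is exactly the paper's, whose proof is the single line ``Straightforward by induction on the derivation of $\tm \topar{\goals} \prog$.'' Your case analysis for the axioms, \indrulename{Unif$_2$}, the binary congruences, and \indrulename{Guard$_2$} is correct.

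You are also right to flag \indrulename{Fresh$_2$}: the statement as literally written does not survive that rule, since weak contexts do not descend under $\nu$ and the bound (globally fresh) variable may appear in a goal; for instance $\fresh{\var}{(\var\unif\cons)} \topar{\set{\var\unif\cons}} \unit$ has $\var\in\fv{\goals}$ but $\var\notin\fv{\fresh{\var}{(\var\unif\cons)}}$. The paper's one-line proof simply does not address this. Your repair—tracking the discharged fresh binders so that in effect $\fv{\goals}\subseteq\fv{\tm}\cup\{\text{globally fresh names}\}$—is exactly what the only downstream uses (in \rlem{parallel_modulo_structeq}) require: there one merely needs $\vartwo\notin\fv{\goals}$ for a specific $\vartwo\notin\fv{\tm}$, and the fresh names, being globally fresh, are distinct from any such $\vartwo$. (There is no analogous issue for locations, since $\nu$ does not bind them, so $\locs{\goals}\subseteq\locs{\tm}$ holds on the nose.) Your final paragraph, which derives the ``in particular'' clause from the main claim, should accordingly be read under this weakened invariant rather than the literal one.
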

\begin{proof}
Straightforward by induction on the derivation of $\tm \topar{\goals} \prog$.
\end{proof}

\begin{lemma}[Simultaneous evaluation of an alternative]
\llem{parallel_reduction_alternative}
The following are equivalent:
\begin{enumerate}
\item $\prog \alt \progtwo \topar{} \progthree$
\item $\progthree$ can be written as $\prog' \alt \progtwo'$,
      where $\prog \topar{} \prog'$ and $\progtwo \topar{} \progtwo'$.
\end{enumerate}
\end{lemma}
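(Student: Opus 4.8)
The plan is to establish both implications at once by structural induction on the first program $\prog$, exploiting the definitional associativity of program concatenation, namely $(\tm \alt \prog_0) \alt \progtwo = \tm \alt (\prog_0 \alt \progtwo)$, together with the fact that a program-level step can only be derived by the \indrulename{Fail} or \indrulename{Alt} rule. In each direction the argument then proceeds by inverting the last rule applied and reassembling it on the other side of the equation $\prog \alt \progtwo = \tm \alt (\prog_0 \alt \progtwo)$.

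For the base case $\prog = \fail$, the governing identity is $\fail \alt \progtwo = \progtwo$. For the direction $(1 \Rightarrow 2)$, from $\progtwo \topar{} \progthree$ I take $\prog' = \fail$ and $\progtwo' = \progthree$: the witness $\fail \topar{} \fail$ is given by \indrulename{Fail}, and $\progthree = \fail \alt \progthree$ supplies the decomposition. For $(2 \Rightarrow 1)$, any derivation of $\fail \topar{} \prog'$ must use \indrulename{Fail}, forcing $\prog' = \fail$; then $\progthree = \fail \alt \progtwo' = \progtwo'$, and the hypothesis $\progtwo \topar{} \progtwo'$ immediately gives $\fail \alt \progtwo = \progtwo \topar{} \progthree$.

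For the inductive step $\prog = \tm \alt \prog_0$, I first rewrite $\prog \alt \progtwo$ as $\tm \alt (\prog_0 \alt \progtwo)$. In direction $(1 \Rightarrow 2)$, inverting \indrulename{Alt} on $\tm \alt (\prog_0 \alt \progtwo) \topar{} \progthree$ exposes goals $\goals$ and a program $\prog_1$ with $\tm \topar{\goals} \prog_1$, a step $\prog_0 \alt \progtwo \topar{} \prog_2$, and a head component $\prog_1'$ (equal to $\prog_1\SUB{\subst}$ when $\subst = \mgu{\goals}$ exists, and to $\fail$ otherwise) such that $\progthree = \prog_1' \alt \prog_2$. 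Applying the induction hypothesis to $\prog_0 \alt \progtwo \topar{} \prog_2$ yields $\prog_2 = \prog_0' \alt \progtwo'$ with $\prog_0 \topar{} \prog_0'$ and $\progtwo \topar{} \progtwo'$; reassembling with the \emph{same} $\goals$ and $\prog_1'$ gives $\tm \alt \prog_0 \topar{} \prog_1' \alt \prog_0'$, so $\prog' := \prog_1' \alt \prog_0'$ and $\progtwo'$ are the required witnesses. Direction $(2 \Rightarrow 1)$ is the mirror image: inverting \indrulename{Alt} on $\tm \alt \prog_0 \topar{} \prog'$ gives $\tm \topar{\goals} \prog_1$, $\prog_0 \topar{} \prog_0'$, and $\prog' = \prog_1' \alt \prog_0'$; the induction hypothesis recombines $\prog_0 \topar{} \prog_0'$ and $\progtwo \topar{} \progtwo'$ into $\prog_0 \alt \progtwo \topar{} \prog_0' \alt \progtwo'$, and a final application of \indrulename{Alt} with the same $\goals$ and $\prog_1$ reconstructs $\tm \alt (\prog_0 \alt \progtwo) \topar{} \progthree$.

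The point that warrants care—rather than being a genuine obstacle—is that the side condition fixing the head component $\prog_1'$ depends only on the goals $\goals$ extracted from the single thread $\tm$ through $\tm \topar{\goals} \prog_1$, and is therefore completely insensitive to how the remainder of the program reduces. This locality is precisely what lets me reuse the same $\goals$, $\subst$, and $\prog_1'$ on both sides of each implication, so that building up and breaking down along \indrulename{Alt} commutes with concatenation and the induction closes with no interaction between distinct threads.
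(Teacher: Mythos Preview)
Your proof is correct and is exactly the approach the paper indicates: the paper's proof reads simply ``Straightforward, by induction on $\prog$'', and what you have written is precisely that induction spelled out in full, including the correct inversions of \indrulename{Fail} and \indrulename{Alt} and the observation that the head component $\prog_1'$ depends only on the goals $\goals$ extracted from $\tm$.
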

\begin{proof}
Straightforward, by induction on $\prog$.
\end{proof}

\begin{lemma}[Action of renaming on simultaneous evaluation]
\llem{parallel_reduction_renaming}
\quad
\begin{enumerate}
\item If $\tm \topar{\goals} \prog$
      then $\tm\sub{\var}{\vartwo} \topar{\goals\sub{\var}{\vartwo}} \prog\sub{\var}{\vartwo}$.
\item If $\tm \topar{\goals} \prog$
      then $\tm\sub{\loc}{\loctwo} \topar{\goals\sub{\loc}{\loctwo}} \prog\sub{\loc}{\loctwo}$.
\end{enumerate}
\end{lemma}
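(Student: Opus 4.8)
The plan is to prove the two items independently, each by structural induction on the derivation of $\tm \topar{\goals} \prog$. The guiding observation is that in every rule the renaming in question (either $\sub{\var}{\vartwo}$ or $\sub{\loc}{\loctwo}$) commutes with the term- and program-forming operations and distributes over the union of goal sets, \ie $(\goals \cup \goalstwo)\sub{\var}{\vartwo} = \goals\sub{\var}{\vartwo} \cup \goalstwo\sub{\var}{\vartwo}$ and similarly for locations. Thus each case amounts to checking that the renamed premises still fit the renamed instance of the same rule. I would dispatch the structural rules first, then the redex rules, and finally the binding rules, where the only real care is needed.

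For the congruence rules \indrulename{App$_1$}, \indrulename{Guard$_1$}, and \indrulename{Unif$_1$}, I apply the \ih to each premise and reassemble with the renamed goal union, closing the case by the same rule. The axioms \indrulename{Var} and \indrulename{Cons} are immediate, since a variable or constructor renames to one that reduces to itself carrying the empty goal set. Rule \indrulename{Unif$_2$} is also direct: $(\val \unif \valtwo)\sub{\var}{\vartwo} = \val\sub{\var}{\vartwo} \unif \valtwo\sub{\var}{\vartwo}$ reduces by \indrulename{Unif$_2$} with singleton goal $\set{\val\sub{\var}{\vartwo} \unif \valtwo\sub{\var}{\vartwo}}$, which is exactly $\set{\val \unif \valtwo}\sub{\var}{\vartwo}$, while $\unit\sub{\var}{\vartwo} = \unit$; here I use \rlem{properties_of_substitution} to know that the renaming of a value is again a value, so that \indrulename{Unif$_2$} is applicable. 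The one redex rule requiring more than unfolding is \indrulename{App$_2$}: after renaming, the reduct $\prog\sub{\varthree}{\val}$ must be commuted past the renaming, which is precisely the substitution-commutation clause of \rlem{properties_of_substitution} (for item~2 one additionally uses that location renaming commutes with the variable substitution $\sub{\varthree}{\val}$, since the two act on disjoint syntactic positions).

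The binding rules \indrulename{Fresh$_1$}, \indrulename{Fresh$_2$}, \indrulename{Abs$^\codesym_1$}, \indrulename{Abs$^\codesym_2$}, and \indrulename{Abs$^\allocsym$} are where the main obstacle lies, and it is handled through the variable convention. For item~1, I assume by $\alpha$-conversion that the bound variable of each abstraction or of $\fresh{\cdot}{\cdot}$ is distinct from both $\var$ and $\vartwo$, so that renaming commutes with the binder and the \ih applies to the body; the freshness side condition of \indrulename{Fresh$_2$} is preserved, since renaming $\var$ to $\vartwo$ (both distinct from the bound name) cannot introduce the bound name among the free variables. The delicate point is \indrulename{Abs$^\codesym_2$} under \emph{location} renaming in item~2: this rule allocates a fresh location $\loc'$, and I must ensure it stays fresh after applying $\sub{\loc}{\loctwo}$. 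Extending the convention to locations, I may assume $\loc'$ is chosen distinct from both $\loc$ and $\loctwo$, whence $\loc'\sub{\loc}{\loctwo} = \loc'$ remains fresh and the renamed \indrulename{Abs$^\codesym_2$} step goes through. With this proviso the remaining binding cases are routine, and the induction concludes.
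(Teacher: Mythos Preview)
Your proposal is correct and follows exactly the approach the paper uses: induction on the derivation of $\tm \topar{\goals} \prog$. The paper's own proof is a single line (``Straightforward by induction on the derivation of $\tm \topar{\goals} \prog$''), so your case analysis simply spells out what the paper leaves implicit; the handling of binders via the variable convention and of \indrulename{Abs$^\codesym_2$} via a suitably fresh location choice is the expected way to discharge those cases.
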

\begin{proof}
Straightforward by induction on the derivation of $\tm \topar{\goals} \prog$.
\end{proof}

We turn to the proof of \rlem{parallel_modulo_structeq}, item~\ref{parallel_modulo_structeq_item}:
\begin{proof}
By induction on the derivation of $\prog \structeq \prog'$.
It suffices to show that the property holds when $\prog \structeq \prog'$ is
derived using one of the axioms:
\begin{enumerate}
\item \rulename{$\structeq$-swap}:
  The situation is
  \[
    \xymatrix@C=.2cm@R=.5cm{
      \prog_1 \alt \tm_1 \alt \tm_2 \alt \prog_2
      \ar@{=>}[d]
     & \structeq & 
      \prog_1 \alt \tm_2 \alt \tm_1 \alt \prog_2
      \ar@{=>}[d]
    \\
      \prog'_1 \alt \progtwo_1 \alt \progtwo_2 \alt \prog'_2
     & \structeq & 
      \prog'_1 \alt \progtwo_2 \alt \progtwo_1 \alt \prog'_2
    }
  \]
  where by \rlem{parallel_reduction_alternative} we have that
  $\prog_1 \topar{} \prog'_1$,
  $\tm_1 \topar{} \progtwo_1$,
  $\tm_2 \topar{} \progtwo_2$,
  and $\prog_2 \topar{} \prog'_2$.
  The equivalence at the bottom is justified by \rlem{basic_properties_structeq}.
\item \rulename{$\structeq$-var}:
  Consider a program of the form $\prog_1 \alt \tm \alt \prog_2$, and let $\vartwo \not\in \fv{\tm}$. 
  Moreover, suppose that $\prog_1 \alt \tm \alt \prog_2 \topar{} \progthree$.
  By \rlem{parallel_reduction_alternative} we have that
  $\progthree = \prog'_1 \alt \progtwo \alt \prog'_2$
  where $\prog_1 \topar{} \prog'_1$, $\tm \topar{} \progtwo$, and $\prog_2 \topar{} \prog'_2$.
  The simultaneous reduction step $\tm \topar{} \progtwo$ is deduced from $\tm \topar{\goals} \progtwo'$
  for some set of goals $\goals$, in such a way that:
  \[
    \progtwo = \begin{cases}
               \progtwo'\SUB{\subst} & \text{if $\subst = \mgu{\goals}$} \\
               \fail                 & \text{if $\mgu{\goals}$ fails} \\
               \end{cases}
  \]
  By \rlem{parallel_reduction_renaming}, this means that
  $\tm\sub{\var}{\vartwo} \topar{\goals\sub{\var}{\vartwo}} \progtwo'\sub{\var}{\vartwo}$.
  Note that $\fv{\goals} \subseteq \fv{\tm}$ by \rlem{simultaneous_goals_are_in_the_term},
  so in particular $\vartwo \not\in \fv{\goals}$.
  This implies by \rlem{properties_of_mgus} that
  $\subst = \mgu{\goals}$ exists if and only if $\subst' = \mgu{\goals\sub{\var}{\vartwo}}$ exists.

  If $\subst = \mgu{\goals}$ exists, moreover by \rlem{properties_of_mgus}
  we have that $\subst' = (\vartwo \mapsto \var) \scomp \subst \scomp \substtwo$
  for some renaming $\substtwo$, and the situation is:
  \[
    \xymatrix@C=.2cm@R=.5cm{
      \prog_1 \alt \tm \alt \prog_2
      \ar@{=>}[dd]
     & \structeq & 
      \prog_1 \alt \tm\sub{\var}{\vartwo} \alt \prog_2
      \ar@{=>}[d]
    \\
     && 
      \prog_1 \alt \progtwo'\sub{\var}{\vartwo}\SUB{(\vartwo \mapsto \var) \scomp \subst \scomp \substtwo} \alt \prog_2
      \ar@{=}[d]
    \\
      \prog_1 \alt \progtwo'\SUB{\subst} \alt \prog_2
     & \structeq & 
      \prog_1 \alt (\progtwo'\SUB{\subst})\SUB{\substtwo} \alt \prog_2
    }
  \]
  The equivalence at the bottom is justified using \rulename{$\structeq$-var}
  to apply the renaming $\substtwo$.
  If $\mgu{\goals}$ fails, the situation is:
  \[
    \xymatrix@C=.2cm@R=.5cm{
      \prog_1 \alt \tm \alt \prog_2
      \ar@{=>}[d]
     & \structeq & 
      \prog_1 \alt \tm\sub{\var}{\vartwo} \alt \prog_2
      \ar@{=>}[d]
    \\
      \prog_1 \alt \prog_2
     & \structeq & 
      \prog_1 \alt \prog_2
    }
  \]
\item \rulename{$\structeq$-loc}:
  Similar as the previous case. Let $\loctwo \not\in \locs{\tm}$.
  By \rlem{parallel_reduction_renaming} we may conclude that if
  $\tm \topar{\goals} \progtwo'$
  then 
  $\tm\sub{\loc}{\loctwo} \topar{\goals\sub{\loc}{\loctwo}} \progtwo'\sub{\loc}{\loctwo}$.
  Note that $\locs{\goals} \subseteq \locs{\tm}$ by \rlem{simultaneous_goals_are_in_the_term},
  so in particular $\loctwo \not\in \fv{\goals}$.
  This implies by \rlem{properties_of_mgus} that
  $\subst = \mgu{\goals}$ exists if and only if $\subst' = \mgu{\goals\sub{\loc}{\loctwo}}$ exists.

  If $\subst = \mgu{\goals}$ exists, moreover by \rlem{properties_of_mgus}
  we have that $\subst' = \subst'' \scomp \substtwo$ where $\substtwo$ is a renaming, 
  and $\subst''$ is a substitution such that $\subst''(\var) = \subst(\var)\sub{\loc}{\loctwo}$.
  Hence the situation is:
  \[
    \xymatrix@C=.2cm@R=.5cm{
      \prog_1 \alt \tm \alt \prog_2
      \ar@{=>}[dd]
     & \structeq & 
      \prog_1 \alt \tm\sub{\loc}{\loctwo} \alt \prog_2
      \ar@{=>}[d]
    \\
     &&
      \prog_1 \alt \progtwo'\sub{\loc}{\loctwo}\SUB{\subst'' \cdot \substtwo} \alt \prog_2
      \ar@{=}[d]
    \\
      \prog_1 \alt \progtwo'\SUB{\subst} \alt \prog_2
     & \structeq & 
      \prog_1 \alt \progtwo'\SUB{\subst}\sub{\loc}{\loctwo}\SUB{\substtwo} \alt \prog_2
    }
  \]
  The equivalence at the bottom is justified using \rulename{$\structeq$-loc}
  to rename $\loc$ to $\loctwo$,
  and \rulename{$\structeq$-var} to apply the renaming $\substtwo$.
  If $\mgu{\goals}$ fails, the situation is:
  \[
    \xymatrix@C=.2cm@R=.5cm{
      \prog_1 \alt \tm \alt \prog_2
      \ar@{=>}[d]
     & \structeq & 
      \prog_1 \alt \tm\sub{\var}{\vartwo} \alt \prog_2
      \ar@{=>}[d]
    \\
      \prog_1 \alt \prog_2
     & \structeq & 
      \prog_1 \alt \prog_2
    }
  \]
\end{enumerate}
\end{proof}

\subsection{Proof of \rprop{tait_martin_lof_technique} --- Tait--Martin-L\"of's Technique}
\lsec{appendix_tait_martin_lof_technique}

For the proofs, we work with the following \indrulename{Thread} rule and
the following variant of the \indrulename{Alt} rule, which is obviously equivalent
to the one in the main body of the paper:
\[
  \indrule{Thread}{
    \tm \topar{\goals} \prog
    \HS
    \prog' =
      \begin{cases}
      \prog\SUB{\subst} & \text{if $\subst = \mgu{\goals}$} \\
      \fail             & \text{if $\mgu{\goals}$ fails} \\
      \end{cases}
  }{
    \tm \topar{} \prog'
  }
  \HS
  \indrule{Alt'}{
    \tm \topar{} \prog
    \HS
    \progtwo \topar{} \progtwo'
  }{
    \tm\alt\progtwo \topar{} \prog \alt \progtwo'
  }
\]

\begin{lemma}[``$\toca{}\ \subseteq\ \topar{}\structeq$'']
\llem{single_step_included_in_parallel}
If $\tm \toca{} \prog$ then $\tm \topar{}\structeq \prog$.
\end{lemma}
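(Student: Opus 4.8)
The plan is to proceed by case analysis on which of the six rules of \rdef{operational_semantics_reduction_rules} justifies the step. Since the left-hand side $\tm$ is a single thread, the contracted redex must lie inside it, so we may write $\tm = \wctxof{R}$ for a weak context $\wctx$ and a redex $R$, and $\prog$ is the result of contracting $R$ within $\wctx$ (in the unification case followed by applying a most general unifier). In every case the recipe is the same: first exhibit a \emph{local} simultaneous reduction $R \topar{\goals} R'$ by the ``$2$''-labelled parallel rule matching the contracted redex, leaving every sibling subterm untouched through reflexivity (\rlem{properties_of_simultaneous_reduction}, item~1); then lift it to $\wctxof{R} \topar{\goals} \wctxof{R'}$ by context closure (\rlem{parallel_context_closure}); and finally pass to the program level via the \indrulename{Thread} rule, which applies $\mgu{\goals}$ to $\wctxof{R'}$.

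For the four rules carrying no unification constraint the collected goal set is $\goals = \emptyset$, so the \indrulename{Thread} rule contributes the identity substitution and the reduct coincides with $\prog$ up to the cosmetic issues below: \indrulename{alloc} uses \indrulename{Abs$^\codesym_2$}, \indrulename{beta} uses \indrulename{App$_2$}, \indrulename{guard} uses \indrulename{Guard$_2$} (reducing the tail reflexively), and \indrulename{fresh} uses \indrulename{Fresh$_2$} (reducing the body reflexively). For the two unification rules I would instead invoke \indrulename{Unif$_2$}, which propagates exactly the singleton goal set $\goals = \set{\val \unif \valtwo}$; the \indrulename{Thread} rule then reproduces the \indrulename{unif} reduct $\wctxof{\unit}\SUB{\subst}$ when $\mgu{\goals} = \subst$ and the \indrulename{fail} reduct $\fail$ when $\mgu{\goals}$ fails. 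In these last two cases the match with $\prog$ is a syntactic equality.

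The step I expect to need the most care is the fresh-name bookkeeping, where the ordinary and simultaneous relations differ only cosmetically. In \indrulename{fresh} the ordinary step renames the bound variable to a globally fresh $\vartwo$, yielding $\wctxof{\tmtwo\sub{\var}{\vartwo}}$, whereas \indrulename{Fresh$_2$} simply discards the binder $\nu\var$ and keeps $\var$ free, yielding $\wctxof{\tmtwo}$; since $\var \notin \fv{\wctx}$ by $\alpha$-convention and $\vartwo$ is fresh, the \indrulename{$\structeq$-var} axiom equates the two, which is precisely the slack afforded by the $\structeq$ in the statement. Dually, in \indrulename{alloc} both relations pick a fresh location, so choosing the same witness gives an equality and otherwise \indrulename{$\structeq$-loc} closes the gap. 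The only genuinely substantive point is to check that the goal set propagated by simultaneous reduction is exactly the unification problem solved by the ordinary step, so that \indrulename{Thread} applies the same $\mgu{\goals}$; this is immediate from the shape of \indrulename{Unif$_2$} together with context closure, and requires no deeper property of $\mgu{-}$.
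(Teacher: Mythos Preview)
Your proposal is correct and mirrors the paper's proof essentially step for step: a case analysis on the six reduction rules, in each case invoking the matching ``$2$''-labelled simultaneous rule on the redex, lifting through $\wctx$ by context closure, and finishing with the \indrulename{Thread} rule (which resolves the collected goals), with $\structeq$ absorbing the fresh-name discrepancies in the \indrulename{alloc} and \indrulename{fresh} cases. Your observation that the \indrulename{unif}/\indrulename{fail} cases yield a syntactic equality (both sides invoke the same deterministic $\mgu{-}$) is exactly how the paper handles them as well.
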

\begin{proof}
By case analysis on the rule used to conclude $\tm \toca{} \prog$.
\begin{enumerate}
\item \rulename{alloc}:
  $\wctxof{\lam{\var}{\prog}} \toca{} \wctxof{\laml{\loc}{\var}{\prog}}$ for some location
  $\loc \not\in \locs{\wctxof{\lam{\var}{\prog}}}$.
  Note that $\lam{\var}{\prog} \topar{\emptyset} \laml{\loctwo}{\var}{\prog}$ for an (a priori different)
  fresh location $\loctwo$ by rule \rulename{Abs$^\codesym_2$}.
  By context closure (\rlem{parallel_context_closure}), applying the \rulename{Thread} rule once,
  we have that
  $\wctxof{\lam{\var}{\prog}} \topar{} \wctxof{\laml{\loctwo}{\var}{\prog}} \structeq \wctxof{\laml{\loc}{\var}{\prog}}$
  as required.
  The last equivalence is justified renaming $\loctwo$ to $\loc$.
\item \rulename{beta}:
  $\wctxof{(\laml{\loc}{\var}{\prog})\,\val} \toca{} \wctxof{\prog\sub{\var}{\val}}$.
  Note that $(\laml{\loc}{\var}{\prog})\,\val \topar{\emptyset} \prog\sub{\var}{\val}$ by rule \rulename{App$_2$},
  so by context closure (\rlem{parallel_context_closure}), applying the \rulename{Thread} rule once, we conclude.
\item \rulename{fresh}:
  $\wctxof{\fresh{\var}{\tm}} \toca{} \wctxof{\tm\sub{\var}{\vartwo_1}}$ for some variable $\vartwo_1 \not\in \fv{\wctx}$.
  Note that $\fresh{\var}{\tm} \topar{\emptyset} \tm\sub{\var}{\vartwo_2}$ for an (a priori different) fresh variable $\vartwo_2$
  by rule \rulename{Fresh$_2$}. By context closure (\rlem{parallel_context_closure}), applying the
  \rulename{Thread} rule once,
  we have that $\wctxof{\fresh{\var}{\tm}} \topar{} \wctxof{\tm\sub{\var}{\vartwo_2}} \structeq \wctxof{\tm\sub{\var}{\vartwo_1}}$
  The last equivalence is justified renaming $\vartwo_2$ to $\vartwo_1$.
\item \rulename{guard}:
  $\wctxof{\val\seq\tm} \toca{} \wctxof{\tm}$. Note that
  $\val\seq\tm \topar{\emptyset} \tm$ by rule \rulename{Guard$_2$}.
  By context closure (\rlem{parallel_context_closure}), applying the \rulename{Thread} rule once,
  we have that $\wctxof{\val\seq\tm} \topar{} \wctxof{\tm}$ as required.
\item \rulename{unif}:
  Suppose that $\subst = \mgu{\val \unif \valtwo}$,
  and let $\wctxof{\val \unif \valtwo} \toca{} \wctxof{\unit}\SUB{\subst}$.
  Note that $\val \unif \valtwo \topar{\set{\val \unif \valtwo}} \unit$
  by rule \rulename{Unif$_2$}, so by context closure and applying the \rulename{Thread} 
  rule once we have that $\wctxof{\val \unif \valtwo} \topar{} \wctxof{\unit}\SUB{\subst}$,
  as required.
\item \rulename{fail}:
  Suppose that $\mgu{\val \unif \valtwo}$ fails,
  and let $\wctxof{\val \unif \valtwo} \toca{} \wctxof{\unit}\SUB{\subst}$.
  Note that $\val \unif \valtwo \topar{\set{\val \unif \valtwo}} \unit$
  by rule \rulename{Unif$_2$}, so by context closure and applying the \rulename{Thread} 
  rule once we have that $\wctxof{\val \unif \valtwo} \topar{} \fail$,
  as required.
\end{enumerate}
\end{proof}

\begin{lemma}[``$\topar{}\ \subseteq\ \rtoca{}\structeq$'']
\llem{parallel_included_in_many_step}
Let $\tm \topar{\goals} \prog$. Given any weak context $\wctx$ and any substitution $\substa$ we have:
\begin{enumerate}
\item If $\subst = \mgu{\goals\SUB{\substa}}$, then $\wctxof{\tm}\SUB{\substa} \mathrel{\rtoca{}\structeq} \wctxof{\prog}\SUB{\substa\cdot\subst}$.
\item If $\mgu{\goals\SUB{\substa}}$ fails, then $\wctxof{\tm}\SUB{\substa} \mathrel{\rtoca{}\structeq} \fail$.
\end{enumerate}
\end{lemma}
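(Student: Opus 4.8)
The plan is to proceed by induction on the derivation of $\tm \topar{\goals} \prog$, proving both items at once and, crucially, generalizing over \emph{all} weak contexts $\wctx$ and substitutions $\substa$: this strengthening is what lets the induction go through, since the binary congruence rules accumulate goals by union and the argument must track the substitution already performed by the unifications done so far. Before the case analysis I would record two facts to be used freely. First, $\rtoca{}\structeq$ is transitive, which follows from $\structeq$ being a strong bisimulation for $\toca{}$~(\rlem{reduction_modulo_structeq}): a trailing $\structeq$ can always be commuted past a subsequent reduction sequence. Second, $\toca{}$ is closed both under weak contexts (its schemas are already framed by a $\wctxof{\cdot}$, and weak contexts compose) and under the toplevel alternative—from $\prog_a \rtoca{} \prog_a'$ one gets $\prog_b \alt \prog_a \alt \prog_c \rtoca{} \prog_b \alt \prog_a' \alt \prog_c$ by instantiating the framing $\prog_1,\prog_2$ in each rule schema—while $\structeq$ is a congruence for $\alt$~(\rlem{basic_properties_structeq}). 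Everywhere I would push $\substa$ through binders using \rlem{properties_of_substitution}, $\alpha$-renaming bound variables and locations away from $\substa$, and use that substitution preserves values.

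The easy cases I would dispatch quickly. The reflexive rules \rulename{Var}, \rulename{Cons}, \rulename{Fresh$_1$}, \rulename{Abs$^\codesym_1$}, \rulename{Abs$^\allocsym$} have $\goals = \emptyset$, so $\subst$ is the identity and the claim holds in zero steps (item~2 is vacuous, as $\mgu{\emptyset}$ never fails). Each remaining leaf rule is a single reduction step: \rulename{Abs$^\codesym_2$} is one \rulename{alloc} step, reconciling the two fresh locations by \rulename{$\structeq$-loc}; \rulename{App$_2$} is one \rulename{beta} step; and \rulename{Unif$_2$}, with $\goals = \set{\val \unif \valtwo}$, is one \rulename{unif} step when $\subst = \mgu{\val\SUB{\substa} \unif \valtwo\SUB{\substa}}$ exists (using $\unit\SUB{\subst} = \unit$) and one \rulename{fail} step when it fails, which settles item~2. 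The two prefix congruences are one step plus the induction hypothesis: for \rulename{Fresh$_2$} I would fire \rulename{fresh}, absorbing the renaming of the introduced variable into $\structeq$ via \rulename{$\structeq$-var}, and for \rulename{Guard$_2$} I would fire \rulename{guard}; in both cases one reaches $\wctxof{\tm}\SUB{\substa}$ and then applies the induction hypothesis to the premise $\tm \topar{\goals} \prog$ with the same $\wctx$ and $\substa$ (item~2 chains the single step with the failing instance of the hypothesis).

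The heart of the proof is the binary congruences \rulename{App$_1$}, \rulename{Guard$_1$}, \rulename{Unif$_1$}, which I would treat uniformly; take \rulename{App$_1$}, with $\tm\,\tmtwo \topar{\goals \cup \goalstwo} \prog\,\progtwo$ from $\tm \topar{\goals} \prog$ and $\tmtwo \topar{\goalstwo} \progtwo$. By compositionality of most general unifiers~(\rlem{mgu_compositional}), $\subst = \mgu{(\goals \cup \goalstwo)\SUB{\substa}}$ exists iff $\subst_1 = \mgu{\goals\SUB{\substa}}$ and $\subst_2 = \mgu{\goalstwo\SUB{\substa\scomp\subst_1}}$ both exist, and then $\subst = \subst_1\scomp\subst_2\scomp\substtwo$ for some renaming $\substtwo$. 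First I would reduce the left component, applying the induction hypothesis to $\tm$ with weak context $\wctxof{\ctxhole\,\tmtwo}$ and substitution $\substa$, obtaining $\wctxof{\tm\,\tmtwo}\SUB{\substa} \rtoca{}\structeq \wctxof{\prog\,\tmtwo}\SUB{\substa\scomp\subst_1}$, which by distributing the context over $\prog = \bigalt_i \prog_i$ is $\bigalt_i \wctxof{\prog_i\,\tmtwo}\SUB{\substa\scomp\subst_1}$. Then I would reduce the right component \emph{in each thread}: writing $\progtwo = \bigalt_j \progtwo_j$, the induction hypothesis applied to $\tmtwo$ with context $\wctxof{\prog_i\,\ctxhole}$ and substitution $\substa\scomp\subst_1$—whose associated unifier is precisely the common $\subst_2$—gives $\bigalt_i\bigalt_j \wctxof{\prog_i\,\progtwo_j}\SUB{\substa\scomp\subst_1\scomp\subst_2}$. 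Since $\wctxof{\prog\,\progtwo}\SUB{\substa\scomp\subst} = \bigalt_i\bigalt_j \wctxof{\prog_i\,\progtwo_j}\SUB{\substa\scomp\subst_1\scomp\subst_2\scomp\substtwo}$ and $\substtwo$ is a renaming, the program reached is $\structeq$ to the target by repeated \rulename{$\structeq$-var}, and transitivity of $\rtoca{}\structeq$ closes item~1. For item~2, \rlem{mgu_compositional} locates the failure either in $\subst_1$—so the induction hypothesis (item~2) on $\tm$ already sends the whole thread to $\fail$—or in $\subst_2$ after $\subst_1$ succeeds, in which case, having reduced the left component as above, the induction hypothesis (item~2) on $\tmtwo$ sends every thread to $\fail$, so the whole program reduces to $\fail$.

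The main obstacle, as anticipated, is the bookkeeping in these binary cases: I must sequence the two sub-reductions correctly—reducing the left component in \emph{all} the threads it spawns before touching any copy of the right one—justify through \rlem{mgu_compositional} that the second unifier is computed relative to the accumulated substitution $\substa\scomp\subst_1$ (and is the same $\subst_2$ in every thread), and finally reconcile the accumulated $\substa\scomp\subst_1\scomp\subst_2$ with the target $\substa\scomp\subst$, whose sole discrepancy is the renaming $\substtwo$ supplied by \rlem{mgu_compositional} and absorbed into $\structeq$ by \rulename{$\structeq$-var}. Everything else is routine once the closure and transitivity observations of the first paragraph are in hand.
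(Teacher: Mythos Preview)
Your proposal is correct and follows essentially the same approach as the paper: induction on the derivation of $\tm \topar{\goals} \prog$, generalizing over $\wctx$ and $\substa$, with the binary congruence cases handled by sequencing the two inductive hypotheses via \rlem{mgu_compositional} and absorbing the residual renaming into $\structeq$. The only noticeable deviation is in \rulename{Guard$_2$}, where you fire \rulename{guard} once and then apply the inductive hypothesis, whereas the paper applies the inductive hypothesis first (under the context $\wctxof{\val\seq\ctxhole}$) and then fires \rulename{guard} in each resulting thread; your order is slightly simpler but both are fine.
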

\begin{proof}
By induction on the derivation of $\tm \topar{\goals} \prog$:
\begin{enumerate}
\item \rulename{Var}:
  Note that $\mgu{\emptyset}$ is the identity substitution, so $\wctxof{\var}\SUB{\substa} \rtoca{} \wctxof{\var}\SUB{\substa}$
  in zero steps.
\item \rulename{Cons}:
  Immediate, similar to the \rulename{Var} case.
\item \rulename{Fresh$_1$}:
  Immediate, similar to the \rulename{Var} case.
\item \rulename{Fresh$_2$}:
  Let $\fresh{\var}{\tm} \topar{\goals} \prog$ be derived from
  $\tm \topar{\goals} \prog$, where $\var$ is a fresh variable.
  Moreover, let $\var' \not\in \fv{\wctx\SUB{\substa}}$.
  Then we have that:
  \[
    \wctxof{\fresh{\var}{\tm}}\SUB{\substa}
    =
    \wctx\SUB{\substa}\ctxof{\fresh{\var'}{\tm\sub{\var}{\var'}\SUB{\substa}}}
    \toca{fresh}
    \wctx\SUB{\substa}\ctxof{\tm\sub{\var}{\var'}\SUB{\substa}}
    =
    \wctxof{\tm\sub{\var}{\var'}}\SUB{\substa}
    \structeq
    \wctxof{\tm}\SUB{\substa} 
  \]
  There are two subcases, depending on whether $\mgu{\goals\SUB{\substa}}$ exists:
  \begin{enumerate}
  \item If $\subst = \mgu{\goals\SUB{\substa}}$,
        then by \ih, $\wctxof{\tm}\SUB{\substa} \rtoca{}\structeq \wctxof{\prog}\SUB{\substa\scomp\subst}$,
        so since $\structeq$ is a strong bisimulation (\rlem{reduction_modulo_structeq}),
        $\wctxof{\fresh{\var}{\tm}}\SUB{\substa} \rtoca{}\structeq \wctxof{\prog}\SUB{\substa\scomp\subst}$
        as required.
  \item If $\mgu{\goals\SUB{\substa}}$ fails,
        then by \ih, $\wctxof{\tm}\SUB{\substa} \rtoca{}\structeq \fail$,
        so since $\structeq$ is a strong bisimulation (\rlem{reduction_modulo_structeq}),
        $\wctxof{\fresh{\var}{\tm}}\SUB{\substa} \rtoca{}\structeq \fail$
        as required.
  \end{enumerate}
\item \rulename{Abs$^\codesym_1$}:
  Immediate, similar to the \rulename{Var} case.
\item \rulename{Abs$^\codesym_2$}:
  Let $\lam{\var}{\prog} \topar{\emptyset} \laml{\loc}{\var}{\prog}$,
  where $\loc$ is a fresh location.
  Moreover, let $\loc' \not\in \locs{\wctxof{\lam{\var}{\prog}}\SUB{\substa}}$.
  Then:
  \[
    \wctxof{\lam{\var}{\prog}}\SUB{\substa}
    =
    \wctx\SUB{\substa}\ctxof{\lam{\var}{\prog\SUB{\substa}}}
    \toca{alloc}
    \wctx\SUB{\substa}\ctxof{\laml{\loc'}{\var}{\prog\SUB{\substa}}}
    \structeq
    \wctx\SUB{\substa}\ctxof{\laml{\loc}{\var}{\prog\SUB{\substa}}}
    =
    \wctxof{\laml{\loc}{\var}{\prog}}\SUB{\substa}
  \]
  so $\wctxof{\lam{\var}{\prog}}\SUB{\substa} \rtoca{}\structeq \wctxof{\laml{\loc}{\var}{\prog}}\SUB{\substa}$.
  Note that $\mgu{\emptyset}$ is the identity substitution, so we are done.
\item \rulename{Abs$^\allocsym$}:
  Immediate, similar to the \rulename{Var} case.
\item \rulename{App$_1$}:
  Let $\tm\,\tmtwo \topar{\goals \cup \goalstwo} \prog\,\progtwo$
  be derived from
  $\tm \topar{\goals} \prog$ and $\tmtwo \topar{\goalstwo} \progtwo$.
  We consider two subcases,
  depending on whether $\mgu{\goals\SUB{\substa}}$ exists:
  \begin{enumerate}
  \item
    If $\subst = \mgu{\goals\SUB{\substa}}$ exists.
    Let us write $\prog = \bigalt_{i=1}^{n} \tm_i$.
    Then applying the \ih for the term $\tm$ under the weak context $\wctxof{\ctxhole\,\tmtwo}$,
    we have that
    $\wctxof{\tm\,\tmtwo}\SUB{\substa}
       \rtoca{}\structeq \wctxof{\prog\,\tmtwo}\SUB{\substa\scomp\subst}
       = \bigalt_{i=1}^{n} \wctxof{\tm_i\,\tmtwo}\SUB{\substa\scomp\subst}$.
    We consider two further subcases,
    depending on whether $\mgu{\goalstwo\SUB{\substa\scomp\subst}}$ exists:
    \begin{enumerate}
    \item
      If $\substtwo = \mgu{\goalstwo\SUB{\substa\scomp\subst}}$ exists,
      then applying the \ih for each $1 \leq i \leq n$, for the term $\tmtwo$
      under the weak context $\wctxof{\tm_i\,\ctxhole}$,
      we have that
      $\wctxof{\tm_i\,\tmtwo}\SUB{\substa\scomp\subst} \rtoca{}\structeq \wctxof{\tm_i\,\progtwo}\SUB{\substa\scomp\subst\scomp\substtwo}$.
      Moreover, by the compositionality property (\rlem{mgu_compositional})
      we have that $\substthree = \mgu{\goals\SUB{\substa} \cup \goalstwo\SUB{\substa}}$
      exists, and it is a renaming of $\subst\scomp\substtwo$.
      In summary, we have:
      \[
        \begin{array}{rrll}
        \wctxof{\tm\,\tmtwo}\SUB{\substa}
           & \rtoca{}\structeq & \wctxof{\prog\,\tmtwo}\SUB{\substa\scomp\subst}
                                 & \text{by \ih on $\tm$} \\
           & =                 & \bigalt_{i=1}^{n} \wctxof{\tm_i\,\tmtwo}\SUB{\substa\scomp\subst} \\
           & \rtoca{}\structeq & \bigalt_{i=1}^{n} \wctxof{\tm_i\,\progtwo}\SUB{\substa\scomp\subst\scomp\substtwo}
                                 & \text{by \ih on $\tmtwo$} \\
           & =                 & \wctxof{\prog\,\progtwo}\SUB{\substa\scomp\subst\scomp\substtwo} \\
           & \structeq         & \wctxof{\prog\,\progtwo}\SUB{\substa\scomp\substthree}
        \end{array}
      \]
      so since $\structeq$ is a strong bisimulation (\rlem{reduction_modulo_structeq}),
      $\wctxof{\tm\,\tmtwo}\SUB{\substa} \rtoca{}\structeq \wctxof{\prog\,\progtwo}\SUB{\substa\scomp\substthree}$,
      as required.
    \item
      If $\mgu{\goalstwo\SUB{\substa\scomp\subst}}$ fails,
      then applying the \ih for each $1 \leq i \leq n$, for the term $\tmtwo$
      under the weak context $\wctxof{\tm_i\,\tmtwo}$,
      we have that
      $\wctxof{\tm_i\,\tmtwo}\SUB{\substa\scomp\subst} \rtoca{}\structeq \fail$.
      Moreover, by the compositionality property (\rlem{mgu_compositional})
      we have that $\mgu{\goals\SUB{\substa} \cup \goalstwo\SUB{\substa}}$
      also fails, so we have:
      \[
        \begin{array}{rrll}
        \wctxof{\tm\,\tmtwo}\SUB{\substa}
          & \rtoca{}\structeq & \wctxof{\prog\,\tmtwo}\SUB{\substa\scomp\subst}
                                & \text{by \ih on $\tm$} \\
          & =                 & \bigalt_{i=1}^{n} \wctxof{\tm_i\,\tmtwo}\SUB{\substa\scomp\subst} \\
          & \rtoca{}\structeq & \bigalt_{i=1}^{n} \fail & \text{by \ih on $\tmtwo$} \\
          & =                 & \fail
        \end{array}
      \]
      so since $\structeq$ is a strong bisimulation (\rlem{reduction_modulo_structeq}),
      $\wctxof{\tm\,\tmtwo}\SUB{\substa} \rtoca{}\structeq \fail$,
      as required.
    \end{enumerate}
  \item
    If $\mgu{\goals\SUB{\substa}}$ fails,
    then applying the \ih for the term $\tm$ under the weak context $\wctxof{\ctxhole\,\tmtwo}$
    we have that $\wctxof{\tm\,\tmtwo}\SUB{\substa} \rtoca{}\structeq \fail$.
    Moreover, by the compositionality property (\rlem{mgu_compositional})
    we have that $\mgu{\goals\SUB{\substa} \cup \goalstwo\SUB{\substa}}$
    also fails, so we are done.
  \end{enumerate}
\item \rulename{App$_2$}:
  Let $(\lam{\var}{\prog})\,\val \topar{\emptyset} \prog\sub{\var}{\val}$.
  Then since $\mgu{\emptyset}$ is the identity substitution we have:
  \[
    \wctxof{(\lam{\var}{\prog})\,\val}\SUB{\substa}
    =
    \wctx\SUB{\substa}\ctxof{(\lam{\var}{\prog\SUB{\substa}})\,\val\SUB{\substa}}
    \toca{beta}
    \wctx\SUB{\substa}\ctxof{\prog\SUB{\substa}\sub{\var}{\val\SUB{\substa}}}
    =
    \wctx\ctxof{\prog\sub{\var}{\val}}\SUB{\substa}
  \]
  This concludes this case. The fact that $\val\SUB{\substa}$ is indeed a value
  (required to be able to apply the \rulename{beta} rule),
  and the last equality are justified by \rlem{properties_of_substitution}.
\item \rulename{Guard$_1$}:
  Similar to the \rulename{App$_1$} case.
\item \rulename{Guard$_2$}:
  Let $\val\seq\tm \topar{\goals} \prog$ be derived from $\tm \topar{\goals} \prog$.
  Let us write $\prog = \bigalt_{i=1}^{n} \tm_i$.
  We consider two cases, depending on whether $\mgu{\goals\SUB{\substa}}$ exists:
  \begin{enumerate}
  \item
    If $\subst = \mgu{\goals\SUB{\substa}}$ exists,
    then applying the \ih on the term $\tm$ under the context $\wctxof{\val\seq\ctxhole}$ we have
    that $\wctxof{\val\seq\tm}\SUB{\substa} \rtoca{}\structeq \wctxof{\val\seq\prog}\SUB{\substa\scomp\subst}$.
    Moreover, by \rlem{properties_of_substitution}, $\val\SUB{\substa\scomp\subst}$ is a value
    so we may apply the \rulename{guard} rule:
    \[
      \begin{array}{rrll}
      \wctxof{\val\seq\tm}\SUB{\substa}
        & \rtoca{}\structeq & \wctxof{\val\seq\prog}\SUB{\substa\scomp\subst}
                              & \text{by \ih on $\tm$} \\
        & =                 & \wctx\SUB{\substa\scomp\subst}\ctxof{\val\SUB{\substa\scomp\subst}\seq\prog\SUB{\substa\scomp\subst}} \\
        & =                 & \bigalt_{i=1}^{n} \wctx\SUB{\substa\scomp\subst}\ctxof{\val\SUB{\substa\scomp\subst}\seq\tm_i\SUB{\substa\scomp\subst}} \\
        & \rtoca{guard}     & \bigalt_{i=1}^{n} \wctx\SUB{\substa\scomp\subst}\ctxof{\tm_i\SUB{\substa\scomp\subst}} \\
        & =                 & \bigalt_{i=1}^{n} \wctxof{\tm_i}\SUB{\substa\scomp\subst} \\
        & =                 & \wctxof{\prog}\SUB{\substa\scomp\subst} \\
      \end{array}
    \]
    so since $\structeq$ is a strong bisimulation (\rlem{reduction_modulo_structeq}),
    we have that $\wctxof{\val\seq\tm}\SUB{\substa} \rtoca{}\structeq \wctxof{\prog}\SUB{\substa\scomp\subst}$
    as required.
  \item
    If $\mgu{\goals\SUB{\substa}}$ fails,
    then applying the \ih on the term $\tm$ under the context $\wctxof{\val\seq\ctxhole}$ we have
    that $\wctxof{\val\seq\tm}\SUB{\substa} \rtoca{}\structeq \fail$,
    as required.
  \end{enumerate}
\item \rulename{Unif$_1$}:
  Similar to the \rulename{App$_1$} case.
\item \rulename{Unif$_2$}:
  Let $\val \unif \valtwo \topar{\set{\val \unif \valtwo}} \unit$.
  We consider two cases, depending on whether $\mgu{\set{\val\SUB{\substa} \unif \valtwo\SUB{\substa}}}$ exists:
  \begin{enumerate}
  \item
    If $\subst = \mgu{\set{\val\SUB{\substa} \unif \valtwo\SUB{\substa}}}$ exists,
    note that by \rlem{properties_of_substitution},
    $\val\SUB{\substa\scomp\subst}$ and $\valtwo\SUB{\substa\scomp\subst}$ are values
    and we may apply the \rulename{unif} rule:
    \[
      \begin{array}{rrll}
      \wctxof{\val \unif \valtwo}\SUB{\substa}
        & =           & \wctx\SUB{\substa}\ctxof{\val\SUB{\substa} \unif \valtwo\SUB{\substa}}
      \\
        & \toca{unif} & \wctx\SUB{\substa}\ctxof{\unit}\SUB{\subst}
      \\
        & =           & \wctxof{\unit}\SUB{\substa\scomp\subst}
      \end{array}
    \]
    so $\wctxof{\val \unif \valtwo}\SUB{\substa} \rtoca{}\structeq \wctxof{\unit}\SUB{\substa\scomp\subst}$ as required.
  \item
    If $\mgu{\set{\val\SUB{\substa} \unif \valtwo\SUB{\substa}}}$ fails,
    note that by \rlem{properties_of_substitution},
    $\val\SUB{\substa\scomp\subst}$ and $\valtwo\SUB{\substa\scomp\subst}$ are values
    and we may apply the \rulename{fail} rule:
    \[
      \begin{array}{rrll}
      \wctxof{\val \unif \valtwo}\SUB{\substa}
        & =           & \wctx\SUB{\substa}\ctxof{\val\SUB{\substa} \unif \valtwo\SUB{\substa}}
      \\
        & \toca{fail} & \fail
      \end{array}
    \]
    so $\wctxof{\val \unif \valtwo}\SUB{\substa} \rtoca{}\structeq \fail$ as required.
  \end{enumerate}
\end{enumerate}
\end{proof}

\begin{lemma}[Values are irreducible]
\llem{simultaneous_reduction_of_a_value}
Let $\val \topar{\goals} \prog$ with $\val$ a value.
Then $\goals = \emptyset$ and $\prog = \val$. 
\end{lemma}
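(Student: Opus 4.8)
The plan is to argue by a straightforward induction on the structure of the value $\val$, which amounts to an inversion on the last rule used to derive $\val \topar{\goals} \prog$. Recall that a value has one of three syntactic shapes: a variable $\var$, an allocated abstraction $\laml{\loc}{\var}{\prog'}$, or a structure $\cons\,\val_1\hdots\val_n$ (with $n \geq 0$, the case $n = 0$ being a bare constructor). For each shape I would identify the inference rules whose conclusion can possibly match a value on the left, and read off $\goals$ and $\prog$.

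The base cases are immediate by inspection. If $\val = \var$, the only rule whose left-hand side is a variable is \indrulename{Var}, forcing $\goals = \emptyset$ and $\prog = \var$. If $\val = \laml{\loc}{\var}{\prog'}$, the only applicable rule is \indrulename{Abs$^\allocsym$}; note that \indrulename{Abs$^\codesym_1$} and \indrulename{Abs$^\codesym_2$} require an unallocated abstraction $\lam{\var}{\prog'}$, which is not a value. This gives $\goals = \emptyset$ and $\prog = \val$. If $\val = \cons$, the only rule is \indrulename{Cons}, again with $\goals = \emptyset$ and $\prog = \cons$.

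The one inductive case is a structure $\cons\,\val_1\hdots\val_n$ with $n > 0$, which is syntactically the application $(\cons\,\val_1\hdots\val_{n-1})\,\val_n$. Here the key observation --- and the only point requiring care --- is that rule \indrulename{App$_2$} cannot apply, since its left-hand side demands that the function part be an allocated abstraction $\laml{\loc}{\var}{\prog'}$, whereas the head $\cons\,\val_1\hdots\val_{n-1}$ is a structure, and structures and allocated abstractions are disjoint syntactic forms. Hence the last rule must be the congruence \indrulename{App$_1$}, derived from $\cons\,\val_1\hdots\val_{n-1} \topar{\goals_1} \prog_1$ and $\val_n \topar{\goals_2} \prog_2$ with $\goals = \goals_1 \cup \goals_2$ and $\prog = \prog_1\,\prog_2$.

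Both premises are simultaneous reductions of values, since the immediate subterms of a value are again values, so the induction hypothesis yields $\goals_1 = \goals_2 = \emptyset$, together with $\prog_1 = \cons\,\val_1\hdots\val_{n-1}$ and $\prog_2 = \val_n$. Unfolding the lifted application on singleton programs then gives $\goals = \emptyset$ and $\prog = (\cons\,\val_1\hdots\val_{n-1})\,\val_n = \val$, closing the induction. The whole argument is elementary; the only obstacle worth flagging is the exclusion of \indrulename{App$_2$}, that is, recording that the head of a structure is never an allocated abstraction.
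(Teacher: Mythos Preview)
Your proposal is correct and follows essentially the same approach as the paper's proof, which is a straightforward induction on $\val$ noting that only the rules \indrulename{Var}, \indrulename{Cons}, \indrulename{Abs$^\allocsym$}, and \indrulename{App$_1$} can apply. You have simply spelled out the details, including the key exclusion of \indrulename{App$_2$} in the structure case.
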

\begin{proof}
Straightforward by induction on $\val$.
Note that the only rules that may be applied are
\rulename{Var}, \rulename{Cons},
\rulename{Abs$^\allocsym$}, and \rulename{App$_1$}.
\end{proof}

\begin{lemma}[Diamond property]
\llem{diamond_property}
Let $\tm \topar{\goals_1} \bigalt_{i=1}^{n} \tm_i$ and $\tm \topar{\goals_2} \bigalt_{j=1}^{m} \tm^\star_j$.
Then there exist two sets of goals $\goals'_1$ and $\goals'_2$,
and programs $\prog_1,\hdots,\prog_n$ and $\prog^\star_1,\hdots,\prog^\star_m$ such that:
\begin{enumerate}
\item $\tm_i \topar{\goals'_2}\structeq \prog_i$ for all $1 \leq i \leq n$;
\item $\tm^\star_j \topar{\goals'_1}\structeq \prog^\star_j$ for all $1 \leq j \leq m$;
\item $\alt_{i=1}^{n} \prog_i \permeq \alt_{j=1}^{m} \prog^\star_j$
      where ``$\permeq$'' denotes the least equivalence generated by the \rulename{$\structeq$-swap} axiom,
      \ie structural equivalence allowing only permutation of threads;
\item $\goals_1 \cup \goals'_2 = \goals_2 \cup \goals'_1$.
\end{enumerate}
\end{lemma}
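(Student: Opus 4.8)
The plan is to prove the statement by structural induction on the term $\tm$, doing a case analysis on the pair of rules applied at the root of the two derivations $\tm \topar{\goals_1} \bigalt_{i=1}^{n} \tm_i$ and $\tm \topar{\goals_2} \bigalt_{j=1}^{m} \tm^\star_j$. Since the rules defining $\topar{\goals}$ are syntax-directed by the head of $\tm$, each head symbol admits only a small fixed set of rules, so the analysis is finite. The base cases $\tm = \var$ and $\tm = \cons$ are immediate: only the identity rules \indrulename{Var}/\indrulename{Cons} apply, forcing $n = m = 1$ and $\tm_1 = \tm^\star_1 = \tm$, and we take all residual goals empty and both joining programs equal to $\tm$.

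For the unary constructors I would treat the homogeneous and mixed subcases uniformly. For $\tm = \fresh{\var}{\tm_0}$: if both derivations use \indrulename{Fresh$_1$} the conclusion is trivial; if both use \indrulename{Fresh$_2$} we apply the induction hypothesis to $\tm_0$; and in the mixed case the \indrulename{Fresh$_1$} side simply performs the \indrulename{Fresh$_2$} step on its single thread, landing on the other reduct, with the diverging fresh variable absorbed by \indrulename{$\structeq$-var} inside the $\structeq$ of items~1--2. The case $\tm = \lam{\var}{\prog}$ (rules \indrulename{Abs$^\codesym_1$}, \indrulename{Abs$^\codesym_2$}, \indrulename{Abs$^\allocsym$}) is analogous; the only genuinely new phenomenon is two allocations to distinct fresh locations $\loc, \loc'$, which I would reconcile by renaming both joining programs to a common fresh location via \indrulename{$\structeq$-loc}, so that item~3 holds up to permutation rather than requiring $\structeq$. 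The redex cases \indrulename{App$_2$} on $(\laml{\loc}{\var}{\prog})\,\val$ and \indrulename{Unif$_2$} on $\val\unif\valtwo$ crucially use \rlem{simultaneous_reduction_of_a_value}: the allocated abstraction and the value argument are irreducible, so the only competing root step is the congruence rule, which returns the redex unchanged, and we join by firing the redex on the congruence side. For \indrulename{Unif$_2$} this is exactly where bookkeeping item~4 earns its keep: the congruence side propagates the goal $\set{\val\unif\valtwo}$ as its residual $\goals'$, matching the $\goals$ generated on the other side, so that $\goals_1 \cup \goals'_2 = \goals_2 \cup \goals'_1$.

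The technical heart is the binary-operator congruence cases $\tm = \tm_0 \star \tmtwo_0$, with $\star$ ranging over application, guarded expression, and unification, where both derivations use the congruence rule \indrulename{App$_1$}, \indrulename{Guard$_1$}, or \indrulename{Unif$_1$}. Here the two reducts are the ``products'' $\bigalt_{i,k}(\tm_{0,i}\star\tmtwo_{0,k})$ and $\bigalt_{j,l}(\tm^\star_{0,j}\star\tmtwo^\star_{0,l})$, so the threads named in the statement are themselves products of threads. The plan is to invoke the induction hypothesis separately on the two reductions of $\tm_0$ and on the two reductions of $\tmtwo_0$, obtaining thread-wise joining data and residual goal sets for each factor, and then assemble the product join: each product thread $\tm_{0,i}\star\tmtwo_{0,k}$ reduces by the corresponding congruence rule using the two componentwise reductions, and the residual goals combine as unions, whence item~4 follows from the two componentwise goal identities. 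The resulting double sums agree only after reordering, which is precisely what $\permeq$ in item~3 permits; this reordering is the formal counterpart of the associativity/commutativity mismatch highlighted in \rsec{technical_challenges}. The mixed congruence/redex subcase for the guarded expression, where the \indrulename{Guard$_1$} side still contracts a redex inside $\tmtwo_0$, is handled by combining the induction-hypothesis join on $\tmtwo_0$ with the root \indrulename{guard} step, pushed through thread by thread.

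I expect the main obstacle to be exactly this product-of-threads bookkeeping in the binary-operator cases: one must simultaneously exhibit each product thread's parallel reduction, verify the global permutation equivalence of the two double sums, and check that the residual goal sets satisfy $\goals_1 \cup \goals'_2 = \goals_2 \cup \goals'_1$ after unioning the componentwise contributions. Making the indices and the goal unions line up---while keeping the $\structeq$-absorption of fresh names confined to items~1--2 and reserving item~3 for pure permutation---is where the care is needed; the remaining cases are routine once \rlem{simultaneous_reduction_of_a_value} is in hand.
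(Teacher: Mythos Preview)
Your proposal is correct and follows essentially the same approach as the paper: structural induction on $\tm$ with a case analysis on the pair of root rules, using \rlem{simultaneous_reduction_of_a_value} to collapse the mixed congruence/redex cases and handling the binary-congruence cases by applying the induction hypothesis componentwise and reassembling via the product of threads. The only cosmetic difference is that in the \indrulename{Abs$^\codesym_2$}/\indrulename{Abs$^\codesym_2$} case the paper renames one side's location to match the other (absorbing the renaming into the $\structeq$ of item~1) rather than renaming both to a third fresh location, but either choice works.
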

\begin{proof}
By induction on $\tm$:
\begin{enumerate}
\item {\em Variable, $\tm = \var$.}
  The only rule that applies is \rulename{Var}, \ie $\var \topar{\emptyset} \var$,
  so this case is trivial.
  More precisely, we have that $n = m = 1$ and $\tm_1 = \tm^\star_1 = \var$,
  with $\goals_1 = \goals_2 = \emptyset$,
  so taking $\goals'_1 = \goals'_2 = \emptyset$ and $\prog_1 = \prog^\star_1 = \var$
  it is straightforward to check that all the properties hold.
\item {\em Constructor, $\tm = \cons$.}
  Immediate, similar to the variable case.
\item {\em Fresh variable declaration, $\tm = \fresh{\var}{\tmtwo}$.}
  There are four cases, depending on whether each of the simultaneous steps
  is deduced by \rulename{Fresh$_1$} or \rulename{Fresh$_2$}:
  \begin{enumerate}
  \item \rulename{Fresh$_1$}/\rulename{Fresh$_1$}:
    Immediate, similar to the variable case.
  \item \rulename{Fresh$_1$}/\rulename{Fresh$_2$}:
    Let $\fresh{\var}{\tmtwo} \topar{\emptyset} \fresh{\var}{\tmtwo}$ be derived by rule \rulename{Fresh$_1$}
    (so that $n = 1$, $\tm_1 = \fresh{\var}{\tmtwo}$, and $\goals_1 = \emptyset$),
    and let $\fresh{\var}{\tmtwo} \topar{\goals_2} \bigalt_{j=1}^{m} \tm^\star_j$
    be derived by rule \rulename{Fresh$_2$} from
    $\tmtwo \topar{\goals_2} \tm^\star_j$.
    Then taking $\goals'_1 := \emptyset$, $\goals'_2 := \goals_2$,
    $\prog_1 := \bigalt_{j=1}^{m} \tm^\star_j$
    and $\prog^\star_j := \tm^\star_j$ for each $1 \leq j \leq m$,
    using reflexivity for terms (\rlem{properties_of_simultaneous_reduction})
    we have:
    \[
      \indrule{Fresh$_2$}{}{\tm_1 = \fresh{\var}{\tmtwo} \topar{\goals_2} \bigalt_{j=1}^{m} \tm^\star_j}
      \HS
      \indrule{}{\text{(\rlem{properties_of_simultaneous_reduction})}}{\tm^\star_j \topar{\emptyset} \tm^\star_j}
    \]
  \item \rulename{Fresh$_2$}/\rulename{Fresh$_1$}:
    Symmetric to the previous case (\rulename{Fresh$_1$}/\rulename{Fresh$_2$}).
  \item \rulename{Fresh$_2$}/\rulename{Fresh$_2$}:
    Let $\fresh{\var}{\tmtwo} \topar{\goals_1} \bigalt_{i=1}^{n} \tm_i$
    be derived by rule \rulename{Fresh$_2$} from $\tmtwo \topar{\goals_1} \bigalt_{i=1}^{n} \tm_i$,
    and let $\fresh{\var}{\tmtwo} \topar{\goals_1} \bigalt_{j=1}^{m} \tm_j$
    be derived by rule \rulename{Fresh$_2$} from $\tmtwo \topar{\goals_2} \bigalt_{j=1}^{n} \tm^\star_j$.
    Then by \ih on $\tmtwo$ there exist sets of goals $\goals'_1, \goals'_2$ and programs
    $\prog_1,\hdots,\prog_n,\prog^\star_1,\hdots,\prog^\star_m$ such that:
    \[
      \tm_i \topar{\goals'_2}\structeq \prog_i
      \HS
      \tm^\star_i \topar{\goals'_1}\structeq \prog_j
      \HS
      \bigalt_{i=1}^{n} \prog_i \permeq \bigalt_{j=1}^{m} \prog_j
      \HS
      \goals_1 \cup \goals'_2 = \goals_2 \cup \goals'_1
    \]
    which concludes this subcase.
  \end{enumerate}
\item {\em Abstraction code, $\tm = \lam{\var}{\prog}$.}
  There are four cases, depending on whether each of the simultaneous steps
  is deduced by \rulename{Abs$^\codesym_1$} or \rulename{Abs$^\codesym_2$}:
  \begin{enumerate}
  \item \rulename{Abs$^\codesym_1$}/\rulename{Abs$^\codesym_1$}:
    Immediate, similar to the variable case. 
  \item \rulename{Abs$^\codesym_1$}/\rulename{Abs$^\codesym_2$}:
    Let $\lam{\var}{\prog} \topar{\emptyset} \lam{\var}{\prog}$
    be derived from rule \rulename{Abs$^\codesym_1$}, and
    let $\lam{\var}{\prog} \topar{\emptyset} \laml{\loc}{\var}{\prog}$
    be derived from rule \rulename{Abs$^\codesym_2$}, where $\loc$ is a fresh location.
    Note that $n = m = 1$ and $\goals_1 = \goals_2 = \emptyset$.
    Taking $\goals'_1 = \goals'_2 = \emptyset$,
    for some fresh location $\loc'$, we have that:
    \[
      \indrule{Abs$^\codesym_2$}{}{\lam{\var}{\prog} \topar{\emptyset} \laml{\loc'}{\var}{\prog} \structeq \laml{\loc}{\var}{\prog}}
      \HS
      \indrule{Abs$^\allocsym$}{}{\laml{\loc}{\var}{\prog} \topar{\emptyset} \laml{\loc}{\var}{\prog}}
    \]
    which concludes this subcase. 
  \item \rulename{Abs$^\codesym_2$}/\rulename{Abs$^\codesym_1$}:
    Symmetric to the previous case (\rulename{Abs$^\codesym_1$}/\rulename{Abs$^\codesym_2$}).
  \item \rulename{Abs$^\codesym_2$}/\rulename{Abs$^\codesym_2$}:
    Let $\lam{\var}{\prog} \topar{\emptyset} \laml{\loc_1}{\var}{\prog}$
    and $\lam{\var}{\prog} \topar{\emptyset} \laml{\loc_2}{\var}{\prog}$
    be derived from rule \rulename{Abs$^\codesym_2$}, where $\loc_1$ and $\loc_2$ are fresh locations.
    Note that $n = m = 1$ and $\goals_1 = \goals_2 = \emptyset$.
    Taking $\goals'_1 = \goals'_2 = \emptyset$ we have that:
    \[
      \indrule{Abs$^\allocsym$}{}{\laml{\loc_1}{\var}{\prog} \topar{\emptyset} \laml{\loc_1}{\var}{\prog} \structeq \laml{\loc_2}{\var}{\prog}}
      \HS
      \indrule{Abs$^\allocsym$}{}{\laml{\loc_2}{\var}{\prog} \topar{\emptyset} \laml{\loc_2}{\var}{\prog}}
    \]
  \end{enumerate}
\item {\em Allocated abstraction, $\tm = \lam{\var}{\tmtwo}$.}
  Immediate, similar to the variable case.
\item {\em Application, $\tm = \tmtwo\,\tmthree$.}
  There are four cases, depending on whether each of the simultaneous steps
  is deduced by \rulename{App$^\codesym_1$} or \rulename{App$^\codesym_2$}:
  \begin{enumerate}
  \item \rulename{App$_1$}/\rulename{App$_1$}:
    This subcase is heavy to write---we give a detailed proof---but actually it
    follows directly by resorting to the inductive hypothesis.
    Let $\tmtwo\,\tmthree \topar{\goals_1 \cup \goalstwo_1} \bigalt_{i=1}^{n} \bigalt_{i'=1}^{n'} \tmtwo_i\,\tmthree_{i'}$
    be derived by rule \rulename{App$_1$}
    from $\tmtwo \topar{\goals_1} \bigalt_{i=1}^{n} \tmtwo_i$
    and $\tmthree \topar{\goalstwo_1} \bigalt_{i'=1}^{n'} \tmthree_{i'}$.
    Similarly,
    let $\tmtwo\,\tmthree \topar{\goals_2 \cup \goalstwo_2} \bigalt_{j=1}^{m} \bigalt_{j'=1}^{m'} \tmtwo^\star_j\,\tmthree^\star_{j'}$
    be derived by rule \rulename{App$_1$}
    from $\tmtwo \topar{\goals_2} \bigalt_{j=1}^{m} \tmtwo^\star_j$
    and $\tmthree \topar{\goalstwo_2} \bigalt_{j'=1}^{m'} \tmthree^\star_{j'}$.

    By \ih on $\tmtwo$, we have that there are sets of goals $\goals'_1, \goals'_2$
    and programs $\prog_1,\hdots,\prog_n,\prog^\star_1,\hdots,\prog^\star_m$
    such that for each $1 \leq i \leq n$ and each $1 \leq j \leq m$:
    \[
      \tmtwo_i \topar{\goals'_2}\structeq \prog_i
      \HS
      \tmtwo^\star_j \topar{\goals'_1}\structeq \prog^\star_j
      \HS
      \bigalt_{i=1}^{n} \prog_i \permeq \bigalt_{j=1}^{m} \prog^\star_j
      \HS
      \goals_1 \cup \goals'_2 = \goals_2 \cup \goals'_1
    \]
    Similarly, by \ih on $\tmthree$, we have that there are sets of goals $\goalstwo'_1, \goalstwo'_2$
    and programs $\progtwo_1,\hdots,\progtwo_{n'},\progtwo^\star_1,\hdots,\progtwo^\star_{m'}$
    such that for each $1 \leq i' \leq n'$ and each $1 \leq j' \leq m'$:
    \[
      \tmthree_{i'} \topar{\goalstwo'_2}\structeq \progtwo_{i'}
      \HS
      \tmthree^\star_{j'} \topar{\goalstwo'_1}\structeq \progtwo^\star_{j'}
      \HS
      \bigalt_{i'=1}^{n'} \progtwo_{i'} \permeq \bigalt_{j'=1}^{m'} \progtwo^\star_{j'}
      \HS
      \goalstwo_1 \cup \goalstwo'_2 = \goalstwo_2 \cup \goalstwo'_1
    \]
    This implies that, for each $1 \leq i \leq n$, $1 \leq j \leq m$, $1 \leq i' \leq n'$, and $1 \leq j' \leq m'$:
    \[
      \indrule{App$_1$}{}{
        \tmtwo_i\,\tmthree_{i'} \topar{\goals'_2 \cup \goalstwo'_2}\structeq \prog_i\,\progtwo_{i'}
      }
      \indrule{App$_1$}{}{
        \tmtwo^\star_j\,\tmthree^\star_{j'} \topar{\goals'_1 \cup \goalstwo'_1}\structeq \prog^\star_j\,\progtwo^\star_{j'}
      }
    \]
    Moreover, note that $\bigalt_{i=1}^{n} \bigalt_{i'=1}^{n'} \prog_i\,\progtwo_{i'} \permeq \bigalt_{j=1}^{m} \bigalt_{j'=1}^{m'} \prog^\star_j\,\progtwo^\star_{j'}$,
    and that $\goals_1 \cup \goalstwo_1 \cup \goals'_2 \cup \goalstwo'_2 = \goals_2 \cup \goalstwo_2 \cup \goals'_1 \cup \goalstwo'_1$.
    This concludes this subcase.
  \item \rulename{App$_1$}/\rulename{App$_2$}:
    Note that $\tmtwo = \laml{\loc}{\var}{\bigalt_{i=1}^{n} \tmfour_i}$ and $\tmthree = \val$,
    which are both values.
    Using the fact that a value only reduces to itself with an empty set of goals (\rlem{simultaneous_reduction_of_a_value}),
    let $(\laml{\loc}{\var}{\bigalt_{i=1}^{n} \tmfour_i})\,\val \topar{\emptyset} (\laml{\loc}{\var}{\bigalt_{i=1}^{n} \tmfour_i})\,\val$
    be derived by \rulename{App$_1$}
    from $\laml{\loc}{\var}{\bigalt_{i=1}^{n} \tmfour_i} \topar{\emptyset} \laml{\loc}{\var}{\bigalt_{i=1}^{n} \tmfour_i}$
    and $\val \topar{\emptyset} \val$.
    Moreover, let $(\laml{\loc}{\var}{\bigalt_{i=1}^{n} \tmfour_i})\,\val \topar{\emptyset} \bigalt_{i=1}^{n} \tmfour_i\sub{\var}{\val}$
    be derived by \rulename{App$_2$}.
    It is then easy to conclude this subcase noting that, for each $1 \leq i \leq n$,
    using reflexivity for terms (\rlem{properties_of_simultaneous_reduction}), we have:
    \[
      \indrule{App$_2$}{}{
        \left(\laml{\loc}{\var}{\bigalt_{i=1}^{n} \tmfour_i}\right)\,\val \topar{\emptyset} \bigalt_{i=1}^{n} \tmfour_i\sub{\var}{\val}
      }
      \indrule{}{\text{(\rlem{properties_of_simultaneous_reduction})}}{
        \tmfour_i\sub{\var}{\val} \topar{\emptyset} \tmfour_i\sub{\var}{\val}
      }
    \]
  \item \rulename{App$_2$}/\rulename{App$_1$}:
    Symmetric to the previous case (\rulename{App$_1$}/\rulename{App$_2$}).
  \item \rulename{App$_2$}/\rulename{App$_2$}:
    There is only one way to derive a reduction using rule \rulename{App$_2$},
    namely $(\lam{\var}{\bigalt_{i=1}^{n} \tmtwo_i})\,\val \topar{\emptyset} \bigalt_{i=1}^{n} \tmtwo_i\sub{\var}{\val}$.
    It is then easy to conclude this subcase noting that, for each $1 \leq i \leq n$,
    using reflexivity for terms (\rlem{properties_of_simultaneous_reduction}), we have:
    \[
      \indrule{}{\text{(\rlem{properties_of_simultaneous_reduction})}}{
        \tmtwo_i\sub{\var}{\val} \topar{\emptyset} \tmtwo_i\sub{\var}{\val} 
      }
    \]
  \end{enumerate}
\item {\em Guarded expression, $\tm = (\tmtwo\seq\tmthree)$.}
  There are four cases, depending on whether each of the simultaneous steps
  is deduced by \rulename{Guard$_1$} or \rulename{Guard$_2$}:
  \begin{enumerate}
  \item \rulename{Guard$_1$}/\rulename{Guard$_1$}:
    This subcase follows directly by resorting to the inductive hypothesis,
    similar to the \rulename{App$_1$}/\rulename{App$_1$} case.
  \item \rulename{Guard$_1$}/\rulename{Guard$_2$}:
    Note that $\tmtwo$ must be a value $\tmtwo = \val$.
    Using the fact that a value only reduces to itself with an empty set of goals (\rlem{simultaneous_reduction_of_a_value}),
    let $\val\seq\tmthree \topar{\goals_1} \bigalt_{i=1}^{n} \val\seq\tmthree_i$
    be derived by \rulename{Guard$_1$} from $\tmthree \topar{\goals_1} \bigalt_{i=1}^{n} \tmthree_i$.
    Moreover, let $\tmtwo\seq\tmthree = \val\seq\tmthree \topar{\goals_2} \bigalt_{j=1}^{m} \tmthree^\star_j$
    be derived from $\tmthree \topar{\goals_2} \bigalt_{j=1}^{m} \tmthree^\star_j$.
    By \ih on $\tmthree$,
    there are sets of goals $\goals'_1,\goals'_2$ and programs $\prog_1,\hdots,\prog_n,\prog^\star_1,\hdots,\prog^\star_m$
    such that for each $1 \leq i \leq n$ and $1 \leq j \leq m$:
    \[
      \tmthree_i \topar{\goals'_2}\structeq \prog_i
      \HS
      \tmthree^\star_j \topar{\goals'_1}\structeq \prog^\star_i
      \HS
      \bigalt_{i=1}^{n} \prog_i \permeq \bigalt_{j=1}^{m} \prog^\star_j
      \HS
      \goals_1 \cup \goals'_2 = \goals_2 \cup \goals'_1
    \]
    To conclude this subcase, note that moreover:
    \[
      \indrule{Guard$_2$}{
        \tmthree_i \topar{\goals'_2}\structeq \prog_i
      }{
        \val\seq\tmthree_i \topar{\goals'_2}\structeq \prog_i
      }
    \]
  \item \rulename{Guard$_2$}/\rulename{Guard$_1$}:
    Symmetric to the previous case (\rulename{Guard$_1$}/\rulename{Guard$_2$}).
  \item \rulename{Guard$_2$}/\rulename{Guard$_2$}:
    Straightforward by \ih. More precisely, let $\val\seq\tmthree \topar{\goals_1} \bigalt_{i=1}^{n} \tmthree_i$
    be derived from $\tmthree \topar{\goals_1} \bigalt_{i=1}^{n} \tmthree_i$
    and, similarly,
    let $\val\seq\tmthree \topar{\goals_2} \bigalt_{j=1}^{m} \tmthree^\star_j$
    be derived from $\tmthree \topar{\goals_2} \bigalt_{j=1}^{m} \tmthree^\star_j$.
    By \ih on $\tmthree$, there are sets of goals $\goals'_1,\goals'_2$ and programs
    $\prog_1,\hdots,\prog_n,\prog^\star_1,\hdots,\prog^\star_m$ such that
    for each $1 \leq i \leq n$ and $1 \leq j \leq m$:
    \[
      \tmthree_i \topar{\goals'_2}\structeq \prog_i
      \HS
      \tmthree^\star_j \topar{\goals'_1}\structeq \prog^\star_i
      \HS
      \bigalt_{i=1}^{n} \prog_i \permeq \bigalt_{j=1}^{m} \prog^\star_j
      \HS
      \goals_1 \cup \goals'_2 = \goals_2 \cup \goals'_1
    \]
    which concludes this subcase.
  \end{enumerate}
\item {\em Unification, $\tm = (\tmtwo\unif\tmthree)$.}
  There are four cases, depending on whether each of the simultaneous steps
  is deduced by \rulename{Unif$_1$} or \rulename{Unif$_2$}:
  \begin{enumerate}
  \item \rulename{Unif$_1$}/\rulename{Unif$_1$}:
    This subcase follows directly by resorting to the inductive hypothesis,
    similar to the \rulename{App$_1$}/\rulename{App$_1$} case.
  \item \rulename{Unif$_1$}/\rulename{Unif$_2$}:
    Note that $\tmtwo$ and $\tmthree$ must both be values,
    \ie $\tmtwo = \val$ and $\tmthree = \valtwo$.
    Using the fact that a value only reduces to itself with an empty set of goals (\rlem{simultaneous_reduction_of_a_value}),
    let $\val \unif \valtwo \topar{\emptyset} \val \unif \valtwo$
    be derived by \rulename{Unif$_1$} from $\val \topar{\emptyset} \val$ and $\valtwo \topar{\emptyset} \valtwo$,
    and
    let $\val \unif \valtwo \topar{\set{\val \unif \valtwo}} \unit$ be derived by \rulename{Unif$_2$}.
    To conclude this subcase, note that:
    \[
      \indrule{Unif$_2$}{}{
        \val \unif \valtwo \topar{\set{\val \unif \valtwo}} \unit
      }
      \indrule{Cons}{}{
        \unit \topar{\emptyset} \unit
      }
    \]
  \item \rulename{Unif$_2$}/\rulename{Unif$_1$}:
    Symmetric to the previous case (\rulename{Unif$_1$}/\rulename{Unif$_2$}).
  \item \rulename{Unif$_2$}/\rulename{Unif$_2$}:
    There is a unique way that the reduction may be derived from rule \rulename{Unif$_2$},
    namely $\val \unif \valtwo \topar{\set{\val \unif \valtwo}} \unit$. To
    conclude this case, note that:
    \[
      \indrule{Cons}{}{
        \unit \topar{\emptyset} \unit
      }
    \]
  \end{enumerate}
\end{enumerate}
\end{proof}

\noindent We now turn to the proof of \rprop{tait_martin_lof_technique} itself:
\begin{enumerate}
\item
  Item 1. of the proposition is precisely \rlem{single_step_included_in_parallel}.
\item
  For item 2. of the proposition, let $\prog \topar{} \progtwo$,
  and proceed by induction on $\prog$.
  If $\prog = \fail$, then $\progtwo = \fail$, and indeed $\prog \rtoca{} \progtwo$
  with the empty reduction sequence.
  If $\prog = \tm \alt \prog'$, then $\progtwo = \progthree \alt \progtwo'$
  where $\tm \topar{} \progthree$ and $\prog' \topar{} \progtwo'$.
  This in turn means that $\tm \topar{\goals} \progthree'$ in such a way that:
  \[
    \progthree \eqdef \begin{cases}
                        \progthree'\SUB{\subst} & \text{if $\subst = \mgu{\goals}$} \\
                        \fail                   & \text{if $\mgu{\goals}$ fails.} \\
                      \end{cases}
  \]
  Then:
  \[
     \begin{array}{rcll}
       \tm \alt \prog'
     & \rtoca{}\structeq &
       \progthree \alt \prog'
       & \text{by \rlem{parallel_included_in_many_step}}
     \\
     & \rtoca{}\structeq &
       \progthree \alt \progtwo'
       & \text{by \ih}
     \end{array}
  \]
  Using the fact that $\structeq$ is a strong bisimulation (\rlem{reduction_modulo_structeq}),
  this implies that
  $\prog = \tm \alt \prog' \rtoca{}\structeq \progthree \alt \progtwo' = \progtwo$,
  as required.
\item
  For item 3. of the proposition
  let $\prog \topar{} \prog_1$ and $\prog \topar{} \prog_2$,
  and proceed by induction on $\prog$.
  If $\prog = \fail$ then $\prog_1 = \prog_2 = \fail$ and the diamond may be closed
  with $\fail \topar{} \fail$ on each side.
  If $\prog = \tm \alt \prog'$
  then $\prog_1 = \progtwo_1 \alt \prog'_1$
  where $\tm \topar{} \progtwo_1$ and $\prog' \topar{} \prog'_1$,
  and similarly
  $\prog_2 = \progtwo_2 \alt \prog'_2$
  where $\tm \topar{} \progtwo_2$ and $\prog' \topar{} \prog'_2$.
  By \ih there are programs $\prog'_3, \prog''_3$ such that $\prog'_1 \topar{} \prog'_3$
  and $\prog'_2 \topar{} \prog''_3 \structeq \prog'_3$.
  Moreover $\tm \topar{\goals_1} \bigalt_{i=1}^{n}{\tm_i}$
  and $\tm \topar{\goals_2} \bigalt_{j=1}^{m}{\tm^\star_j}$ in such a way that:
  \[
    \progtwo_1 = \begin{cases}
                   \bigalt_{i=1}^{n}{\tm_i\SUB{\subst_1}} & \text{if $\subst_1 = \mgu{\goals_1}$} \\
                   \fail                                  & \text{if $\mgu{\goals_1}$ fails} \\
                 \end{cases}
    \HS
    \progtwo_2 = \begin{cases}
                   \bigalt_{j=1}^{m}{\tm^\star_j\SUB{\subst_2}} & \text{if $\subst_2 = \mgu{\goals_2}$} \\
                   \fail                                        & \text{if $\mgu{\goals_2}$ fails} \\
                 \end{cases}
  \]
  By \rlem{diamond_property}, there exist sets of goals $\goals'_1, \goals'_2$
  and programs $\progthree_1,\hdots,\progthree_n,\progthree^\star_1,\hdots,\progthree^\star_m$
  such that, for each $1 \leq i \leq n$ and $1 \leq j \leq m$:
  \[
    \tm_i \topar{\goals'_2} \progthree_i
    \HS
    \tm^\star_j \topar{\goals'_1} \progthree^\star_j
    \HS
    \bigalt_{i=1}^{n} \progthree_i \permeq \bigalt_{j=1}^{m} \progthree^\star_j
    \HS
    \goals_1 \cup \goals'_2 = \goals_2 \cup \goals'_1
  \]
  We consider two subcases, depending on whether $\mgu{\goals_1}$ exists:
  \begin{enumerate}
  \item
    If $\subst_1 = \mgu{\goals_1}$ exists, then by \rlem{parallel_substitution}
    we have that $\tm_i\SUB{\subst_1} \topar{\goals'_2\SUB{\subst_1}} \progthree_i\SUB{\subst_1}$
    for each $1 \leq i \leq n$. We consider two further subcases, depending on whether
    $\mgu{\goals'_2\SUB{\subst_1}}$ exists:
    \begin{enumerate}
    \item
      If $\substtwo_1 = \mgu{\goals'_2\SUB{\subst_1}}$ exists, then
      by the compositionality property (\rlem{mgu_compositional})
      we have that $\mgu{\goals_1 \cup \goals'_2} = \mgu{\goals_2 \cup \goals'_1}$
      also exists, and it is a renaming of $\subst_1\scomp\substtwo_1$.
      Again, by the compositionality property (\rlem{mgu_compositional}),
      this in turn implies that $\subst_2 = \mgu{\goals_2}$ and $\substtwo_2 = \mgu{\goals'_1\SUB{\subst_2}}$ 
      both exist, and $\subst_2\scomp\substtwo_2$ is a renaming of $\subst_1\scomp\substtwo_1$,
      \ie $\subst_2\scomp\substtwo_2 = \subst_1\scomp\substtwo_1\scomp\substthree$ 
      for some renaming $\substthree$. So
      by \rlem{parallel_substitution}
      we have that $\tm^\star_j\SUB{\subst_2} \topar{\goals'_1\SUB{\subst_2}} \progthree^\star_i\SUB{\subst_2}$
      for each $1 \leq j \leq m$, and the situation is:
      \[
        \xymatrix{
          \tm \alt \prog'
            \ar@{=>}[rr]
            \ar@{=>}[d]
        &
        &
          \bigalt_{i=1}^{n} \tm_i\SUB{\subst_1} \alt \prog'_1
            \ar@{=>}[d]
        \\
          \bigalt_{j=1}^{m} \tm^\star_j\SUB{\subst_2} \alt \prog'_2
            \ar@{=>}[r]
        &
          \bigalt_{j=1}^{m} \progthree^\star_j\SUB{\subst_2\scomp\substtwo_2} \alt \prog''_3
          \HS\structeq\!\!\!\!
        &
          \bigalt_{i=1}^{n} \progthree_i\SUB{\subst_1\scomp\substtwo_1} \alt \prog'_3
        }
      \]
      The structural equivalence at the bottom of the diagram is justified as follows:
      \[
        \begin{array}{rcll}
          \bigalt_{j=1}^{m} \progthree^\star_j\SUB{\subst_2\scomp\substtwo_2} \alt \prog''_3
          & \permeq &
          \bigalt_{i=1}^{n} \progthree_i\SUB{\subst_2\scomp\substtwo_2} \alt \prog''_3
          &
          \text{since $\bigalt_{j=1}^{m} \progthree^\star_j \permeq \bigalt_{i=1}^{n} \progthree_i$}
        \\
          & \structeq &
          \bigalt_{i=1}^{n} \progthree_i\SUB{\subst_1\scomp\substtwo_1} \alt \prog''_3
          &
          \text{since $\subst_2\scomp\substtwo_2 = \subst_1\scomp\substtwo_2\scomp\substthree$}
        \\
          & \structeq &
          \bigalt_{i=1}^{n} \progthree_i\SUB{\subst_1\scomp\substtwo_1} \alt \prog'_3
          &
          \text{since $\prog''_3 \structeq \prog'_3$}
        \end{array}
      \]
    \item
      If $\mgu{\goals'_2\SUB{\subst_1}}$ fails, then
      by the compositionality property (\rlem{mgu_compositional})
      we have that $\mgu{\goals_1 \cup \goals'_2} = \mgu{\goals_2 \cup \goals'_1}$
      also fails.
      Again, by the compositionality property (\rlem{mgu_compositional}),
      this in turn implies that either $\subst_2 = \mgu{\goals_2}$ fails or $\mgu{\goals'_1\SUB{\subst_2}}$ fails.
      On one hand, if $\mgu{\goals_2}$ fails, the situation is:
      \[
        \xymatrix{
          \tm \alt \prog'
            \ar@{=>}[r]
            \ar@{=>}[d]
        &
          \bigalt_{i=1}^{n} \tm_i\SUB{\subst_1} \alt \prog'_1
            \ar@{=>}[d]
        \\
          \prog'_2
            \ar@{=>}[r]
        &
          \prog''_3
          \structeq
          \prog'_3
        }
      \]
      On the other hand, if $\subst_2 = \mgu{\goals_2}$ exists and $\mgu{\goals'_1\SUB{\subst_2}}$, the situation is:
      \[
        \xymatrix{
          \tm \alt \prog'
            \ar@{=>}[r]
            \ar@{=>}[d]
        &
          \bigalt_{i=1}^{n} \tm_i\SUB{\subst_1} \alt \prog'_1
            \ar@{=>}[d]
        \\
          \bigalt_{j=1}^{m} \tm^\star_j\SUB{\subst_2} \alt \prog'_2
            \ar@{=>}[r]
        &
          \prog''_3
          \structeq
          \prog'_3
        }
      \]
    \end{enumerate}
  \item
    If $\subst_1 = \mgu{\goals_1}$ fails,
    then by the compositionality property (\rlem{mgu_compositional})
    we have that $\mgu{\goals_1 \cup \goals'_2} = \mgu{\goals_2 \cup \goals'_1}$
    also fails. Again by the compositionality property (\rlem{mgu_compositional})
    this implies that either $\subst_2 = \mgu{\goals_2}$ fails
    or $\substtwo_2 = \mgu{\goals'_1\SUB{\subst_2}}$ fails.
    On one hand, if $\mgu{\goals_2}$ fails, the situation is:
    \[
      \xymatrix{
        \tm \alt \prog'
          \ar@{=>}[r]
          \ar@{=>}[d]
      &
        \prog'_1
          \ar@{=>}[d]
      \\
        \prog'_2
          \ar@{=>}[r]
      &
        \prog''_3
        \structeq
        \prog'_3
      }
    \]
    On the other hand, if $\subst_2 = \mgu{\goals_2}$ exists and $\mgu{\goals'_1\SUB{\subst_2}}$, the situation is:
    \[
      \xymatrix{
        \tm \alt \prog'
          \ar@{=>}[r]
          \ar@{=>}[d]
      &
        \prog'_1
          \ar@{=>}[d]
      \\
        \bigalt_{j=1}^{m} \tm^\star_j\SUB{\subst_2} \alt \prog'_2
          \ar@{=>}[r]
      &
        \prog''_3
        \structeq
        \prog'_3
      }
    \]
  \end{enumerate}
\end{enumerate}

\subsection{Proof of \rprop{subject_reduction} --- Subject Reduction}
\lsec{appendix_subject_reduction}

\begin{definition}[Typing unification problems]
We define the judgment $\tctx \vdash \goals$ for
each unification problem $\goals$ as follows:
\[
  \indrule{}{
    \tctx \vdash \val_i \unif \valtwo_i : \constyp{\unit}
    \HS
    \text{for all $i = 1..n$}
  }{
    \tctx \vdash \set{\val_1 \unif \valtwo_1, \hdots, \val_n \unif \valtwo_n}
  }
\]
\end{definition}

\begin{lemma}[Subject reduction for the unification algorithm]
\llem{mgu_subject_reduction}
Let $\tctx \vdash \goals$ and suppose that $\goals \tounifa{} \goalstwo$
is a step that does not fail. Then $\tctx \vdash \goalstwo$.
\end{lemma}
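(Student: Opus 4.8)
The plan is to prove the statement by case analysis on the unification rule used to derive the step $\goals \tounifa{} \goalstwo$. Since the step does not fail, the applicable rules are \indrulename{u-delete}, \indrulename{u-orient}, \indrulename{u-match-lam}, \indrulename{u-match-cons}, and \indrulename{u-eliminate}; the failing rules \indrulename{u-clash} and \indrulename{u-occurs-check} are excluded by hypothesis. Throughout I unfold the judgment $\tctx \vdash \goals$: by definition it asserts that every goal $\val \unif \valtwo \in \goals$ satisfies $\tctx \vdash \val \unif \valtwo : \constyp{\unit}$, which by inversion of the \indrulename{t-unif} rule means precisely that there is a type $\typ$ with $\tctx \vdash \val : \typ$ and $\tctx \vdash \valtwo : \typ$. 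It therefore suffices, in each case, to exhibit a common type for the two sides of each goal appearing in $\goalstwo$.

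Three of the cases are immediate. For \indrulename{u-delete} and \indrulename{u-match-lam} the reduct is obtained by discarding one goal, so the typing of the surviving goals is unaffected. For \indrulename{u-orient} the goal $\val \unif \var$ is replaced by $\var \unif \val$; since the two sides of a goal play symmetric roles in \indrulename{t-unif}, the common type witnessing $\tctx \vdash \val \unif \var : \constyp{\unit}$ also witnesses $\tctx \vdash \var \unif \val : \constyp{\unit}$.

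The case \indrulename{u-match-cons} is the first requiring genuine work: the goal $\cons\,\val_1\hdots\val_n \unif \cons\,\valtwo_1\hdots\valtwo_n$ is replaced by $\val_1 \unif \valtwo_1, \hdots, \val_n \unif \valtwo_n$. I would first establish a short inversion lemma for iterated constructor applications, stating that if $\tctx \vdash \cons\,\val_1\hdots\val_n : \typ$ then, peeling off the applications with \indrulename{t-app} from the fixed typing $\tctx \vdash \cons : \constyp{\cons}$ supplied by \indrulename{t-cons}, the type $\constyp{\cons}$ must have the shape $\typ_1 \to \cdots \to \typ_n \to \typ$ with $\tctx \vdash \val_i : \typ_i$ for each $i$. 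Applying this to both $\cons\,\val_1\hdots\val_n$ and $\cons\,\valtwo_1\hdots\valtwo_n$ at their common type $\typ$, and using that the decomposition of an arrow type into domain and codomain is unique, forces $\val_i$ and $\valtwo_i$ to share the type $\typ_i$; hence each new goal $\val_i \unif \valtwo_i$ is typable at $\constyp{\unit}$.

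Finally, for \indrulename{u-eliminate} the goal $\var \unif \val$ is retained and every other goal is subjected to the substitution $\sub{\var}{\val}$ (the side condition guarantees $\var \notin \fv{\val}$). The retained goal is unchanged, hence still typable. For the substituted goals, from $\tctx \vdash \var \unif \val : \constyp{\unit}$ I read off a type $\typ$ with $(\var:\typ) \in \tctx$ and $\tctx \vdash \val : \typ$; writing $\tctx = \tctx_0, \var:\typ$, the Substitution lemma (\rlem{substitution}) yields $\tctx_0 \vdash (\val' \unif \valtwo')\sub{\var}{\val} : \constyp{\unit}$ for every goal $\val' \unif \valtwo'$ of $\goals$, and Weakening (\rlem{weakening}) restores the binding for $\var$, giving typing under $\tctx$. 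I expect \indrulename{u-match-cons} to be the main obstacle: not because it is deep, but because it rests on the generation/inversion lemma for constructor applications, which must be stated and threaded carefully. The \indrulename{u-eliminate} case, although it is conceptually the heart of unification, reduces cleanly to the already-available Substitution and Weakening lemmas.
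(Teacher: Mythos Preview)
Your proposal is correct and follows the same approach as the paper, which simply states that the proof is ``routine by case analysis on the transition $\goals \tounifa{} \goalstwo$, using~\rlem{type_system_basic_properties}.'' One small refinement: in the \indrulename{u-eliminate} case, to invoke Substitution you need $\tctx_0 \vdash \val : \typ$ rather than $\tctx \vdash \val : \typ$, which follows by Strengthening (\rlem{strengthening}) using the side condition $\var \notin \fv{\val}$.
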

\begin{proof}
Routine by case analysis on the transition $\goals \tounifa{} \goalstwo$,
using~\rlem{type_system_basic_properties}.
\end{proof}

We turn to the proof of \rprop{subject_reduction} itself.
The proof proceeds by case analysis, depending on the rule applied to conclude
that $\prog \toca{} \progtwo$.
Most cases are straightforward using using~\rlem{type_system_basic_properties}.
The only interesting case is when applying the \rulename{unif} rule.
Then we have that:
\[
  \prog_1 \alt \wctxof{\val\unif\valtwo} \alt \prog_2
  \toca{unif}
  \prog_1 \alt \wctxof{\unit}\SUB{\subst} \alt \prog_2    
\]
where $\subst = \mgu{\set{\val\unif\valtwo}}$.
Moreover, by hypothesis the program is typable, \ie
\[
  \tctx \vdash \prog_1 \alt \wctxof{\val\unif\valtwo} \alt \prog_2 : \typ
\]
By \rlem{program_composition} the following holds for some type $\typtwo$:
\[
  \tctx \vdash \prog_1 : \typ
  \HS
  \tctx, \ctxhole : \typtwo \vdash \wctx : \typ
  \HS
  \tctx \vdash \val\unif\valtwo : \typtwo
  \HS
  \tctx \vdash \prog_2 : \typ
\]
The third judgment can only be derived using the \rulename{t-unif} rule,
so necessarily $\typtwo = \constyp{\unit}$,
and in particular $\tctx \vdash \wctxof{\unit} : \typ$ by
contextual substitution (\rlem{contextual_substitution}).
Note that $\tctx \vdash \set{\val \unif \valtwo}$.
Moreover the most general unifier exists by hypothesis, so
the unification algorithm terminates,
\ie there is a finite sequence of $n \geq 0$ steps:
\[
  \set{\val \unif \valtwo} =
  \goals_0
  \tounifa{}
  \goals_1
  \tounifa{}
  \hdots
  \tounifa{}
  \goals_n = \set{\var_1 \unif \val'_1, \hdots, \var_n \unif \val'_n}
\]
such that for all $i,j$ we have that $\var_i \neq \var_j$
and $\var_i \not\in \fv{\val'_j}$.
Moreover
$\sigma = \mgu{\set{\val \unif \valtwo}} =
 \set{\var_1 \mapsto \val'_1, \hdots, \var_n \mapsto \val'_n}$.
Recall that the unification algorithm preserves typing
(\rlem{mgu_subject_reduction}) so for each $i=1..n$
there is a type $\typthree_i$ such that
$\tctx \vdash \var_i : \typthree_i$
and
$\tctx \vdash \val'_i : \typthree_i$
hold.
This means that
$\tctx$ is of the form
$\tctxtwo,\var_1:\typthree_1,\hdots,\var_n:\typthree_n$.
By repeatedly applying the substitution property (\rlem{substitution}),
we conclude that
$\tctxtwo \vdash
  \wctxof{\unit}\sub{\var_1}{\val'_1}\hdots\sub{\var_n}{\val'_n} : \typ$,
that is
$\tctxtwo \vdash \wctxof{\unit}\SUB{\subst} : \typ$.
Finally, applying \rlem{weakening}
we obtain that the following judgment holds, as required:
\[
  \tctx \vdash \prog_1 \alt \wctxof{\unit}\SUB{\subst} \alt \prog_2 : \typ
\]

\subsection{Proof of \rprop{properties_of_the_denotation} --- Properties of the denotational semantics}
\lsec{appendix_properties_of_the_denotation}

Let us introduce some auxiliary notation.
We write $\fctx$, $\fctx'$, etc. for sequences of
variables ($\fctx = \var_1^{\typ_1},\hdots,\var_n^{\typ_n}$)
without repetition.
% Vectorial notation
If $\vec{\typ} = (\typ_1,\hdots,\typ_n)$ is a sequence of types,
we write $\semtyp{\vec{\typ}}$ for
$\semtyp{\typ_1} \times \hdots \times \semtyp{\typ_n}$.
If $\vec{\var} = (\var_1,\hdots,\var_n)$
is a sequence of variable names,
we write $\vec{\var}^{\vec{\typ}}$
for the sequence $(\var_1^{\typ_1},\hdots,\var_n^{\typ_n})$.
Moreover,
if $\vec{\obj} = (\obj_1,\hdots,\obj_n) \in \semtyp{\vec{\typ}}$
then we write $\asg\asgextend{\vec{\var}}{\vec{\obj}}$
for $\asg\asgextend{\var_1}{\obj_1}\hdots\asgextend{\var_n}{\obj_n}$.
Sometimes we treat sequences of variables as sets, when the
order is not relevant.
If $X$ is a term or a program we define $\semf{X}{\asg}$ as follows,
by induction on $\fctx$:
\[
  \begin{array}{rcl}
    \semf[\emptyctx]{X}{\asg}
  & \eqdef &
    \seme{X}{\asg}
  \\
    \semf[\var^\typ, \fctx]{X}{\asg}
  & \eqdef &
    \set{\objtwo \ST \obj \in \semtyp{\typ},
                     \objtwo \in \semf{X}{\asg\asgextend{\var}{\obj}}}
  \end{array}
\]

\noindent
The following lemma generalizes the {\bf Irrelevance} property of \rlem{irrelevance}.
An easy corollary of this lemma is that
$\seme{\prog}{} = \semf[\fv{\prog}]{\prog}{\asg}$,
whatever be the environment $\asg$.

\begin{lemma}[Irrelevance --- proof of \rlem{irrelevance}, point 1]
\llem{appendix:irrelevance}
Let $\vdash X : \typ$ be a typable term or program.
\begin{enumerate}
\item
  If $\asg, \asg'$ are environments
  that agree on $\fv{X} \setminus \fctx$,
  \ie for any variable $\var^\typtwo \in \fv{X} \setminus \fctx$
  one has that $\asg(\var^\typtwo) = \asg'(\var^\typtwo)$,
  then $\semf{X}{\asg} = \semf{X}{\asg'}$.
\item
  Let $\fctx, \fctx'$ be sequences of variables
  such that $\fv{X} \setminus \fctx = \fv{X} \setminus \fctx'$.
  Then $\semf{X}{\asg} = \semf[\fctx']{X}{\asg}$.
\end{enumerate}
\end{lemma}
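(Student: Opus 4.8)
The plan is to reduce everything to the special case $\fctx = \emptyctx$ of point~1, which is exactly the plain irrelevance statement for $\seme{-}{-}$ claimed by \rlem{irrelevance}, and then to lift it to arbitrary $\fctx$. First I would prove the following base claim by structural induction on $X$: if $\asg$ and $\asg'$ agree on $\fv{X}$ then $\seme{X}{\asg} = \seme{X}{\asg'}$. Every case is read off the defining table of $\seme{-}{-}$. The atomic cases $\var$, $\cons$, $\fail$ are immediate, since the denotation is either $\set{\asg(\var^\typ)}$ (and $\var^\typ \in \fv{X}$) or independent of $\asg$. The binary cases ($\tm\,\tmtwo$, $\tm \unif \tmtwo$, $\tm \seq \tmtwo$) and the alternative $\tm \alt \prog$ follow from the inductive hypotheses on the immediate subterms, using $\fv{\tm \star \tmtwo} = \fv{\tm} \cup \fv{\tmtwo}$. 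The binding cases $\lam{\var^\typ}{\prog}$, $\laml{\loc}{\var^\typ}{\prog}$, and $\fresh{\var^\typ}{\tm}$ are the only ones needing a small argument: since $\fv{\lam{\var}{\prog}} = \fv{\prog} \setminus \set{\var}$, extending both environments by $\var \mapsto \obj$ for a common $\obj \in \semtyp{\typ}$ makes them agree on $\fv{\prog}$, so the inductive hypothesis yields equality of the underlying functions (resp.\ of the quantified union).

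Second, to move from $\seme{-}{-}$ to $\semf{-}{-}$ I would establish, by induction on $\fctx$, the explicit form
\[
  \semf[\var_1^{\typ_1},\hdots,\var_n^{\typ_n}]{X}{\asg}
  = \set{\objtwo \ST \exists \obj_1 \in \semtyp{\typ_1},\hdots, \exists \obj_n \in \semtyp{\typ_n},\ \objtwo \in \seme{X}{\asg\asgextend{\var_1}{\obj_1}\hdots\asgextend{\var_n}{\obj_n}}}.
\]
Because the $\var_i$ are pairwise distinct, the iterated update is independent of the order of the $\var_i$ and the existential quantifiers commute, so $\semf{X}{\asg}$ is invariant under permutation of $\fctx$. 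Point~1 then follows at once: if $\asg$ and $\asg'$ agree on $\fv{X} \setminus \fctx$, then for every tuple $(\obj_1,\hdots,\obj_n)$ the two extended environments agree on all of $\fv{X}$ (a free variable lies either in $\fctx$, where both receive the same $\obj_i$, or in $\fv{X}\setminus\fctx$, where $\asg$ and $\asg'$ already agree and the updates do not touch it). The base claim gives equality of the inner denotations termwise, and taking the union over all tuples yields $\semf{X}{\asg} = \semf{X}{\asg'}$.

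Third, for point~2 I would split $\fctx$ into $S = \fctx \cap \fv{X}$ and $T = \fctx \setminus \fv{X}$. Using the explicit form and the base claim, the inner denotation does not depend on the components of the tuple ranging over the types of $T$, since those updates change the environment only on variables outside $\fv{X}$. Here the key hypothesis enters: because each $\semtyp{\typ}$ is \emph{non-empty}, the union of a constant family indexed by the non-empty product $\semtyp{\vec{\typ}_T}$ equals that constant, whence $\semf{X}{\asg} = \semf[S]{X}{\asg}$, and symmetrically $\semf[\fctx']{X}{\asg} = \semf[S']{X}{\asg}$ with $S' = \fctx' \cap \fv{X}$. The hypothesis $\fv{X}\setminus\fctx = \fv{X}\setminus\fctx'$ gives $S = S'$ as sets, and permutation invariance of $\semf{-}{-}$ closes the equality $\semf[S]{X}{\asg} = \semf[S']{X}{\asg}$.

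The only genuinely non-mechanical point is the appeal to non-emptiness of $\semtyp{\typ}$ in point~2. Dropping a variable of $\fctx$ that is not free in $X$ is sound precisely because the union indexed by $\semtyp{\typ}$ runs over a non-empty index set; were some $\semtyp{\typ}$ empty, quantifying over an irrelevant variable of that type would collapse $\semf{X}{\asg}$ to $\emptyset$ and break the equality. Everything else is a bookkeeping induction, where the main care is keeping the side conditions on the sequences (no repetitions, and freshness of the bound variable in the binding cases) straight.
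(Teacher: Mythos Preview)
Your proposal is correct and follows essentially the same strategy as the paper: both first establish the $\fctx = \emptyctx$ case of point~1 by structural induction on $X$ (with the same treatment of the binding cases), and both rely on non-emptiness of $\semtyp{\typ}$ to discard spurious variables in point~2.

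The packaging differs slightly. You unfold $\semf{-}{-}$ once and for all into an explicit union over tuples, then read off permutation invariance and the ability to drop variables outside $\fv{X}$ directly from that closed form. The paper instead keeps the inductive definition and isolates two small auxiliary facts---``add/remove spurious variable'' and ``swap''---each proved by a short induction on $\fctx$, and then does a final induction on $\fctx$ to reduce $\fctx'$ to $\fctx$ step by step. Your route is a bit more semantic and avoids the nested inductions; the paper's route is more syntactic but makes the two primitive moves (drop a spurious variable, transpose two variables) explicit as reusable lemmas. Either way the content is the same, and your identification of non-emptiness as the one genuinely necessary hypothesis matches the paper's remark exactly.
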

\begin{proof}
\begin{enumerate}
\item
  By induction on $\fctx$.
  \begin{enumerate}
  \item
    {\bf Empty, \ie $\fctx = \emptyset$.}
    By induction on $X$, \ie the term or program:
    \begin{enumerate}
    \item {\bf Variable, $X = \var^\typ$.}
      Immediate, as
      $\seme{\var^\typ}{\asg}
      = \asg(\var^\typ)
      = \asg'(\var^\typ)
      = \seme{\var^\typ}{\asg'}$.
    \item {\bf Constructor, $X = \cons$.}
      Immediate, as
      $\seme{\cons}{\asg}
      = \set{\cinterp{\cons}}
      = \seme{\cons}{\asg'}$.
    \item {\bf Abstraction, $X = \lam{\var^\typ}{\prog}$.}
      Note that
      $\seme{\lam{\var^\typ}{\prog}}{\asg} = \set{\objfun}$
      where
      $\objfun(\obj) = \seme{\prog}{\asg\asgextend{\var^\typ}{\obj}}$.
      Symmetrically,
      $\seme{\lam{\var^\typ}{\prog}}{\asg'} = \set{\objfuntwo}$
      where
      $\objfuntwo(\obj) = \seme{\prog}{\asg'\asgextend{\var^\typ}{\obj}}$.
      Note that, for any fixed $\obj \in \semtyp{\typ}$,
      we have that $\asg\asgextend{\var^\typ}{\obj}$
      and $\asg'\asgextend{\var^\typ}{\obj}$
      agree on $\fv{\lam{\var}{\prog}}$ and also on $\var$
      so they agree on $\fv{\prog}$.
      This allows us to apply the \ih to conclude that
      $\seme{\prog}{\asg\asgextend{\var^\typ}{\obj}}
      = \seme{\prog}{\asg'\asgextend{\var^\typ}{\obj}}$,
      so $\objfun = \objfuntwo$ as required.
    \item {\bf Allocated abstraction, $X = \laml{\loc}{\var^\typ}{\prog}$.}
      Similar to the previous case.
    \item {\bf Application, $X = \tm\,\tmtwo$.}
      Straightforward by \ih, as
      $\seme{\tm\,\tmtwo}{\asg}
      = \set{\objtwo \ST \objfun \in \seme{\tm}{\asg},
                         \obj \in \seme{\tmtwo}{\asg},
                         \objtwo \in \objfun(\obj)
        }
      = \set{\objtwo \ST \objfun \in \seme{\tm}{\asg'},
                         \obj \in \seme{\tmtwo}{\asg'},
                         \objtwo \in \objfun(\obj)
        }
      = \seme{\tm\,\tmtwo}{\asg'}$.
    \item {\bf Unification, $X = (\tm \unif \tmtwo)$.}
      Straightforward by \ih as
      $\seme{\tm \unif \tmtwo}{\asg}
      = \set{\cinterp{\unit} \ST
          \obj \in \seme{\tm}{\asg},
          \objtwo \in \seme{\tmtwo}{\asg},
          \obj = \objtwo
        }
      = \set{\cinterp{\unit} \ST
          \obj \in \seme{\tm}{\asg'},
          \objtwo \in \seme{\tmtwo}{\asg'},
          \obj = \objtwo
        }
      = \seme{\tm \unif \tmtwo}{\asg'}$.
    \item {\bf Guarded expression, $X = \tm \seq \tmtwo$.}
      Straightforward by \ih as
      $\seme{\tm\seq\tmtwo}{\asg}
      = \set{\objtwo \ST \obj \in \seme{\tm}{\asg},
                         \objtwo \in \seme{\tmtwo}{\asg}}
      = \set{\objtwo \ST \obj \in \seme{\tm}{\asg'},
                         \objtwo \in \seme{\tmtwo}{\asg'}}
      = \seme{\tm\seq\tmtwo}{\asg'}$.
    \item {\bf Fresh, $X = \fresh{\var^\typ}{\tm}$.}
      Note that
      $\seme{\fresh{\var^\typ}{\tm}}{\asg}
      = \set{\objtwo \ST
          \obj \in \semtyp{\typ},
          \objtwo \in \seme{\tm}{\asg\asgextend{\var^\typ}{\obj}}
        }$.
      Symetrically,
      $\seme{\fresh{\var^\typ}{\tm}}{\asg'}
      = \set{\objtwo \ST
          \obj \in \semtyp{\typ},
          \objtwo \in \seme{\tm}{\asg'\asgextend{\var^\typ}{\obj}}
        }$.
      Note that, for any fixed $\obj \in \semtyp{\typ}$
      we have that $\asg\asgextend{\var^\typ}{\obj}$
      and $\asg'\asgextend{\var^\typ}{\obj}$
      agree on $\fv{\fresh{\var^\typ}{\tm}}$ and also on $\var$,
      so they agree on $\fv{\tm}$.
      This allows us to apply the \ih to conclude that
      $\seme{\tm}{\asg\asgextend{\var^\typ}{\obj}} =
       \seme{\tm}{\asg'\asgextend{\var^\typ}{\obj}}$,
      so $\seme{\fresh{\var^\typ}{\tm}}{\asg}
         = \seme{\fresh{\var^\typ}{\tm}}{\asg'}$,
      as required.
    \item {\bf Fail, $X = \fail^\typ$.}
      Immediate, as
      $\seme{\fail^\typ}{\asg}
      = \emptyset
      = \seme{\fail^\typ}{\asg'}$.
    \item {\bf Alternative, $X = \tm \alt \prog$.}
      Straightforward by \ih as
      $\seme{\tm \alt \prog}{\asg}
      = \seme{\tm}{\asg} \cup \seme{\prog}{\asg}
      = \seme{\tm}{\asg'} \cup \seme{\prog}{\asg'}
      = \seme{\tm \alt \prog}{\asg}$.
    \end{enumerate}
  \item
    {\bf Non-empty, \ie $\fctx = \var^{\typ}, \fctxtwo$.}
    Then note that
    $\asg\asgextend{\var}{\obj}$ and
    $\asg\asgextend{\var}{\obj}$
    agree on $\fv{X} \setminus \fctxtwo$
    for any $\obj \in \semtyp{\typ}$.
    Then:
    \[
    \begin{array}{rcll}
      \semf[\var^\typ,\fctxtwo]{X}{\asg}
    & = &
      \set{ \objtwo \ST
        \obj \in \semtyp{\typ},
        \semf[\fctxtwo]{X}{\asg\asgextend{\var}{\obj}}
      }
    \\
    & = &
      \set{ \objtwo \ST
        \obj \in \semtyp{\typ},
        \semf[\fctxtwo]{X}{\asg'\asgextend{\var}{\obj}}
      }
      & \text{by \ih}
    \\
    & = &
      \semf[\var^\typ,\fctxtwo]{X}{\asg'}
    \end{array}
    \]
  \end{enumerate}
\item
  Note that, seen as sets, $\fv{X} \cap \fctx = \fv{X} \cap \fctx'$
  so the sequence $\fctx$ may be converted into the sequence $\fctx'$
  by repeatedly removing {\em spurious} variables (not in $\fv{X}$),
  adding spurious variables, and swapping variables.
  Indeed, we first note that the two following properties hold:
  \begin{itemize}
  \item {\bf Add/remove spurious variable.}
    $\semf{X}{\asg} = \semf[\var^\typ,\fctx]{X}{\asg}$
    if $\var^\typ \not\in \fv{X}$. \\
    It suffices to show that
    $\semf{X}{\asg} =
     \set{\objtwo \ST
       \obj \in \semtyp{\typ},
       \objtwo \in \semf{X}{\asg\asgextend{\var}{\obj}}
     }$,
     which is immediate since by item 1. of this lemma,
     $\semf{X}{\asg} =
      \semf{X}{\asg\asgextend{\var}{\obj}}$
     for all $\obj \in \semtyp{\typ}$.
     Note that here we crucially use the
     fact that $\semtyp{\typ}$ is a non-empty
     set.
  \item {\bf Swap.}
    $\semf[\fctx_1,\var^\typ,\fctx_2]{X}{\asg} =
     \semf[\var^\typ,\fctx_1,\fctx_2]{X}{\asg}$. \\
    Proceed by induction on $\fctx_1$.
    If $\fctx_1$ is empty, it is immediate.
    Otherwise, let $\fctx_1 = \vartwo^\typtwo,\fctx'_1$.
    Then:
    \[
      \begin{array}{rcll}
        \semf[\vartwo^\typtwo,\fctx'_1,\var^\typ,\fctx_2]{X}{\asg}
      & = &
        \set{
          \objthree \ST
          \objtwo \in \semtyp{\typtwo},
          \objthree \in \semf[\fctx'_1,\var^\typ,\fctx_2]{X}{\asg\asgextend{\vartwo}{\objtwo}}
        }
      \\
      & = &
        \set{
          \objthree \ST
          \objtwo \in \semtyp{\typtwo},
          \objthree \in \semf[\var^\typ,\fctx'_1,\fctx_2]{X}{\asg\asgextend{\vartwo}{\objtwo}}
        }
        & \text{by \ih}
      \\
      & = &
        \set{
          \objthree \ST
          \objtwo \in \semtyp{\typtwo},
          \obj \in \semtyp{\typ},
          \objthree \in \semf[\fctx'_1,\fctx_2]{X}{\asg\asgextend{\vartwo}{\objtwo}\asgextend{\var}{\obj}}
        }
      \\
      & = &
        \set{
          \objthree \ST
          \obj \in \semtyp{\typ},
          \objtwo \in \semtyp{\typtwo},
          \objthree \in \semf[\fctx'_1,\fctx_2]{X}{\asg\asgextend{\var}{\obj}\asgextend{\vartwo}{\objtwo}}
        }
        & (\star)
      \\
      & = &
        \semf[\var^\typ,\vartwo^\typtwo,\fctx'_1,\fctx_2]{X}{\asg}
      \end{array}
    \]
    To justify the $(\star)$ step, note that
    $\asg\asgextend{\vartwo}{\objtwo}\asgextend{\var}{\obj} =
     \asg\asgextend{\var}{\obj}\asgextend{\vartwo}{\objtwo}$ holds
    by definition.
  \end{itemize}
  Now we proceed by induction on $\fctx$:
  \begin{enumerate}
  \item {\bf Empty, \ie $\fctx = \emptyset$.}
    Then $\fv{X} = \fv{X} \setminus \fctx'$
    so $\fctx' \cap \fv{X} = \emptyset$.
    By iteratively adding spurious
    variables we have that
    $\semf{X}{\asg} = \seme{X}{\asg} = \semf[\fctx']{X}{\asg}$
    as required.
  \item {\bf Non-empty, \ie $\fctx = \var^\typ,\fctxtwo$.}
    We consider two subcases, depending on whether the variable $\var^\typ$
    is spurious (\ie $\var^\typ \not\in \fv{X}$) or not:
    \begin{enumerate}
    \item
      If $\var^\typ \not\in \fv{X}$ then note
      that $\fv{X} \setminus \fctxtwo
           = \fv{X} \setminus \fctx
           = \fv{X} \setminus \fctx'$, so
      removing the spurious variable and appyling the \ih
      we have that
      $\semf[\var^\typ,\fctxtwo]{X}{\asg}
      = \semf[\fctxtwo]{X}{\asg}
      = \semf[\fctx']{X}{\asg}$.
    \item
      If $\var^\typ \in \fv{X}$ then
      since $\fv{\tm} \setminus \fctx = \fv{\tm} \setminus \fctx'$
      we have that $\var^\typ \in \fctx'$.
      Hence $\fctx'$ must be of the form
      $\fctx' = \fctx'_1, \var^\typ, \fctx'_2$.
      Then by applying the \ih and swapping we have that
      $\semf[\var^\typ,\fctxtwo]{X}{\asg}
       = \semf[\var^\typ,\fctx'_1,\fctx'_2]{X}{\asg}
       = \semf[\fctx'_1,\var^\typ,\fctx'_2]{X}{\asg}$
      as required.
    \end{enumerate}
  \end{enumerate}
\end{enumerate}
\end{proof}

\noindent
The following lemma generalizes the {\bf Compositionality} property of \rlem{irrelevance}.
\begin{lemma}[Compositionality --- proof of \rlem{compositionality}, point 2]
\llem{appendix:compositionality}
\quad
\begin{enumerate}
\item
  $\semf{\prog \alt \progtwo}{\asg} =
   \semf{\prog}{\asg} \cup \semf{\progtwo}{\asg}$.
\item
  If $\wctx$ is a context whose hole is
  of type $\typ$, then
  $\seme{\wctxof{\tm}}{\asg} =
   \set{\objtwo \ST
        \obj \in \seme{\tm}{\asg},
        \objtwo \in \seme{\wctx}{\asg\asgextend{\ctxhole^\typ}{\obj}}}$.
\end{enumerate}
\end{lemma}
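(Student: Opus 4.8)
The plan is to prove the two points separately, each by a straightforward structural induction, with the only genuine work being the bookkeeping of set-builder quantifiers and a single appeal to irrelevance.

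For point~1, I would first dispatch the case $\fctx = \emptyset$, i.e.\ show $\seme{\prog \alt \progtwo}{\asg} = \seme{\prog}{\asg} \cup \seme{\progtwo}{\asg}$, by induction on the structure of $\prog$ following the inductive definition of program concatenation. If $\prog = \fail$ then $\fail \alt \progtwo = \progtwo$ and $\seme{\fail}{\asg} = \emptyset$, so both sides equal $\seme{\progtwo}{\asg}$. If $\prog = \tm \alt \prog'$ then $(\tm \alt \prog') \alt \progtwo = \tm \alt (\prog' \alt \progtwo)$, and unfolding the definitional clause $\seme{\tm \alt \progthree}{\asg} = \seme{\tm}{\asg} \cup \seme{\progthree}{\asg}$ (with $\progthree = \prog' \alt \progtwo$), together with the inductive hypothesis for $\prog'$ and associativity of $\cup$, closes the case. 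Having the $\fctx = \emptyset$ case, I would then lift it to arbitrary $\fctx$ by a second induction on $\fctx$. In the non-empty case $\fctx = \var^\typ, \fctxtwo$, one unfolds $\semf{\prog \alt \progtwo}{\asg} = \set{\objtwo \ST \obj \in \semtyp{\typ}, \objtwo \in \semf[\fctxtwo]{\prog \alt \progtwo}{\asg\asgextend{\var}{\obj}}}$, applies the inductive hypothesis to $\semf[\fctxtwo]{\prog \alt \progtwo}{\asg\asgextend{\var}{\obj}}$, and uses that union commutes with the existential over $\obj \in \semtyp{\typ}$ in the set-builder notation.

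For point~2, I would proceed by induction on the structure of the weak context $\wctx$, treating the hole as a distinguished free variable $\ctxhole^\typ$ of type $\typ$. The base case $\wctx = \ctxhole$ is immediate, since $\wctxof{\tm} = \tm$ and $\seme{\ctxhole}{\asg\asgextend{\ctxhole^\typ}{\obj}} = \set{\obj}$, so the right-hand side collapses to $\seme{\tm}{\asg}$. For an inductive case such as $\wctx = \wctx'\,\tmtwo$, I would unfold the semantic clause for application, apply the inductive hypothesis to $\wctx'\ctxof{\tm}$, and then reassemble the application inside the extended environment $\asg\asgextend{\ctxhole^\typ}{\obj}$. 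The remaining cases $\tm\,\wctx'$, $\wctx' \seq \tm$, $\tm \seq \wctx'$, $\wctx' \unif \tm$, and $\tm \unif \wctx'$ are entirely analogous, each amounting to unfolding the corresponding binary clause and invoking the inductive hypothesis.

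The one point requiring care --- and the main obstacle --- is that in every inductive case the sibling subterm (the $\tmtwo$ in $\wctx'\,\tmtwo$, and similarly elsewhere) ends up evaluated in the extended environment $\asg\asgextend{\ctxhole^\typ}{\obj}$ after reassembly, whereas the target statement evaluates it in $\asg$. Since weak contexts contain a single hole, this sibling does not mention $\ctxhole^\typ$, so by irrelevance (\rlem{appendix:irrelevance}, point~1) its denotation is unchanged, i.e.\ $\seme{\tmtwo}{\asg\asgextend{\ctxhole^\typ}{\obj}} = \seme{\tmtwo}{\asg}$. This equality is exactly what lets the two set-comprehensions be identified; everything else is routine manipulation of the existential quantifiers in the set-builder notation.
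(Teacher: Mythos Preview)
Your proposal is correct and follows essentially the same approach as the paper: for point~1 an induction on $\fctx$ with the base case handled by induction on $\prog$, and for point~2 a structural induction on $\wctx$ with irrelevance (\rlem{appendix:irrelevance}) used in each inductive case to discharge the sibling subterm. The paper's proof is a more verbose version of exactly what you outline.
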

\begin{proof}
\quad
\begin{enumerate}
\item
  By induction on the length of $\fctx$.
  \begin{enumerate}
  	\item {\bf Empty, $\fctx = \emptyset$.}
      Then we proceed by induction on $\prog$:
      \begin{enumerate}
      \item
        If $\prog = \fail$,
        then 
        $\seme{\fail \alt \progtwo}{\asg}
        = \seme{\progtwo}{\asg}
        = \seme{\fail}{\asg} \cup \seme{\progtwo}{\asg}
        $.
      \item
        If $\prog = \tm \alt \prog'$, then:
        \[
        \begin{array}{rcll}
          \seme{(\tm \alt \prog') \alt \progtwo}{\asg}
        & = &
          \seme{\tm \alt (\prog' \alt \progtwo)}{\asg}
        \\
        & = &
          \seme{\tm}{\asg} \cup \seme{\prog' \alt \progtwo}{\asg}
        \\
        & = &
          \seme{\tm}{\asg} \cup \seme{\prog'}{\asg}
          \cup \seme{\progtwo}{\asg}
          & \text{by \ih}
        \\
        & = &
          \seme{\tm \alt \prog'}{\asg} \cup \seme{\progtwo}{\asg}
        \end{array}
        \]
      \end{enumerate}
  	\item {\bf Non-empty, $\fctx = \var^\typ,\fctx'$.}
  	  Then:
  	  \[
  	    \begin{array}{rcll}
    	  \semf[\var^{\typ}, \fctx']{\prog \alt \progtwo}{\asg}
        & = &
          \set{\objtwo \ST
               \obj \in \semtyp{\typ},
               \objtwo \in \semf[\fctx']{\prog \alt \progtwo}{\asg\asgextend{\var}{\obj}}}
		\\
		& = &
          \set{\objtwo \ST
               \obj \in \semtyp{\typ},
               \objtwo \in
               (\semf[\fctx']{\prog}{\asg\asgextend{\var}{\obj}}
               \cup
               \semf[\fctx']{\progtwo}{\asg\asgextend{\var}{\obj}})
          }
          & \text{by \ih}
		\\
        & = &
          \set{\objtwo \ST
               \obj \in \semtyp{\typ},
               \objtwo \in \semf[\fctx']{\prog}{\asg\asgextend{\var}{\obj}}}
          \cup
          \set{\objtwo \ST
               \obj \in \semtyp{\typ},
               \objtwo \in \semf[\fctx']{\progtwo}{\asg\asgextend{\var}{\obj}}}
		\\
		& = &
		  \semf[\var^{\typ}, \fctx']{\prog}{\asg} \cup
          \semf[\var^{\typ}, \fctx']{\progtwo}{\asg}
		\end{array}
	  \]
  \end{enumerate}
\item
  By induction on the structure of the weak context $\wctx$.
  \begin{itemize}
      \item {\bf Empty, $\wctx = \ctxhole$.}
        \[
        \begin{array}{rcll}
        \seme{\tm}{\asg}
        & = & \set{\objtwo \ST
                    \obj \in \seme{\tm}{\asg},
                    \objtwo \in \set{\obj}}
        \\
        & = & \set{\objtwo \ST
                    \obj \in \seme{\tm}{\asg},
                    \objtwo \in \seme{\ctxhole}{\asg\asgextend{\ctxhole^\typ}{\obj}}}
        \end{array}
        \]
  \item {\bf Left of an application, $\wctx = \wctx'\,\tmtwo$.}
    \[
    \begin{array}{rcll}
      \seme{\wctx'\ctxof{\tm}\,\tmtwo}{\asg}
      & = & \set{\objtwo \ST
                  \objfun \in \seme{\wctx'\ctxof{\tm}}{\asg},
                  \objthree \in \seme{\tmtwo}{\asg},
                  \objtwo \in \objfun(\objthree)}
      \\
      & = & \set{\objtwo \ST
                  \obj \in \seme{\tm}{\asg},
                  \objfun \in \seme{\wctx'}{\asg\asgextend{\ctxhole^\typ}{\obj}},
                  \objthree \in \seme{\tmtwo}{\asg},
                  \objtwo \in \objfun(\objthree)}
      & \text{(By \ih)}
      \\
      & = & \set{\objtwo \ST
                  \obj \in \seme{\tm}{\asg},
                  \objfun \in \seme{\wctx'}{\asg\asgextend{\ctxhole^\typ}{\obj}},
                  \objthree \in \seme{\tmtwo}{\asg\asgextend{\ctxhole^\typ}{\obj}},
                  \objtwo \in \objfun(\objthree)}
      & \text{(By \rlem{irrelevance})}
      \\
      & = & \set{\objtwo \ST
                  \obj \in \seme{\tm}{\asg},
                  \objtwo \in \seme{\wctx'\,\tmtwo}{\asg\asgextend{\ctxhole^\typ}{\obj}}}
    \end{array}
    \]
  \item {\bf Right of an application, $\wctx = \tmtwo\,\wctx'$.}
    \[
    \begin{array}{rcll}
      \seme{\tmtwo\,\wctx'\ctxof{\tm}}{\asg}
      & = & \set{\objtwo \ST
                  \objfun \in \seme{\tmtwo}{\asg},
                  \objthree \in \seme{\wctx'\ctxof{\tm}}{\asg},
                  \objtwo \in \objfun(\objthree)}
      \\
      & = & \set{\objtwo \ST
                  \objfun \in \seme{\tmtwo}{\asg},
                  \obj \in \seme{\tm}{\asg},
                  \objthree \in \seme{\wctx'}{\asg\asgextend{\ctxhole^\typ}{\obj}},
                  \objtwo \in \objfun(\objthree)}
      & \text{(By \ih)}
      \\
      & = & \set{\objtwo \ST
                  \obj \in \seme{\tm}{\asg},
                  \objfun \in \seme{\tmtwo}{\asg\asgextend{\ctxhole^\typ}{\obj}},
                  \objthree \in \seme{\wctx'}{\asg\asgextend{\ctxhole^\typ}{\obj}},
                  \objtwo \in \objfun(\objthree)}
      & \text{(By \rlem{irrelevance})}
      \\
      & = & \set{\objtwo \ST
                  \obj \in \seme{\tm}{\asg},
                  \objtwo \in \seme{\tmtwo\,\wctx'}{\asg\asgextend{\ctxhole^\typ}{\obj}}}
    \end{array}
    \]
  \item {\bf Left of a unification, $\wctx = \wctx' \unif \tmtwo$.}
    \[
    \begin{array}{rcll}
      \seme{\wctx'\ctxof{\tm} \unif \tmtwo}{\asg}
      & = & \set{\cinterp{\unit} \ST
                  \objthree \in \seme{\wctx'\ctxof{\tm}}{\asg},
                  \objfour \in \seme{\tmtwo}{\asg},
                  \objthree = \objfour}
      \\
      & = & \set{\cinterp{\unit} \ST
                  \obj \in \seme{\tm}{\asg},
                  \objthree \in \seme{\wctx'}{\asg\asgextend{\ctxhole^\typ}{\obj}},
                  \objfour \in \seme{\tmtwo}{\asg},
                  \objthree = \objfour}
      & \text{(By \ih)}
      \\
      & = & \set{\cinterp{\unit} \ST
                  \obj \in \seme{\tm}{\asg},
                  \objthree \in \seme{\wctx'}{\asg\asgextend{\ctxhole^\typ}{\obj}},
                  \objfour \in \seme{\tmtwo}{\asg\asgextend{\ctxhole^\typ}{\obj}},
                  \objthree = \objfour}
      & \text{(By \rlem{irrelevance})}
      \\
      & = & \set{\objtwo \ST
                  \obj \in \seme{\tm}{\asg},
                  \objtwo \in \seme{\wctx' \unif \tmtwo}{\asg\asgextend{\ctxhole^\typ}{\obj}}}
    \end{array}
    \]
  \item {\bf Right of a unification, $\wctx = \tmtwo \unif \wctx'$.}
    \[
    \begin{array}{rcll}
      \seme{\tmtwo \unif \wctx'\ctxof{\tm}}{\asg}
      & = & \set{\cinterp{\unit} \ST
                  \objthree \in \seme{\tmtwo}{\asg},
                  \objfour \in \seme{\wctx'\ctxof{\tm}}{\asg},
                  \objthree = \objfour}
      \\
      & = & \set{\cinterp{\unit} \ST
                  \obj \in \seme{\tm}{\asg},
                  \objthree \in \seme{\tmtwo}{\asg},
                  \objfour \in \seme{\wctx'}{\asg\asgextend{\ctxhole^\typ}{\obj}},
                  \objthree = \objfour}
      & \text{(By \ih)}
      \\
      & = & \set{\cinterp{\unit} \ST
                  \obj \in \seme{\tm}{\asg},
                  \objthree \in \seme{\tmtwo}{\asg\asgextend{\ctxhole^\typ}{\obj}},
                  \objfour \in \seme{\wctx'}{\asg\asgextend{\ctxhole^\typ}{\obj}},
                  \objthree = \objfour}
      & \text{(By \rlem{irrelevance})}
      \\
      & = & \set{\objtwo \ST
                  \obj \in \seme{\tm}{\asg},
                  \objtwo \in \seme{\tmtwo \unif \wctx'}{\asg\asgextend{\ctxhole^\typ}{\obj}}}
    \end{array}
    \]
  \item {\bf Left of a guarded expression, $\wctx = \wctx' \seq \tmtwo$.}
    \[
    \begin{array}{rcll}
      \seme{\wctx'\ctxof{\tm} \seq \tmtwo}{\asg}
      & = & \set{\objtwo \ST
                  \objthree \in \seme{\wctx'\ctxof{\tm}}{\asg},
                  \objtwo \in \seme{\tmtwo}{\asg}}
      \\
      & = & \set{\objtwo \ST
                  \obj \in \seme{\tm}{\asg},
                  \objthree \in \seme{\wctx'}{\asg\asgextend{\ctxhole^\typ}{\obj}},
                  \objtwo \in \seme{\tmtwo}{\asg}}
      & \text{(By \ih)}
      \\
      & = & \set{\objtwo \ST
                  \obj \in \seme{\tm}{\asg},
                  \objthree \in \seme{\wctx'}{\asg\asgextend{\ctxhole^\typ}{\obj}},
                  \objtwo \in \seme{\tmtwo}{\asg\asgextend{\ctxhole^\typ}{\obj}}}
      & \text{(By \rlem{irrelevance})}
      \\
      & = & \set{\objtwo \ST
                  \obj \in \seme{\tm}{\asg},
                  \objtwo \in \seme{\wctx' \seq \tmtwo}{\asg\asgextend{\ctxhole^\typ}{\obj}}}
    \end{array}
    \]
  \item {\bf Right of a guarded expression, $\wctx = \tmtwo \seq \wctx'$.}
    \[
    \begin{array}{rcll}
      \seme{\tmtwo \seq \wctx'\ctxof{\tm}}{\asg}
      & = & \set{\objtwo \ST
                  \objthree \in \seme{\tmtwo}{\asg},
                  \objtwo \in \seme{\wctx'\ctxof{\tm}}{\asg}}
      \\
      & = & \set{\objtwo \ST
                  \obj \in \seme{\tm}{\asg},
                  \objthree \in \seme{\tmtwo}{\asg},
                  \objtwo \in \seme{\wctx'}{\asg\asgextend{\ctxhole^\typ}{\obj}}}
      & \text{(By \ih)}
      \\
      & = & \set{\objtwo \ST
                  \obj \in \seme{\tm}{\asg},
                  \objthree \in \seme{\tmtwo}{\asg\asgextend{\ctxhole^\typ}{\obj}},
                  \objtwo \in \seme{\wctx'}{\asg\asgextend{\ctxhole^\typ}{\obj}}}
      & \text{(By \rlem{irrelevance})}
      \\
      & = & \set{\objtwo \ST
                  \obj \in \seme{\tm}{\asg},
                  \objtwo \in \seme{\tmtwo \seq \wctx'}{\asg\asgextend{\ctxhole^\typ}{\obj}}}
    \end{array}
    \]
  \end{itemize}
\end{enumerate}
\end{proof}

\begin{lemma}[Free variables]
The following hold:
\begin{enumerate}
\item $\fv{\prog \alt \progtwo} = \fv{\prog} \cup \fv{\progtwo}$
\item $\fv{\wctxof{\tm}} = \fv{\wctx} \cup \fv{\tm}$
\item $\fv{\wctxof{\prog}} = \fv{\wctx} \cup \fv{\prog}$
\item $\fv{\tm\SUB{\subst}} \subseteq (\fv{\tm} \setminus \supp{\subst}) \cup \bigcup_{\var \in \supp{\subst}} \fv{\subst(\var)}$
\item $\fv{\prog\SUB{\subst}} \subseteq (\fv{\prog} \setminus \supp{\subst}) \cup \bigcup_{\var \in \supp{\subst}} \fv{\subst(\var)}$
\end{enumerate}
\end{lemma}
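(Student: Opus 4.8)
The five identities are all established by routine structural inductions, so the plan is to prove them in the order stated, each relying on the earlier ones and on the defining equations for program concatenation, weak-context plugging, and substitution.

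First I would prove item~1 by induction on the structure of $\prog$, following the two defining clauses $\fail \alt \progtwo = \progtwo$ and $(\tm \alt \prog') \alt \progtwo = \tm \alt (\prog' \alt \progtwo)$. The base case uses $\fv{\fail} = \emptyset$, and the inductive step unfolds one layer of concatenation, applies $\fv{\tm \alt \prog'} = \fv{\tm} \cup \fv{\prog'}$, and closes by the induction hypothesis. Item~2 is handled analogously, by induction on the grammar of the weak context $\wctx$: the hole case is immediate since $\fv{\ctxhole} = \emptyset$, and each of the six binary-context productions (left/right of application, guard, and unification) splits the free-variable set and applies the induction hypothesis to the subcontext. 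Crucially, because a weak context never crosses a binder, plugging causes no capture, so item~2 is a clean equality with no side conditions. Item~3 then follows by combining items~1 and~2: since $\wctxof{-}$ distributes over threads, for a non-empty program I reason thread-by-thread, using $\fv{\wctxof{\tm_i}} = \fv{\wctx} \cup \fv{\tm_i}$ from item~2 and item~1 to collect the union. Here I would flag the mild subtlety that $\fv{\wctx}$ is absorbed only because at least one thread is present, so the identity is stated for non-empty programs.

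The substitution items~4 and~5 are the heart of the lemma. I would prove item~4 by induction on $\tm$. The variable case splits on whether $\var \in \supp(\subst)$: if so, $\fv{\var\SUB{\subst}} = \fv{\subst(\var)}$ is contained in the union on the right; otherwise $\fv{\var\SUB{\subst}} = \set{\var}$ with $\var \notin \supp(\subst)$. Constructors contribute $\emptyset$, and the three binary constructors (application, guard, unification) split the free-variable set and apply the induction hypothesis componentwise, using $\fv{\tmtwo} \cup \fv{\tmthree}$ to reassemble the bound. Item~5 is then an immediate induction on $\prog$, applying item~4 to each thread and item~1 to combine.

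The main obstacle will be the binder cases of item~4 --- the two abstraction forms $\lam{\var}{\prog}$ and $\laml{\loc}{\var}{\prog}$ and the fresh-variable form $\fresh{\var}{\tm}$. There capture-avoiding substitution forces me first to $\alpha$-rename the bound variable so that $\var \notin \supp(\subst)$ and $\var \notin \fv{\subst(\vartwo)}$ for every $\vartwo \in \supp(\subst)$; under this convention the defining clause pushes $\subst$ under the binder, and deleting $\var$ from the free-variable set commutes with the right-hand side precisely because $\var$ avoids both $\supp(\subst)$ and the images of $\subst$. This is also exactly why the statement is an inclusion rather than an equality: the right-hand side over-approximates, both by ranging over all of $\supp(\subst)$ (including variables not free in $\tm$) and through the capture-avoidance bookkeeping at binders.
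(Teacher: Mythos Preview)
Your proposal is correct and follows the same approach as the paper, which simply records the proof as ``Routine by induction on $\prog$, $\wctx$, or $\tm$, correspondingly.'' Your observation about item~3 failing for $\prog = \fail$ (since $\wctxof{\fail} = \fail$ has no free variables while $\fv{\wctx}$ may be nonempty) is a genuine subtlety that the paper's statement glosses over; in practice the lemma is only invoked on nonempty programs, so your flagged restriction is exactly right.
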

\begin{proof}
Routine by induction on $\prog$, $\wctx$, or $\tm$, correspondingly.
\end{proof}

\begin{lemma}[Interpretation of values --- proof of \rlem{appendix:interpretation_of_values}, point 3]
\llem{appendix:interpretation_of_values}
If $\val$ is a value then $\seme{\val}{\asg}$ is a singleton.
\end{lemma}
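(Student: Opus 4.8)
The plan is to proceed by induction on the structure of the value $\val$, following the grammar $\val ::= \var \mid \laml{\loc}{\var}{\prog} \mid \cons\,\val_1\hdots\val_n$. For the structure case it will be convenient to prove the slightly stronger statement that the unique element of $\seme{\val}{\asg}$ is moreover \emph{unitary} when $\val$ has a constructor at its head, since unitarity is exactly the invariant needed to keep singletons closed under application.

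First I would dispatch the two base-like cases directly from the definition of the interpretation. If $\val = \var^\typ$ then $\seme{\var^\typ}{\asg} = \set{\asg(\var^\typ)}$ is a singleton by the clause for variables. If $\val = \laml{\loc}{\var^\typ}{\prog}$ then its denotation is defined to be the singleton $\set{\objfun}$ for the function $\objfun$ given in the definition. Neither case requires any unitarity, as no application is involved.

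The interesting case is the structure $\val = \cons\,\val_1\hdots\val_n$, which I would handle by an auxiliary induction on the number of arguments $n$, proving that $\seme{\cons\,\val_1\hdots\val_k}{\asg}$ is a singleton $\set{\obj_k}$ with $\obj_k$ unitary for every $0 \le k \le n$. The base case $k=0$ is $\seme{\cons}{\asg} = \set{\cinterp{\cons}}$, a singleton whose element is $\constyp{\cons}$-unitary by the standing assumption on the interpretation of constructors. For the inductive step, suppose $\seme{\cons\,\val_1\hdots\val_k}{\asg} = \set{\obj_k}$ with $\obj_k$ unitary of some type $\typ \to \typtwo$. Unfolding the denotation of the application $\seme{(\cons\,\val_1\hdots\val_k)\,\val_{k+1}}{\asg}$, the main induction hypothesis gives that $\seme{\val_{k+1}}{\asg}$ is a singleton $\set{\obj}$, and since $\obj_k$ is $(\typ \to \typtwo)$-unitary, $\obj_k(\obj)$ is a singleton whose element is $\typtwo$-unitary. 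Hence $\seme{\cons\,\val_1\hdots\val_{k+1}}{\asg}$ is again a singleton with unitary element, closing the auxiliary induction; taking $k = n$ yields the claim.

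The main obstacle to anticipate is precisely this structure case: the naive induction on values does not go through unless one records that the head of a structure denotes a \emph{unitary} element. The crucial point is that the argument $\val_{k+1}$ need not itself be unitary (it may, e.g., be a variable of function type), but this causes no trouble, because a unitary function returns a singleton when applied to \emph{any} element of its domain; the singleton property is therefore carried entirely by the head constructor. This is exactly why the assumption that $\cinterp{\cons}$ is unitary was imposed when setting up the interpretation of constructors.
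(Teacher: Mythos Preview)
Your proof is correct and matches the paper's argument essentially step for step: an outer induction on the value, with the variable and allocated-abstraction cases immediate from the definition, and an inner induction on the number of arguments in the structure case, strengthening the claim to ``singleton whose element is unitary'' so that application preserves the property. Your remark that the argument $\val_{k+1}$ need not itself be unitary is exactly the right observation; the paper's proof relies on the same point implicitly when it invokes only the outer induction hypothesis (singleton, not unitary) for $\seme{\val_n}{\asg}$.
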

\begin{proof}
By induction on $\val$.
If $\val$ is a variable or an allocated
abstraction, it is immediate,
so let $\val = \cons\,\val_1\hdots\val_n$.
In that case, by induction on $n$
we claim that
$\seme{\cons\,\val_1\hdots\val_n}{\asg}$
is a singleton of
the form $\set{\obj}$
where
moreover $\obj$ is unitary:
\begin{enumerate}
\item {\bf If $n = 0$.}
  Then $\seme{\cons}{\asg} = \set{\cinterp{\cons}}$,
  which is a singleton.
  Moreover, recall that $\cinterp{\cons}$
  is always requested to be unitary.
\item {\bf If $n > 0$.}
  Then by \ih of the innermost induction
  $\seme{\cons\,\val_1\hdots\val_{n-1}}{\asg}$
  is a singleton of the form $\set{\objfun_0}$,
  where $\objfun_0$ is unitary,
  and by \ih of the outermost induction
  $\seme{\val_n}{\asg}$
  is a singleton of the form $\set{a_0}$,
  so we have that:
  \[
    \begin{array}{rcll}
      \seme{\cons\,\val_1\hdots\val_{n-1}\,\val_n}{\asg}
      & = & \set{\objtwo \ST
                  \objfun \in \seme{\cons\,\val_1\hdots\val_{n-1}}{\asg},
                  \obj \in \seme{\val_n}{\asg},
                  \objtwo \in \objfun(\obj)}
      \\
      & = & \objfun_0(\obj_0)
    \end{array}
  \]
Since $\objfun_0$ is unitary,
$\objfun_0(\obj_0)$ is a singleton of the form
$\set{\objtwo}$,
where $\objtwo$ is unitary, as required.
\end{enumerate}
\end{proof}

\begin{lemma}[Interpretation of substitution --- proof of \rlem{interpretation_of_substitution}, point 4]
\llem{appendix:interpretation_of_substitution}
Let $\subst = \set{\var^{\typ_1}_1\mapsto\val_1,\hdots,\var^{\typ_n}_n\mapsto\val_n}$
be a substitution with support $\set{\var^{\typ_1}_1,\hdots,\var^{\typ}_n}$
and such that $\var_i \notin \fv{\val_j}$ for any two $1 \leq i,j \leq n$.
Recall that
the interpretation of a value is always a singleton~(\rlem{interpretation_of_values}),
so let $\seme{\val_i}{\asg} = \set{\obj_i}$ for each $i=1..n$.
Then:
\begin{enumerate}
\item $\seme{\tm\SUB{\subst}}{\asg} =
       \seme{\tm}{\asg\asgextend{\var_1}{\obj_1}\hdots\asgextend{\var_n}{\obj_n}}$
\item $\seme{\prog\SUB{\subst}}{\asg} =
       \seme{\prog}{\asg\asgextend{\var_1}{\obj_1}\hdots\asgextend{\var_n}{\obj_n}}$
\end{enumerate}
\end{lemma}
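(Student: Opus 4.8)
The plan is to prove both items simultaneously by mutual induction on the structure of the term $\tm$ and the program $\prog$. Throughout, I will repeatedly invoke the \textbf{Irrelevance} property (\rlem{appendix:irrelevance}) to guarantee that extending the environment $\asg$ with a freshly bound variable does not disturb the denotations $\seme{\val_i}{\asg} = \set{\obj_i}$; this is licit precisely because the hypothesis $\var_i \notin \fv{\val_j}$ ensures that the $\val_i$ share no free variables with the substituted variables, so the parallel substitution $\subst$ behaves coherently.

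First I would dispatch the base cases. For a variable $\tm = \var^\typ$, I split on whether $\var \in \supp(\subst)$: if $\var = \var_i$ then $\var\SUB{\subst} = \val_i$ and $\seme{\val_i}{\asg} = \set{\obj_i} = \seme{\var_i}{\asg\asgextend{\var_1}{\obj_1}\hdots\asgextend{\var_n}{\obj_n}}$; otherwise $\var\SUB{\subst} = \var$ and the extended environment agrees with $\asg$ on $\var$, so both sides equal $\set{\asg(\var)}$. Constructors and $\fail$ are immediate, since their denotations are constant ($\set{\cinterp{\cons}}$ and $\emptyset$, respectively) and unaffected by either the substitution or the environment extension.

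The heart of the argument lies in the binder cases. For an abstraction $\tm = \lam{\var^\typ}{\prog}$, by Barendregt's convention I may assume the bound variable $\var$ is fresh, so that $\var \neq \var_i$ and $\var \notin \fv{\val_i}$ for all $i$; hence $(\lam{\var}{\prog})\SUB{\subst} = \lam{\var}{\prog\SUB{\subst}}$. Both sides denote a singleton $\set{\objfun}$, and it suffices to compare the functions pointwise on an arbitrary $\obj \in \semtyp{\typ}$. Applying the \ih to $\prog$ under the environment $\asg\asgextend{\var}{\obj}$---noting that $\seme{\val_i}{\asg\asgextend{\var}{\obj}} = \set{\obj_i}$ still holds by \rlem{appendix:irrelevance} since $\var \notin \fv{\val_i}$---yields $\seme{\prog\SUB{\subst}}{\asg\asgextend{\var}{\obj}} = \seme{\prog}{\asg\asgextend{\var}{\obj}\asgextend{\var_1}{\obj_1}\hdots\asgextend{\var_n}{\obj_n}}$. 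Since $\var \neq \var_i$, the environment extensions commute, which is exactly the clause for the right-hand side. The allocated-abstraction and fresh-variable cases are handled identically, appealing to the same freshness assumption.

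The remaining constructors---application, unification, guarded expression, and the $\alt$ case for programs---follow directly by the \ih applied to the immediate subterms, using the compositional clauses of the denotation (and \rlem{appendix:compositionality} for $\alt$), together with the fact that substitution distributes over these constructors; no binder manipulation is needed, so these are routine. I expect the main obstacle to be the bookkeeping in the binder cases: correctly deploying the variable convention and Irrelevance so that the parallel substitution $\subst$ can be pushed past the bound variable and the environment extensions reordered, which is exactly where the freshness hypothesis $\var_i \notin \fv{\val_j}$ is indispensable.
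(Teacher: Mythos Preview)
Your proposal is correct and follows essentially the same approach as the paper: a simultaneous structural induction on terms and programs, with the variable case split on membership in $\supp(\subst)$, the binder cases handled via the variable convention and commutation of environment extensions, and the remaining constructors dispatched routinely by the inductive hypothesis. If anything, you are slightly more explicit than the paper in invoking Irrelevance to ensure $\seme{\val_i}{\asg\asgextend{\var}{\obj}} = \set{\obj_i}$ when applying the \ih under an extended environment---the paper's proof leaves this implicit.
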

\begin{proof}
By simultaneous induction on the term $\tm$ (resp. program $\prog$).
\begin{enumerate}
\item {\bf Variable, $\tm = \var^\typ$.}
  There are two subcases, depending on whether
  $\var \in \set{\var_1,\hdots,\var_n}$ or not.
  \begin{enumerate}
  \item
    If $\var = \var_i$ for some $1 \leq i \leq n$, then:
    \[
      \seme{(\var_i^{\typ})\SUB{\subst}}{\asg}
      = \seme{\val_i}{\asg}
      = \set{\obj_i}
      = \seme{\var_i^\typ}{
          \asg\asgextend{\var_1}{\obj_1}\hdots\asgextend{\var_n}{\obj_n}
        }
    \]
  \item
    If $\var \notin \set{\var_1,\hdots,\var_n}$,
    then:
    \[
        \seme{(\var^\typ)\SUB{\subst}}{\asg} 
      = \asg(\var^\typ)
      = \seme{\var^\typ}{
         \asg\asgextend{\var_1}{\obj_1}\hdots\asgextend{\var_n}{\obj_n}
        }
    \]
  \end{enumerate}
\item {\bf Constructor, $\tm = \cons$.}
  Immediate, as:
  \[
      \seme{\cons\SUB{\subst}}{\asg}
    = \seme{\cons}{\asg}
    = \set{\cinterp{\cons}}
    = \seme{\cons}{
        \asg\asgextend{\var_1}{\obj_1}\hdots\asgextend{\var_n}{\obj_n}
      }
  \]
\item {\bf Abstraction code, $\tm = \lam{\var^\typ}{\prog}$.}
  Then:
  \[
    \begin{array}{rcll}
      \seme{(\lam{\var^\typ}{\prog})\SUB{\subst}}{\asg}
    & = &
      \seme{\lam{\var^\typ}{\prog\SUB{\subst}}}{\asg}
    \\
    & = &
      \set{\objfun}
      \HS\text{where
        $\objfun(\obj) = \seme{\prog\SUB{\subst}}{\asg\asgextend{\var}{\obj}}$
      }
    \\
    & = &
      \set{\objfuntwo}
      \HS\text{where
        $\objfuntwo(\obj) =
         \seme{\prog}{\asg\asgextend{\var}{\obj}\asgextend{\var_1}{\obj_1}\hdots\asgextend{\var_n}{\obj_n}}$
      }
      & \text{(By \ih)}
    \\
    & = &
      \seme{\lam{\var^\typ}{\prog}}{\asg\asgextend{\var_1}{\obj_1}\hdots\asgextend{\var_n}{\obj_n}}
    \end{array}
  \]
\item {\bf Allocated abstraction, $\tm = \laml{\loc}{\var}{\prog}$.}
  Similar to the previous case.
\item {\bf Application, $\tm = \tmtwo\,\tmthree$.}
  Then:
  \[
  \begin{array}{rcll}
    \seme{(\tmtwo\,\tmthree)\SUB{\subst}}{\asg}
  & = &
    \seme{\tmtwo\SUB{\subst}\,\tmthree\SUB{\subst}}{\asg}
  \\
  & = &
    \set{\objtwo \ST 
                \objfun \in \seme{\tmtwo\SUB{\subst}}{\asg},
                \obj \in \seme{\tmthree\SUB{\subst}}{\asg},
                \objtwo \in \objfun(\obj)}
  \\
  & = &
    \set{\objtwo \ST 
                \objfun \in \seme{\tmtwo}{\asg\asgextend{\var_1}{\obj_1}\hdots\asgextend{\var_n}{\obj_n}},
                \obj \in \seme{\tmthree}{\asg\asgextend{\var_1}{\obj_1}\hdots\asgextend{\var_n}{\obj_n}},
                \objtwo \in \objfun(\obj)}
    & \text{(By \ih)}
  \\
  & = &
    \seme{\tmtwo\,\tmthree}{\asg\asgextend{\var_1}{\obj_1}\hdots\asgextend{\var_n}{\obj_n}}
  \end{array}
  \]  
\item {\bf Unification, $\tm = (\tmtwo \unif \tmthree)$.}
  Then:
  \[
  \begin{array}{rcll}
    \seme{(\tmtwo \unif \tmthree)\SUB{\subst}}{\asg}
  & = &
    \seme{\tmtwo\SUB{\subst} \unif \tmthree\SUB{\subst}}{\asg}
  \\
  & = &
    \set{\cinterp{\unit} \ST
                \obj \in \seme{\tmtwo\SUB{\subst}}{\asg},
                \objtwo \in \seme{\tmthree\SUB{\subst}}{\asg},
                \obj = \objtwo}
  \\
  & = &
    \set{\cinterp{\unit} \ST
                \obj \in \seme{\tmtwo}{\asg\asgextend{\var_1}{\obj_1}\hdots\asgextend{\var_n}{\obj_n}},
                \objtwo \in \seme{\tmthree}{\asg\asgextend{\var_1}{\obj_1}\hdots\asgextend{\var_n}{\obj_n}},
                \obj = \objtwo}
    & \text{(By \ih)}
  \\
  & = &
    \seme{\tmtwo \unif \tmthree}{\asg\asgextend{\var_1}{\obj_1}\hdots\asgextend{\var_n}{\obj_n}}
  \end{array}
  \]  
\item {\bf Guarded expression, $\tm = \tmtwo\seq\tmthree$.}
  Then:
  \[
  \begin{array}{rcll}
    \seme{(\tmtwo \seq \tmthree)\SUB{\subst}}{\asg}
  & = &
    \seme{\tmtwo\SUB{\subst} \seq \tmthree\SUB{\subst}}{\asg}
  \\
  & = &
    \set{\obj \ST
                \objtwo \in \seme{\tmtwo\SUB{\subst}}{\asg},
                \obj \in \seme{\tmthree\SUB{\subst}}{\asg}}
  \\
  & = &
    \set{\obj \ST
                \objtwo \in \seme{\tmtwo}{\asg\asgextend{\var_1}{\obj_1}\hdots\asgextend{\var_n}{\obj_n}},
                \obj \in \seme{\tmthree}{\asg\asgextend{\var_1}{\obj_1}\hdots\asgextend{\var_n}{\obj_n}}}
    & \text{(By \ih)}
  \\
  & = & 
    \seme{\tmtwo \seq \tmthree}{\asg\asgextend{\var_1}{\obj_1}\hdots\asgextend{\var_n}{\obj_n}}
  \end{array}
  \]  
\item {\bf Fresh, $\tm = \fresh{\var^\typ}{\tmtwo}$.}
  Then:
  \[
  \begin{array}{rcll}
    \seme{(\fresh{\var^\typ}{\tmtwo})\SUB{\subst}}{\asg}
  & = &
    \seme{\fresh{\var^\typ}{\tmtwo}\SUB{\subst}}{\asg}
  \\
    & = & \set{\objtwo \ST
                \obj \in \semtyp{\typ},
                \objtwo \in \seme{\tmtwo\SUB{\subst}}{\asg\asgextend{\var}{\obj}}}
    \\
    & = & \set{\objtwo \ST
                \obj \in \semtyp{\typ},
                \objtwo \in\seme{\tmtwo}{\asg\asgextend{\var}{\obj}\asgextend{\var_1}{\obj_1}\hdots\asgextend{\var_n}{\obj_n}}}
    & \text{(By \ih)}
    \\
    & = & \set{\objtwo \ST
                \obj \in \semtyp{\typ},
                \objtwo \in\seme{\tmtwo}{\asg\asgextend{\var_1}{\obj_1}\hdots\asgextend{\var_n}{\obj_n}\asgextend{\var}{\obj}}}
    & \text{(Since $\var \notin \set{\var_1,\hdots,\var_n}$)}
    \\
    & = & \seme{\fresh{\var^\typ}{\tmtwo}}{\asg\asgextend{\var_1}{\obj_1}\hdots\asgextend{\var_n}{\obj_n}}
  \end{array}
  \]  
\item {\bf Fail, $\prog = \fail$.}
  Immediate, as:
  \[
    \seme{\fail\SUB{\subst}}{\asg}
     = \seme{\fail}{\asg}
     = \emptyset
     = \seme{\fail}{\asg\asgextend{\var_1}{\obj_1}\hdots\asgextend{\var_n}{\obj_n}}
  \]
\item {\bf Alternative, $\prog = \tm\alt\prog$.}
  Then:
  \[
  \begin{array}{rcll}
    \seme{(\tm\alt\prog)\SUB{\subst}}{\asg}
  & = &
    \seme{\tm\SUB{\subst} \alt \prog\SUB{\subst}}{\asg}
  \\
  & = & 
    \seme{\tm\SUB{\subst}}{\asg} \cup \seme{\prog\SUB{\subst}}{\asg}
  \\
  & = &
    \seme{\tm}{\asg\asgextend{\var_1}{\obj_1}\hdots\asgextend{\var_n}{\obj_n}}
    \cup
    \seme{\prog}{\asg\asgextend{\var_1}{\obj_1}\hdots\asgextend{\var_n}{\obj_n}}
    & \text{(By \ih)}
  \\
  & = &
    \seme{\tm \alt \prog}{\asg\asgextend{\var_1}{\obj_1}\hdots\asgextend{\var_n}{\obj_n}}
  \end{array}
  \]  
\end{enumerate}
\end{proof}

\subsection{Proof of \rthm{soundness} --- Soundness}
\lsec{appendix_soundness}

\begin{definition}[Goal satisfaction]
Let $\asg$ be a fixed variable assignment,
and let $\vec{\var}^{\vec{\typ}}$ be a fixed sequence of variables.
Moreover, let $\goals = \set{(\val_1 \unif \valtwo_1), \hdots, (\val_n \unif \valtwo_n)}$
be a unification problem.
Given a sequence of elements $\vec{\obj} \in \semtyp{\vec{\typ}}$
we say that $\vec{\obj}$ {\em satisfies} $\goals$ (with respect to $\asg,\vec{\var}$),
written $\vec{\obj} \vDash_{\asg,\vec{\var}} \goals$, if and only if
$\seme{\val_i}{\asg\asgextend{\vec{\var}}{\vec{\obj}}} =
 \seme{\valtwo_i}{\asg\asgextend{\vec{\var}}{\vec{\obj}}}$
for all $i=1..n$. We write $\vec{\obj} \vDash \goals$
if $\asg$ and $\vec{\var}$ are clear from the context.
\end{definition}

\begin{lemma}[Unification preserves satisfaction]
\llem{unification_preserves_satisfaction}
Let $\goals \tounifa{} \goalstwo$ be a step of the unification algorithm
that does not fail. Then for any $\asg, \vec{\var}^{\vec{\typ}}$
we have that:
\[
  \set{
    \vec{\obj} \ST
    \vec{\obj} \vDash_{\asg,\vec{\var}} \goals
  }
  =
  \set{
    \vec{\obj} \ST
    \vec{\obj} \vDash_{\asg,\vec{\var}} \goalstwo
  }
\]
\end{lemma}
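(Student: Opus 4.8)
The plan is to prove the statement by case analysis on the unification rule used to derive the step $\goals \tounifa{} \goalstwo$. Since the step does not fail, the rules \rulename{u-clash} and \rulename{u-occurs-check} are excluded, leaving \rulename{u-delete}, \rulename{u-orient}, \rulename{u-match-lam}, \rulename{u-match-cons}, and \rulename{u-eliminate}. Abbreviating $\asg' := \asg\asgextend{\vec{\var}}{\vec{\obj}}$, I recall that $\vec{\obj} \vDash_{\asg,\vec{\var}} \goals$ holds exactly when $\seme{\val_i}{\asg'} = \seme{\valtwo_i}{\asg'}$ for every goal $\val_i \unif \valtwo_i$ of $\goals$; thus satisfaction is a conjunction over goals, and it suffices to reason goal by goal, checking in each case that the goals removed and the goals added impose the same constraint on $\vec{\obj}$ (the carried part $\goals$ contributing identically whenever it is left unchanged).

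Three cases are purely equational. For \rulename{u-delete} and \rulename{u-match-lam} the deleted goal has identical sides ($\var \unif \var$, resp. $\laml{\loc}{\var}{\prog} \unif \laml{\loc}{\var}{\prog}$), hence is satisfied by every $\vec{\obj}$ by reflexivity of equality, so its removal does not alter the satisfying set. For \rulename{u-orient} the goal $\val \unif \var$ is replaced by $\var \unif \val$, satisfied by the same $\vec{\obj}$ by symmetry of equality.

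The case \rulename{u-match-cons} is where the structural hypotheses on the model enter. A goal $\cons\,\val_1\hdots\val_n \unif \cons\,\valtwo_1\hdots\valtwo_n$ is replaced by $\val_1 \unif \valtwo_1, \hdots, \val_n \unif \valtwo_n$. Since all the $\val_i,\valtwo_i$ are values, their denotations under $\asg'$ are singletons by \rlem{interpretation_of_values}, say $\seme{\val_i}{\asg'} = \set{\obj_i}$ and $\seme{\valtwo_i}{\asg'} = \set{\objtwo_i}$; consequently $\seme{\cons\,\val_1\hdots\val_n}{\asg'} = \set{\cinterp{\cons}(\obj_1)\hdots(\obj_n)}$, and likewise on the right. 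The original goal therefore holds iff $\cinterp{\cons}(\obj_1)\hdots(\obj_n) = \cinterp{\cons}(\objtwo_1)\hdots(\objtwo_n)$, which by the assumed injectivity of the interpretation of constructors is equivalent to $\obj_i = \objtwo_i$ for all $i$, \ie to the simultaneous satisfaction of the $n$ new goals.

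The main obstacle is \rulename{u-eliminate}, where $\set{\var \unif \val} \uplus \goals$ becomes $\set{\var \unif \val} \uplus \goals\sub{\var}{\val}$ under the side condition $\var \in \fv{\goals} \setminus \fv{\val}$, so in particular $\var \notin \fv{\val}$. Both problems keep the goal $\var \unif \val$; if $\vec{\obj}$ does not satisfy it then neither problem is satisfied and there is nothing to prove, so assume $\seme{\var}{\asg'} = \seme{\val}{\asg'}$. Writing $\seme{\val}{\asg'} = \set{c}$ (a singleton, as $\val$ is a value), this means $\asg'(\var) = c$, whence $\asg'\asgextend{\var}{c} = \asg'$. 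I then apply the interpretation-of-substitution lemma (\rlem{interpretation_of_substitution}), whose freshness hypothesis $\var \notin \fv{\val}$ is precisely the side condition, to obtain for each value $\val_i$ in $\goals$ the chain $\seme{\val_i\sub{\var}{\val}}{\asg'} = \seme{\val_i}{\asg'\asgextend{\var}{c}} = \seme{\val_i}{\asg'}$, and symmetrically for the $\valtwo_i$. Hence, granted satisfaction of $\var \unif \val$, the rewritten goal $\val_i\sub{\var}{\val} \unif \valtwo_i\sub{\var}{\val}$ is satisfied iff $\val_i \unif \valtwo_i$ is, so the two satisfying sets coincide. The only points requiring care are verifying that the substitution lemma applies and using the idempotent-update identity $\asg'\asgextend{\var}{c} = \asg'$, which is exactly what the constraint $\var \unif \val$ provides; everything else is routine unfolding of the definition of satisfaction.
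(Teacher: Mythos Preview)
Your proposal is correct and follows essentially the same approach as the paper's proof: a case analysis on the five non-failing unification rules, with the trivial cases handled by reflexivity/symmetry, \rulename{u-match-cons} handled via singleton denotations of values and injectivity of constructor interpretation, and \rulename{u-eliminate} handled by combining the constraint $\var \unif \val$ (yielding $\asg'\asgextend{\var}{c} = \asg'$) with the interpretation-of-substitution lemma. The organization and the key lemmas invoked match the paper exactly.
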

\begin{proof}
  Note that the step does not fail so it cannot be the result of applying
  the \rulename{u-clash} or the \rulename{u-occurs-check} rules.
  We consider the five remaining cases:
  \begin{enumerate}
  \item \rulename{u-delete}:
    Our goal is to prove that:
    \[ \set{\vec{\obj} \ST \vec{\obj} \vDash_{\asg,\vec{\var}} \set{\vartwo \unif \vartwo} \uplus \goals'}
    = \set{\vec{\obj} \ST \vec{\obj} \vDash_{\asg,\vec{\var}} \goals'}
    \]
    This is immediate since
    $\seme{\vartwo}{\asg\asgextend{\vec{\var}}{\vec{\obj}}}
    = \seme{\vartwo}{\asg\asgextend{\vec{\var}}{\vec{\obj}}}$
    always holds.
    %Then,
    %$\vec{\obj} \vDash_{\asg,\vec{\var}} \set{\vartwo \unif \vartwo} \uplus \goals'$ and
    %$\vec{\obj} \vDash_{\asg,\vec{\var}} \goals'$ are equivalent,
    %and so we are done.
  \item \rulename{u-orient}:
    Our goal is to prove that:
    $ \set{\vec{\obj} \ST \vec{\obj} \vDash_{\asg,\vec{\var}} \set{\val \unif \vartwo} \uplus \goals'}
    = \set{\vec{\obj} \ST \vec{\obj} \vDash_{\asg,\vec{\var}} \set{\vartwo \unif \val} \uplus \goals'} $.
    Immediate by definition.
  \item \rulename{u-match-lam}:
    Our goal is to prove that:
    \[ \set{\vec{\obj} \ST
        \vec{\obj} \vDash_{\asg,\vec{\var}} \set{\laml{\loc}{\vartwo}{\prog} \unif \laml{\loc}{\vartwo}{\prog}} \uplus \goals'}
    = \set{\vec{\obj} \ST \vec{\obj} \vDash_{\asg,\vec{\var}} \goals'}
    \]
    This is immediate since
    $\seme{\laml{\loc}{\vartwo}{\prog}}{\asg\asgextend{\vec{\var}}{\vec{\obj}}}
    = \seme{\laml{\loc}{\vartwo}{\prog}}{\asg\asgextend{\vec{\var}}{\vec{\obj}}}$
    always holds.
    %Then,
    %$\vec{\obj} \vDash_{\asg,\vec{\var}} \set{\laml{\loc}{\vartwo}{\prog} \unif \laml{\loc}{\vartwo}{\prog}} \uplus \goals'$
    %and $\vec{\obj} \vDash_{\asg,\vec{\var}} \goals'$
    %are equivalent, and so we are done.
  \item \rulename{u-match-cons}:
    Our goal is to prove that:
    \[ \set{\vec{\obj} \ST
        \vec{\obj} \vDash_{\asg,\vec{\var}} \set{\cons\,\val_1\hdots\val_n \unif \cons\,\valtwo_1\hdots\valtwo_n}
                                             \uplus \goals'}
    = \set{\vec{\obj} \ST 
        \vec{\obj} \vDash_{\asg,\vec{\var}} \set{\val_1 \unif \valtwo_1, \hdots, \val_n \unif \valtwo_n} \uplus \goals'}
    \]
    Recall that $\seme{\cons}{\asg\asgextend{\vec{\var}}{\vec{\obj}}} = \set{\cinterp{\cons}}$ is
    $\constyp{\cons}$-unitary,
    and the interpretation of a value is always a singleton~(\rlem{interpretation_of_values}),
    so let $\seme{\val_i}{\asg\asgextend{\vec{\var}}{\vec\obj}} = \set{\objtwo_i}$
    and $\seme{\valtwo_i}{\asg\asgextend{\vec{\var}}{\vec\obj}} = \set{\objtwo'_i}$.
    It suffices to note that:
    \[
    \begin{array}{rcll}
    &&  
      \vec{\obj} \vDash_{\asg,\vec{\var}} \set{\cons\,\val_1\hdots\val_n \unif \cons\,\valtwo_1\hdots\valtwo_n}
    \\
    & \iff &
      \seme{\cons\,\val_1\hdots\val_n}{\asg\asgextend{\vec{\var}}{\vec{\obj}}}
      = \seme{\cons\,\valtwo_1\hdots\valtwo_n}{\asg\asgextend{\vec{\var}}{\vec{\obj}}}
    \\
    & \iff &
      \cinterp{\cons}(\objtwo_1)\hdots(\objtwo_n)
      =
      \cinterp{\cons}(\objtwo'_1)\hdots(\objtwo'_n)
    \\
    & \iff &
      \objtwo_i = \objtwo'_i
      \text{\ for all $i = 1..n$}
    & \text{($\star$)}
    \\
    & \iff &
      \seme{\val_i}{\asg\asgextend{\vec{\var}}{\vec{\obj}}} = \seme{\valtwo_i}{\asg\asgextend{\vec{\var}}{\vec{\obj}}},
      \text{\ for all $i = 1..n$}
    \\
    & \iff &
      \vec{\obj} \vDash_{\asg,\vec{\var}} \set{\val_1 \unif \valtwo_1, \hdots, \val_n \unif \valtwo_n}
    \end{array}
    \]
    The step $(\star)$ is justified by the fact
    that we assume that constructors are injective.
  \item \rulename{u-eliminate}:
    Our goal is to prove that:
    \[
      \set{\vec{\obj} \ST \vec{\obj} \vDash_{\asg,\vec{\var}} \set{\vartwo \unif \val} \uplus \goals'}
    = \set{\vec{\obj} \ST \vec{\obj} \vDash_{\asg,\vec{\var}} \set{\vartwo \unif \val} \uplus \goals'\sub{\vartwo}{\val}}
    \]
    if $\vartwo \in \fv{\goals'}\setminus\fv{\val}$.
    Moreover, let $\goals' = \set{(\val_1 \unif \valtwo_1),\hdots,(\val_n \unif \valtwo_n)}$.
    Recall that the interpretation of a value is always a singleton~(\rlem{interpretation_of_values}),
    so let $\seme{\val}{\asg\asgextend{\vec{\var}}{\vec{\obj}}} = \set{\objtwo}$.
    Let $\vec{\obj} \in \semtyp{\vec{\typ}}$.
    It suffices to show that whenever
    $\asg\asgextend{\vec{\var}}{\vec{\obj}}(\vartwo) = \objtwo$
    then the following equivalence holds:
    \[
      \vec{\obj} \vDash_{\asg,\vec{\var}} \goals'
      \iff
      \vec{\obj} \vDash_{\asg,\vec{\var}} \goals'\sub{\vartwo}{\val}
    \]
    Note that, for each fixed $i=1..n$:
    \[
      \begin{array}{rcll}
        \seme{\val_i}{\asg\asgextend{\vec{\var}}{\vec{\obj}}}
      & = &
        \seme{\val_i}{\asg\asgextend{\vec{\var}}{\vec{\obj}}\asgextend{\vartwo}{\objtwo}}
        & (\star)
      \\
      & = &
        \seme{\val_i\sub{\vartwo}{\val}}{\asg\asgextend{\vec{\var}}{\vec{\obj}}}
        & \text{(By~\rlem{interpretation_of_substitution})}
      \end{array}
    \]
    The step $(\star)$ is trivial because, as we have
    already noted,
    $\asg\asgextend{\vec{\var}}{\vec{\obj}}(\vartwo) = \objtwo$
    so
    $\asg\asgextend{\vec{\var}}{\vec{\obj}}$
    and
    $\asg\asgextend{\vec{\var}}{\vec{\obj}}\asgextend{\vartwo}{\objtwo}$
    are the same variable assignment.
    And, similarly,
    $\seme{\valtwo_i}{\asg\asgextend{\vec{\var}}{\vec{\obj}}}
     =
     \seme{\valtwo_i\sub{\vartwo}{\val}}{\asg\asgextend{\vec{\var}}{\vec{\obj}}}$.
    Then:
    \[
    \begin{array}{rcll}
    &&  
      \vec{\obj} \vDash_{\asg,\vec{\var}} \goals'
    \\
    & \iff &
      \seme{\val_i}{\asg\asgextend{\vec{\var}}{\vec{\obj}}} =
      \seme{\valtwo_i}{\asg\asgextend{\vec{\var}}{\vec{\obj}}}
      \text{ for all $i = 1..n$}
    \\
    & \iff &
      \seme{\val_i\sub{\vartwo}{\val}}{\asg\asgextend{\vec{\var}}{\vec{\obj}}} =
      \seme{\valtwo_i\sub{\vartwo}{\val}}{\asg\asgextend{\vec{\var}}{\vec{\obj}}}
      \text{ for all $i = 1..n$}
      & \text{(\rlem{interpretation_of_substitution})}
    \\
    & \iff &
      \vec{\obj} \vDash_{\asg,\vec{\var}} \goals'\sub{\vartwo}{\val}
    \end{array}
    \]
  \end{enumerate}
\end{proof}

The following theorem generalizes \rthm{soundness}:

\begin{theorem}[Soundness]
\lthm{appendix:soundness}
Let $\tctx \vdash \prog : \typ$
and $\prog \toca{} \progtwo$.
Let $\fctx = \fv{\prog}$
and $\fctx' = \fv{\progtwo}$.
Then for any variable assignment $\asg$:
\[
  \semf[\fctx]{\prog}{\asg} \supseteq \semf[\fctx']{\progtwo}{\asg}
\]
Moreover, the inclusion is an equality for
all reduction rules other than the
\indrulename{fail} rule.
\end{theorem}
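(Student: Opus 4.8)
The plan is to proceed by case analysis on the reduction rule used to derive $\prog \toca{} \progtwo$. In every rule the redex sits inside a single thread, so the first move is to peel off the unaffected threads: writing $\prog = \prog_1 \alt \wctxof{R} \alt \prog_2$ and $\progtwo = \prog_1 \alt R' \alt \prog_2$ (with $R'$ absent in the \indrulename{fail} case), the alternative clause of compositionality (\rlem{compositionality}) splits both denotations into the contributions of $\prog_1$, $\prog_2$ and of the affected thread, and \rlem{irrelevance} lets me trim each superscript down to the free variables of the relevant piece. Since $\prog_1,\prog_2$ and their free variables are untouched, $\semf[\fv{\prog}]{\prog_1}{\asg} = \semf[\fv{\progtwo}]{\prog_1}{\asg} = \semf[\fv{\prog_1}]{\prog_1}{\asg}$, and likewise for $\prog_2$, so these summands cancel on both sides. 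Everything therefore reduces to a thread-local comparison of $\semf[\fv{\wctxof{R}}]{\wctxof{R}}{\asg}$ with $\semf[\fv{R'}]{R'}{\asg}$; the explicit free-variable superscript is exactly what keeps the bookkeeping honest when the free-variable set changes (\eg \indrulename{fresh} adds a variable, while \indrulename{beta} and \indrulename{unif} may drop some through substitution).

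For the four congruence-free cases I would rewrite the thread denotation with the context clause of compositionality (\rlem{compositionality}) to isolate the redex, then collapse it using the defining equations together with the fact that the interpretation of a value is a non-empty singleton (\rlem{interpretation_of_values}). Concretely: \indrulename{alloc} is immediate because $\seme{\lam{\var}{\prog}}{\asg} = \seme{\laml{\loc}{\var}{\prog}}{\asg}$ by definition; \indrulename{guard} collapses because $\seme{\val\seq\tm}{\asg} = \seme{\tm}{\asg}$, the guard being non-empty; and \indrulename{beta} and \indrulename{fresh} both follow from \rlem{interpretation_of_substitution}, reading a $\beta$-step as feeding the singleton $\seme{\val}{\asg}$ into the binder and a \indrulename{fresh}-step as the renaming $\var \mapsto \vartwo$ of the variable over which the $\fctx$-quantifier ranges. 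Each yields an equality; the spurious free variables of $\val$ that vanish in \indrulename{beta} and \indrulename{guard} are reconciled by the add/remove-spurious-variable manipulation inside \rlem{irrelevance}, which is precisely where the non-emptiness of every $\semtyp{\typ}$ is used. The \indrulename{fail} case is the only genuine inclusion and it is the cheapest: the whole thread disappears, so after the compositional split the right-hand side simply lacks the non-negative summand $\semf[\fv{\wctxof{\val\unif\valtwo}}]{\wctxof{\val\unif\valtwo}}{\asg}$, whence $\supseteq$ holds trivially (and need not be an equality, since that summand can be non-empty even when $\mgu{\set{\val\unif\valtwo}}$ fails).

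The main obstacle, as expected, is the \indrulename{unif} case. Here I first observe, via the context clause of \rlem{compositionality} and \rlem{interpretation_of_values}, that $\seme{\wctxof{\val\unif\valtwo}}{\asg'}$ equals $\seme{\wctxof{\unit}}{\asg'}$ when $\asg'$ satisfies the goal (\ie $\seme{\val}{\asg'} = \seme{\valtwo}{\asg'}$) and is empty otherwise, so that quantifying over the free variables gives $\semf[\fctx]{\wctxof{\val\unif\valtwo}}{\asg} = \bigcup_{\vec{\obj} \vDash \set{\val\unif\valtwo}} \seme{\wctxof{\unit}}{\asg\asgextend{\vec{\var}}{\vec{\obj}}}$. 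The crux is to identify this union with $\semf[\fv{\wctxof{\unit}\SUB{\subst}}]{\wctxof{\unit}\SUB{\subst}}{\asg}$. I would run the algorithm $\set{\val\unif\valtwo} \tounifa{}^{*} \set{\var_1\unif\val_1,\dots,\var_n\unif\val_n}$ with $\subst = \set{\var_i \mapsto \val_i}$ and iterate \rlem{unification_preserves_satisfaction} to replace satisfaction of $\set{\val\unif\valtwo}$ by satisfaction of the solved form, which says exactly that $\asg\asgextend{\vec{\var}}{\vec{\obj}}$ sends each $\var_i$ to $\seme{\val_i}{\asg\asgextend{\vec{\var}}{\vec{\obj}}}$. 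Feeding those values into \rlem{interpretation_of_substitution}—legitimate because the most general unifier is idempotent, so $\var_i \notin \fv{\val_j}$ by \rthm{most_general_unifier}—converts ``restrict to satisfying environments and evaluate $\wctxof{\unit}$'' into ``evaluate $\wctxof{\unit}\SUB{\subst}$'', and a final application of \rlem{irrelevance} reconciles the two free-variable lists $\fctx$ and $\fv{\wctxof{\unit}\SUB{\subst}}$ (the former carries the $\var_i$ but not $\fv{\val_i}$, the latter vice versa). This delivers the desired equality, completing the case analysis and hence the proof.
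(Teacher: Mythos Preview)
Your proposal is correct and follows essentially the same approach as the paper's proof: case analysis on the reduction rule, splitting off the unaffected threads via compositionality and irrelevance, then handling each redex locally with the singleton property of values, the substitution lemma, and---for \indrulename{unif}---iterating \rlem{unification_preserves_satisfaction} along the algorithm's run to reach the solved form before invoking \rlem{interpretation_of_substitution}. One small inaccuracy: in the \indrulename{unif} case you write that $\fctx$ ``carries the $\var_i$ but not $\fv{\val_i}$'', but in fact $\fctx' \subseteq \fctx$ since both the $\var_i$ and the $\fv{\val_i}$ already occur in $\val\unif\valtwo$; this does not affect your argument, as the reconciliation via \rlem{irrelevance} works either way.
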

\begin{proof}
Let $\prog \toca{} \progtwo$.
We consider six cases, depending on the rule applied to conclude
that $\prog \toca{} \progtwo$:
\begin{enumerate}
\item \rulename{alloc}:
  Note that $\fctx = \fctx'$, and
  suppose that
  $\fctx = \vec{\vartwo}^{\vec{\typtwo}}$.
  Then:
  \[
    \begin{array}{rcll}
      \semf[\fctx]{\prog_1 \alt \wctxof{\lam{\var^\typ}{\progtwo}} \alt \prog_2}{\asg}
    & = &
      \set{
        \obj
      \ST
        \vec{\objtwo} \in \semtyp{\vec{\typtwo}},
        \obj \in
        \seme{\prog_1 \alt \wctxof{\lam{\var^\typ}{\progtwo}} \alt  \prog_2}{
          \asg\asgextend{\vec{\vartwo}}{\vec{\objtwo}}
        }
      }
    \\
    & = &
      \set{
        \obj
      \ST
        \vec{\objtwo} \in \semtyp{\vec{\typtwo}},
        \obj \in
        \seme{\prog_1 \alt \wctxof{\laml{\loc}{\var^\typ}{\progtwo}} \alt  \prog_2}{
          \asg\asgextend{\vec{\vartwo}}{\vec{\objtwo}}
        }
      }
     & (\star)
    \\
    & = &
      \semf[\fctx]{\prog_1 \alt \wctxof{\laml{\loc}{\var^\typ}{\progtwo}} \alt \prog_2}{\asg}
    \end{array}
  \]
  To justify $(\star)$, note that,
  by Compositionality~(\rlem{compositionality}),
  it suffices to prove that
  $\seme{\lam{\var^\typ}{\progtwo}}{\asg\asgextend{\vec{\vartwo}}{\vec{\objtwo}}}
  = \seme{\laml{\loc}{\var^\typ}{\progtwo}}{\asg\asgextend{\vec{\vartwo}}{\vec{\objtwo}}}$
  for all $\vec{\objtwo} \in \semtyp{\typtwo}$.
  This holds by definition so we are done.
\item \rulename{beta}:
  Note that $\fctx = \fctx',\vec{\varthree}^{\vec{\typthree}}$
  where
  \[
    \vec{\varthree} =
      \begin{cases}
         \emptyset & \text{if $\var \notin \fv{\progtwo}$} \\
         \fv{\val} \setminus \fv{\prog_1 \alt \wctxof{(\lam{\var}{\progtwo})\ctxhole} \alt \prog_2}
                   & \text{if $\var \in \fv{\progtwo}$}
       \end{cases}
  \]
  Moreover, suppose that
  $\fctx' = \vec{\vartwo}^{\vec{\typtwo}}$.
  Then:
  \[
    \begin{array}{rcll}
    &&
    \semf[\vec{\vartwo}^{\vec{\typtwo}},\vec{\varthree}^{\vec{\typthree}}]{
      \prog_1 \alt \wctxof{(\laml{\loc}{\var^\typ}{\progtwo})\,\val} \alt \prog_2
    }{\asg}
    \\
    & = &
    \set{
      \obj
    \ST
      \vec{\objtwo} \in \semtyp{\vec{\typtwo}},
      \vec{\objthree} \in \semtyp{\vec{\typthree}},
      \obj \in
      \seme{\prog_1 \alt \wctxof{(\laml{\loc}{\var^\typ}{\progtwo})\,\val} \alt \prog_2}{
        \asg\asgextend{\vec{\vartwo}}{\vec{\objtwo}}\asgextend{\vec{\varthree}}{\vec{\objthree}}
      }
    }
    \\
    & = &
    \set{
      \obj
    \ST
      \vec{\objtwo} \in \semtyp{\vec{\typtwo}},
      \obj \in
      \seme{\prog_1 \alt \wctxof{\progtwo\sub{\var^\typ}{\val}} \alt \prog_2}{
        \asg\asgextend{\vec{\vartwo}}{\vec{\objtwo}}
      }
    }
    & (\star)
    \\
    & = &
    \semf[\vec{\vartwo}^{\vec{\typtwo}}]{\prog_1 \alt \wctxof{\progtwo\sub{\var^\typ}{\val}} \alt \prog_2}{\asg}
    \end{array}
  \]
  To justify $(\star)$
  we proceed as follows.
  Let us write
  $\asg'$
  for
  $\asg\asgextend{\vec{\vartwo}}{\vec{\objtwo}}$.
  Recall that the interpretation of a value
  is always a singleton~(\rlem{interpretation_of_values}),
  so let $\seme{\val}{\asg'\asgextend{\vec{\varthree}}{\vec{\objthree}}} = \set{\obj_0}$.
  By Compositionality~(\rlem{compositionality})
  it suffices to note that:
  \[
    \begin{array}{rcll}
      \seme{(\laml{\loc}{\var^\typ}{\progtwo})\,\val}{\asg'\asgextend{\vec{\varthree}}{\vec{\objthree}}}
    & = &
      \set{\objtwo \ST
        \obj \in \seme{\val}{\asg'\asgextend{\vec{\varthree}}{\vec{\objthree}}},
        \objtwo \in \seme{\progtwo}{\asg'\asgextend{\vec{\varthree}}{\vec{\objthree}}\asgextend{\var^\typ}{\obj}}
      }
    \\
    & = &
      \seme{\progtwo}{\asg'\asgextend{\vec{\varthree}}{\vec{\objthree}}\asgextend{\var^\typ}{\obj_0}}
    \\
    & = &
      \seme{\progtwo}{\asg'\asgextend{\var^\typ}{\obj_0}}
      & \text{(By Irrelevance~(\rlem{irrelevance}))}
    \\
    & = &
      \seme{\progtwo\sub{\var^\typ}{\val}}{\asg'}
      & \text{(By \rlem{interpretation_of_substitution})}
    \end{array}
  \]
\item \rulename{guard}:
  Note that $\fctx = \fctx',\vec{\varthree}^{\vec{\typthree}}$, where
  $\vec{\varthree}^{\vec{\typthree}} = \fv{\val} \setminus \fv{\prog_1 \alt \wctxof{\ctxhole \seq \tm} \alt \prog_2}$.
  Suppose that $\fctx' = \vec{\vartwo}^{\vec{\typtwo}}$.
  Then:
  \[
    \begin{array}{rcll}
      \semf[\vec{\vartwo}^{\vec{\typtwo}},\vec{\varthree}^{\vec{\typthree}}]{\prog_1 \alt \wctxof{\val \seq \tm} \alt \prog_2}{\asg}
    & = &
      \set{\obj \ST
            \vec{\objtwo} \in \seme{\vec{\typtwo}},
            \vec{\objthree} \in \seme{\vec{\typthree}},
            \obj \in \seme{\prog_1 \alt \wctxof{\val \seq \tm} \alt \prog_2}
                          {\asg\asgextend{\vec{\vartwo}}{\vec{\objtwo}}\asgextend{\vec{\varthree}}{\vec{\objthree}}}}
    \\
    & = &
      \set{\obj \ST
            \vec{\objtwo} \in \seme{\vec{\typtwo}},
            \obj \in \seme{\prog_1 \alt \wctxof{\tm} \alt \prog_2}{\asg\asgextend{\vec{\vartwo}}{\vec{\objtwo}}}}
    & (\star)
    \\
    & = &
      \semf[\vec{\vartwo}^{\vec{\typtwo}}]{\prog_1 \alt \wctxof{\tm} \alt \prog_2}{\asg}
    \end{array}
  \]
  To justify $(\star)$ we proceed as follows.
  Let us write
  $\asg'$
  for
  $\asg\asgextend{\vec{\vartwo}}{\vec{\objtwo}}$.
  Recall that the interpretation of a value
  is always a singleton~(\rlem{interpretation_of_values}),
  so let $\seme{\val}{\asg'\asgextend{\vec{\varthree}}{\vec{\objthree}}} = \set{\objtwo_0}$.
  By Compositionality~(\rlem{compositionality})
  it suffices to note that:
  \[
    \begin{array}{rcll}
      \seme{\val \seq \tm}{\asg'\asgextend{\vec{\varthree}}{\objthree}}
    & = &
      \set{\obj \ST
            \objtwo \in \seme{\val}{\asg'\asgextend{\vec{\varthree}}{\vec{\objthree}}},
            \obj \in \seme{\tm}{\asg'\asgextend{\vec{\varthree}}{\vec{\objthree}}}}
    & \text{(By Irrelevance~\rlem{irrelevance})}
    \\
    & = &
      \seme{\tm}{\asg'}
    \end{array}
  \]
\item \rulename{fresh}:
  Note that $\fctx' = \fctx,\vartwo^{\typ}$
  where $\vartwo$ is a fresh variable.
  Suppose that
  $\fctx = \vec{\varthree}^{\vec{\typtwo}}$.
  Then:
  \[
    \begin{array}{rcll}
      \semf{\prog_1 \alt \wctxof{\fresh{\var^\typ}{\tm}} \alt \prog_2}{\asg}
    & = &
      \set{
        \obj
      \ST
        \vec{\objtwo} \in \semtyp{\vec{\typtwo}},
        \obj \in
        \seme{
          \prog_1 \alt \wctxof{\fresh{\var^\typ}{\tm}} \alt \prog_2
        }{
          \asg\asgextend{\vec{\varthree}}{\vec{\objtwo}}
        }
      }
    \\
    & = &
      \set{
        \obj
      \ST
        \vec{\objtwo} \in \semtyp{\vec{\typtwo}},
        \obj \in
        \semf[\vartwo^\typ]{
          \prog_1 \alt \wctxof{\tm\sub{\var^\typ}{\vartwo^\typ}} \alt \prog_2
        }{
          \asg\asgextend{\vec{\varthree}}{\vec{\objtwo}}
        }
      }
      & (\star)
    \\
    & = &
      \semf[\fctx,\vartwo^\typ]{\prog_1 \alt \wctxof{\tm\sub{\var^\typ}{\vartwo^\typ}} \alt \prog_2}{\asg}
    \end{array}
  \]
  To justify $(\star)$ we proceed
  as follows.
  Let $\asg'$ stand for
  $\asg\asgextend{\vec{\varthree}}{\vec{\objtwo}}$.
  By Irrelevance~(\rlem{irrelevance}),
  $\seme{\prog_1}{\asg'} = \semf[\vartwo^\typ]{\prog_1}{\asg'}$.
  Similarly,
  $\seme{\prog_2}{\asg'} = \semf[\vartwo^\typ]{\prog_2}{\asg'}$.
  By Compositionality~(\rlem{compositionality}), it suffices to show that
  $\seme{\wctxof{\fresh{\var^\typ}{\tm}}}{\asg'}
  = \semf[\vartwo^\typ]{\wctxof{\tm\sub{\var^\typ}{\vartwo^\typ}}}{\asg'}$. Indeed:
  \[
    \begin{array}{rcll}
    &&
      \seme{\wctxof{\fresh{\var^\typ}{\tm}}}{\asg'}
    \\
    & = &
      \set{\objthree \ST
            \objtwo \in \seme{\fresh{\var^\typ}{\tm}}{\asg'},
            \objthree \in \seme{\wctx}{\asg'\asgextend{\ctxhole}{\objtwo}}
      }
    & \text{(By~\rlem{compositionality})}
    \\
    & = &
      \set{\objthree \ST
            \obj \in \semtyp{\typ},
            \objtwo \in \seme{\tm}{\asg'\asgextend{\var^\typ}{\obj}},
            \objthree \in \seme{\wctx}{\asg'\asgextend{\ctxhole}{\objtwo}}
      }
    \\
    & = &
      \set{\objthree \ST
            \obj \in \semtyp{\typ},
            \objtwo \in \seme{\tm\sub{\var^\typ}{\vartwo^\typ}}{\asg'\asgextend{\vartwo^\typ}{\obj}},
            \objthree \in \seme{\wctx}{\asg'\asgextend{\ctxhole}{\objtwo}}
      }
      & \text{(By~\rlem{interpretation_of_substitution} and \rlem{irrelevance})}
    \\
    & = &
      \set{\objthree \ST
            \obj \in \semtyp{\typ},
            \objtwo \in \seme{\tm\sub{\var^\typ}{\vartwo^\typ}}{\asg'\asgextend{\vartwo^\typ}{\obj}},
            \objthree \in \seme{\wctx}{\asg'\asgextend{\vartwo^\typ}{\obj}\asgextend{\ctxhole}{\objtwo}}
      }
      & \text{(By \rlem{irrelevance})}
    \\
    & = &
      \set{\objthree \ST
            \obj \in \semtyp{\typ},
            \objthree \in \seme{\wctxof{\tm\sub{\var^\typ}{\vartwo^\typ}}}{\asg'\asgextend{\vartwo^\typ}{\obj}}
      }
    & \text{(By~\rlem{compositionality})}
    \\
    & = &
      \semf[\vartwo^\typ]{\wctxof{\tm\sub{\var^\typ}{\vartwo^\typ}}}{\asg'}
    \end{array}
  \]
\item \rulename{unif}:
  Our goal is to prove that
  $\semf[\fctx]{\prog_1 \alt \wctxof{\val \unif \valtwo} \alt \prog_2}{\asg}
  =
  \semf[\fctx']{\prog_1 \alt \wctxof{\unit}\SUB{\subst} \alt \prog_2}{\asg}$,
  where $\subst = \mgu{\set{\val\unif\valtwo}}$.
  Note that $\fctx'$ is a subset of $\fctx$,
  so suppose that $\fctx = \fctx',\vec{\vartwo}^{\vec{\typtwo}}$
  and $\fctx' = \vec{\var}^{\vec{\typ}}$.
  Note also that $\subst = \mgu{\val \unif \valtwo}$ exists,
  so
  $\set{\val \unif \valtwo} \tounifa{}^*
   \set{\var_1 \unif \val_1, \hdots, \var_n \unif \val_n}$
  such that $\var_i \notin \fv{\val_j}$ for all $i,j$,
  and
  the most general unifier is
  $\subst = \set{\var_1 \mapsto \val_1, \hdots, \var_n \mapsto \val_n}$.
  Moreover, recall that the interpretation of a value
  is always a singleton~(\rlem{interpretation_of_values}),
  so for each fixed assignment $\asg'$ let us write
  $\objtwo^{\asg'}_{i}$ for the only element in $\seme{\val_i}{\asg'}$.
  Moreover, let
  $\vec{\varthree}^{\vec{\typthree}} = \vec{\var}^{\vec{\typ}},\vec{\vartwo}^{\vec{\typtwo}}$.
  By Compositionality~(\rlem{compositionality}) and Irrelevance~(\rlem{irrelevance}),
  it suffices to note that:
  \[
    \begin{array}{rcll}
    &&
    \semf[\vec{\varthree}^{\vec{\typthree}}]{\wctxof{\val \unif \valtwo}}{\asg}
    \\
    & = &
    \set{\obj \ST
          \vec{\objthree} \in \semtyp{\vec{\typthree}},
          \obj \in \seme{\wctxof{\val \unif \valtwo}}{\asg\asgextend{\vec{\varthree}}{\vec{\objthree}}}
    }
    \\
    & = &
    \set{\obj \ST
          \vec{\objthree} \in \semtyp{\vec{\typthree}},
          \objtwo \in \seme{\val \unif \valtwo}{\asg\asgextend{\vec{\varthree}}{\vec{\objthree}}},
          \obj \in \seme{\wctx}{\asg\asgextend{\vec{\varthree}}{\vec{\objthree}}\asgextend{\ctxhole}{\objtwo}}
    }
    & \text{(By~\rlem{compositionality})}
    \\
    & = &
    \set{\obj \ST
          \vec{\objthree} \in \semtyp{\vec{\typthree}},
          \objtwo \in \seme{\val \unif \valtwo}{\asg\asgextend{\vec{\varthree}}{\vec{\objthree}}},
          \obj \in \seme{\wctx}{\asg\asgextend{\vec{\varthree}}{\vec{\objthree}}\asgextend{\ctxhole}{\cinterp{\unit}}}
    }
    \\
    & = &
    \set{\obj \ST
          \vec{\objthree} \in \semtyp{\vec{\typthree}},
          \vec{\objthree} \vDash_{\asg,\vec{\varthree}} \set{\val \unif \valtwo},
          \obj \in \seme{\wctx}{\asg\asgextend{\vec{\varthree}}{\vec{\objthree}}\asgextend{\ctxhole}{\cinterp{\unit}}}
    }
    \\
    & = &
    \set{\obj \ST
          \vec{\objthree} \in \semtyp{\vec{\typthree}},
          \vec{\objthree} \vDash_{\asg,\vec{\varthree}} \set{\var_1 \unif \val_1, \hdots, \var_n \unif \val_n},
          \obj \in \seme{\wctx}{\asg\asgextend{\vec{\varthree}}{\vec{\objthree}}\asgextend{\ctxhole}{\cinterp{\unit}}}
    }
    & \text{(By \rlem{unification_preserves_satisfaction})}
    \\
    & = &
    \set{\obj \ST
          \vec{\objthree} \in \semtyp{\vec{\typthree}},
          \asg\asgextend{\vec{\varthree}}{\vec{\objthree}}(\var_i)
          =
          \objtwo^{\asg\asgextend{\vec{\varthree}}{\vec{\objthree}}}_i
          \text{ for all $i$},\ %
          \obj \in \seme{\wctxof{\unit}}{\asg\asgextend{\vec{\varthree}}{\vec{\objthree}}}
    }
    & \text{(By~\rlem{compositionality})}
    \\
    & = &
    \set{\obj \ST
          \vec{\objthree} \in \semtyp{\vec{\typthree}},
          \obj \in \seme{\wctxof{\unit}}{\asg\asgextend{\vec{\varthree}}{\vec{\objthree}}
                                             \asgextend{\var_1}{\objtwo^{\asg\asgextend{\vec{\varthree}}{\vec{\objthree}}}_1}\hdots
                                             \asgextend{\var_n}{\objtwo^{\asg\asgextend{\vec{\varthree}}{\vec{\objthree}}}_n}}
    }
    & \text{($\star$)}
    \\
    & = &
      \set{\obj \ST
            \vec{\objthree} \in \semtyp{\vec{\typthree}},
            \obj \in \seme{\wctxof{\unit}\SUB{\subst}}{\asg\asgextend{\vec{\varthree}}{\vec{\objthree}}}
      }
    & \text{(By~\rlem{interpretation_of_substitution})}
    \\
    & = &
      \semf[\vec{\varthree}^{\vec{\typthree}}]{\wctxof{\unit}\SUB{\subst}}{\asg}
    \\
    & = &
      \semf[\vec{\var}^{\vec{\typ}}]{\wctxof{\unit}\SUB{\subst}}{\asg}
      & \text{(By ~\rlem{irrelevance})}
    \end{array}
  \]
  To justify $(\star)$
  note that
  $\asg\asgextend{\vec{\varthree}}{\vec{\objthree}}(\var_i) = \set{\objtwo^{\asg\asgextend{\vec{\varthree}}{\vec{\objthree}}}_i}$
  for all $i = 1..n$.
  Therefore, we can write
  $\asg\asgextend{\vec{\varthree}}{\vec{\objthree}}$ as
  $\asg\asgextend{\vec{\varthree}}{\vec{\objthree}}\asgextend{\var_1}{\objtwo^{\asg\asgextend{\vec{\varthree}}{\vec{\objthree}}}_1}\hdots\asgextend{\var_n}{\objtwo^{\asg\asgextend{\vec{\varthree}}{\vec{\objthree}}}_n}$.
\item \rulename{fail}:
  Our goal is to prove that:
  \[
    \semf[\fctx]{\prog_1 \alt \wctxof{\val \unif \valtwo} \alt \prog_2}{\asg}
  \supseteq
    \semf[\fctx']{\prog_1 \alt \prog_2}{\asg}
  \]
  which is immediate by definition.
\end{enumerate}
\end{proof}

%
% ---- Bibliography ----
%
% BibTeX users should specify bibliography style 'splncs04'.
% References will then be sorted and formatted in the correct style.
%
\bibliographystyle{splncs04}
\bibliography{biblio}

\end{document}